\let\original@algocf@latexcaption\algocf@latexcaption
\long\def\algocf@latexcaption#1[#2]{%
  \@ifundefined{NR@gettitle}{%
    \def\@currentlabelname{#2}%
  }{%
    \NR@gettitle{#2}%
  }%
  \original@algocf@latexcaption{#1}[{#2}]%
}
\newcommand{\mylabel}[2]{#2\def\@currentlabel{#2}\label{#1}}
\newtheorem{theorem}{Theorem}
\newtheorem{lemma}{Lemma}
\newtheorem{conjecture}{Conjecture}
\newtheorem{proposition}{Proposition}
\theoremstyle{definition}
\newtheorem{definition}{Definition}
\newtheorem*{theorem*}{Theorem}
\renewcommand{\epsilon}{\varepsilon}
\DeclareMathOperator{\ord}{ord}
\newcommand{\Bogen}[4]{\ncarc[arcangle=#4]{-}{#1}{#2}\ncput{\colorbox{almostwhite}{$#3$}}}
\newcommand{\Bogendashed}[4]{\ncarc[arcangle=#4, linestyle=dashed]{-}{#1}{#2}\ncput{\colorbox{almostwhite}{$#3$}}}
\newcommand{\Kante}[3]{\ncline{-}{#1}{#2}\ncput{\colorbox{almostwhite}{$#3$}}}
\newcommand{\Kantedashed}[3]{\ncline[linestyle=dashed]{-}{#1}{#2}\ncput{\colorbox{almostwhite}{$#3$}}}
\newcommand{\Knoten}[3]{\cnode*(#1,#2){2.2pt}{#3}}
\newcommand{\Knotenthick}[3]{\cnode*(#1,#2){3.7pt}{#3}}
\newcommand{\Bogendef}[4]{\ncarc[arcangle=#4]{-}{#1}{#2}}
\newcommand{\ptl}{PL}
\newcommand{\lp}{LP($F$)}
\begin{document}

\title{A Characterization of Undirected Graphs Admitting
Optimal Cost Shares}

\author{Tobias Harks}
\author{Anja Huber}
\author{Manuel Surek}
\affil{\small Universit\"at Augsburg, Institut f\"ur Mathematik, 86135 Augsburg\\
\href{mailto:tobias.harks@math.uni-augsburg.de}{\{\texttt{tobias.harks,anja.huber,manuel.surek\}@math.uni-augsburg.de}}}


\date{}

\maketitle

\begin{abstract}
In a seminal paper, Chen, Roughgarden and Valiant~\cite{ChenRV10} studied cost sharing protocols for network design with the objective to implement a low-cost Steiner forest as a Nash equilibrium of an induced cost-sharing game.
One of the most intriguing open problems to date is to understand the power of
budget-balanced and separable cost sharing protocols in order
to induce low-cost Steiner forests. 
In this work, we focus on \emph{undirected} networks and analyze topological properties of the underlying graph so that an \emph{optimal Steiner
forest} can be implemented as a
Nash equilibrium (by some separable cost sharing protocol)
\emph{independent} of the edge costs.
We term a graph  \emph{efficient} if the above stated property holds. As our main result, we
 give a complete characterization of efficient undirected graphs for two-player network design games: an undirected graph is efficient if and only if it does not contain (at least) one out of \emph{few forbidden subgraphs}. 
 Our characterization implies that several graph classes are efficient: generalized series-parallel graphs,
 fan and wheel graphs and graphs with small cycles. \nocite{Koch2004} \nocite{GamrathFischerGallyetal.2016}

\end{abstract}

\section{Introduction}

In the \emph{Steiner forest}
problem, there is a network $(G,c)$ with an undirected graph $G=(V,E)$ and nonnegative edge costs $c(e), e \in E$. Furthermore, there are $n \geq 1$ pairs $(s_1,t_1), \ldots, (s_n,t_n)$ of vertices in $G$ and each such pair $(s_i,t_i)$  needs the vertices $s_i$ and $t_i$ to be connected by (at least one) path. 
Thus, a feasible solution for the Steiner forest problem is a subset $F \subseteq E$ so that each pair $(s_i,t_i)$ is connected in the subgraph induced by $F$. Since edge costs are nonnegative, there are no cycles in any  optimal solution, thus, one can restrict the search to Steiner forests. An optimal Steiner forest $F$ is a Steiner forest with minimum cost, that is $c(F):=\sum_{e \in F}{c(e)}$ is minimal under all possible Steiner forests $F$.

\subsection{Network Cost Sharing Games}
In this article, we consider a game-theoretic variant
of the Steiner forest problem (introduced in Chen, Roughgarden and Valiant~\cite{ChenRV10}) assuming that a system manager can \emph{design} a protocol
that determines how the edge costs of the forest are shared among its users. Formally, the $n$ pairs $(s_i,t_i)$ correspond to players $N:=\{1,\dots,n\}$ that each want to establish an $(s_i,t_i)$ connection
with minimum cost. Thus, a strategy profile is
a tuple $P=(P_1,\dots,P_n)$, where every $P_i$ is an 
$s_i$-$t_i$ path. Given $P$, the cost of player $i$ using edge $e$ is
 $\xi_{i,e}(P)\geq 0$ and the $\xi_{i,e}(P)$-values are determined by a \emph{cost-sharing protocol} $\Xi$.
The total cost that player $i\in N$ needs to pay under $P$ is defined as
\[ \xi_i(P):=\sum_{e\in P_i}\xi_{i,e}(P),\]
and a pure Nash equilibrium of the strategic game induced by $\Xi$
is a strategy profile $P$ from which no player can unilaterally deviate, say to another path $P'_i$, and strictly pay less. Chen et al.~\cite{ChenRV10} axiomatized 
cost sharing protocols by the following
three fundamental properties (see also~\cite{FalkeHarks13,Christodoulou16}):
\begin{enumerate}[itemsep=-0em]
\item\label{enum:bb} \emph{Budget-balance}: The cost $c(e)\geq 0$ of each edge $e$ is
  exactly covered by the collected cost shares of the players using
  the edge, that is, $\sum_{i\in S_e(P)}\xi_{i,e}(P)=c(e)$ for all $e\in E$,
  where $S_e(P):=\{i \in N: e \in P_i\}$.
\item\label{enum:stab} \emph{Stability}: There is at least one pure strategy Nash equilibrium in each game induced by the cost sharing protocol.
\item\label{enum:sep} \emph{Separability}: The cost shares on an edge only depend on the set of players using the edge, that is,
$S_e(P)=S_e(P')\Rightarrow \xi_{i,e}(P)=\xi_{i,e}(P')$ for all $P,P'$ and $e\in E$. 
\end{enumerate}
Budget-balance (Condition~\ref{enum:bb}.) is straightforward, Stability (Condition~\ref{enum:stab}.) requires the existence of at least one
Nash equilibrium in pure strategies (abbreviated PNE).
This requirement is important for applications in which mixed or correlated strategies have no
meaningful physical interpretation (see also the
discussion in Osborne and Rubinstein~\cite[\S~3.2]{Osborne94}).
Separability (Condition~\ref{enum:sep}.)
allows for a distributed implementation of the cost sharing protocol as each edge needs only 
to know its own player set. A cost sharing protocol is called \emph{separable}, if it
satisfies~\ref{enum:bb}.-\ref{enum:sep}.


One important example for a separable cost sharing protocol is the Shapley cost sharing protocol (see~\cite{Anshelevich08,KolliasR15}). For the case of two players, the corresponding PoS is known to be $4/3$, see Figure~\ref{fig1shapleya} for an example. The solid lines build the unique optimal Steiner forest OPT with cost $3+2\epsilon$, but OPT is no PNE since Player 1 has to pay $2+2\epsilon>2+\epsilon$. On the other hand, each player taking her direct $s_i-t_i$-edge is the unique PNE with cost $4+\epsilon$. 

Can we improve the PoS for this example by using a different separable cost sharing protocol? 
Note that for the case of two players, a separable cost sharing protocol is uniquely determined by one value per edge, namely the amount Player 1 has to pay if both players use this edge. In Figure~\ref{fig1shapleyb} we display a cost sharing protocol $\Xi$ for which OPT is a PNE. The edges are labelled by their costs followed by the value described above which determines $\Xi$. 

\begin{figure}[H] \centering \psset{unit=1.3cm}
\subfloat[OPT is no PNE for Shapley]{
 \begin{pspicture}(-0.5,-0.5)(4,1.3) 
\Knoten{0}{0}{s}\nput{270}{s}{$s_1=s_2$}
\Knoten{2.5}{0}{t2}\nput{-45}{t2}{$t_2$}
\Knoten{3.5}{1}{t1}\nput{0}{t1}{$t_1$}
\Kante{s}{t2}{2}
\Kante{t2}{t1}{1+2\epsilon}
\Bogendashed{s}{t1}{2+\epsilon}{30}
\end{pspicture} \label{fig1shapleya}
}
\subfloat[OPT is a PNE for $\Xi$]{
 \begin{pspicture}(-0.5,-0.5)(4,1.3) 
\Knoten{0}{0}{s}\nput{270}{s}{$s_1=s_2$}
\Knoten{2.5}{0}{t2}\nput{-45}{t2}{$t_2$}
\Knoten{3.5}{1}{t1}\nput{0}{t1}{$t_1$}
\Kante{s}{t2}{2|0}
\Kante{t2}{t1}{1+2\epsilon|0}
\Bogendashed{s}{t1}{2+\epsilon|2+\epsilon}{30}
\end{pspicture} \label{fig1shapleyb}
}
\caption{Examples for separable cost sharing protocols} \label{fig1}
\end{figure}
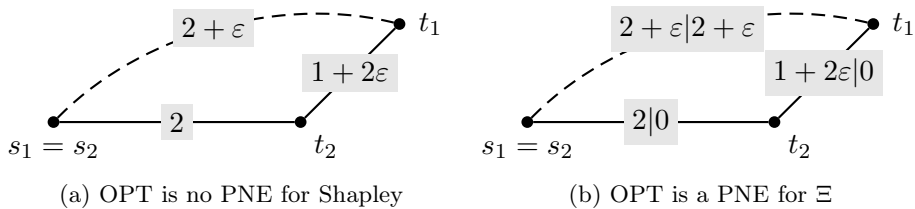



\subsection{Our Results} 
We study \emph{efficient} graphs $G=(V,E)$ having the property that there is an optimal Steiner forest that can be implemented as a pure Nash equilibrium by some separable cost sharing protocol (we speak of an \emph{enforceable} Steiner forest).
The above definition does not specify a priori the cost structure of the graph since any graph
can be made efficient by assigning infinite or very high cost on some edges, thus, deleting ``problematic'' edges and effectively making the combinatorial structure of the graph irrelevant.
An equivalent formulation of the research question we study is the following: what is the largest class of undirected graphs for which the worst-case ratio of the cost of the best Nash equilibrium and that
of an optimal Steiner forest (PoS)  is $1$?
An even stronger condition is the following: $G$ is said to be \emph{strongly} efficient, if \emph{every}
optimal Steiner forest can be enforced as a pure Nash equilibrium.

Our main result gives a complete characterization of efficient and strongly efficient graphs for two-player games:
\begin{theorem*}[Main Result (Informal)]
\[ G \text{ is efficient }\Leftrightarrow G \text{ is strongly efficient }\Leftrightarrow G \text{ does not contain
certain subgraphs (see Figure~\ref{allbc})}.\]
\end{theorem*}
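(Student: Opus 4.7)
The implication \emph{strongly efficient} $\Rightarrow$ \emph{efficient} is immediate from the definitions, so the two substantive directions are (i) absence of the forbidden subgraphs implies strongly efficient, and (ii) efficient implies absence of the forbidden subgraphs. Both are most naturally attacked through the linear program \lp, whose variables are the only free parameters of a two-player budget-balanced separable protocol: for every edge~$e$, a single scalar $\alpha_e\in[0,c(e)]$ giving Player~$1$'s share when both players use~$e$ (if only one player uses~$e$, budget-balance and nonnegativity force that player to pay $c(e)$). The constraints of \lp{} state, for each player~$i$ and each alternative $s_i$-$t_i$ path $P'_i$, that Player~$i$'s cost along her path in~$F$ is at most her cost along~$P'_i$ assuming Player~$-i$ sticks to her path in~$F$. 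Thus $F$ is enforceable if and only if \lp{} is feasible, and strongly efficient is equivalent to \lp{} being feasible for every optimal Steiner forest~$F$ and every cost function~$c$.

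For direction~(i) the plan is to fix any cost function~$c$ and any optimal Steiner forest~$F$ in a graph~$G$ avoiding all forbidden patterns, and to produce a feasible~$\alpha$. I would argue by contradiction via Farkas: if \lp{} were infeasible, some nonnegative combination of deviation inequalities would yield $0<0$, and this combination can be read as a pair of (possibly fractional) alternative paths, one per player, whose union with the relevant parts of~$F$ forms a closed alternating structure. Using optimality of~$F$ to prune dominated pieces, suppressing degree-two inner vertices, and uncrossing overlaps, I would reduce any minimal infeasibility certificate to one of finitely many canonical topologies. The key lemma is that every such irreducible topology coincides, after contraction, with one of the configurations in Figure~\ref{allbc}; therefore their absence in~$G$ forces feasibility. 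This classification step is where I expect the main obstacle: one has to enumerate the possible interactions of the two deviation paths with~$F$, argue that uncrossing and contraction cannot create new forbidden patterns, and verify that every residual irreducible configuration really does embed one of the listed forbidden subgraphs rather than a larger, unlisted one.

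For direction~(ii) I would proceed by contrapositive: for each forbidden subgraph~$H$ in Figure~\ref{allbc} I would specify terminals $(s_1,t_1),(s_2,t_2)$ inside~$H$ and an edge cost function (possibly involving a small parameter~$\epsilon>0$, in the spirit of Figure~\ref{fig1}) such that the unique optimal Steiner forest~$F$ violates every instance of \lp, exhibited by an explicit nonnegative combination of the few deviation inequalities summing to a strict contradiction. To lift the infeasibility from~$H$ to a host graph $G\supseteq H$, I would assign prohibitively large costs to all edges of~$G\setminus H$; this keeps~$F$ as the unique optimal Steiner forest and makes every deviation that leaves~$H$ strictly worse, so infeasibility of \lp{} on~$H$ transfers to~$G$. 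The arithmetic per forbidden pattern is routine once terminals and costs are chosen, and the role of direction~(i) is precisely to certify that the finite list in Figure~\ref{allbc} is already exhaustive, closing the three-way equivalence and in particular collapsing \emph{efficient} and \emph{strongly efficient} for two-player games.
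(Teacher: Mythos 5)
Your direction (ii) (efficient $\Rightarrow$ no forbidden subgraph) matches the paper: pick a Bad Configuration, set $c(e)=\infty$ outside it and explicit finite costs inside so that the optimal Steiner forest is unique and the collectable cost shares fall short of its cost. That part is fine, modulo one small imprecision: the paper's \lp{} is a maximization problem that is always feasible (the all-zero shares satisfy it), and enforceability is equivalent to the existence of an \emph{optimal} solution that is budget-balanced on every edge, not to feasibility. Your reformulation as a feasibility system works only if you add the budget-balance equalities as constraints; as written, ``$F$ is enforceable iff \lp{} is feasible'' is false for the LP you cite.

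The genuine gap is in direction (i). You propose to take a Farkas certificate of infeasibility and classify the ``irreducible'' certificates, asserting as a key lemma that each one embeds a configuration from Figure~\ref{allbc}. That lemma \emph{is} the theorem; nothing in your sketch makes it tractable. Two concrete problems. First, your claimed shape of a minimal certificate --- ``a pair of alternative paths, one per player'' --- cannot be right: every forbidden configuration involves \emph{three} alternative paths ($q_1$ for one player and $q_2,q_3$ for the other), and the paper's Case~$R$ analysis shows that certificates with only two tight deviations always yield a cheaper Steiner forest rather than a topological obstruction. Second, the argument cannot be a pure classification of LP certificates, because infeasibility (or, correctly, failure of budget balance at the optimum) of the system for a \emph{non-optimal} forest proves nothing; the paper's proof interleaves the combinatorics with optimality of $F$ at every step, choosing special optimal solutions (pushed-to-the-left, maximized for Player~2) so that each tight alternative either produces a strictly cheaper Steiner forest (contradicting optimality) or forces a Preliminary Bad Configuration, which is then shown to be one of the forbidden subgraphs. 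Your ``prune dominated pieces using optimality'' gestures at this but gives no mechanism, and without it the uncrossing/contraction step has no reason to terminate in the finite list of Figure~\ref{allbc} rather than in larger unlisted patterns.
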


 Some of the forbidden subgraphs are reminiscent to an instance for
 directed network design showing a lower bound of $5/4$ for the PoS, see Chen et al.~\cite{ChenRV10}.
Our characterization implies that
several well-known graph classes are efficient, while for others, we immediately get counterexamples, see Table~\ref{tabclasses} for a (non-exhaustive) overview. The proof of the efficiency of the listed graph classes and the explanation why the other classes contain non-efficient graphs can be found in \autoref{secclasses}.

\begin{table}[H] 
\begin{center}
\begin{tabular}{c c@{\hspace{1cm}} c}
\toprule
efficient classes & & classes containing non-efficient graphs \\ \midrule
generalized series-parallel graphs & & bipartite graphs  \\ \addlinespace[0.25em]
wheel and fan graphs & & chordal graphs \\ \addlinespace[0.25em]
graphs with longest cycle $\leq 6$ & & planar graphs \\
\bottomrule
\end{tabular}
\caption{Efficiency of graph classes}
\label{tabclasses}
\end{center} \end{table}
\vspace{-2.5em}

\subsection{Used Proof Techniques and Significance}
Showing that graphs which contain a forbidden subgraph are not (strongly) efficient is straightforward: 
It suffices to give costs for each forbidden subgraph so that the PoS is greater than $1$.
Here, we can effectively delete edges which are not part of a forbidden subgraph by assigning high costs to them. The property \emph{not strongly efficient} is derived by proving that the optimal Steiner
forest of the used instance is unique.

The reverse direction, that is, showing that every graph which does not contain
a forbidden subgraph (called \emph{bad configuration}) is efficient, is much more involved. 
As a first step we derive a novel LP-characterization of enforceable Steiner forests.
An optimal LP solution (for a given Steiner forest) corresponds to cost shares that are budget balanced
and induce a separable cost sharing protocol so that the Steiner forest
becomes a pure Nash equilibrium. The proof proceeds now by contraposition:
assume we are given a graph without a forbidden subgraph and assume (by contradiction) 
that there is an optimal Steiner forest that is not enforceable. We solve the corresponding LP
for the Steiner forest and since the Steiner forest is not enforceable, there
exists an inequality which is not tight and corresponds to an edge that is not completely paid by the players.
We use this unpaid edge to derive the existence of an alternative strategy (path)
for some player with costs equal to a fraction of the currently paid cost shares (this
alternative strategy corresponds to a tight inequality of the LP).
These alternative paths are now iteratively generated until we can either argue
that there exists a cheaper Steiner forest compared to the initial optimal Steiner forest (contradiction),
or, there is a bad configuration (contradiction).
Along this main approach, however, several additional
ideas are required: the location of the unpaid edge leads to different subcases for which we need to use special optimal LP-solutions in order to 
derive the proper alternative strategies. 

We believe that our approach is a promising step towards
better understanding the power 
of separable cost sharing protocols in general. For the PoS-question in directed or undirected graphs, there
has been no progress since the initial conference version of Chen et al.~\cite{ChenRV10} roughly 10 years ago.
Our characterization and the proof exactly prescribes substructures of a worst-case instance (namely
a bad configuration must exist whose subpaths have costs  corresponding to tight
inequalities of an LP solution). We are confident that our proof technique gives a blue-print for both, characterizing
efficient graphs for 
the general $n$-Player case, and for resolving the PoS-question. 

\subsection{Related Work}
 For the PoA of uniform cost sharing protocols\footnote{
Uniform protocols require that the cost shares on an edge only depend on the edge cost and the set of players, but not on the network itself.}, Chen et al.~\cite{ChenRV10} proved (tight) bounds of 
$2$  for undirected single-sink networks and
$\Theta(\text{polylog}(n))$ for undirected multi-commodity networks. For
directed single-sink networks the achievable PoA is $n$. For the PoS, it is shown
 that single-sink instances (directed or undirected) admit an optimal Steiner forest as PNE (that is the PoS is $1$). For multi-commodity directed network, the achievable PoS lies
 in $[3/2,\log(n)]$ and since the initial work of Chen et al.~\cite{ChenRV10},
 no improvement has been made on this question. For undirected networks, 
the only known upper bounds are derived by analyzing the Shapley cost sharing protocol
and they are of order $O(\log(n))$, see Anshelevich et al.~\cite{Anshelevich08}. Several works improved lower and upper bounds for the PoS of Shapley cost
sharing in undirected networks (cf.~\cite{BiloCFM13,BiloB11,ChristodoulouCLPS09,DisserFKM15,FiatKLOS06,Mamageishvili2014}) but up to day it is open whether  the PoS is of order $\log(n)$ or even in $O(1)$. 
For several special cases,
the price of stability is shown to
be significantly lower (cf.~\cite{FiatKLOS06,LeeL13,LI2009}). Recently Bil\'{o} et al.~\cite{BiloFM13} could show that the PoS for broadcast games is $O(1)$.
For the design of separable cost sharing protocols in undirected networks,
we are not aware of \emph{any} known lower bounds regarding the PoS.

	 Chen and Roughgarden~\cite{Chen09} and Kollias and Roughgarden~\cite{KolliasR15}  focused on network design with weighted players (where Kollias and Roughgarden analyzed this variant as a special case of weighted congestion games) and derived tight bounds on the PoA for the Shapley cost sharing protocol. 
	Gkatzelis, Kollias and Roughgarden~\cite{GKR16}
  further showed that the Shapley cost sharing protocol is optimal among all uniform protocols
  for polynomial and convex cost functions.\footnote{The certificate for optimality uses a characterization
of uniform protocols by Gopalakrishnan et al.~\cite{GopalakrishnanMW14}.}
For further works analyzing the Shapley protocol or arbitrary cost sharing, see \cite{GairingKK15,harkspeis2014resource,Hoefer11,KlimmS15,RoughgardenS16,Gairing2017}.
 Harks and von Falkenhausen~\cite{HarksF14,FalkeHarks13} studied the design of separable cost sharing protocols in a model, where players want to buy a basis  of a matroid. 
They derived tight bounds for the achievable PoS and PoA of order $\log(n)$
and $n$, respectively. Christodoulou and Sgouritsa~\cite{Christodoulou16} considered
multicast cost sharing games under the assumption
that input parameters (such as the set of terminals
and their location in the graph) are not known or only
known probabilistically. Among other results they show constant PoA
bounds for outer planar graphs even without knowing the parameters. 
On the other hand, they derive strong lower bounds on the PoA of order $\log(n)$
even if the graph metric is known in advance.

Cost sharing approaches for facility location problems and network design problems were analyzed in \cite{KonemannLSZ08,PalT03}. In these works, however, the collected
cost shares need not be budget balanced per edge, thus, leading to a structurally different 
setting.

There exist several characterizations of efficient graph topologies, albeit for
the simpler setting of average cost sharing (or Shapley cost sharing).
Epstein et al.~\cite{Epstein09topo} investigated efficient graph topologies for
Shapley cost sharing and showed that for symmetric $s$-$t$ network congestion games, only extension parallel
graphs (a subclass of series-parallel graphs) are efficient. For asymmetric (multi-commodity)
games, only trees or nodes with parallel edges are efficient. These works are closely related to Milchtaichs~\cite{Milchtaich06GEB} work on the Braess paradox (see also \cite{CenciarelliGS16,ChenD16,ChenDH16}).

 \section{An LP-Characterization of Enforceability}
Let $\mathcal P_i$ be the set of simple $(s_i,t_i)$-paths of $G$. Furthermore, let $F$ be a fixed Steiner forest and $P=(P_1, \ldots, P_n)$ with $P_i \in \mathcal P_i$ the uniquely defined $(s_i,t_i)$-paths in $F$. In addition, let $S_e(P)$ be the set of players which use edge $e$ in their $(s_i,t_i)$-path $P_i$, i.e. $S_e(P):=\{i \in N: e \in P_i\}$. 
An important technical tool for obtaining
characterizations of \emph{efficient graphs} relies on a novel characterization of Steiner forests $F$ that can actually appear as a pure Nash equilibrium for a given graph $G=(V,E)$ and given costs $c$.
We define this property formally.

\begin{definition}\label{def:enforceable} Let $(G,c)$ be an undirected network and
$N$ be a set of players with given connectivity constraints.
A Steiner forest  $F\subseteq E$ is called \emph{enforceable},
if there is a separable cost sharing protocol so that $P=(P_1,\dots,P_n)$ (where every $P_i$ is the unique path in $F$) is a pure Nash equilibrium of the induced game.
\end{definition}
We give a characterization of enforceability of $F$ based on the following linear program \lp{}: 
\begin{align*}
\max &\sum_{i \in N, e \in P_i}\xi_{i,e} \\[5pt]
\text{s.t.: } & \sum_{i \in S_e(P)} \xi_{i,e} \leq c(e) & \forall e \in F \tag{1}\\[5pt]
& \sum_{e \in P_i \setminus P_i'} \xi_{i,e} \leq \sum_{e \in P_i' \setminus P_i} c(e)  & \forall P_i' \in \mathcal P_i \ \forall i \in N \tag{2} \label{Nash1}\\[5pt]
&\xi_{i,e} \geq 0 &\forall e \in P_i \ \forall i \in N \tag{3}\\
\end{align*}

\vspace{-1em}
\begin{theorem}
The Steiner forest $F$ with corresponding strategy profile $P=(P_1, \ldots, P_n)$ is enforceable if and only if there is an optimal solution $(\xi_{i,e})_{i \in N, e \in P_i}$ for \lp{} with 
\[
\sum_{i \in S_e(P)} \xi_{i,e} = c(e) \ \forall e \in F. \tag{BB} \label{Nash2}
\]
\end{theorem}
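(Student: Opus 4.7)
The plan is to first reduce the statement to a cleaner form. Writing the objective as $\sum_{i\in N,\,e\in P_i}\xi_{i,e}=\sum_{e\in F}\sum_{i\in S_e(P)}\xi_{i,e}$, constraint~(1) forces every feasible value of \lp{} to lie below $\sum_{e\in F}c(e)$, with equality precisely when (BB) holds. Consequently, the existence of an optimal LP-solution satisfying (BB) is equivalent to the existence of \emph{any} feasible LP-solution satisfying (BB), and it is this latter statement that I will prove is equivalent to enforceability.

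For the forward direction I assume that $F$ is enforceable and take a separable protocol with cost shares $\xi_{i,e}(\cdot)$ under which $P$ is a PNE. Setting $\xi_{i,e}:=\xi_{i,e}(P)$ for every $i\in N$ and $e\in P_i$, budget balance of the protocol gives (1) and (BB), and non-negativity of cost shares gives (3). For (2) I fix a player $i$ and a deviation path $P_i'\in\mathcal{P}_i$, letting $P':=(P_{-i},P_i')$. Separability yields $\xi_{i,e}(P')=\xi_{i,e}(P)$ on every $e\in P_i\cap P_i'$; on each $e\in P_i'\setminus P_i$ budget balance together with non-negativity of all other players' shares forces $\xi_{i,e}(P')\leq c(e)$. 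The PNE condition $\xi_i(P)\leq\xi_i(P')$ then collapses, after cancelling the common edges, to
\[
\sum_{e\in P_i\setminus P_i'}\xi_{i,e}\;\leq\;\sum_{e\in P_i'\setminus P_i}\xi_{i,e}(P')\;\leq\;\sum_{e\in P_i'\setminus P_i}c(e),
\]
which is exactly~(2).

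For the converse I am given a feasible LP-solution with (BB) and construct a separable protocol by prescribing, for every edge $e$ and every non-empty set $S\subseteq N$, cost shares $\xi_{i,e}^{\Xi}(S)$ depending only on $S$. I set $\xi_{i,e}^{\Xi}(S_e(P)):=\xi_{i,e}$ (the LP values, which are budget balanced by (BB)); on every set of the form $S=S_e(P)\cup\{j\}$ with $j\notin S_e(P)$ I charge the entire edge to the newcomer, $\xi_{j,e}^{\Xi}(S):=c(e)$ and $\xi_{k,e}^{\Xi}(S):=0$ for $k\in S\setminus\{j\}$; and on every remaining set $S$ I choose any non-negative budget-balanced split (e.g.\ equal shares). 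To verify that $P$ is a PNE, let player $i$ deviate to $P_i'\in\mathcal{P}_i$: edges in $P_i\cap P_i'$ contribute equally to $\xi_i(P)$ and $\xi_i(P_{-i},P_i')$ by separability, edges in $P_i'\setminus P_i$ cost $i$ exactly $c(e)$ by the newcomer rule, and edges in $P_i\setminus P_i'$ save her $\xi_{i,e}$. Thus
\[
\xi_i(P_{-i},P_i')-\xi_i(P)\;=\;\sum_{e\in P_i'\setminus P_i}c(e)\;-\;\sum_{e\in P_i\setminus P_i'}\xi_{i,e}\;\geq\;0
\]
by constraint~(2).

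The two delicate points I expect to be worth flagging carefully are (i) the chain of inequalities in the forward direction, where budget balance on the \emph{deviation} profile $P'$ is what converts the Nash condition into the cost-side bound $c(e)$ appearing in (2), and (ii) the well-definedness of the newcomer-pays-all rule in the converse: the map $j\mapsto S_e(P)\cup\{j\}$ is injective on $N\setminus S_e(P)$, so no two distinct newcomers force conflicting assignments to the same set $S$. Everything else is routine bookkeeping once these two observations are in place.
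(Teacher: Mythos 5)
Your proof is correct and follows essentially the same route as the paper: extract the equilibrium cost shares at $P$ for one direction, and build a ``newcomer pays the full edge cost'' protocol for the other, with constraint~(2) doing the work in both verifications. The only (harmless) differences are that you make explicit the observation that any feasible solution satisfying \eqref{Nash2} already attains the objective's upper bound $\sum_{e\in F}c(e)$ and is therefore optimal, and that you assign arbitrary budget-balanced shares to player sets that cannot arise from a unilateral deviation, where the paper charges the minimum-index player.
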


\begin{proof}
We first assume that there is an optimal solution $(\xi_{i,e})_{i \in N, e \in P_i}$ for \lp{} with \eqref{Nash2}.

Consider the following cost sharing protocol, which assigns for each $i \in N$ and $e \in E$ and each strategy profile $P'=(P_1', \ldots, P_n')$ the following cost shares: 
\[
\xi_{i,e}(P') = \begin{cases} \xi_{i,e}, & \text{if } i \in S_e(P')=S_e(P),\\[3pt]
c(e), & \text{if } i \in (S_e(P')\setminus S_e(P)) \text{ and } i=\min (S_e(P')\setminus S_e(P)), \\[3pt]
c(e), & \text{if } i \in S_e(P') \subset S_e(P) \text{ and } i=\min S_e(P'), \\[3pt]
0, & \text{else.}
\end{cases}
\] 
It is clear that this protocol fulfills separability and budget-balance. We now show that $P$ is a pure Nash equilibrium in the game induced by this protocol. For each player $i$ and each $(s_i,t_i)$-path $P_i' \in \mathcal P_i$, we have 
\[
 \sum_{e \in P_i}{\xi_{i,e}(P)}= \sum_{e \in P_i \setminus P_i'} \xi_{i,e} + \sum_{e \in P_i \cap P_i'}{\xi_{i,e}} \leq \sum_{e \in P_i' \setminus P_i} c(e) + \sum_{e \in P_i \cap P_i'}{\xi_{i,e}}=\sum_{e \in P_i'}{\xi_{i,e}(P_i', P_{-i})}.
\]
Using standard notation in game theory, $P_{-i}$ denotes the strategy profile $(P_1, \ldots, P_{i-1}, P_{i+1}, \ldots P_n)$ consisting of the strategies of the other players.
The inequality follows from condition \eqref{Nash1} of the LP-formulation and the last equality from the definition of the protocol.

\vspace{0.5em}
For the other direction, we assume that we have a separable protocol $\Xi$, so that $P$ is a pure Nash equilibrium in the induced game. We show that the corresponding cost shares $(\xi_{i,e}(P))_{i \in N, e \in P_i}$ are an optimal solution for \lp{} with the desired property \eqref{Nash2}. 

It is clear that we only have to verify condition \eqref{Nash1}. Since $P$ is a pure Nash equilibrium and $\Xi$ is a separable protocol, for each player $i \in N$ and each path $P_i' \in \mathcal P_i$  we get 
\begin{align*} 
\sum_{e \in P_i}{\xi_{i,e}(P)} &\leq \sum_{e \in P_i'}{\xi_{i,e}(P_i', P_{-i})} \\[7pt]
\Leftrightarrow \sum_{e \in P_i \setminus P_i'} \xi_{i,e}(P) + \sum_{e \in P_i \cap P_i'}{\xi_{i,e}(P)} &\leq \sum_{e \in P_i' \setminus P_i} \xi_{i,e}(P_i', P_{-i}) + \sum_{e \in P_i \cap P_i'}{\xi_{i,e}(P_i', P_{-i})} \\[7pt]
\Leftrightarrow \sum_{e \in P_i \setminus P_i'} \xi_{i,e}(P) &\leq \sum_{e \in P_i' \setminus P_i} \xi_{i,e}(P_i', P_{-i}).
\end{align*}
Since $\xi_{i,e}(P_i', P_{-i}) \leq c(e)$ for each edge $e \in P_i'$, condition \eqref{Nash1} follows.
\end{proof}
\section{A Characterization of Efficient Graphs for Two Player Games} \label{section_theorem}

We now consider the case of two players, $N=\{1,2\}$, and first show that an optimal Steiner forest is not necessarily enforceable. To see this, consider the network displayed in Figure \ref{fig1bc1}. 
The solid lines build the unique optimal Steiner forest OPT with cost 22 (which can be easily verified by considering all 19 possible Steiner forests). But the sum of cost shares that one can collect by any separable cost sharing protocol is obviously bounded by 9 for Player 1 and $6+6=12$ for Player 2, thus the objective value for LP(OPT) is bounded by $21<22$ and therefore OPT is not enforceable. By optimizing the costs for this graph, we get a lower bound of $\frac{15}{14}$ for the PoS, see~\autoref{theopos}.

\begin{figure}[H] \centering \psset{unit=1.9cm}
 \begin{pspicture}(0.5,-1.2)(6.6,1.1) 
\Knoten{0.8}{0.8}{v1}\nput{180}{v1}{$s_1$}
\Knoten{0.8}{-0.8}{v2}\nput{180}{v2}{$s_2$}
\Knoten{1}{-0.6}{n2}
\Knoten{1.6}{0}{v3}
\Knoten{2.125}{0}{n5}
\Knoten{3.25}{0}{v4}
\Knoten{4.375}{0}{v6}
\Knoten{5.5}{0}{v7}
\Knoten{6.1}{0.6}{n3}
\Knoten{6.1}{-0.6}{n4}
\Knoten{6.3}{0.8}{v10}\nput{0}{v10}{$t_1$}
\Knoten{6.3}{-0.8}{v11}\nput{0}{v11}{$t_2$}
%
%
%
%
\ncline{v1}{v3}
\ncline{v2}{n2}
\Kante{n2}{v3}{5|-|5}
\ncline{v3}{n5}
\Kante{n5}{v4}{3|2|1}
\Kante{v4}{v6}{2|2|0}
\Kante{v6}{v7}{3|2|1}
\Kante{v7}{n3}{4|3|-}
\ncline{v10}{n3}
\Kante{v7}{n4}{5|-|5}
\ncline{v11}{n4}
\Bogendashed{n5}{n3}{9}{25}
\Bogendashed{n2}{v6}{6}{-25}
\Bogendashed{v4}{n4}{6}{-25}
\end{pspicture}
\caption{OPT is not enforceable (edges of OPT with cost $>0$ are labelled with their cost, followed by an optimal solution $\xi_{1,e}|\xi_{2,e}$ for LP(OPT))} \label{fig1bc1}
\end{figure}
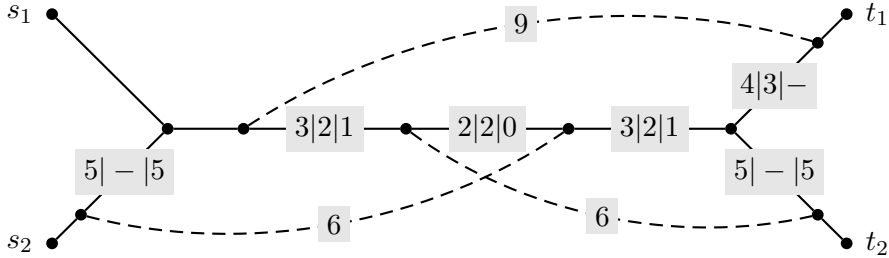


\vspace{0.5em}
As we will show in the rest of the paper, the configuration displayed in Figure~\ref{fig1bc1} is one of few cases in which an optimal Steiner forest is not enforceable.
Before we can state this as a theorem, we need a few definitions.

\begin{definition}[(Strongly) Efficient Graph] \ \
\begin{enumerate}[itemsep=-0em]
\vspace{-0.1em}
		\item We call $(G, (s_1,t_1), (s_2, t_2))$ \textit{efficient}, if, for every cost function $c$, there is an optimal Steiner forest of $(G, (s_1,t_1), (s_2, t_2),c)$ which is enforceable (that means the PoS is 1).
		\item We call $(G, (s_1,t_1), (s_2, t_2))$ \textit{strongly efficient}, if, for every cost function $c$, every optimal Steiner forest of $(G, (s_1,t_1), (s_2, t_2),c)$ is enforceable.
\end{enumerate}
\end{definition}

%
%

\clearpage
\begin{definition}\label{def:BC}
We call a subgraph $H$ of $(G,(s_1,t_1),(s_2,t_2))$ a \textit{Bad Configuration} (BC), if $H$ is one of the graphs in the set $\mathcal{BC}$, where
\[
\mathcal{BC} = \{
	\text{BC1a, BC1b, BC2a, BC2b, BC2c, BC2d, BC3, BC4a, BC4b}\}.
\]

The graphs of $\mathcal{BC}$ are displayed in Figure~\ref{allbc} (see \autoref{formaldef:BC} for the exact definition), where one should note the following: 

\vspace{-0.4em}
\begin{itemize}[itemsep=-0.0ex]
	\item $u,v$ are the terminal nodes of one player and $w,x$ the terminal nodes of the other player;
	\item lines represent simple paths and paths are node-disjoint (except for endnodes);
	\item solid paths have to consist of at least one edge, whereas dashed paths can consist of only one node.
\end{itemize}
\end{definition}

\begin{figure}[H] \centering \psset{unit=1cm}
\captionsetup[subfigure]{labelformat=empty}
\psset{dash= 2pt 2pt}
\subfloat[BC1a]{
 \begin{pspicture}(-0.3,-1)(7.1,1) 
\Knoten{0.2}{0.8}{v1}\nput{180}{v1}{$u$}
\Knoten{0.2}{-0.8}{v2}\nput{180}{v2}{$w$}
\Knoten{0.6}{-0.4}{n2}
\Knoten{1}{0}{v3}
\Knoten{2.125}{0}{n5}
\Knoten{3.25}{0}{v4}
\Knoten{4.375}{0}{v6}
\Knoten{5.5}{0}{v7}
\Knoten{5.9}{0.4}{n3}
\Knoten{5.9}{-0.4}{n4}
\Knoten{6.3}{0.8}{v10}\nput{0}{v10}{$v$}
\Knoten{6.3}{-0.8}{v11}\nput{0}{v11}{$x$}

\ncline[linestyle=dashed]{-}{v1}{v3}
\ncline{-}{n2}{v3}
\ncline[linestyle=dashed]{-}{v2}{n2}
\ncline[linestyle=dashed]{-}{v3}{n5}
\ncline{-}{v4}{n5}
\ncline{-}{v4}{v6}
\ncline{-}{v6}{v7}
\ncline{-}{v7}{n3}
\ncline[linestyle=dashed]{-}{n3}{v10}
\ncline{-}{v7}{n4}
\ncline[linestyle=dashed]{-}{n4}{v11}
\Bogendef{n5}{n3}{q_1}{35}
\Bogendef{n2}{v6}{q_2}{-35}
\Bogendef{v4}{n4}{q_3}{-35}
\end{pspicture}\label{BC1a}
}
\hspace{0.2cm}
\subfloat[BC1b]{
 \begin{pspicture}(0.5,-1)(7.9,1) 
\Knoten{1.3125}{0.8}{v1}\nput{180}{v1}{$u$}
\Knoten{1.7125}{0.4}{n1}
\Knoten{1.3125}{-0.8}{v2}\nput{180}{v2}{$w$}
\Knoten{1.7125}{-0.4}{n2}
\Knoten{2.125}{0}{v3}
\Knoten{3.25}{0}{v4}
\Knoten{4.375}{0}{v6}
\Knoten{5.5}{0}{v7}
\Knoten{5.9}{0.4}{n3}
\Knoten{5.9}{-0.4}{n4}
\Knoten{6.3}{0.8}{v10}\nput{0}{v10}{$v$}
\Knoten{6.3}{-0.8}{v11}\nput{0}{v11}{$x$}

\ncline[linestyle=dashed]{-}{v1}{n1}
\ncline{-}{n1}{v3}
\ncline{-}{n2}{v3}
\ncline[linestyle=dashed]{-}{v2}{n2}
\ncline{-}{v3}{v4}
\ncline{-}{v4}{v6}
\ncline{-}{v6}{v7}
\ncline{-}{v7}{n3}
\ncline[linestyle=dashed]{-}{n3}{v10}
\ncline{-}{v7}{n4}
\ncline[linestyle=dashed]{-}{n4}{v11}
\Bogendef{n1}{n3}{q_1}{25}
\Bogendef{n2}{v6}{q_2}{-35}
\Bogendef{v4}{n4}{q_3}{-35}
\end{pspicture}\label{BC1b}
}

\subfloat[BC2a]{
\begin{pspicture}(-0.8,-1)(6.6,1.2) 
\Knoten{0.2}{0.8}{v1}\nput{180}{v1}{$u$}
\Knoten{0.2}{-0.8}{v2}\nput{180}{v2}{$w$}
\Knoten{0.6}{-0.4}{n2}
\Knoten{1}{0}{v3}
\Knoten{2.125}{0}{n5}
\Knoten{3.25}{0}{v4}
\Knoten{4.375}{0}{v6}
\Knoten{5.5}{0}{v7}
\Knoten{5.9}{0.4}{n3}
\Knoten{5.9}{-0.4}{n4}
\Knoten{6.3}{0.8}{v10}\nput{0}{v10}{$v$}
\Knoten{6.3}{-0.8}{v11}\nput{0}{v11}{$x$}

\Knoten{3.25}{0.65}{n6}
\Knoten{4.375}{0.65}{n7}

\ncline[linestyle=dashed]{n6}{n7}
\ncline{n7}{n3}

\ncline{n5}{n6}
\ncline{n7}{v6}

\ncline[linestyle=dashed]{-}{v1}{v3}
\ncline{-}{n2}{v3}
\ncline[linestyle=dashed]{-}{v2}{n2}
\ncline[linestyle=dashed]{-}{v3}{n5}
\ncline{-}{v4}{n5}
\ncline{-}{v4}{v6}
\ncline{-}{v6}{v7}
\ncline{-}{v7}{n3}
\ncline[linestyle=dashed]{-}{n3}{v10}
\ncline{-}{v7}{n4}
\ncline[linestyle=dashed]{-}{n4}{v11}

\pscurve(0.6,-0.4)(-0.4,0.8)(3.25,0.65)
\Bogendef{v4}{n4}{q_3}{-35}
\end{pspicture}\label{BC2a}
}
\hspace{0.2cm}
\subfloat[BC2b]{
\begin{pspicture}(-0.65,-1)(6.75,1.2) 
\Knoten{0.2}{0.8}{v1}\nput{180}{v1}{$u$}
\Knoten{0.2}{-0.8}{v2}\nput{180}{v2}{$w$}
\Knoten{0.6}{-0.4}{n2}
\Knoten{1}{0}{v3}
\Knoten{2.125}{0}{n5}
\Knoten{3.25}{0}{v4}
\Knoten{4.375}{0}{v6}
\Knoten{5.5}{0}{v7}
\Knoten{5.9}{0.4}{n3}
\Knoten{5.9}{-0.4}{n4}
\Knoten{6.3}{0.8}{v10}\nput{0}{v10}{$v$}
\Knoten{6.3}{-0.8}{v11}\nput{0}{v11}{$x$}

\Knoten{3.25}{0.65}{n6}
\Knoten{4.375}{0.65}{n7}

\ncline{n5}{n6}
\ncline[linestyle=dashed]{n6}{n7}
\ncline{n7}{n3}
\ncline{n6}{v6}

\ncline[linestyle=dashed]{-}{v1}{v3}
\ncline{-}{n2}{v3}
\ncline[linestyle=dashed]{-}{v2}{n2}
\ncline[linestyle=dashed]{-}{v3}{n5}
\ncline{-}{v4}{n5}
\ncline{-}{v4}{v6}
\ncline{-}{v6}{v7}
\ncline{-}{v7}{n3}
\ncline[linestyle=dashed]{-}{n3}{v10}
\ncline{-}{v7}{n4}
\ncline[linestyle=dashed]{-}{n4}{v11}

\pscurve(0.6,-0.4)(-0.4,0.8)(4.375,0.65)
\Bogendef{v4}{n4}{q_3}{-35}
\end{pspicture} 
\label{BC2b}
}
\end{figure}

\begin{figure}[H] \centering \psset{unit=1cm}
\captionsetup[subfigure]{labelformat=empty}
\psset{dash= 2pt 2pt}
\subfloat[BC2c (BC2d is the variant which arises from BC2c in the same way than BC2b from BC2a)]{
\begin{pspicture}(0.3,-1.2)(7.7,1.3) 
\Knoten{1.3125}{0.8}{v1}\nput{180}{v1}{$u$}
\Knoten{1.7125}{0.4}{n1}
\Knoten{1.3125}{-0.8}{v2}\nput{180}{v2}{$w$}
\Knoten{1.7125}{-0.4}{n2}
\Knoten{2.125}{0}{v3}
\Knoten{3.25}{0}{v4}
\Knoten{4.375}{0}{v6}
\Knoten{5.5}{0}{v7}
\Knoten{5.9}{0.4}{n3}
\Knoten{5.9}{-0.4}{n4}
\Knoten{6.3}{0.8}{v10}\nput{0}{v10}{$v$}
\Knoten{6.3}{-0.8}{v11}\nput{0}{v11}{$x$}

\Knoten{3.25}{0.65}{n6}
\Knoten{4.375}{0.65}{n7}

\ncline[linestyle=dashed]{n6}{n7}
\ncline{n7}{n3}

\ncline{n1}{n6}
\ncline{n7}{v6}

\ncline[linestyle=dashed]{-}{v1}{n1}
\ncline{-}{n1}{v3}
\ncline{-}{n2}{v3}
\ncline[linestyle=dashed]{-}{v2}{n2}
\ncline{-}{v3}{v4}
\ncline{-}{v4}{v6}
\ncline{-}{v6}{v7}
\ncline{-}{v7}{n3}
\ncline[linestyle=dashed]{-}{n3}{v10}
\ncline{-}{v7}{n4}
\ncline[linestyle=dashed]{-}{n4}{v11}

\pscurve(1.7125,-0.4)(0.6,0.9)(3.25,0.65)
\Bogendef{v4}{n4}{q_3}{-35}
\end{pspicture}\label{BC3a}
}
\hspace{0.5cm}
\subfloat[BC3]{
\begin{pspicture}(0.5,-1.2)(6.5,1) 
\Knoten{1.2}{0.8}{v1}\nput{180}{v1}{$u$}
\Knoten{1}{-1}{v2}\nput{180}{v2}{$w$}
\Knoten{1.3}{-0.7}{n2}
\Knoten{1.6}{-0.4}{n7}
\Knoten{2}{0}{v3}
\Knoten{3.25}{0}{v4}
\Knoten{4.375}{0}{v6}
\Knoten{5.5}{0}{v7}
\Knoten{6}{0.5}{n3}
\Knoten{6}{-0.5}{n4}
\Knoten{6.5}{1}{v10}\nput{0}{v10}{$v$}
\Knoten{6.5}{-1}{v11}\nput{0}{v11}{$x$}

\ncline[linestyle=dashed]{-}{v1}{v3}
\ncline{-}{n2}{v3}
\ncline[linestyle=dashed]{-}{v2}{n2}
\ncline{-}{v3}{v4}
\ncline{-}{v4}{v6}
\ncline{-}{v6}{v7}
\ncline{-}{v7}{n3}
\ncline[linestyle=dashed]{-}{n3}{v10}
\ncline{-}{v7}{n4}
\ncline[linestyle=dashed]{-}{n4}{v11}
\ncline{v3}{n7}
\pscurve(1.6,-0.4)(0.4,0.7)(6,0.5)
\ncarc[arcangle=-35]{n2}{v6}
\ncarc[arcangle=-35]{v4}{n4}
\end{pspicture}
}

\subfloat[BC4a]{
\begin{pspicture}(0,-1.2)(6.5,1.7) 
\Knoten{1.2}{0.8}{v1}\nput{180}{v1}{$u$}
\Knoten{1}{-1}{v2}\nput{180}{v2}{$w$}
\Knoten{1.25}{-0.75}{n9}
\Knoten{1.5}{-0.5}{n2}
\Knoten{2}{0}{v3}
\Knoten{3.25}{0}{v4}
\Knoten{4.375}{0}{v6}
\Knoten{5.5}{0}{v7}
\Knoten{6}{0.5}{n3}
\Knoten{6}{-0.5}{n4}
\Knoten{6.5}{1}{v10}\nput{0}{v10}{$v$}
\Knoten{6.5}{-1}{v11}\nput{0}{v11}{$x$}

\Knoten{3.25}{0.65}{n6}
\Knoten{4.375}{0.65}{n7}

\ncline[linestyle=dashed]{n6}{n7}
\ncline{n7}{n3}

\ncline{n7}{v6}

\ncline{n2}{v3}
\ncline[linestyle=dashed]{-}{v1}{v3}
\ncline[linestyle=dashed]{-}{v2}{n9}
\ncline{-}{n9}{n2}
\ncline{-}{v3}{v4}
\ncline{-}{v4}{v6}
\ncline{-}{v6}{v7}
\ncline{-}{v7}{n3}
\ncline[linestyle=dashed]{-}{n3}{v10}
\ncline{-}{v7}{n4}
\ncline[linestyle=dashed]{-}{n4}{v11}

\pscurve(1.5,-0.5)(0.5,0.9)(3.25,0.65)
\pscurve(1.25,-0.75)(0,1.3)(3.25,0.65)

\ncarc[arcangle=-35]{v4}{n4}
\end{pspicture}
}
\hspace{1cm}
\subfloat[BC4b]{
\begin{pspicture}(0,-1.2)(6.5,1.7) 
\Knoten{1.2}{0.8}{v1}\nput{180}{v1}{$u$}
\Knoten{1}{-1}{v2}\nput{180}{v2}{$w$}
\Knoten{1.25}{-0.75}{n9}
\Knoten{1.5}{-0.5}{n2}
\Knoten{2}{0}{v3}
\Knoten{3.25}{0}{v4}
\Knoten{4.375}{0}{v6}
\Knoten{5.5}{0}{v7}
\Knoten{6}{0.5}{n3}
\Knoten{6}{-0.5}{n4}
\Knoten{6.5}{1}{v10}\nput{0}{v10}{$v$}
\Knoten{6.5}{-1}{v11}\nput{0}{v11}{$x$}

\Knoten{3.25}{0.65}{n6}
\Knoten{4.375}{0.65}{n7}

\ncline[linestyle=dashed]{n6}{n7}
\ncline{n7}{n3}

\ncline{n6}{v6}

\ncline{n2}{v3}
\ncline[linestyle=dashed]{-}{v1}{v3}
\ncline[linestyle=dashed]{-}{v2}{n9}
\ncline{-}{n9}{n2}
\ncline{-}{v3}{v4}
\ncline{-}{v4}{v6}
\ncline{-}{v6}{v7}
\ncline{-}{v7}{n3}
\ncline[linestyle=dashed]{-}{n3}{v10}
\ncline{-}{v7}{n4}
\ncline[linestyle=dashed]{-}{n4}{v11}

\pscurve(1.5,-0.5)(0.5,0.9)(3.25,0.65)
\pscurve(1.25,-0.75)(0,1.3)(4.375,0.65)

\ncarc[arcangle=-35]{v4}{n4}
\end{pspicture}
}
\caption{Bad Configurations}
\label{allbc}
\end{figure}
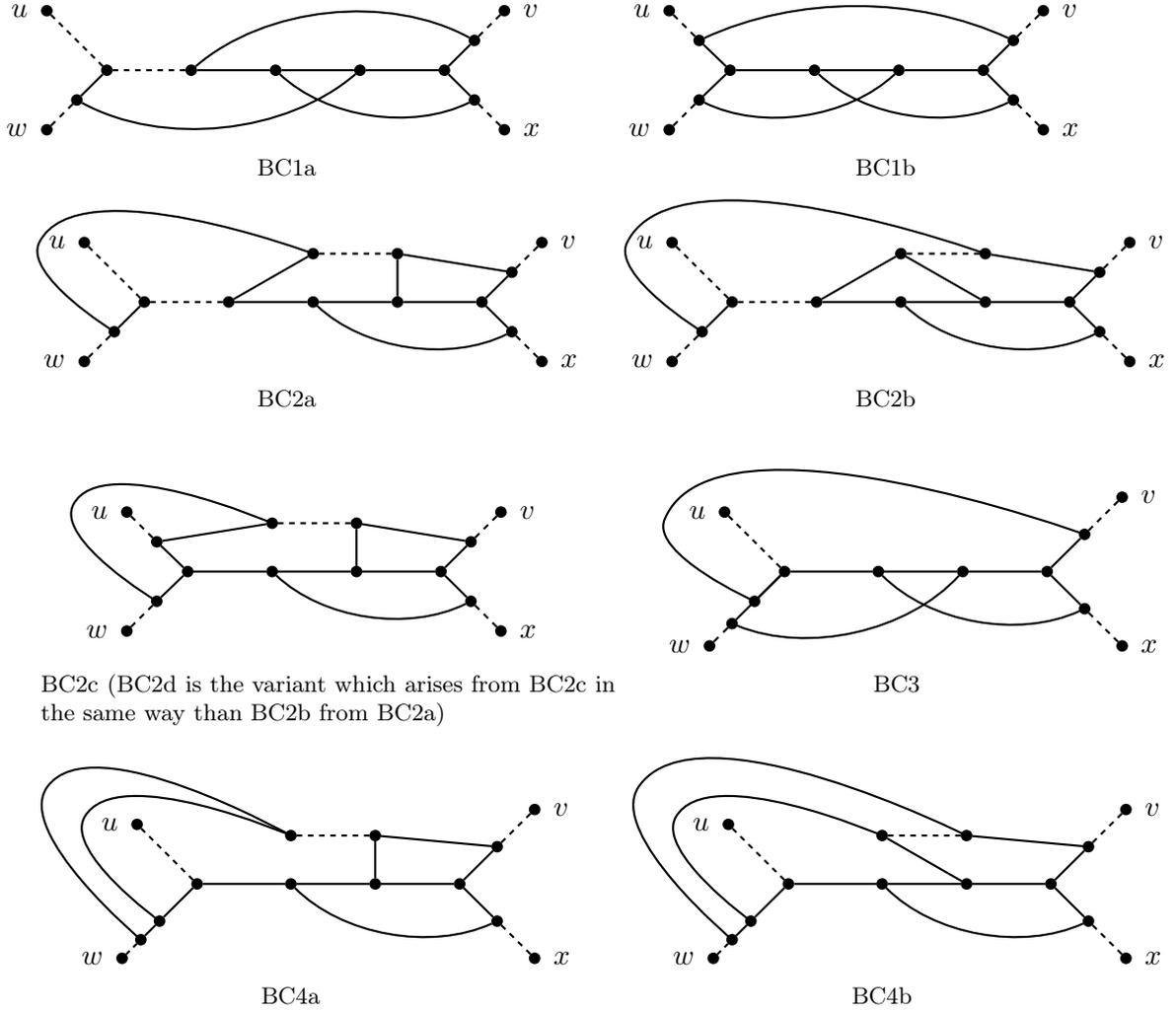

\begin{theorem}[Main Theorem]\label{theoremrobust}
The following three statements are equivalent:
\begin{enumerate}[label=(\arabic*)]
  \item\label{maintheo1} $(G, (s_1,t_1), (s_2, t_2))$ does not contain a subgraph which is a Bad Configuration.
	\item\label{maintheo2} $(G, (s_1,t_1), (s_2, t_2))$ is efficient.
	\item\label{maintheo3} $(G, (s_1,t_1), (s_2, t_2))$ is strongly efficient.
\end{enumerate} 
\end{theorem}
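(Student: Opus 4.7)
The plan is to prove the three-way equivalence by showing (3) $\Rightarrow$ (2) $\Rightarrow$ (1) $\Rightarrow$ (3). The first implication is immediate from the definitions since strong efficiency is a strengthening of efficiency. For (2) $\Rightarrow$ (1), I argue the contrapositive: given a subgraph of $G$ that is a Bad Configuration, I will assign prohibitively large costs to all edges outside the BC and carefully chosen finite costs to the edges of the BC (similar to the instance in \autoref{fig1bc1}) so that the backbone path becomes the \emph{unique} optimal Steiner forest, while the constraints of \lp{} from the characterization theorem force the LP optimum to be strictly less than $c(F)$. Computing the nine cases explicitly (one per BC template) and optimizing the cost parameters also delivers the PoS lower bound $\tfrac{15}{14}$ mentioned for \autoref{fig1bc1}, and simultaneously shows that the unique OPT is not enforceable, giving failure of both efficiency and strong efficiency.

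The main work is the implication (1) $\Rightarrow$ (3). I would proceed by contradiction: assume $(G,(s_1,t_1),(s_2,t_2))$ contains no BC, yet there is a cost function $c$ and an optimal Steiner forest $F$ with unique induced paths $P=(P_1,P_2)$ that is not enforceable. By the LP characterization, \emph{every} optimal solution $(\xi_{i,e})$ of \lp{} must leave some edge $e^\star \in F$ strictly underpaid, that is, $\sum_{i \in S_{e^\star}(P)} \xi_{i,e^\star} < c(e^\star)$. Among all optimal LP solutions I will fix one that is extremal in a controlled sense (e.g.\ lexicographically maximal with respect to a specific ordering of the shares, or placing the slack on a particular edge), so that I can attribute the slack locally and keep the subsequent combinatorial bookkeeping manageable.

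From $e^\star$ I will generate alternative simple $(s_i,t_i)$-paths as follows. The fact that a player's share cannot be raised without violating \eqref{Nash1} means that some deviation inequality of type (2) is tight for that player, which produces an alternative path $P_i'$ whose edges outside $P_i$ are paid exactly by what that player currently contributes on $P_i \setminus P_i'$. I then reroute player $i$ along $P_i'$, re-solve the local accounting, and iterate. Each iteration either (a) produces a Steiner forest strictly cheaper than $F$, contradicting optimality, or (b) spawns further tight deviation paths. Because the number of simple paths is finite and each step reduces a suitable potential, the iteration terminates, and in case (b) the union of original paths, alternative paths, and the slack edge $e^\star$ realizes a subgraph of $G$. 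The final step is to verify that this union must fit one of the nine templates BC1a--BC4b, contradicting (1).

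The hard part is the last verification. I will need a case split on where $e^\star$ sits relative to the three canonical zones of the two-player configuration --- the player~1 exclusive segment, the player~2 exclusive segment, and their shared ``backbone'' --- and on the pattern by which each alternative path leaves and re-enters these zones (branching off before or after the common junctions, rejoining at one of the endpoints of $e^\star$, or spanning across the backbone). Each combination of these positional choices either contradicts the minimality of the chosen LP solution, yields a cheaper forest, or forces exactly one of the nine BC templates to appear as a subgraph. I expect the enumeration of these sub-cases, together with justifying the choice of a ``canonical'' LP solution that prevents spurious patterns, to dominate the technical length of the proof; the number nine of Bad Configurations is precisely the number of minimal obstructions that the LP-slack argument cannot eliminate.
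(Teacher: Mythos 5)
Your proposal follows essentially the same route as the paper: (3)$\Rightarrow$(2) is immediate, (2)$\Rightarrow$(1) is shown by assigning infinite costs outside a Bad Configuration and tailored finite costs inside it so that the unique optimal Steiner forest is provably not enforceable via the LP bound, and (1)$\Rightarrow$(3) argues by contradiction using the LP characterization, an extremal (``pushed-to-the-left'') optimal solution, tight deviation inequalities that generate alternative paths, and a positional case analysis on where the unpaid edge and the alternatives sit, ending either in a cheaper Steiner forest or in a forbidden subgraph. The one point your sketch understates is the final verification: the union of the tight alternatives with $P_1\cup P_2$ need not be node-disjoint in the way the nine templates require, so the paper must additionally classify the possible intersection patterns into twelve ``non-Bad-Configuration'' types and refute each one using cost-share identities, which is where the bulk of the technical case analysis actually lives.
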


%

It is clear that \ref{maintheo3} implies \ref{maintheo2};  a sketch of the proof of $\ref{maintheo2} \Rightarrow \ref{maintheo1}$ and $\ref{maintheo1} \Rightarrow \ref{maintheo3}$ can be found in the next section.


%
%

\section{Sketch of the Proof of \autoref{theoremrobust}}
\subsection{\ref{maintheo2} implies \ref{maintheo1}}
We assume that $(G, (s_1,t_1), (s_2,t_2))$ contains a Bad Configuration. Then we define a cost function~$c$ so that the optimal Steiner forest is unique and not enforceable, showing the claim. 
	To this end we choose a subgraph of $G$ that is a BC and set $c(e)=\infty$ if the edge $e$ is not contained in this subgraph. We now have to distinguish between the different types of BCs.
	For BC1a, the costs displayed in Figure~\ref{fig1bc1} carry over (if a path consists of more than one edge, choose the costs of the corresponding edges arbitrarily so that they sum up to the displayed cost on the path; all paths with nonzero costs contain at least one edge because of the definition of the corresponding type of BC).	Costs for the other types of BCs can be found in \autoref{subsecproof}.
	

\subsection{\ref{maintheo1} implies \ref{maintheo3}}\label{subsec:proof_sketch}
Consider an arbitrary optimal Steiner forest $F$ (w.r.t. an arbitrary cost function $c$) and an optimal solution $(\xi_{i,e})_{i \in N, e \in P_i}$ of the corresponding \lp{}. Assume that condition \eqref{Nash2} is not satisfied, i.e. there is an edge that is not paid completely.

%
%
%

Note that $P_1 \cap P_2$ has to contain at least one edge, since otherwise $F$ is enforceable. Furthermore, $P_1 \cap P_2$ has to be a simple path, since $F$ contains no cycles. We refer to the edges of $P_1 \cap P_2$ as the \emph{commonly used edges} or \emph{the middle part} (cf. Figure~\ref{lmr}).
Note that we can w.l.o.g. assume that $s_1$ and $s_2$ are in the left part, otherwise we can just swap source and sink since the graph is undirected. Figure~\ref{lmr} also illustrates the complete ordering on the edges of $F$ that we use throughout the proof (the numbers indicate in which order we consider the subpaths; the arrows indicate increasing order within the subpaths).


\begin{figure}[H] \centering \psset{unit=1.2cm}
\begin{pspicture}(-1,-2.3)(6,1.5) 
\Knoten{-1.1}{1}{v1}\nput{180}{v1}{$s_1$}
\Knoten{-1.1}{-1}{v2}\nput{180}{v2}{$s_2$}
\Knoten{0.4}{0}{v3}
\Knoten{4}{0}{v4}
\Knoten{5.5}{1}{v5}\nput{0}{v5}{$t_1$}
\Knoten{5.5}{-1}{v6}\nput{0}{v6}{$t_2$}

\ncline{-}{v1}{v3}
\ncline{-}{v2}{v3}
\ncline{-}{v3}{v4}
\ncline{v4}{v5}
\ncline{-}{v4}{v6}

\psbrace[rot=90, nodesepA=-0.7cm, nodesepB=0.4cm, braceWidthInner=0.5em, braceWidthOuter=0.5em](-1.1,-1.5)(0.35,-1.5){left part}
\psbrace[rot=90, nodesepA=-1cm, nodesepB=0.4cm, braceWidthInner=0.5em, braceWidthOuter=0.5em](0.45,-1.5)(3.95,-1.5){middle part}
\psbrace[rot=90, nodesepA=-0.7cm, nodesepB=0.4cm, braceWidthInner=0.5em, braceWidthOuter=0.5em](4.05,-1.5)(5.5,-1.5){right part}

\psline[arrows=->, linestyle=dashed, arrowsize=5pt](-0.5,1.1)(0.5,0.44)\rput{-33}(0.2,1.1){(1)}
\psline[arrows=->, linestyle=dashed, arrowsize=5pt](-0.5,-1.16)(0.5,-0.44)\rput{33}(0.2,-1.1){(2)}
\psline[arrows=->, linestyle=dashed, arrowsize=5pt](1.2,0.44)(3.2,0.44)\rput{0}(2.2,0.8){(3)}
\psline[arrows=->, linestyle=dashed, arrowsize=5pt](3.9,0.5)(4.9,1.16)\rput{33}(4.1,1.1){(4)}
\psline[arrows=->, linestyle=dashed, arrowsize=5pt](3.9,-0.5)(4.9,-1.16)\rput{-33}(4.1,-1.1){(5)}
\end{pspicture}
\caption{Left, middle and right part of $F$; complete ordering on edges of $F$} \label{lmr}
\end{figure}
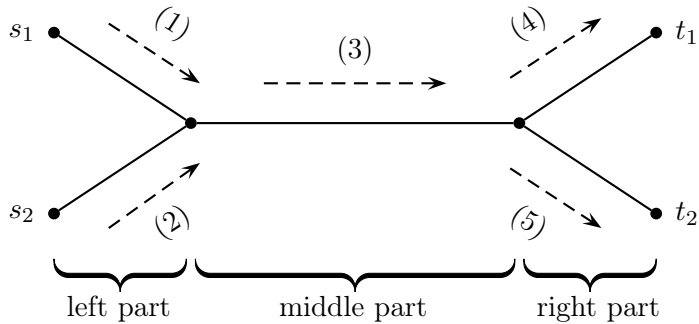

\begin{definition}\label{def:ptl}
We call an optimal solution $(\xi_{i,e})_{i \in N, e \in P_i}$ for \lp{} \emph{pushed to the left (\ptl)}, if the following changes of the cost shares (which we denote a push operation) do not yield a feasible solution for \lp{} (for every choice of $i,e,f$ and $\epsilon>0$): 
\vspace{-0.25em}
\begin{center}
	Increase the cost share $\xi_{i,e}$ of Player $i$ on edge $e$ by $\epsilon$ and simultaneously\\
	decrease $\xi_{i,f}$ by $\epsilon$, where $f$ is an edge with higher order than $e$.
\end{center} 
\vspace{-0.25em}
\end{definition}
To obtain \ptl-cost shares $(\xi_{i,e})_{i \in N, e \in P_i}$, we can use Algorithm~\nameref{push} (see \autoref{subsec:proof_2}).
%
%
Let $e$ be the first edge (with respect to the order) which is not completely paid according to the computed \ptl-cost shares. We distinguish between the cases that $e$ is in the left part of $F$ (Case $L$), the middle part of $F$ (Case $M$) or the right part of $F$ (Case $R$).
In each of these cases, we get a contradiction (see \autoref{lemma1}, \autoref{lemma2} and \autoref{lemma3} for complete proofs). 

We now describe some of the main ideas for the Cases $L$, $M$ and $R$. 
If $e \in P_i$, Player $i$ needs to have a \emph{tight alternative} $q$ for $e$ (corresponding to a tight inequality in~\eqref{Nash1}), i.e., $q$ is a simple path which closes a unique cycle $C(q)$ with $P_i$ containing $e$, and the cost of $q$ equals the sum of cost shares Player $i$ pays on the edges of $P_i \cap C(q)$:
If there is no such tight alternative, increasing the cost share $\xi_{i,e}<c(e)$ without changing the other cost shares would yield a feasible solution for \lp{} with higher objective function value. 
We denote the edges of $P_i \cap C(q)$ as the edges which are \emph{substituted by} $q$. 

\textbf{Case $L$:} Assume that $e \in P_1$ holds ($e \in P_2$ follows analogously). 
Under all tight alternatives for Player $1$ which substitute $e$, let $q_1$ be \emph{smallest}, that is, 
$q_1$ minimizes the maximum occurring order of an edge in $C(q_1)\cap P_1$.
Let $f$ be the ``last'' edge which is substituted by $q_1$, i.e., where the maximum is attained.
The situation for the case that $f$ is in the middle part is illustrated in Figure~\ref{sketchcaseLa}; the other cases ($f$ is in the right or left part) can be treated very similarly. 

\begin{figure}[H] \centering \psset{unit=1cm}
\subfloat[$f$ is in the middle part]{
 \begin{pspicture}(-0.6,-1)(5.2,1.6) 
\Knoten{-0.2}{-0.8}{v7}\nput{180}{v7}{$s_2$}
\Knoten{-0.4}{1.4}{v2}\nput{180}{v2}{$s_1$}
\Knoten{-0.1}{1.1}{v3}
\Knoten{0.2}{0.8}{v4}
\Knoten{0.7}{0.3}{v5}
\Knoten{1}{0}{v6}
\Knoten{2}{0}{v12}
\Knoten{3}{0}{v13}
\Knoten{4}{0}{v14}
\Knoten{5}{1}{v15}\nput{0}{v15}{$t_1$}
\Knoten{4.8}{-0.8}{v16}\nput{0}{v16}{$t_2$}

\ncline{v2}{v3}
\Kante{v13}{v12}{f}
\Kante{v4}{v5}{e}
\Bogendashed{v3}{v13}{q_1}{35}
\ncline{v3}{v4}
\ncline{v5}{v6}
\ncline{v6}{v7}
\ncline{v6}{v12}
\ncline{v13}{v14}
\ncline{v14}{v15}
\ncline{v14}{v16}
\end{pspicture} \label{sketchcaseLa}
}
\hspace{1cm}
\subfloat[$F^*$]{
 \begin{pspicture}(-0.6,-1)(5.2,1.6) 
\Knoten{-0.2}{-0.8}{v7}\nput{180}{v7}{$s_2$}
\Knoten{-0.4}{1.4}{v2}\nput{180}{v2}{$s_1$}
\Knoten{-0.1}{1.1}{v3}
\Knoten{1}{0}{v6}
\Knoten{2}{0}{v12}
\Knoten{3}{0}{v13}
\Knoten{4}{0}{v14}
\Knoten{5}{1}{v15}\nput{0}{v15}{$t_1$}
\Knoten{4.8}{-0.8}{v16}\nput{0}{v16}{$t_2$}

\Kante{v13}{v12}{f}
\ncline{v2}{v3}
\Bogen{v3}{v13}{q_1}{35}
\ncline[linestyle=dashed]{v3}{v6}
\ncline{v6}{v7}
\ncline{v6}{v12}
\ncline{v13}{v14}
\ncline{v14}{v15}
\ncline{v14}{v16}
\end{pspicture} \label{sketchcaseLb}
}
\caption{Case $L$}
\label{sketchcaseL}
\end{figure}

\vspace{-1em}
We get that Player $1$ pays the edges of $C(q_1)\cap P_1$ before $e$ (w.r.t. the ordering) completely since those edges are not contained in $P_2$ and $e$ is the first edge that is not paid completely. The same reasons yield $\xi_{1,e}<c(e)$. Furthermore, Player $1$ pays nothing on the edges of $C(q_1)\cap P_1$ after $e$. To see this, assume that there exists such an edge $h$ with $\xi_{1, h}>0$. But then a push operation with $e$ and $h$ (and a suitably small $\epsilon$) yields a feasible solution for \lp{}, because $q_1$ is a smallest alternative for $e$, contradiction.
Let $F^*$ be the Steiner forest which arises from $F$ by adding $q_1$ and deleting the edges of $C(q_1) \cap P_1$ which are in the left part (cf. Figure~\ref{sketchcaseLb}). 
Since the cost of $q_1$ equals the sum of cost shares of the deleted edges and this sum is strictly smaller than the costs of these edges, $c(F^*)<c(F)$. The full proof for Case $L$ can be found in the proof of~\autoref{lemma1},~\autoref{subsubsec:caseL}.

\textbf{Case $M$:} Now both players need to have tight alternatives $q_1$ and $q_2$ for $e$. It is clear that we can construct a cheaper Steiner forest if there are tight alternatives $q_1$ and $q_2$ for $e$ which substitute the same edges of the middle part, or a tight alternative for $e$ which substitutes only edges of the middle part. 
We then distinguish between the two cases that all tight alternatives for $e$ of one player substitute edges of the right part, or both players have a tight alternative for $e$ which substitutes edges of the left part. 
Since the first case can be treated similarly to Case $L$, we describe how to proceed in the second case. Let $q_1$ (for Player 1) and $q_2$ (for Player 2) be smallest tight alternatives for $e$ which substitute edges of the left part. Consider the case that $q_1$ substitutes less edges of the middle part than $q_2$, and $q_2$ does not substitute edges of the right part, see Figure~\ref{sketchcaseM} (the other cases follow similarly). Adding $q_1$ and $q_2$, and deleting the dashed edges (cf. Figure~\ref{sketchcaseM}) yields a  Steiner forest $F^*$ with smaller cost than $F$ (note that Player $2$ pays nothing on the edges after $e$ which are substituted by $q_2$), and thus we get a contradiction. 
The full proof for Case $M$ can be found in the proof of~\autoref{lemma2},~\autoref{subsubsec:caseM}.

\begin{figure}[H] \centering \psset{unit=1.3cm}
\subfloat{
 \begin{pspicture}(-0.1,-1)(5.3,1) 
\Knoten{0.2}{0.8}{v1}\nput{180}{v1}{$s_1$}
\Knoten{0.2}{-0.8}{v2}\nput{180}{v2}{$s_2$}
\Knoten{0.6}{0.4}{v11}
\Knoten{0.6}{-0.4}{v12}
\Knoten{1}{0}{v3}
\Knoten{1.6}{0}{v4}
\Knoten{2.3}{0}{v5}
\Knoten{2.9}{0}{v6}
\Knoten{3.6}{0}{v7}
\Knoten{4.2}{0}{v8}
\Knoten{5}{0.8}{v13}\nput{0}{v13}{$t_1$}
\Knoten{5}{-0.8}{v14}\nput{0}{v14}{$t_2$}

\ncline{v1}{v11}
\ncline{-}{v11}{v3}
\ncline{v2}{v12}
\ncline{-}{v12}{v3}
\ncline{-}{v3}{v4}
\Kante{v4}{v5}{e}
\ncline{-}{v5}{v6}
\ncline{v6}{v7}
\ncline{v7}{v8}
\ncline{-}{v8}{v13}
\ncline{-}{v8}{v14}
\Bogendashed{v11}{v6}{q_1}{35}
\Bogendashed{v12}{v7}{q_2}{-35}
\end{pspicture}\label{sketchcaseMa}
}
\hspace{0.7cm}
\subfloat{
\begin{pspicture}(-0.1,-1)(5.3,1) 
\Knoten{0.2}{0.8}{v1}\nput{180}{v1}{$s_1$}
\Knoten{0.2}{-0.8}{v2}\nput{180}{v2}{$s_2$}
\Knoten{0.6}{0.4}{v11}
\Knoten{0.6}{-0.4}{v12}
\Knoten{1}{0}{v3}
\Knoten{2.9}{0}{v6}
\Knoten{3.6}{0}{v7}
\Knoten{4.2}{0}{v8}
\Knoten{5}{0.8}{v13}\nput{0}{v13}{$t_1$}
\Knoten{5}{-0.8}{v14}\nput{0}{v14}{$t_2$}

\ncline{v1}{v11}
\ncline[linestyle=dashed]{v11}{v3}
\ncline{v2}{v12}
\ncline[linestyle=dashed]{-}{v12}{v3}
\ncline[linestyle=dashed]{-}{v3}{v6}
\ncline{v6}{v7}
\ncline{-}{v7}{v8}
\ncline{-}{v8}{v13}
\ncline{-}{v8}{v14}
\Bogen{v11}{v6}{q_1}{35}
\Bogen{v12}{v7}{q_2}{-35}
\end{pspicture}\label{sketchcaseMb}
}
\caption{Case $M$}\label{sketchcaseM}
\end{figure}

\newpage
\textbf{Case $R$:}
Note that in the Cases $L$ and $M$ we did not need any arguments according to BCs. This already indicates that Case $R$ is more complicated. 
We mention here only a few of the proof ideas (a full proof can be found in \autoref{lemma3}, \autoref{subsubsec:caseR}). 

Again, we consider a smallest alternative $q_1$ of Player $1$ for $e$ (assuming that $e \in P_1$). It is easy to see that $q_1$ has to substitute some edges of the middle part, since otherwise there is a cheaper Steiner forest. 
The same argument shows that there has to be an edge $f$ in the middle part (substituted by $q_1$) which Player $2$ does not pay completely. Now we want to argue that Player~$2$ needs to have a tight alternative which substitutes $f$. Note that for an arbitrary \ptl-solution of \lp{} 
we cannot guarantee that whenever a player does not pay an edge in the middle part completely, this player has a tight alternative for this edge.
However, we can achieve this property for one fixed Player~$i$ by maximizing the sum of cost shares of Player~$i$ among all optimal solutions for \lp{} for which $e$ is the first edge which is not completely paid. Let us assume that this property holds for Player~$2$ and consider a tight alternative of Player~$2$ which substitutes $f$. If this alternative substitutes only edges of the middle part, or the same edges of the middle part as $q_1$, one can construct a cheaper Steiner forest. The remaining subcases can be organized as follows: If there is no tight alternative of Player~$2$ which substitutes $f$ and edges of the right part, let~$q_2$ be any tight alternative for~$f$ (which then substitutes edges of the left part; Subcase~$R.3$). 
Otherwise, let $q_2$ be a tight alternative of Player~$2$ which substitutes~$f$ and edges of the right part maximizing the minimum occuring order of an edge in $C(q_2)\cap P_2$. Then~$q_2$ can either substitute less (Subcase~$R.1$) or more (Subcase~$R.2$) edges of the middle part than~$q_1$, see Figure~\ref{sketchcaseR} for the subcases that~$q_1$ (or~$q_2$ in~$R.2$) does not substitute edges of the left part. 

\begin{figure}[H] \centering \psset{unit=1.1cm}
\subfloat[Subcase $R.1$]{
\begin{pspicture}(0.8,-1.3)(7.1,1.3) 
\Knoten{0.96}{1.04}{v-1}\nput{180}{v-1}{$s_1$}
\Knoten{0.96}{-1.04}{v0}\nput{180}{v0}{$s_2$}
\Knoten{2}{0}{v-3}
\Knoten{2.8}{0}{v4}
\Knoten{3.6}{0}{v5}
\Knoten{4.4}{0}{v6}
\Knoten{5.2}{0}{v7}
\Knoten{6}{0}{v8}
\Knoten{7.04}{1.04}{v1}\nput{0}{v1}{$t_1$}
\Knoten{7.04}{-1.04}{v2}\nput{0}{v2}{$t_2$}
\Knoten{6.87}{0.87}{v11}
\Knoten{6.52}{-0.52}{v12}
\Knoten{6.64}{0.64}{n1}
\Knoten{6.24}{0.24}{n2}

\ncline{-}{v-1}{v-3}
\ncline{-}{v0}{v-3}
\ncline{-}{v-3}{v4}
\ncline{-}{v4}{v5}
\ncline{-}{v5}{v6}
\Kante{v7}{v6}{f}
\ncline{-}{v7}{v8}
\ncline{-}{v8}{v2}
\ncline{v8}{n2}
\Kante{n2}{n1}{e}
\ncline{n1}{v1}
\Bogendashed{v11}{v4}{q_1}{-30}
\Bogendashed{v5}{v12}{q_2}{-30}
\end{pspicture}\label{sketchcaseRa}
}
\hspace{1cm}
\subfloat[Subcase $R.2$]{
\begin{pspicture}(0.8,-1.3)(7.1,1.3) 

\Knoten{0.96}{1.04}{v-1}\nput{180}{v-1}{$s_1$}
\Knoten{0.96}{-1.04}{v0}\nput{180}{v0}{$s_2$}
\Knoten{2}{0}{v-3}
\Knoten{2.8}{0}{v4}
\Knoten{3.6}{0}{v5}
\Knoten{4.4}{0}{v6}
\Knoten{5.2}{0}{v7}
\Knoten{6}{0}{v8}
\Knoten{7.04}{1.04}{v1}\nput{0}{v1}{$t_1$}
\Knoten{7.04}{-1.04}{v2}\nput{0}{v2}{$t_2$}
\Knoten{6.87}{0.87}{v11}
\Knoten{6.64}{0.64}{n1}
\Knoten{6.24}{0.24}{n2}
\Knoten{6.52}{-0.52}{v12}

\ncline{-}{v-1}{v-3}
\ncline{-}{v0}{v-3}
\ncline{-}{v-3}{v4}
\ncline{-}{v4}{v5}
\ncline{-}{v5}{v6}
\Kante{v7}{v6}{f}
\ncline{-}{v7}{v8}
\ncline{-}{v8}{v2}
\ncline{-}{v8}{n2}
\Kante{n2}{n1}{e}
\ncline{-}{n1}{v1}
\Bogendashed{v11}{v5}{q_1}{-30}
\Bogendashed{v4}{v12}{q_2}{-30}
\end{pspicture}\label{sketchcaseRb}
}
\caption{Subcases $R.1$ and $R.2$} \label{sketchcaseR}
\end{figure}
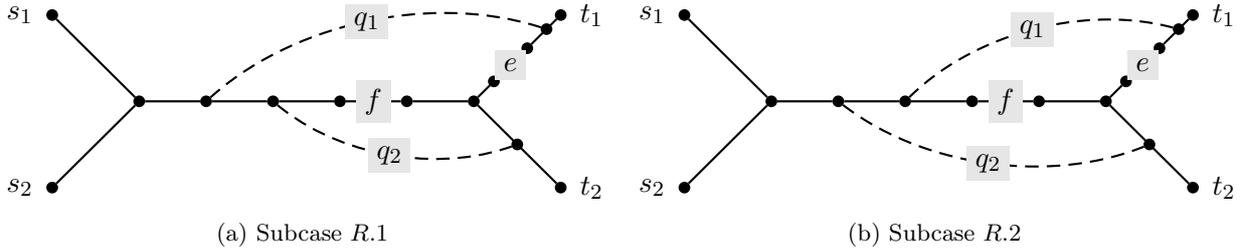

We now describe how to proceed with the situation illustrated in Figure~\ref{sketchcaseRa}.
One can construct a cheaper Steiner forest if Player $2$ completely pays the edges of the commonly used part which are substituted by $q_1$, but not by $q_2$. Thus, there has to be such an edge $h$ which Player $2$ does not pay completely, together with a tight alternative $q_2'$ for $h$. Similarly as for $q_2$, we have to distinguish between several subcases depending on the properties of $q_2'$, see Figure~\ref{sketchcaseR2} for two possible subcases. 

\begin{figure}[H] \centering \psset{unit=1.1cm}
\subfloat[Existence of tight alternatives for Player 1?]{
\begin{pspicture}(0.8,-1.4)(7.1,1.3) 
\Knoten{0.96}{1.04}{v-1}\nput{180}{v-1}{$s_1$}
\Knoten{0.96}{-1.04}{v0}\nput{180}{v0}{$s_2$}
\Knoten{1.48}{-0.52}{v-2}
\Knoten{2}{0}{v-3}
\Knoten{2.8}{0}{v4}
\Knoten{3.6}{0}{v5}
\Knoten{4.4}{0}{v6}
\Knoten{5.2}{0}{v7}
\Knoten{6}{0}{v8}
\Knoten{7.04}{1.04}{v1}\nput{0}{v1}{$t_1$}
\Knoten{7.04}{-1.04}{v2}\nput{0}{v2}{$t_2$}
\Knoten{6.87}{0.87}{v11}
\Knoten{6.52}{-0.52}{v12}
\Knoten{6.64}{0.64}{n1}
\Knoten{6.24}{0.24}{n2}

\ncline{-}{v-1}{v-3}
\ncline{-}{v0}{v-3}
\ncline{-}{v-3}{v4}
\Kante{v4}{v5}{h}
\ncline{-}{v5}{v6}
\Kante{v6}{v7}{f}
\ncline{-}{v7}{v8}
\ncline{-}{v8}{v2}
\ncline{v8}{n2}
\Kante{n2}{n1}{e}
\ncline{n1}{v1}
\Bogendashed{v11}{v4}{q_1}{-30}
\Bogendashed{v5}{v12}{q_2}{-30}
\Bogendashed{v-2}{v2}{q_2'}{-10}
\end{pspicture}\label{sketchcaseR2a}
}
\hspace{1cm}
\subfloat[Bad Configuration?]{
\begin{pspicture}(0.8,-1.4)(7.1,1.3) 
\Knoten{0.96}{1.04}{v-1}\nput{180}{v-1}{$s_1$}
\Knoten{0.96}{-1.04}{v0}\nput{180}{v0}{$s_2$}
\Knoten{1.48}{-0.52}{v-2}
\Knoten{2}{0}{v-3}
\Knoten{2.8}{0}{v4}
\Knoten{3.6}{0}{v5}
\Knoten{4.4}{0}{v6}
\Knoten{5.2}{0}{v7}
\Knoten{6}{0}{v8}
\Knoten{7.04}{1.04}{v1}\nput{0}{v1}{$t_1$}
\Knoten{7.04}{-1.04}{v2}\nput{0}{v2}{$t_2$}
\Knoten{6.87}{0.87}{v11}
\Knoten{6.52}{-0.52}{v12}
\Knoten{6.64}{0.64}{n1}
\Knoten{6.24}{0.24}{n2}

\ncline{-}{v-1}{v-3}
\ncline{-}{v0}{v-3}
\ncline{-}{v-3}{v4}
\Kante{v4}{v5}{h}
\ncline{-}{v5}{v6}
\Kante{v6}{v7}{f}
\ncline{-}{v7}{v8}
\ncline{-}{v8}{v2}
\ncline{v8}{n2}
\Kante{n2}{n1}{e}
\ncline{n1}{v1}
\Bogendashed{v11}{v4}{q_1}{-30}
\Bogendashed{v5}{v12}{q_2}{-35}
\Bogendashed{v-2}{v6}{q_2'}{-35}
\end{pspicture}\label{sketchcaseR2b}
}
\caption{Different Subcases of $R.1$} \label{sketchcaseR2}
\end{figure}
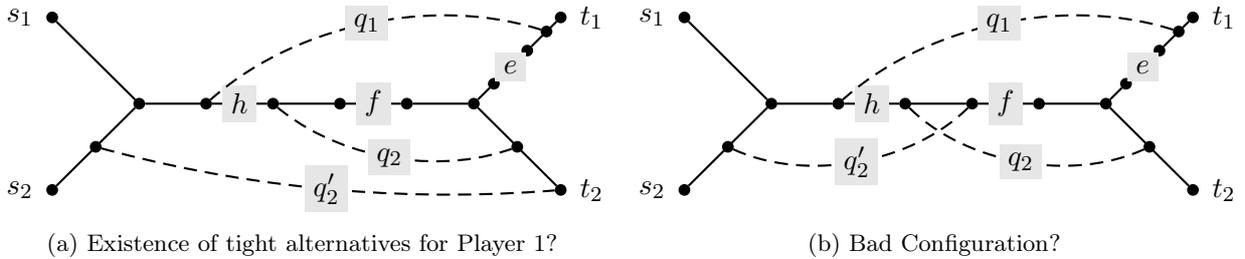

First consider the subcase illustrated in Figure~\ref{sketchcaseR2a}. If Player $1$ completely pays the edges of the middle part which are not substituted by $q_1$, using $q_1$ and $q_2'$ yields a cheaper Steiner forest, so we can assume that this does not hold. Now we would like to argue that there has to be a tight alternative for Player $1$ substituting such an edge; but as mentioned above, this is not immediately clear. To ensure this, we introduced another additional property for the given cost shares (for more details, see \autoref{def:properties} and \autoref{lemma:properties} in \autoref{subsubsec:caseR}). 
We now consider the subcase of Figure~\ref{sketchcaseR2b}, which turned out to be the most challenging problem in the proof.

Note that the subgraph illustrated Figure~\ref{sketchcaseR2b} is a BC1a only if the paths $q_1,q_2$ and $q_2'$ are pairwise node-disjoint and furthermore internal node-disjoint with $P_1 \cup P_2$. Depending on these properties, we grouped all possible different situations in twelve ``types'' (Figure~\ref{sketchcaseR3} illustrates two of them). 


\begin{figure}[H] \centering \psset{unit=1.1cm}
\subfloat{
\begin{pspicture}(0.8,-1.3)(7.1,1.3) 
\Knoten{0.96}{1.04}{v-1}\nput{180}{v-1}{$s_1$}
\Knoten{0.96}{-1.04}{v0}\nput{180}{v0}{$s_2$}
\Knoten{1.48}{-0.52}{v-2}
\Knoten{2}{0}{v-3}
\Knoten{2.8}{0}{v4}
\Knoten{3.6}{0}{v5}
\Knoten{4.4}{0}{v6}
\Knoten{5.2}{0}{v7}
\Knoten{6}{0}{v8}
\Knoten{7.04}{1.04}{v1}\nput{0}{v1}{$t_1$}
\Knoten{7.04}{-1.04}{v2}\nput{0}{v2}{$t_2$}
\Knoten{6.87}{0.87}{v11}
\Knoten{6.52}{-0.52}{v12}
\Knoten{6.64}{0.64}{n1}
\Knoten{6.24}{0.24}{n2}

\Knoten{4}{-0.52}{n3}

\ncline{-}{v-1}{v-3}
\ncline{-}{v0}{v-3}
\ncline{-}{v-3}{v4}
\Kante{v4}{v5}{h}
\ncline{-}{v5}{v6}
\Kante{v6}{v7}{f}
\ncline{-}{v7}{v8}
\ncline{-}{v8}{v2}
\ncline{v8}{n2}
\Kante{n2}{n1}{e}
\ncline{n1}{v1}
\ncarc[linestyle=dashed, arcangle=10]{n3}{v5}
\ncarc[linestyle=dashed, arcangle=-10]{n3}{v6}
\Bogendashed{v11}{v4}{q_1}{-30}
\Bogendashed{n3}{v12}{q_2}{-25}
\Bogendashed{v-2}{n3}{q_2'}{-25}
\end{pspicture}\label{sketchcaseR3a}
}
\hspace{1cm}
\subfloat{
 \begin{pspicture}(0.8,-1.3)(7.1,1.3) 
\Knoten{0.96}{1.04}{v-1}\nput{180}{v-1}{$s_1$}
\Knoten{0.96}{-1.04}{v0}\nput{180}{v0}{$s_2$}
\Knoten{1.48}{-0.52}{v-2}
\Knoten{1.48}{0.52}{n3}
\Knoten{2}{0}{v-3}
\Knoten{2.8}{0}{v4}
\Knoten{3.6}{0}{v5}
\Knoten{4.4}{0}{v6}
\Knoten{5.2}{0}{v7}
\Knoten{6}{0}{v8}
\Knoten{7.04}{1.04}{v1}\nput{0}{v1}{$t_1$}
\Knoten{7.04}{-1.04}{v2}\nput{0}{v2}{$t_2$}
\Knoten{6.87}{0.87}{v11}
\Knoten{6.52}{-0.52}{v12}
\Knoten{6.64}{0.64}{n1}
\Knoten{6.24}{0.24}{n2}

\ncline{-}{v-1}{v-3}
\ncline{-}{v0}{v-3}
\ncline{-}{v-3}{v4}
\Kante{v4}{v5}{h}
\ncline{-}{v5}{v6}
\Kante{v6}{v7}{f}
\ncline{-}{v7}{v8}
\ncline{-}{v8}{v2}
\ncline{v8}{n2}
\Kante{n2}{n1}{e}
\ncline{n1}{v1}
\Bogendashed{n3}{v6}{q_2'}{35}
\Bogendashed{v11}{v4}{q_1}{-30}
\Bogendashed{v5}{v12}{q_2}{-35}
\Bogendashed{v-2}{n3}{q_2'}{25}
\end{pspicture}\label{sketchcaseR3ab}
}
\caption{Two types of No Bad Configurations} \label{sketchcaseR3}
\end{figure}
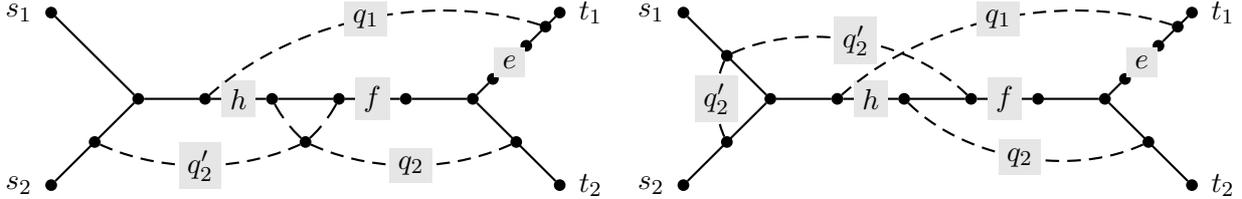

In total, 16 subgraphs similar to the one illustrated in Figure~\ref{sketchcaseR2b} occur, for which we have to investigate all twelve types, leading to $16\cdot 12=192$ subcases (see~\autoref{subsubsec:caseR} and \ref{subsubsec:pbcs}). 

\newcommand{\Knotensmall}[3]{\cnode*(#1,#2){0.1pt}{#3}}

\section{Analyzing Different Graph Classes} \label{secclasses}
In this section we show that various well-studied classes of graphs only contain strongly efficient graphs. Furthermore some negative results are given, i.e. we present graph classes which do contain non-efficient graphs.
\begin{theorem}
The following graph classes only contain strongly efficient graphs. This holds even without specification of the terminal nodes $s_1,t_1,s_2$ and $t_2$:
\begin{enumerate}[label=(\arabic*),itemsep=-0em,leftmargin=*]
\item \label{effsepa} generalized series-parallel graphs,
\item \label{effcycle} graphs with the property that all cycles have length $\leq 6$,
\item \label{effwheel} wheel and fan graphs.
\end{enumerate}
\end{theorem}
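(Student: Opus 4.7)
By the Main Theorem it suffices, for each class, to show that no graph $G$ in the class can contain, for any choice of terminals $s_1,t_1,s_2,t_2$, a subgraph that is one of the nine Bad Configurations of Definition~\ref{def:BC}. For each of the three classes the plan is therefore to identify a structural property shared by every graph in $\mathcal{BC}$ that the class in question provably lacks.

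\textbf{Generalized series-parallel graphs.} The plan is to show that every BC contains a topological $K_4$, which generalized series-parallel graphs do not. For BC1a I would take as branch vertices the four ``chord-attachment'' points on the backbone, namely $n_5,v_4,v_6,v_7$. The six paths needed for a $K_4$-subdivision are the three backbone edges $n_5v_4$, $v_4v_6$, $v_6v_7$, together with $n_5 \to q_1 \to n_3 \to v_7$, $v_4 \to q_3 \to n_4 \to v_7$, and $n_5 \to v_3 \to n_2 \to q_2 \to v_6$, and these are pairwise internally vertex-disjoint by construction. The same choice of branch vertices and analogous paths works for the remaining eight BCs, since they differ from BC1a only in how $u,v,w,x$ feed into the backbone and (for the BC2/BC4 variants) in where the third chord reattaches; the $K_4$-subdivision is always carried by the three chord paths plus the backbone segments between them.

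\textbf{Graphs with all cycles of length at most $6$.} The plan is to exhibit, in every BC, a simple cycle of length at least~$7$. For BC1a the cycle I would use traverses all three chord paths, connected by short backbone pieces: $w \to q_2 \to v_6 \to v_4 \to q_3 \to x \to v_7 \to v \to q_1 \to u \to w$ (after collapsing every dashed path to a single vertex, so that $u=v_3=n_5$, $w=n_2$, $v=n_3$, $x=n_4$). This uses the three mandatory chord paths (each contributing at least one edge) together with the four mandatory backbone edges $v_6v_4$, $v_7x$, $v_7v$, $uw$, giving total length $|q_1|+|q_2|+|q_3|+4 \geq 7$. For each of the remaining BCs, an analogous ``tour of the three chord paths glued by backbone edges'' yields a cycle of length at least $7$; the precise cycle depends on where the chords reattach, but the edge count is always dominated by the same $3 + 4$ lower bound.

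\textbf{Wheel and fan graphs.} This is the most technical case. The key structural fact I would exploit is that in $W_n$ (resp.\ $F_n$), every non-hub vertex has degree $3$ (with two exceptions of degree $2$ in the fan), and any collection of internally vertex-disjoint paths between two non-hub vertices contains at most one path through the hub~$h$; all other disjoint paths must lie entirely on the rim, which is a single cycle (resp.\ path). Every BC contains three pairwise internally vertex-disjoint chord paths $q_1,q_2,q_3$ attaching at four distinct internal backbone vertices, all of which must also be joined by disjoint backbone paths to the four terminals. I would argue by case analysis on which (if any) of $q_1,q_2,q_3$ uses $h$ and on which backbone vertices coincide with $h$ that, after removing the at most one chord through $h$, the remaining two chord paths plus the backbone would force two edge-disjoint chords on a single rim cycle attaching at four distinct vertices, which is impossible. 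The main obstacle will be organizing this case analysis uniformly for the nine BCs and both graph families; I plan to handle wheels and fans together by reasoning only on the rim/spine cycle and noting that the single possible hub-path reduces the problem, in either case, to embedding two disjoint chords into a cycle in a way that the cyclic structure forbids.
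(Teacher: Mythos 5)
Your proposal follows essentially the same route as the paper: exhibiting a $K_4$-subdivision (equivalently, a $K_4$-minor) inside every Bad Configuration for the generalized series-parallel case, observing that every BC contains a cycle built from the three chord paths plus four mandatory backbone edges and hence of length at least $3+4=7$, and arguing for wheels and fans that at most one of the pairwise node-disjoint chords can pass through the hub while the rim cycle cannot carry the remaining ones. The only point to watch is that in the BC2/BC4 variants $q_1$ and $q_2$ share a subpath, so your ``same branch vertices, analogous paths'' claim requires choosing slightly different internally disjoint paths there --- but the paper leaves exactly those cases to the reader at the same level of detail.
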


\begin{proof} \
\begin{enumerate}[itemsep=-0em,leftmargin=0.7cm]
\item[\ref{effsepa}] 
We start with generalized series-parallel graphs. These graphs are known to be $K_4$-minor-free, i.e. $K_4$ can not be obtained by a sequence of vertex deletions, edge deletions and / or edge contractions. We will now show that every Bad Configuration contains $K_4$ as a minor, and therefore any generalized series-parallel graph can not contain a Bad Configuration, hence every generalized series-parallel graph is strongly efficient. 

The Bad Configuration BC1a is shown in Figure~\ref{SP1a}. By contracting suitable edges we can w.l.o.g. assume that all displayed lines are in fact edges. Deleting the vertices $u,v,w$ and $x$ yields the graph displayed in Figure~\ref{SP1b}. By contracting the thick edges, one gets the graph shown in Figure~\ref{SP1c} which is a $K_4$.

For the Bad Configuration BC1b, we just need one additional edge contraction. 
It is easy to check (and intuitively clear) that all other types of Bad Configurations must contain $K_4$ as a minor (we leave this to the reader). 

This shows that generalized series-parallel graphs only contain strongly efficient graphs.
\begin{figure}[H] \centering \psset{unit=0.95cm}\psset{dash= 2pt 2pt}
\subfloat[BC1a]{
 \begin{pspicture}(-0.5,-1.2)(7,1.2) 
\Knoten{0}{1}{v1}\nput{180}{v1}{$u$}
\Knoten{0}{-1}{v2}\nput{180}{v2}{$w$}
\Knoten{0.5}{-0.5}{n2}
\Knoten{1}{0}{v3}
\Knoten{2.125}{0}{n5}
\Knoten{3.25}{0}{v4}
\Knoten{4.375}{0}{v6}
\Knoten{5.5}{0}{v7}
\Knoten{6}{0.5}{n3}
\Knoten{6}{-0.5}{n4}
\Knoten{6.5}{1}{v10}\nput{0}{v10}{$v$}
\Knoten{6.5}{-1}{v11}\nput{0}{v11}{$x$}

\ncline[linestyle=dashed]{-}{v1}{v3}
\ncline{-}{n2}{v3}
\ncline[linestyle=dashed]{-}{v2}{n2}
\ncline[linestyle=dashed]{-}{v3}{n5}
\ncline{-}{v4}{n5}
\ncline{-}{v4}{v6}
\ncline{-}{v6}{v7}
\ncline{-}{v7}{n3}
\ncline[linestyle=dashed]{-}{n3}{v10}
\ncline{-}{v7}{n4}
\ncline[linestyle=dashed]{-}{n4}{v11}
\ncarc[arcangle=-35]{v4}{n4}
\ncarc[arcangle=-35]{n2}{v6}
\ncarc[arcangle=35]{n5}{n3}
\end{pspicture}
\label{SP1a}
}
\subfloat[After deleting $u,v,w,x$]{
 \begin{pspicture}(-0.5,-1.2)(7,1.2) 
\Knoten{0.5}{-0.5}{n2}
\Knoten{1}{0}{v3}
\Knoten{2.125}{0}{n5}
\Knoten{3.25}{0}{v4}
\Knoten{4.375}{0}{v6}
\Knoten{5.5}{0}{v7}
\Knoten{6}{0.5}{n3}
\Knoten{6}{-0.5}{n4}

\ncline[linewidth=2.5pt]{-}{n2}{v3}
\ncline[linewidth=2.5pt,linestyle=dashed]{-}{v3}{n5}
\ncline{-}{v4}{n5}
\ncline{-}{v4}{v6}
\ncline{-}{v6}{v7}
\ncline[linewidth=2.5pt]{-}{v7}{n3}
\ncline[linewidth=2.5pt]{-}{v7}{n4}
\ncarc[arcangle=-35]{v4}{n4}
\ncarc[arcangle=-35]{n2}{v6}
\ncarc[arcangle=35]{n5}{n3}
\end{pspicture}\label{SP1b}
} \\
\subfloat[After contracting the thick edges]{
 \begin{pspicture}(0.5,-1.2)(7.5,1.5) 
\Knoten{1}{0}{n5}
\Knoten{3}{0}{v4}
\Knoten{5}{0}{v6}
\Knoten{7}{0}{v7}

\ncline{-}{v4}{n5}
\ncline{-}{v4}{v6}
\ncline{-}{v6}{v7}
\ncarc[arcangle=-35]{v4}{v7}
\ncarc[arcangle=-35]{n5}{v6}
\ncarc[arcangle=35]{n5}{v7}
\end{pspicture}\label{SP1c}
}
\caption{$K_4$-minor of BC1a}
\label{SC1}
\end{figure}
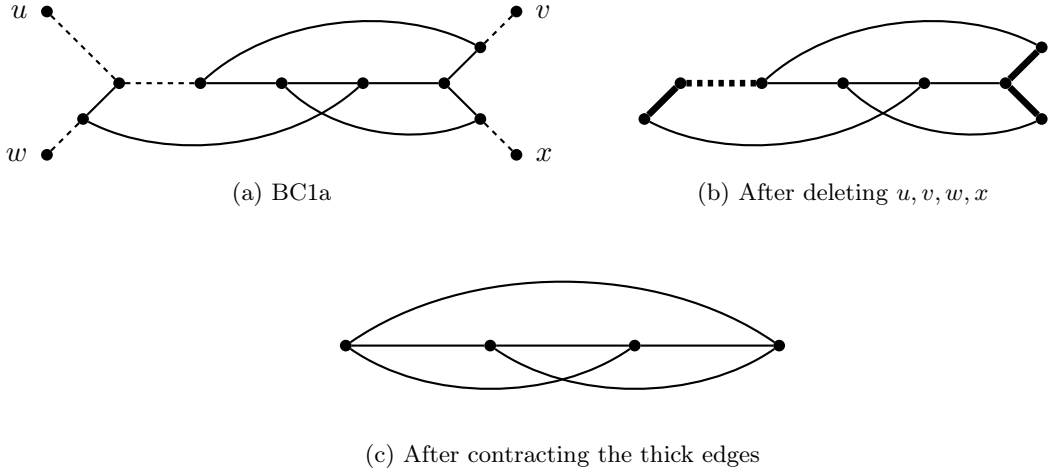

\item[\ref{effcycle}]
As all Bad Configurations contain at least one cycle with length greater or equal to seven, graphs with the property that all cycles have length at most six are strongly efficient. 

\item[\ref{effwheel}]
Taking the cycle $C_n$ and adding an additional vertex $z$, which is adjacent to all vertices of $C_n$, yields the Wheel $W_n$. We will now show that for every $n$, the Wheel $W_n$ can not contain a Bad Configuration. Assume (on the contrary) that there is a BC1a (the other Bad Configurations follow similarly). Figure~\ref{wheelbc} displays BC1a together with some notations needed in the following.

If the paths $a,b,c,d$ are all part of $C_n$, i.e. $z$ is not contained in one of these paths, $q_2$ and $q_3$ both have to contain $z$. Since $q_2$ and $q_3$ have to be node-disjoint in a BC1a, this is not possible and $z$ has to be contained in one of the paths $a,b,c,d$.
More exactly, $z$ is either the endnode $y$ of $d$ or the node which is adjacent to $y$ in $d$. Otherwise the paths $f$ and $g$ can not exist, since they both have to contain at least one edge (and can not contain $z$). Figure~\ref{wheel} displays the subcase that $z$ is adjacent to $y$ in $d$.
%
But both remaining subcases also yield contradictions: On the one hand $q_1$ has to be node-disjoint with $d$, but on the other hand it also has to contain $z$ (by taking into account that $q_1$ has to end in $b$, this holds even if $v$ is an endnode of $q_1$).
Therefore, $W_n$ can not contain a BC1a.
\begin{figure}[H] \centering \psset{unit=1.1cm}
\subfloat[BC1a]{
\begin{pspicture}(-0.5,-2.2)(7,1.2) \psset{dash= 2pt 2pt}
\Knoten{0}{1}{v1}\nput{180}{v1}{$u$}
\Knoten{0}{-1}{v2}\nput{180}{v2}{$w$}
\Knoten{0.5}{-0.5}{n2}
\Knoten{1}{0}{v3}
\Knoten{2.125}{0}{n5}
\Knoten{3.25}{0}{v4}
\Knoten{4.375}{0}{v6}
\Knoten{5.5}{0}{v7}\nput{110}{v7}{$y$}
\Knoten{6.25}{0.75}{n3}
\Knoten{6.25}{-0.75}{n4}
\Knoten{6.5}{1}{v10}\nput{0}{v10}{$v$}
\Knoten{6.5}{-1}{v11}\nput{0}{v11}{$x$}

\ncline[linestyle=dashed]{-}{v1}{v3}
\ncline{-}{n2}{v3}
\ncline[linestyle=dashed]{-}{v2}{n2}
\ncline[linestyle=dashed]{-}{v3}{n5}\ncput{\colorbox{almostwhite}{$a$}}
\ncline{-}{v4}{n5}\ncput{\colorbox{almostwhite}{$b$}}
\ncline{-}{v4}{v6}\ncput{\colorbox{almostwhite}{$c$}}
\ncline{-}{v6}{v7}\ncput{\colorbox{almostwhite}{$d$}}
\ncline{-}{v7}{n3}\ncput{\colorbox{almostwhite}{$f$}}
\ncline[linestyle=dashed]{-}{n3}{v10}
\ncline{-}{v7}{n4}\ncput{\colorbox{almostwhite}{$g$}}
\ncline[linestyle=dashed]{-}{n4}{v11}
\Bogen{n5}{n3}{q_1}{35}
\Bogen{n2}{v6}{q_2}{-35}
\Bogen{v4}{n4}{q_3}{-35}
\end{pspicture}\label{wheelbc}
}
\subfloat[Subcase $z$ is adjacent to $y$]{
 \begin{pspicture}(1,0.3)(7,4.5) 
\psrotate(4.5,2.3){-90}{
\Knoten{4}{4}{y}\nput[rot=90]{90}{y}{$y$}
\Knoten{5.5}{3}{v1}
\Knoten{6.2}{1.5}{v2}\nput[rot=90]{80}{v2}{$x$}
\Knoten{2.5}{3}{v3}
\Knoten{1.8}{1.5}{v4}\nput[rot=90]{100}{v4}{$v$}
\Knotensmall{6.2}{0.7}{v5}
\Knotensmall{1.8}{0.7}{v6}
\Knoten{4}{1}{z}\nput[rot=90]{235}{z}{$z$}
\Knotensmall{3}{0.5}{v7}
\Knotensmall{5}{0.5}{v8}
\Knotensmall{4}{0.2}{v9}

\ncline[linestyle=dashed,linewidth=3pt,dash=1.2pt 1.2pt]{-}{v1}{v2}
\ncline[linestyle=dashed,linewidth=3pt,dash=1.2pt 1.2pt]{-}{y}{v1}
\ncline[linestyle=dashed,linewidth=3pt,dash=1.2pt 1.2pt]{-}{y}{v3}
\ncline[linestyle=dashed,linewidth=3pt,dash=1.2pt 1.2pt]{-}{v3}{v4}
\ncline{z}{v1}
\ncline{-}{z}{v2}
\ncline{-}{z}{v3}
\ncline{-}{z}{v4}
\ncline[linewidth=2.5pt]{-}{z}{y}

\ncline[linestyle=dashed]{-}{v2}{v5}
\ncline[linestyle=dashed]{-}{v4}{v6}
\ncline[linestyle=dashed]{-}{z}{v7}
\ncline[linewidth=2.5pt, linestyle=dashed]{z}{v9}
\ncline[linestyle=dashed]{-}{z}{v8}
}
\end{pspicture}\label{wheel}
}
\caption{Situation for $W_n$}
\end{figure}
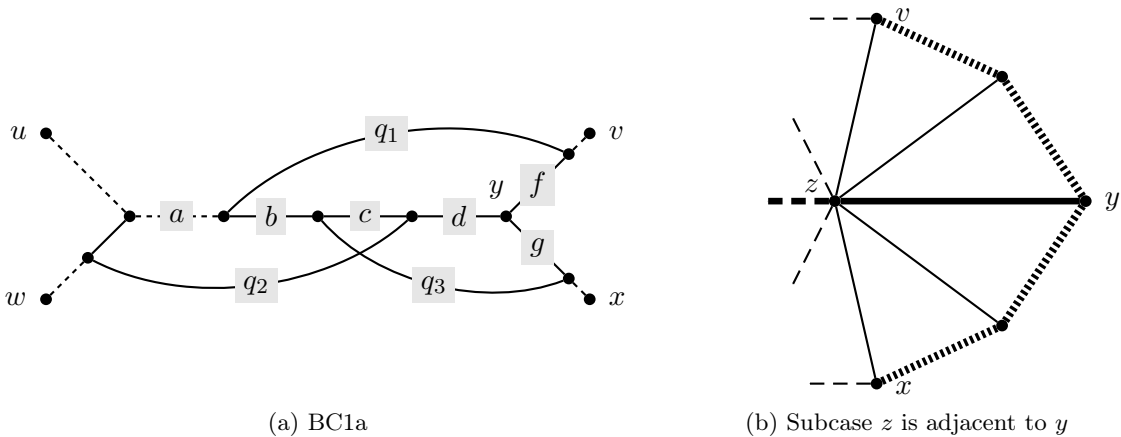

A $n$-fan is a graph consisting of a path of length $n$ and an additional vertex which is adjacent to all vertices of the path. As these graphs are subgraphs of wheels, fan graphs are also strongly efficient.
\end{enumerate} 
\vspace{-2em}
\end{proof}

Note that other $K_4$-minor-free graphs are Cactus graphs, unicyclic graphs and trees. Therefore these classes are also strongly efficient.

As all Bad Configurations contain at least 9 edges, all graphs with at most 8 edges are strongly efficient.
Furthermore, all graphs with at most 6 vertices are strongly efficient (as all Bad Configurations contain at least 7 vertices).

\vspace{1em}
Now we give some negative results. 
\begin{proposition}
The following graph classes contain non-efficient graphs:
\begin{enumerate}[label=(\arabic*),itemsep=-0em,leftmargin=*]
	\item \label{neffbip} bipartite graphs,
	\item \label{neffchord} chordal graphs, 
	\item \label{neffplan} planar graphs.
\end{enumerate}
\end{proposition}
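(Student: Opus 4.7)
The plan for all three parts is the same: I will exhibit a graph in the given class that contains some Bad Configuration as a subgraph and then invoke \autoref{theoremrobust} (specifically, the implication \ref{maintheo2}$\Rightarrow$\ref{maintheo1} in its contrapositive form) to conclude that the graph is not efficient. The task thus reduces, in each case, to constructing such a graph.

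For part \ref{neffbip}, I use that in every Bad Configuration in Figure~\ref{allbc} the solid and dashed lines represent paths of unrestricted length (solid paths have at least one edge, dashed paths may have zero edges), so subdividing any edge yields a graph that still contains the same Bad Configuration. Starting from BC1a and subdividing every edge exactly once doubles the length of every cycle, making every cycle even. The resulting graph is bipartite and still contains BC1a as a subgraph, hence is non-efficient.

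For part \ref{neffchord}, again starting from BC1a, I add further edges (for example a full triangulation of the cycle space, or even all missing edges to obtain a complete graph on the same vertex set) until the graph is chordal. Adding edges to the ambient graph does not destroy any existing subgraph, so BC1a is still present, and \autoref{theoremrobust} shows that the resulting chordal graph is non-efficient.

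For part \ref{neffplan}, I plan to check that BC1a with every dashed path chosen to have length $0$ and every solid path chosen to have length $1$ is planar. Under these minimal choices, and after identifying endpoints along each length-$0$ dashed path, BC1a becomes a specific graph on $7$ vertices and $9$ edges with degree sequence $(3,3,3,3,2,2,2)$. A $K_5$-subdivision would require five branch vertices of degree at least~$4$, and a $K_{3,3}$-subdivision would require six branch vertices of degree at least~$3$; our graph has no vertex of degree $\geq 4$ and only four vertices of degree $\geq 3$, so by Kuratowski's theorem it is planar. This planar graph contains BC1a as a subgraph and is therefore non-efficient. The main obstacle in the proof is this last planarity verification; the bipartite and chordal parts follow essentially formally from the ``modify the ambient graph while preserving the BC'' trick together with \autoref{theoremrobust}.
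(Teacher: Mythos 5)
Your proposal is correct and follows essentially the same route as the paper: in each case one exhibits a graph of the given class that contains a BC1a and invokes \autoref{theoremrobust} to conclude non-efficiency, using the complete graph for the chordal case. The paper merely displays a bipartite and a planar realization of BC1a in figures, whereas you justify bipartiteness via a one-step edge subdivision and planarity via Kuratowski's theorem with a degree count; these are just more explicit verifications of the same constructions.
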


\begin{proof}
We show that there exists a graph which contains a BC1a in each of the above mentioned graph classes. 
\begin{enumerate}[itemsep=-0em,leftmargin=0.7cm]
	\item[\ref{neffbip}] Figure~\ref{bipartit} shows a bipartite graph (large and small nodes constitute the bipartition) which is a BC1a.
	\item[\ref{neffchord}] BC1a is obviously contained in a complete graph (with a suitable number of vertices) and complete graphs are chordal. 
	\item[\ref{neffplan}] Figure~\ref{planar} shows a planar graph which is a BC1a.
\end{enumerate}

\begin{figure}[H] \centering \psset{unit=0.75cm}
\subfloat[A bipartite graph which is a BC1a]{
\begin{pspicture}(-1.3,-1.45)(7.9,1.15) 
\Knotenthick{-1.1}{1}{v1}\nput{180}{v1}{$s_1$}
\Knotenthick{-1.1}{-1}{v2}\nput{180}{v2}{$s_2$}
\Knoten{0.4}{0}{v3}
\Knotenthick{1.4}{0}{v4}
\Knoten{2.4}{0}{v5}
\Knotenthick{3.4}{0}{v6}
\Knoten{4.4}{0}{v7}
\Knotenthick{5.4}{0}{v8}
\Knoten{6.9}{1}{v9}\nput{0}{v9}{$t_1$}
\Knoten{6.9}{-1}{v10}\nput{0}{v10}{$t_2$}

\ncline{-}{v1}{v3}
\ncline{-}{v2}{v3}
\ncline{-}{v3}{v8}
\ncline{v8}{v9}
\ncline{-}{v8}{v10}

\Bogendef{v2}{v7}{q_2}{-25}
\Bogendef{v4}{v9}{q_1}{30}
\Bogendef{v6}{v10}{q_3}{-35}
\end{pspicture} \label{bipartit}
}
\subfloat[A planar graph which is a BC1a]{
\begin{pspicture}(-2.1,-1.45)(7.1,1.15) 
\Knoten{-1.1}{1}{v1}\nput{180}{v1}{$s_1$}
\Knoten{-1.1}{-1}{v2}\nput{180}{v2}{$s_2$}
\Knoten{0.4}{0}{v3}
\Knoten{1.4}{0}{v4}
\Knoten{3.4}{0}{v6}
\Knoten{4.4}{0}{v7}
\Knoten{5.4}{0}{v8}
\Knoten{6.9}{1}{v9}\nput{0}{v9}{$t_1$}
\Knoten{4.7}{0.6}{v10}\nput{0}{v10}{$t_2$}

\ncline{-}{v1}{v3}
\ncline{-}{v2}{v3}
\ncline{-}{v3}{v8}
\ncline{v8}{v9}
\ncline{-}{v8}{v10}

\Bogendef{v2}{v7}{q_2}{-25}
\Bogendef{v4}{v9}{q_1}{30}
\Bogendef{v6}{v10}{q_3}{25}
\end{pspicture}
\label{planar}
}
\caption{Graphs containing a BC1a}
\end{figure}
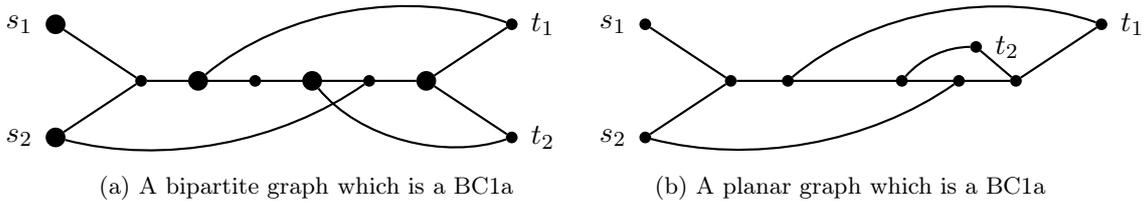
\vspace{-2em}
\end{proof}

We want to note that even if all edges of a graph have the same cost, an optimal Steiner forest may not be enforceable: To see that, observe that the costs displayed in Figure~\ref{fig1bc1} are all integral. Contracting the edges with cost 0, replacing all remaining edges by paths with length equal to the edge costs and assigning cost 1 to all edges yields a graph with the desired property. 


%
%
%
%
%

\section{A Lower Bound for the PoS for Two Player Games}
In this section we derive a lower bound for the price of stability. 

\begin{proposition}\label{theopos}
The PoS for two-player undirected network design games is at least $\frac{15}{14}$. 
\end{proposition}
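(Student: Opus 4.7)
The plan is to exhibit a concrete two-player instance built on the graph of Figure~\ref{fig1bc1} (a BC1a configuration) whose unique optimal Steiner forest is not enforceable, and whose cheapest pure Nash equilibrium has cost exactly $\tfrac{15}{14}$ times the optimum.

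First I parameterize the edge costs of this fixed graph by nine variables: $L$ for the $s_2$-side left path, $M_1,M_2,M_3$ for the three middle edges, $R_1,R_2$ for the right-side edges to $t_1$ and $t_2$, and $Q_1,Q_2,Q_3$ for the three alternative paths $q_1,q_2,q_3$; all other edges of $G$ get cost $+\infty$. The Steiner forest $F^\star$ drawn in solid lines in Figure~\ref{fig1bc1} then has cost $c(F^\star)=L+M_1+M_2+M_3+R_1+R_2$, and every Steiner forest in this instance is the union of an $s_i$-$t_i$-path per player, of which there are only a handful.

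Next I enumerate all Steiner forests and, for each, decide enforceability via the LP characterization proven earlier in the paper: a Steiner forest $F$ with induced paths $(P_1,P_2)$ is enforceable iff \lp{} has an optimum equal to $c(F)$. Since the only alternative $s_i$-$t_i$-paths arise from combining $q_1,q_2,q_3$ with subpaths of $F$, the constraints of \lp{} form a small system of linear inequalities in the nine parameters, so enforceability reduces to a finite set of conditions. Symmetry of BC1a (swap Player $1$ with Player $2$ together with $q_2\leftrightarrow q_3$) and dominance arguments which discard obviously suboptimal Steiner forests cut the case analysis considerably.

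With this data I set up the optimization
\[
\max_{c\geq 0}\;\frac{\min\{c(F)\,:\,F\text{ enforceable}\}}{\min\{c(F)\,:\,F\text{ a Steiner forest}\}}
\]
subject to the structural constraint that $F^\star$ be the unique optimum but not enforceable. Normalizing $c(F^\star)=14$ by homogeneity turns this into a finite linear program in the nine parameters whose optimum I compute. I then exhibit concrete cost values realizing the LP-optimum, list the cheapest enforceable Steiner forest (of cost $15$), and verify via the LP characterization that it is indeed enforceable, yielding $\mathrm{PoS}\geq 15/14$.

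The main obstacle is the second step: systematically enumerating all Steiner forests and tracking the alternative paths that feature in \lp{} for each; missing a cheap enforceable alternative would overstate the bound. BC1a's symmetry and the small number of simple paths in the underlying graph keep this analysis finite and, with care, tractable.
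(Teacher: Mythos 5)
Your high-level strategy is the same as the paper's: take a BC1a-based instance (the paper uses the parameterized version of Figure~\ref{fig1bc1} shown in Figure~\ref{posgraph}), use the LP characterization to show the unique optimal Steiner forest is not enforceable, and lower-bound the cost of the cheapest enforceable alternative. The genuine gap is in your final step: you plan to \emph{exhibit a single instance} realizing the ratio $\tfrac{15}{14}$ exactly, i.e., you implicitly assume that the supremum of your ratio program is attained. The paper's proof suggests it is not: with costs $3x+2,\,2x+1,\,x+1,\,2x+1,\,3x+1,\,3x+2$ on the forest and $6x+3,\,4x+2,\,4x+2$ on $q_1,q_2,q_3$, the ratio is $\tfrac{15x+8}{14x+8}$, which is strictly below $\tfrac{15}{14}$ for every finite $x$ and only tends to $\tfrac{15}{14}$ as $x\to\infty$. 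The reason is structural: non-enforceability of the optimum is a \emph{strict} inequality (here the LP value $14x+7$ versus forest cost $14x+8$), so your feasible region is open, and pushing the ratio toward its supremum drives that inequality to equality. Indeed, at the normalized limit point with forest costs $(3,2,1,2,3,3)$ and alternative costs $(6,4,4)$, the LP for the optimal forest admits a budget-balanced optimal solution (Player 1 pays $1,1,1,3$ along her path, Player 2 pays $3,1,0,1,3$ along hers; all alternative constraints hold with equality), so the optimum becomes enforceable and the instance leaves your feasible set. Hence the step ``exhibit concrete cost values realizing the LP-optimum'' cannot be carried out as stated; you must either prove attainment (unlikely) or replace it with a limiting family, which is exactly what the paper does.

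Two smaller remarks. First, your proposal is a plan rather than a proof: no costs are exhibited, no Steiner forest is actually checked for enforceability, and the entire case analysis is deferred, so even setting aside the attainment issue nothing is yet established. Second, when you pass from ``cheapest enforceable Steiner forest'' to ``cheapest pure Nash equilibrium'' you must also handle Nash profiles whose union of paths is not one of your enumerated forests (e.g.\ strictly contains one); in the paper's instance this is harmless because every non-optimal union costs at least $15x+8$, but your argument should say so explicitly.
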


\begin{proof}
Let us consider the instance of a Bad Configuration shown in Figure~\ref{posgraph}, where the edges are labelled with their costs and $x \in \mathbb{R}^+$. 

By considering all possible Steiner forests, one can easily show that the unique optimal Steiner forest consists of all solid edges and has cost $14x+8$, whereas all other possible Steiner forests have cost at least $15x+8$. 
\begin{figure}[H] \centering \psset{unit=1.5cm}
 \begin{pspicture}(-0.5,-1.5)(7,1.5) 
\Knoten{0}{-1}{v2}\nput{180}{v2}{$s_2$}
\Knoten{1}{0}{v3}\nput{180}{v3}{$s_1$}
\Knoten{2.5}{0}{v4}
\Knoten{4}{0}{v6}
\Knoten{5.5}{0}{v7}
\Knoten{6.5}{1}{v10}\nput{0}{v10}{$t_1$}
\Knoten{6.5}{-1}{v11}\nput{0}{v11}{$t_2$}

\Kante{v2}{v3}{3x+2}
\Kante{v3}{v4}{2x+1}
\Kante{v4}{v6}{x+1}
\Kante{v6}{v7}{2x+1}
\Kante{v7}{v10}{3x+1}
\Kante{v7}{v11}{3x+2}
\Bogendashed{v4}{v11}{4x+2}{-35}
\Bogendashed{v2}{v6}{4x+2}{-35}
\Bogendashed{v3}{v10}{6x+3}{35}
\end{pspicture}
\caption{A lower bound for the PoS} \label{posgraph}
\end{figure}
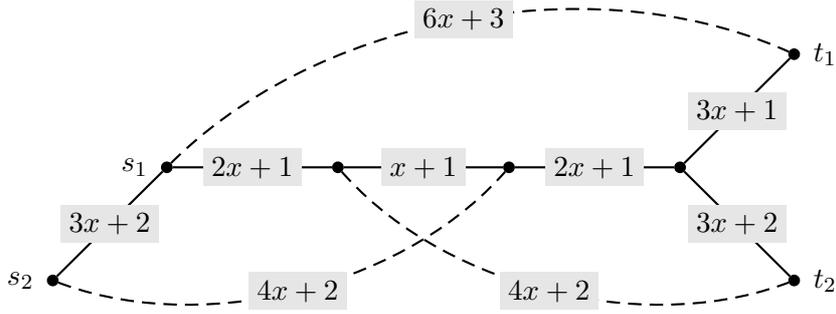
Furthermore, the optimal Steiner forest is not enforceable since the sum of collectable cost shares is obviously bounded above by $14x+7$ ($6x+3$ for Player 1 and $8x+4$ for Player 2). 
But there is a Steiner forest with cost $15x+8$ which is enforceable: Player 1 takes the edge $s_1-t_1$ with cost $6x+3$, Player 2 takes the two dashed edges with cost $4x+2$ and the solid edge with cost $x+1$. This yields an enforceable Steiner forest since both players take (disjoint) shortest paths.
Therefore, the PoS for two-player undirected network design games is at least $\frac{15x+8}{14x+8}$. As $x$ tends to infinity, we achieve a lower bound for the PoS of $\frac{15}{14}$. 
\end{proof}
\section{Summary and Open Problems} \label{secsummary}
We derived a complete characterization of efficient
graphs for two-player network design games
 showing
 that a graph is efficient iff certain forbidden subgraphs
 are not present. Our work leads to several interesting research questions
 as outlined below.
 \begin{itemize}
 \item What is the computational complexity of recognizing
 a Bad Configuration?
 \item  How does a characterization look like for three or more players?
 \end{itemize}
Our characterization prescribes substructures of worst-case
instances regarding the long-standing PoS question
for separable protocols. 
We conjecture the following bounds:
\begin{conjecture}
The PoS for two-player undirected network design games is  $15/14$ (\autoref{theopos}).
\end{conjecture}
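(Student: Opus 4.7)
The plan is to prove $\mathrm{PoS} \le \tfrac{15}{14}$ by combining the LP-characterization of enforceability with a local rerouting argument driven by the structural analysis of the Main Theorem. Given an arbitrary instance $(G,c)$ with optimal Steiner forest $F$, we may assume $F$ is non-enforceable (otherwise $\mathrm{PoS}=1$ on this instance), and we aim to exhibit an enforceable Steiner forest $F^{\ast}$ with $c(F^{\ast}) \le \tfrac{15}{14}\,c(F)$.

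First, we would run the machinery from the implication \ref{maintheo1} $\Rightarrow$ \ref{maintheo3}: fix a \ptl-solution of \lp{} that additionally maximizes one player's total cost share (as used in the Case~$R$ analysis of the Main Theorem), let $e$ be the first edge (under the total order of Figure~\ref{lmr}) which is not paid in full, and extract the tight alternative paths $q_1, q_2, \dots$ produced by the Case $L/M/R$ case distinction. By the Main Theorem, the edges traversed by these alternatives together with subpaths of $F$ must embed a Bad Configuration, since optimality of $F$ forbids producing a strictly cheaper Steiner forest in that analysis. In particular, the LP gap $c(F) - \sum_{i,e} \xi_{i,e}$ is concentrated on identifiable edges of this Bad Configuration.

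Second, we would use a carefully selected subset of these tight alternatives to construct the candidate forest $F^{\ast}$ by rerouting one (or both) players along a $q_i$. Because the new paths are shortest $s_i$--$t_i$ paths in the subgraph $F^{\ast}$ (by the tightness of the \eqref{Nash1} constraints that produced them), the resulting profile is enforceable by a protocol that charges each privately used edge entirely to its unique user and splits the cost of shared edges to match the corresponding $\xi_{i,e}$. The key quantitative relation is that $c(F^{\ast}) - c(F)$ is bounded above by the LP gap on the edges of the Bad Configuration, and the gap itself is governed by the unpaid portion $c(e) - \sum_i \xi_{i,e}$, yielding a per-BC ratio $c(F^{\ast})/c(F)$ expressible as a small linear program in the edge costs of the configuration.

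The main obstacle, and the heart of the proof, will be pinning down the exact constant $\tfrac{15}{14}$. For each of the nine BC types (BC1a--BC4b) we would solve this cost-maximization LP under the joint constraints that $F$ is optimal and $F^{\ast}$ is the constructed enforceable rerouting, verify that the optimum is at most $\tfrac{15}{14}$, and note that equality is attained only by BC1a with the cost pattern of Figure~\ref{posgraph}. The principal subtlety is that a graph may contain multiple overlapping or nested Bad Configurations through which $F$ passes, so a single local rerouting need not resolve all of them. We expect to iterate the construction, showing at each step that either the number of active BCs strictly decreases or the amortized cost ratio stays within $\tfrac{15}{14}$, ideally by producing a dual certificate for \lp{} that bounds $c(F^{\ast})$ against the dual objective uniformly across all BC types.
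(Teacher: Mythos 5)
This statement is a \emph{conjecture} in the paper, not a proven theorem: the paper establishes only the lower bound $\mathrm{PoS}\geq \frac{15}{14}$ (via the explicit instance of \autoref{theopos}, where the unique optimum costs $14x+8$ but the best enforceable forest costs $15x+8$), and explicitly leaves the matching upper bound open. So there is no proof in the paper to compare yours against; what you have written is an attack on the open problem, and as it stands it has genuine gaps rather than being a complete argument.

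Two gaps are concrete. First, your enforceability claim for $F^{\ast}$ does not follow from what you cite. That the rerouted paths correspond to tight inequalities of \eqref{Nash1} for the \emph{original} profile $P$ says nothing about the new profile: enforceability of $F^{\ast}$ requires an optimal solution of LP($F^{\ast}$) satisfying \eqref{Nash2}, and the deviation constraints there range over \emph{all} $s_i$--$t_i$ paths of $G$, not just paths inside $F^{\ast}$. Being a shortest path in the subgraph $F^{\ast}$ is far weaker than being a best response in $G$; the enforceable forest in \autoref{theopos} works only because both players use globally shortest, edge-disjoint paths, which your general rerouting does not produce. Second, the step ``verify that the per-BC cost-maximization LP has optimum at most $\frac{15}{14}$'' is precisely the content of the conjecture and cannot be asserted; the paper's case analysis for \ref{maintheo1}~$\Rightarrow$~\ref{maintheo3} is purely qualitative (it derives a contradiction from the \emph{absence} of a BC) and yields no quantitative control on $c(F^{\ast})-c(F)$ in terms of the LP gap when a BC \emph{is} present. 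The overlapping-BC iteration is likewise only a hope without a termination or amortization argument. If you want a provable statement along these lines, you would need to (i) identify, for each BC type, a specific enforceable forest together with a certificate (an optimal LP($F^{\ast}$) solution meeting \eqref{Nash2}), and (ii) bound its cost against $c(F)$ using only the optimality of $F$ and the tightness relations --- neither of which the existing machinery supplies.
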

\begin{conjecture}
The PoS for undirected network design games with $n$ players 
is $<2$.
\end{conjecture}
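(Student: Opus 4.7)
The plan is to build on the paper's LP-characterization of enforceability -- which is already stated for arbitrary $n$ -- and attempt to show that one can always construct an enforceable Steiner forest of cost strictly less than twice the optimum. Given an optimal Steiner forest $F^*$, the first step is to solve \lp{} for $F^*$: if its optimum equals $c(F^*)$, then $F^*$ itself is enforceable and the PoS is $1<2$. Otherwise, there is an edge that is underpaid in every optimal LP solution, and the question becomes how to profitably reroute some players while keeping the total cost below $2\,c(F^*)$.

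I would proceed iteratively, starting from $F^*$ and a \ptl-solution (\autoref{def:ptl} generalized to $n$ players in the natural way) to isolate a first underpaid edge $e$, then pick a tight alternative path $q$ for some player $i$ using $e$. The alternative closes a cycle with $P_i$ on which the cost of $q$ equals the sum of Player $i$'s shares on $P_i \cap C(q)$. As in Case $L$ of the two-player proof, swapping $q$ into the current forest and removing the substituted subpath does not increase total cost while strictly decreasing a carefully chosen potential (for instance, the number of underpaid edges, tie-broken by the order of \autoref{lmr}). The delicate point is the analog of Cases $M$ and $R$: when alternatives of several players interact on shared edges, naive local swaps may raise the cost, and one must either construct a joint swap spanning several players simultaneously, or explicitly charge the extra cost to a ``budget'' of at most $c(F^*)$ accumulated across iterations.

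The main obstacle, and the reason the conjecture is open, is controlling this overhead for $n>2$. The two-player characterization hinges on the fact that two players meet on at most one common simple path, so all interactions can be enumerated as a handful of Bad Configurations. For general $n$, the pairwise and higher-order overlaps of player paths produce a combinatorial explosion that resists direct enumeration. The factor of $2$ in the conjecture suggests a non-tight but clean route: rather than generalizing the Bad Configuration case analysis, I would aim to design a specific separable cost-sharing protocol -- perhaps driven by an optimal dual of \lp{} or by a weighted variant of Shapley tuned to the forest structure -- and argue via a potential-function argument (in the spirit of~\cite{Anshelevich08}) that every unilateral deviation reduces a potential by at least half the deviator's savings, giving PoS $<2$ through the standard $\Phi$-versus-cost comparison. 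Any candidate protocol would also need to match the $15/14$ lower bound of \autoref{posgraph} and handle each Bad Configuration of \autoref{allbc} with margin, which could serve as a useful sanity check before attempting the general argument.
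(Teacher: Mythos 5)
This statement is a \emph{conjecture} in the paper: the authors offer no proof of it, and explicitly list it among the open problems motivated by their two-player characterization. So there is nothing in the paper to compare your argument against, and the relevant question is only whether your proposal closes the gap on its own. It does not, and to your credit you say so yourself: the passage beginning ``The main obstacle, and the reason the conjecture is open\dots'' concedes that the central difficulty --- controlling the interaction of many players' alternative paths when $n>2$ --- is exactly the part you do not know how to handle. A proof sketch that identifies its own fatal gap is a research plan, not a proof.

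Concretely, the missing pieces are these. First, your iterative swapping argument (generalizing Case $L$) never establishes that the accumulated cost of all swaps stays below $2\,c(F^*)$; the ``budget of at most $c(F^*)$'' is asserted as a goal, not derived, and the analogs of Cases $M$ and $R$ are precisely where the two-player proof needs the Bad Configuration machinery that you correctly observe does not scale. Second, the alternative route via a potential function is not carried out: the known potential argument for Shapley cost sharing yields a PoS bound of order $H_n = \Theta(\log n)$, not a constant, and you give no mechanism by which a ``weighted variant of Shapley tuned to the forest structure'' would force every unilateral deviation to decrease the potential by at least half the deviator's savings. Without either (i) a concrete protocol together with a potential whose value is sandwiched between $c(F)$ and $2\,c(F)$ for the relevant forests, or (ii) a termination-and-cost analysis of the rerouting process, the bound $<2$ remains unproven --- which is consistent with the paper's own treatment of the statement as open.
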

\begin{conjecture}
The PoS for directed network design games with $n$ players is $2$.
\end{conjecture}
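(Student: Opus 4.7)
The plan is to establish the conjecture by proving matching upper and lower bounds: (i) a lower bound construction exhibiting a family of instances with \textsf{PoS} approaching $2$, and (ii) an upper bound via the design of a separable cost sharing protocol that always admits a pure Nash equilibrium of cost at most $2 \cdot \textsf{OPT}$. Both directions will rely on the LP-characterization of enforceability developed earlier for undirected graphs, suitably adapted to directed networks (only the alternative paths $P'_i \in \mathcal P_i$ change, since they now respect edge orientations; the structure of the LP carries over).

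For the lower bound, I would generalize the classical $5/4$-instance of Chen, Roughgarden and Valiant to a recursive gadget construction. The base gadget is a directed analogue of the Bad Configuration BC1a, in which the unique optimal Steiner tree assigns each player cost shares that strictly dominate their direct $s_i$--$t_i$ cost. The recursive step replaces each ``alternative edge'' of a player by a nested copy of the same gadget for a smaller player count. By induction one shows that for the composed $n$-player instance, every enforceable Steiner forest has cost at least $\bigl(2 - \Theta(1/n)\bigr) \cdot \textsf{OPT}$. The verification step is to show by a pigeonhole-style argument (over the LP-constraints~(1) and~\eqref{Nash1}) that at each recursion level the optimal Steiner tree is unique and that its LP dual is infeasible at ratio better than $2 - O(1/n)$.

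For the upper bound, my plan is to design a cost sharing protocol that is budget-balanced and stable on some Steiner forest $F$ with $c(F) \le 2 \cdot c(\textsf{OPT})$. The idea is a \emph{primary--secondary} scheme: on each edge $e$ of $F$, a single ``primary'' player (selected via a reverse topological order rooted at terminals) is charged a designated share, while ``secondary'' players pay the residual. The shares are set by solving a relaxed version of \lp{} where constraint~(1) is weakened to $\sum_{i \in S_e(P)} \xi_{i,e} \le 2 \, c(e)$ and budget-balance~\eqref{Nash2} is required. The key lemma to prove is: starting from $\textsf{OPT}$ and iteratively replacing unpaid edges by tight alternative subpaths (as in Sec.~4.2 of the excerpt), one reaches, in a bounded number of steps, a Steiner forest $F$ for which the relaxed LP has a feasible \textsf{BB}-tight solution and $c(F) \le 2 \cdot c(\textsf{OPT})$. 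The factor $2$ would come from charging each swap to a directed analogue of a BC, where each ``bad'' swap at most doubles the incident cost.

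The hardest step will clearly be the upper bound. Since Chen et al.~\cite{ChenRV10} there has been no progress beyond the Shapley bound of $H_n$, so a factor-$2$ protocol requires genuinely new ideas; concretely, the obstacle is to identify the correct directed analogue of the Bad Configurations of Figure~\ref{allbc} for the $n$-player case and to show that every such configuration can be locally ``repaired'' at at most a factor-$2$ cost blow-up. The lower bound is more tractable but still nontrivial: the recursion has to be set up so that tight alternatives created at inner levels do not leak enforceability up to the outer level, which amounts to a careful inductive invariant on the LP-optimum of the subinstances. Once both bounds are in place, the conjecture follows.
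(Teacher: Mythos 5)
This statement is a \emph{conjecture} in the paper's ``Summary and Open Problems'' section; the authors give no proof of it, and explicitly note that for directed multi-commodity networks the known bounds on the PoS have been stuck at $[3/2,\log(n)]$ since Chen et al. So there is no proof in the paper to compare yours against, and what you have written is not a proof either: it is a research program whose two central lemmas (the recursive lower-bound gadget achieving $2-\Theta(1/n)$, and the factor-$2$ ``repair'' lemma for the upper bound) are stated but not established, and each of them is essentially as hard as the conjecture itself. Asserting that ``by induction one shows'' the composed instance has PoS $2-\Theta(1/n)$, without exhibiting the gadget, the costs, or the inductive invariant, does not improve on the known $3/2$ lower bound.

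There is also a concrete internal flaw in your upper-bound plan. You propose to weaken constraint~(1) of \lp{} to $\sum_{i \in S_e(P)} \xi_{i,e} \leq 2\,c(e)$ while ``requiring budget-balance~\eqref{Nash2}.'' These two requirements are incompatible as written (\eqref{Nash2} forces equality with $c(e)$, making the relaxation vacuous), and if instead you intend the shares on an edge to sum to up to $2c(e)$, the resulting protocol violates the budget-balance axiom and is therefore not a separable cost sharing protocol in the sense of the paper; the entire LP-characterization of enforceability (Theorem~1) is predicated on exact budget balance per edge, so it cannot be invoked for such a relaxed scheme. The factor $2$ in the PoS must come from the \emph{cost of the chosen forest} relative to OPT, not from overcharging edges. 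Finally, the claim that each ``bad swap'' at most doubles the incident cost is exactly the missing combinatorial heart of the argument: the paper's two-player analysis already requires the full machinery of Bad Configurations and roughly $192$ subcases to handle the case PoS $=1$, and no directed or $n$-player analogue of that machinery is supplied here.
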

\newpage
\section*{Appendix}
\appendix

\section{Proof of our Main Theorem}\label{sec:main_proof}

\subsection{\ref{maintheo2} implies \ref{maintheo1}} \label{subsecproof}
We show this by contraposition, thus we assume that $(G, (s_1,t_1), (s_2,t_2))$ contains a Bad Configuration. Then we define a cost function $c$ so that the optimal Steiner forest is unique and not enforceable, showing the claim. 
	
	To this end we choose a subgraph that is a BC and set $c(e)=\infty$ if the edge $e$ is not contained in this subgraph. The costs of the edges in the BC depend on the type of the BC, thus we now have to distinguish between the different types of BCs.
	In the following we have displayed costs for each type, where one should note:
	\begin{itemize}[itemsep=-0em,leftmargin=*]
		\item we have always chosen $(u,v,w,x)=(s_1,t_1,s_2,t_2)$ since the other cases are analogous;
		\item paths with cost $>0$ are labelled with their cost; 
		\item if a path consists of more than one edge, one can choose the costs of the corresponding edges arbitrarily so that they sum up to the displayed cost on the path;
		\item all paths with nonzero costs contain at least one edge because of the definition of the corresponding type of BC; 
		\item the solid lines always build the (unique) optimal Steiner Forest OPT (with cost $c(OPT)$);
		\item we do not display costs for all subcases (not displayed subcases can be treated analogously).
	\end{itemize}

	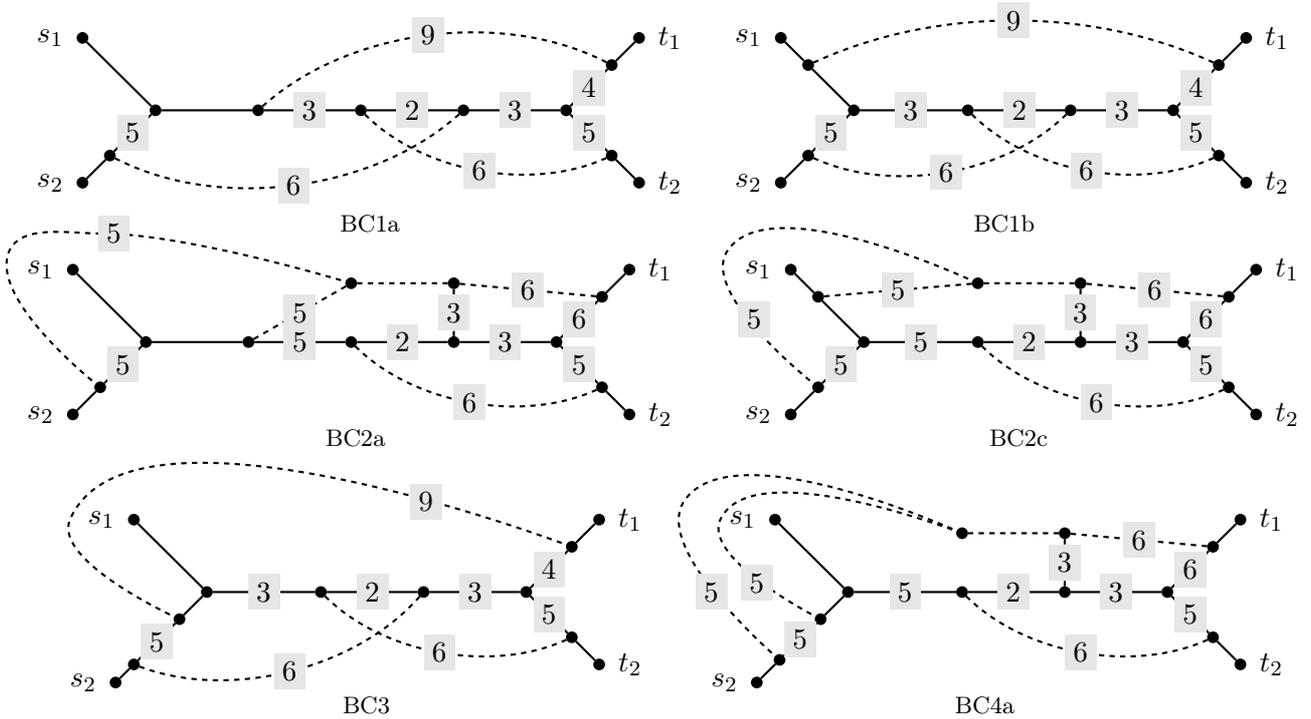
\begin{figure}[H] \centering \psset{unit=1.2cm}\captionsetup[subfigure]{labelformat=empty}
	\psset{dash= 2pt 2pt}
	\subfloat[BC1a]{
 \begin{pspicture}(0.2,-1)(6.5,1) 
\Knoten{0.2}{0.8}{v1}\nput{180}{v1}{$s_1$}
\Knoten{0.2}{-0.8}{v2}\nput{180}{v2}{$s_2$}
\Knoten{0.5}{-0.5}{n2}
\Knoten{1}{0}{v3}
\Knoten{2.125}{0}{n5}
\Knoten{3.25}{0}{v4}
\Knoten{4.375}{0}{v6}
\Knoten{5.5}{0}{v7}
\Knoten{6}{0.5}{n3}
\Knoten{6}{-0.5}{n4}
\Knoten{6.3}{0.8}{v10}\nput{0}{v10}{$t_1$}
\Knoten{6.3}{-0.8}{v11}\nput{0}{v11}{$t_2$}

\ncline{v1}{v3}
\ncline{v2}{n2}
\Kante{n2}{v3}{5}
\ncline{v3}{n5}
\Kante{n5}{v4}{3}
\Kante{v4}{v6}{2}
\Kante{v6}{v7}{3}
\Kante{v7}{n3}{4}
\ncline{v10}{n3}
\Kante{v7}{n4}{5}
\ncline{v11}{n4}
\Bogendashed{n5}{n3}{9}{35}
\Bogendashed{n2}{v6}{6}{-35}
\Bogendashed{v4}{n4}{6}{-35}
\end{pspicture} 
}
\hspace{1cm}
\subfloat[BC1b]{
 \begin{pspicture}(1,-1)(6.3,1) 
\Knoten{1.2}{0.8}{v1}\nput{180}{v1}{$s_1$}
\Knoten{1.5}{0.5}{n1}
\Knoten{1.2}{-0.8}{v2}\nput{180}{v2}{$s_2$}
\Knoten{1.5}{-0.5}{n2}
\Knoten{2}{0}{v3}
\Knoten{3.25}{0}{v4}
\Knoten{4.375}{0}{v6}
\Knoten{5.5}{0}{v7}
\Knoten{6}{0.5}{n3}
\Knoten{6}{-0.5}{n4}
\Knoten{6.3}{0.8}{v10}\nput{0}{v10}{$t_1$}
\Knoten{6.3}{-0.8}{v11}\nput{0}{v11}{$t_2$}

\ncline{v1}{n1}
\ncline{n1}{v3}
\ncline{v2}{n2}
\Kante{n2}{v3}{5}
\Kante{v3}{v4}{3}
\Kante{v4}{v6}{2}
\Kante{v6}{v7}{3}
\Kante{v7}{n3}{4}
\ncline{v10}{n3}
\Kante{v7}{n4}{5}
\ncline{v11}{n4}
\Bogendashed{n1}{n3}{9}{25}
\Bogendashed{n2}{v6}{6}{-35}
\Bogendashed{v4}{n4}{6}{-35}
\end{pspicture} 
}

	\subfloat[BC2a]{
\begin{pspicture}(0.2,-0.8)(6.4,0.8)
\Knoten{0.2}{0.8}{v1}\nput{180}{v1}{$s_1$}
\Knoten{0.2}{-0.8}{v2}\nput{180}{v2}{$s_2$}
\Knoten{0.5}{-0.5}{n2}
\Knoten{1}{0}{v3}
\Knoten{2.125}{0}{n5}
\Knoten{3.25}{0}{v4}
\Knoten{4.375}{0}{v6}
\Knoten{5.5}{0}{v7}
\Knoten{6}{0.5}{n3}
\Knoten{6}{-0.5}{n4}
\Knoten{6.3}{0.8}{v10}\nput{0}{v10}{$t_1$}
\Knoten{6.3}{-0.8}{v11}\nput{0}{v11}{$t_2$}
\Knoten{3.25}{0.65}{n6}
\Knoten{4.375}{0.65}{n7}

\Kantedashed{n5}{n6}{5}
\ncline[linestyle=dashed]{n6}{n7}
\Kantedashed{n7}{n3}{6}
\Kantedashed{n7}{v6}{3}
\ncline{v1}{v3}
\Kante{n2}{v3}{5}
\ncline{v2}{n2}
\ncline{v3}{n5}
\Kante{n5}{v4}{5}
\Kante{v4}{v6}{2}
\Kante{v6}{v7}{3}
\Kante{v7}{n3}{6}
\ncline{n3}{v10}
\Kante{v7}{n4}{5}
\ncline{n4}{v11}

\pscurve[linestyle=dashed](0.5,-0.5)(-0.5,1)(0.5,1.2)(3.25,0.65)\rput{0}(0.65,1.2){\colorbox{almostwhite}{$5$}}
\Bogendashed{v4}{n4}{6}{-35}

\end{pspicture}
}
\hspace{1.5cm}
\subfloat[BC2c]{
\begin{pspicture}(1.1,-0.8)(6.3,0.8) 

\Knoten{1.2}{0.8}{v1}\nput{180}{v1}{$s_1$}
\Knoten{1.5}{0.5}{n1}
\Knoten{1.2}{-0.8}{v2}\nput{180}{v2}{$s_2$}
\Knoten{1.5}{-0.5}{n2}
\Knoten{2}{0}{v3}
\Knoten{3.25}{0}{v4}
\Knoten{4.375}{0}{v6}
\Knoten{5.5}{0}{v7}
\Knoten{6}{0.5}{n3}
\Knoten{6}{-0.5}{n4}
\Knoten{6.3}{0.8}{v10}\nput{0}{v10}{$t_1$}
\Knoten{6.3}{-0.8}{v11}\nput{0}{v11}{$t_2$}
\Knoten{3.25}{0.65}{n6}
\Knoten{4.375}{0.65}{n7}

\Kantedashed{n1}{n6}{5}
\ncline[linestyle=dashed]{n6}{n7}
\Kantedashed{n7}{n3}{6}
\Kantedashed{n7}{v6}{3}
\ncline{v1}{n1}
\ncline{n1}{v3}
\Kante{n2}{v3}{5}
\ncline{v2}{n2}
\Kante{v3}{v4}{5}
\Kante{v4}{v6}{2}
\Kante{v6}{v7}{3}
\Kante{v7}{n3}{6}
\ncline{n3}{v10}
\Kante{v7}{n4}{5}
\ncline{n4}{v11}

\pscurve[linestyle=dashed](1.5,-0.5)(0.5,1)(3.25,0.65)\rput{0}(0.8,0.25){\colorbox{almostwhite}{$5$}}

\Bogendashed{v4}{n4}{6}{-35}
\end{pspicture}
}

\subfloat[BC3]{
	\begin{pspicture}(0.5,-1)(7,1.2) 
\Knoten{1.2}{0.8}{v1}\nput{180}{v1}{$s_1$}
\Knoten{1}{-1}{v2}\nput{180}{v2}{$s_2$}
\Knoten{1.2}{-0.8}{n6}
\Knoten{1.7}{-0.3}{n2}
\Knoten{2}{0}{v3}
\Knoten{3.25}{0}{v4}
\Knoten{4.375}{0}{v6}
\Knoten{5.5}{0}{v7}
\Knoten{6}{0.5}{n3}
\Knoten{6}{-0.5}{n4}
\Knoten{6.3}{0.8}{v10}\nput{0}{v10}{$t_1$}
\Knoten{6.3}{-0.8}{v11}\nput{0}{v11}{$t_2$}

\ncline{v1}{v3}
\ncline{v2}{n6}
\Kante{n6}{n2}{5}
\ncline{n2}{v3}
\Kante{v3}{v4}{3}
\Kante{v4}{v6}{2}
\Kante{v6}{v7}{3}
\Kante{v7}{n3}{4}
\ncline{n3}{v10}
\Kante{v7}{n4}{5}
\ncline{n4}{v11}
\pscurve[linestyle=dashed](6,0.5)(0.5,0.8)(1.7,-0.3)\rput(4.4,1){\colorbox{almostwhite}{9}}
\Bogendashed{n6}{v6}{6}{-35}
\Bogendashed{v4}{n4}{6}{-35}
\end{pspicture}
}
\hspace{0.25cm}
\subfloat[BC4a]{
\begin{pspicture}(0.5,-1)(6.5,1.2) 
\Knoten{1.2}{0.8}{v1}\nput{180}{v1}{$s_1$}
\Knoten{1}{-1}{v2}\nput{180}{v2}{$s_2$}
\Knoten{1.7}{-0.3}{n2}
\Knoten{2}{0}{v3}
\Knoten{3.25}{0}{v4}
\Knoten{4.375}{0}{v6}
\Knoten{5.5}{0}{v7}
\Knoten{6}{0.5}{n3}
\Knoten{6}{-0.5}{n4}
\Knoten{6.3}{0.8}{v10}\nput{0}{v10}{$t_1$}
\Knoten{6.3}{-0.8}{v11}\nput{0}{v11}{$t_2$}
\Knoten{1.25}{-0.75}{n9}
\Knoten{3.25}{0.65}{n6}
\Knoten{4.375}{0.65}{n7}

\ncline[linestyle=dashed]{n6}{n7}
\ncline[linestyle=dashed]{n7}{n3}\ncput{\colorbox{almostwhite}{$6$}}
\ncline[linestyle=dashed]{n7}{v6}\ncput{\colorbox{almostwhite}{$3$}}
\ncline{v1}{v3}
\ncline{n2}{v3}
\ncline{v2}{n9}
\Kante{n9}{n2}{5}
\Kante{v3}{v4}{5}
\Kante{v4}{v6}{2}
\Kante{v6}{v7}{3}
\Kante{v7}{n3}{6}
\ncline{n3}{v10}
\Kante{v7}{n4}{5}
\ncline{n4}{v11}
\pscurve[linestyle=dashed](1.7,-0.3)(0.6,0.8)(3.25,0.65)\rput{0}(1,0.1){\colorbox{almostwhite}{$5$}}
\pscurve[linestyle=dashed](1.25,-0.75)(0.2,1)(3.25,0.65)\rput{0}(0.5,0){\colorbox{almostwhite}{$5$}}
\Bogendashed{v4}{n4}{6}{-35}
\end{pspicture}
}
\caption{For BC1a, b and BC3: $c(OPT)=22$; sum of cost shares $\leq 9+6+6=21$; \\
for BC2a, c and BC4a: $c(OPT)=26$; sum of cost shares $\leq 11+8+6=25$.}
\label{costsBC}
\end{figure}
\subsection{\ref{maintheo1} implies \ref{maintheo3}} \label{subsec:proof_2}
Consider an arbitrary optimal Steiner forest $F$ (w.r.t. an arbitrary cost function $c$) and an optimal solution $(\xi_{i,e})_{i \in N, e \in P_i}$ of the corresponding \lp{}. If condition (\ref{Nash2}) is satisfied, the statement is true. So we assume that there is an edge that is not completely paid. 
Note that $P_1 \cap P_2$ has to contain at least one edge, otherwise the statement is clear. Furthermore, since $F$ contains no cycles, $P_1 \cap P_2$ has to be a simple path (in the following, when we speak of a path, we always mean a simple path). We refer to the edges of $P_1 \cap P_2$ as the \emph{commonly used edges} or the \emph{commonly used part}.
To obtain \ptl-cost shares (see~\autoref{def:ptl} in~\autoref{subsec:proof_sketch}), we apply Algorithm~\nameref{push} (see \autoref{subsubsec:push}; this procedure terminates and yields the desired property).
Let $e$ be the first edge (with respect to the order displayed in Figure~\ref{order}) which is not completely paid according to the computed \ptl-cost shares $(\xi_{i,e})_{i \in N, e \in P_i}$.

\begin{figure}[H] \centering \psset{unit=1cm}
 \begin{pspicture}(-3.5,-3)(9.5,3) 
\Knoten{-3}{3}{v1}\nput{180}{v1}{$s_1$}
\Knoten{-3}{-3}{v2}\nput{180}{v2}{$s_2$}
\Knoten{-2}{2}{v3}
\Knoten{-1}{1}{v4}
\Knoten{-2}{-2}{v5}
\Knoten{-1}{-1}{v6}
\Knoten{0}{0}{v7}
\Knoten{2.5}{0}{v8}
\Knoten{3.5}{0}{v9}
\Knoten{6}{0}{v10}
\Knoten{7}{1}{v11}
\Knoten{7}{-1}{v12}
\Knoten{8}{2}{v13}
\Knoten{8}{-2}{v14}
\Knoten{9}{3}{v15}\nput{0}{v15}{$t_1$}
\Knoten{9}{-3}{v16}\nput{0}{v16}{$t_2$}

\Kante{v1}{v3}{1}
\ncline[linestyle=dotted]{v3}{v4}
\Kante{v4}{v7}{\ell_1}
\Kante{v2}{v5}{\ell_1+1}
\ncline[linestyle=dotted]{v5}{v6}
\Kante{v7}{v6}{\ell_1+\ell_2}
\Kante{v7}{v8}{\ell_1+\ell_2+1}
\ncline[linestyle=dotted]{v8}{v9}
\Kante{v10}{v9}{\ell_1+\ell_2+m}
\Kante{v10}{v11}{\ell_1+\ell_2+m+1}
\ncline[linestyle=dotted]{v11}{v13}
\Kante{v13}{v15}{\ell_1+\ell_2+m+r_1}
\Kante{v10}{v12}{\ell_1+\ell_2+m+r_1+1}
\ncline[linestyle=dotted]{v12}{v14}
\Kante{v14}{v16}{\ell_1+\ell_2+m+r_1+r_2}
\end{pspicture}
\caption{Ordering of the edges of $F$}
\label{order}
\end{figure}
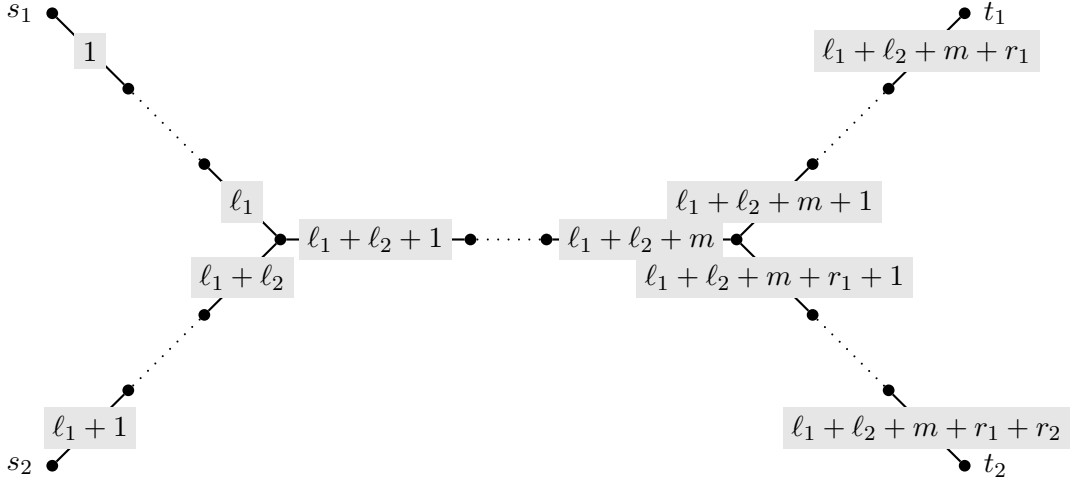

We distinguish between the following three cases:
\begin{enumerate}[itemsep=-0em]
	\item[]Case $L$: \ $\ord(e) \leq \ell_1+\ell_2$,
	\item[]Case $M$: \ $\ell_1+\ell_2+1 \leq \ord(e) \leq \ell_1+\ell_2+m$,
	\item[]Case $R$: \ $\ell_1+\ell_2+m+1 \leq \ord(e)$.
\end{enumerate}


The Cases $M$ and $R$ have further subcases, mostly depending on properties of certain paths which we construct during the cases. Figure~\ref{tree} illustrates the case distinction for the most important subcases, where the nodes are labelled with the subcases (note that the Subcases $R.1.1$, $R.1.2$, $R.1.2.1$ and $R.2$ have further subsubcases which are illustrated in Figure~\ref{tree2}). 
In each of these subcases, we get a contradiction (see \autoref{lemma1}, \autoref{lemma2} and \autoref{lemma3}), therefore each edge in $F$ has to be completely paid and thus $F$ is enforceable.



We now introduce some notation we will use throughout the paper. 
\begin{itemize}[itemsep=-0em,leftmargin=*]
	\item For $\alpha \in \{1,\ldots,\ell_1+\ell_2+m+r_1+r_2\}$, we denote the edge of order $\alpha$ with $e_{\alpha}$. In all following figures, we label the edges (except $e$) with their order. 
	\item For $I \subseteq \{1,\ldots,\ell_1+\ell_2+m+r_1+r_2\}$ we call $e_{\alpha}$ the \emph{smallest (largest)} edge with respect to $I$, if $\alpha$ is the smallest (largest) number in $I$.
	\item We call $P_i' \in \mathcal P_i$ a \emph{(tight) alternative} for Player $i$ and edge $e_{\alpha}$, if $e_{\alpha} \notin P_i'$ and the corresponding restriction (\ref{Nash1}) of \lp{} is tight. Adding the edges of $P_i'$ to $P_i$ yields a unique cycle $C$ with $e_{\alpha} \in C$. Let $e_{\beta}$ ($e_{\gamma}$) be the smallest (largest) edge in $C\cap P_i$ and $q := C \setminus P_i$. For our proof it is sufficient to consider the subpath $q$ of $P_i'$ and therefore we refer to $q$ as a \emph{(tight) alternative} for $e_{\alpha}$, defined by $\beta$ and $\gamma$ ($\beta \leq \gamma$; omitting $P_i'$).
	\item If $q$ is a (tight) alternative for Player $i$, defined by $\beta$ and $\gamma$, we will denote the edges $\{e_{\beta}, \ldots, e_{\gamma}\}\cap P_i$ as the \emph{edges which are substituted by $q$}. We call the nodes of these edges (except for the two endnodes of $q$) the \emph{nodes which are substituted by $q$}. 
	\item We call a (tight) alternative $q$, defined by $\beta$ and $\gamma$, a \emph{left alternative } if $\beta < \ell_1+\ell_2$, and a \emph{right alternative} if $\gamma > \ell_1+\ell_2+m$. 
	\item If $q$ (defined by $\beta$ and $\gamma$) and $p$ (defined by $\beta'$ and $\gamma'$) are two left alternatives of a player, we say that $q$ is \emph{smaller (larger)} than $p$ if $\gamma < \gamma'$ ($\gamma > \gamma'$). If $q$ and $p$ are right alternatives, $q$ is \emph{smaller (larger)} than $p$ if $\beta > \beta'$ ($\beta < \beta')$.
\end{itemize} 

Note that an alternative can simultaneously be a right and a left alternative. Furthermore, if $p$ is a smaller alternative than $q$, this means that $p$ substitutes less commonly used edges than $q$.

For illustration, consider the graph in Figure~\ref{erklaerung} where the solid lines form an optimal Steiner forest and the dashed lines are all existing tight alternatives. 

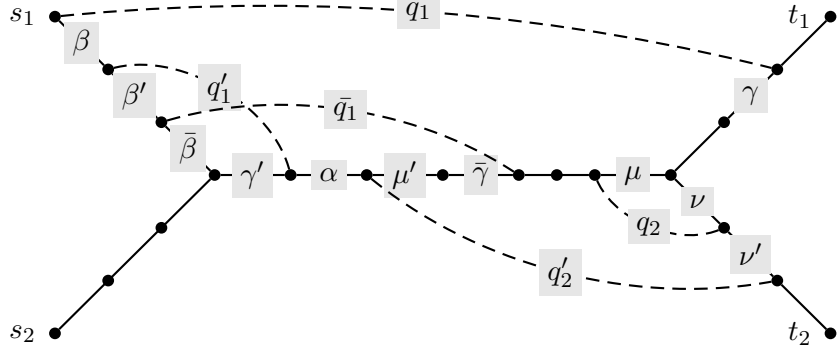
\begin{figure}[H]\centering \psset{unit=1cm}
 \begin{pspicture}(-2.5,-2)(8.2,2.3) 
\Knoten{-2.1}{2.1}{s_1}\nput{180}{s_1}{$s_1$}
\Knoten{-1.4}{1.4}{v1}
\Knoten{-0.7}{0.7}{v2}
\Knoten{-2.1}{-2.1}{s_2}\nput{180}{s_2}{$s_2$}
\Knoten{-1.4}{-1.4}{v3}
\Knoten{-0.7}{-0.7}{v4}
\Knoten{0}{0}{v5}
\Knoten{1}{0}{v6}
\Knoten{2}{0}{v7}
\Knoten{3}{0}{v8}
\Knoten{4}{0}{v9}
\Knoten{4.5}{0}{v10}
\Knoten{5}{0}{v11}
\Knoten{6}{0}{v12}
\Knoten{6.7}{0.7}{v13}
\Knoten{7.4}{1.4}{v14}
\Knoten{8.1}{2.1}{t_1}\nput{180}{t_1}{$t_1$}
\Knoten{6.7}{-0.7}{v15}
\Knoten{7.4}{-1.4}{v16}
\Knoten{8.1}{-2.1}{t_2}\nput{180}{t_2}{$t_2$}

\Kante{s_1}{v1}{\beta}
\Kante{v1}{v2}{\beta '}
\Kante{v2}{v5}{\bar{\beta}}
\ncline{s_2}{v5}
\Kante{v5}{v6}{\gamma '}
\Kante{v6}{v7}{\alpha}
\Kante{v7}{v8}{\mu '}
\Kante{v8}{v9}{\bar{\gamma}}
\ncline{v9}{v11}
\Kante{v11}{v12}{\mu}
\ncline{v12}{v13}
\Kante{v13}{v14}{\gamma}
\ncline{v14}{t_1}
\Kante{v12}{v15}{\nu}
\Kante{v15}{v16}{\nu '}
\ncline{v16}{t_2}

\Bogendashed{s_1}{v14}{q_1}{10}
\Bogendashed{v1}{v6}{q_1'}{45}
\Bogendashed{v2}{v9}{\bar{q_1}}{25}
\Bogendashed{v7}{v16}{q_2'}{-25}
\Bogendashed{v11}{v15}{q_2}{-45}
\end{pspicture}
\caption{Used notation}
\label{erklaerung}
\end{figure}
The alternatives $q_1$ (defined by $\beta$ and $\gamma$), $q_1'$ (defined by $\beta'$ and $\gamma'$) and $\bar{q_1}$ (defined by $\bar{\beta}$ and $\bar{\gamma}$) are all left alternatives of Player 1, where $q_1$ (which is also a right alternative) is the largest and $q_1'$ is the smallest left alternative. We will sometimes need the smallest (or largest) alternative that substitutes a certain edge. For example, $\bar{q_1}$ is the smallest left alternative that substitutes $e_{\alpha}$. 
 For Player 2, $q_2$ (defined by $\mu$ and $\nu$) and $q_2'$ (defined by $\mu'$ and $\nu'$) are right alternatives where $q_2'$ is larger than $q_2$.

\begin{figure}[H]
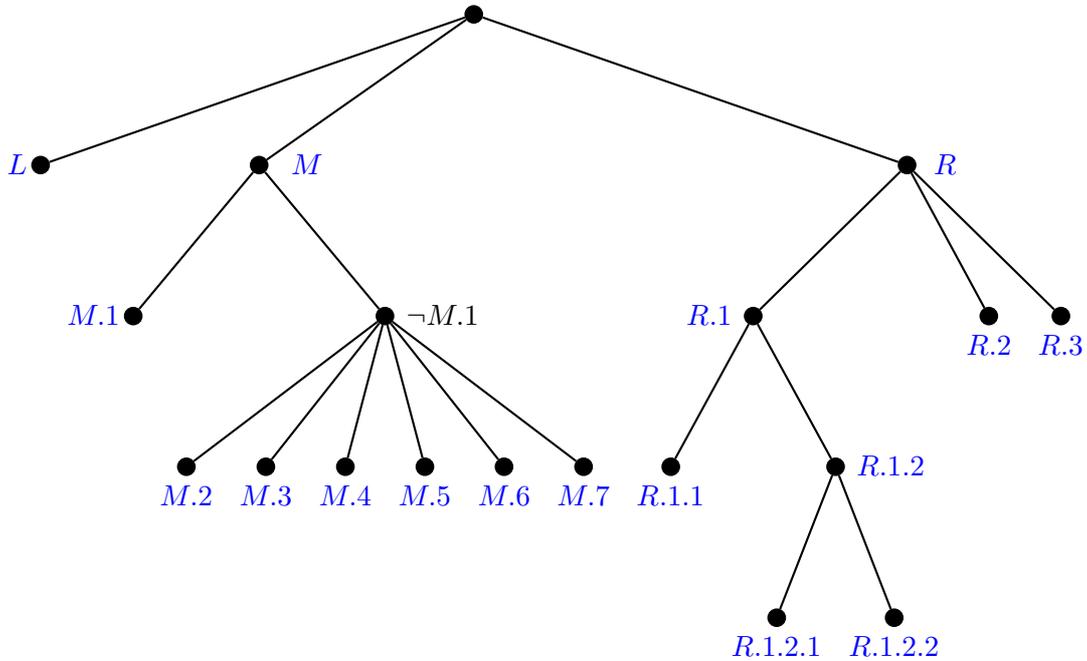
\centering\psset{treemode=D, dotsize=7pt, treefit=loose}
\pstree[treesep=10pt]{\Tdot}{
	\Tdot~[tnpos=l]{\hyperlink{l}{$L$}}
	\pstree{\Tdot~[tnpos=r, tnsep=12pt]{\hyperlink{m}{$M$}}}{
		\Tdot~[tnpos=l]{\hyperlink{m1}{$M.1$}}
		\pstree{\Tdot~[tnpos=r, tnsep=8pt]{$\neg M.1$}}{
			\Tdot~[tnpos=b]{\hyperlink{m2}{$M.2$}}
			\Tdot~[tnpos=b]{\hyperlink{m3}{$M.3$}}
			\Tdot~[tnpos=b]{\hyperlink{m4}{$M.4$}}		
			\Tdot~[tnpos=b]{\hyperlink{m5}{$M.5$}}
			\Tdot~[tnpos=b]{\hyperlink{m6}{$M.6$}}
			\Tdot~[tnpos=b]{\hyperlink{m7}{$M.7$}}
		}
	}
	\pstree{\Tdot~[tnpos=r, tnsep=10pt]{\hyperlink{r}{$R$}}}{
		\pstree{\Tdot~[tnpos=l, tnsep=8pt]{\hyperlink{r1}{$R.1$}}}{
			\Tdot~[tnpos=b]{\hyperlink{r11}{$R.1.1$}}
			\pstree{\Tdot~[tnpos=r, tnsep=8pt]{\hyperlink{r12}{$R.1.2$}}}{
				\Tdot~[tnpos=b]{\hyperlink{r121}{$R.1.2.1$}}
				\Tdot~[tnpos=b]{\hyperlink{r122}{$R.1.2.2$}}
			}
		}
		\Tdot~[tnpos=b]{\hyperlink{r2}{$R.2$}}
		\Tdot~[tnpos=b]{\hyperlink{r3}{$R.3$}}
	}
}
\caption{Tree for case distinction in the proof of~\autoref{theoremrobust} (further explanation below)} \label{tree} 
\end{figure}
We now explain the case distinction displayed in Figure~\ref{tree}.\label{treesubcases} In all three cases ($L$, $M$ and $R$) we construct tight alternatives. Different properties of these alternatives yield different subcases. 

For Case \hyperlink{m}{$M$}, we distinguish between the case that there is a tight alternative which substitutes $e$ and is neither a right nor a left alternative (\hyperlink{m1}{$M.1$}), or not. The Subcases $\hyperlink{m2}{$M.2$}$, \hyperlink{m3}{$M.3$}, \hyperlink{m4}{$M.4$} cover the case that there are two tight alternatives (one for each Player) which substitute $e$ and  the same commonly used edges. For the case that this is not true, we distinguish between further properties of two tight alternatives (\hyperlink{m5}{$M.5$}, \hyperlink{m6}{$M.6$}, \hyperlink{m7}{$M.7$}). 

For Case $R$, we consider two tight alternatives $q_1$ (for Player 1) and $q_2$ (for Player 2) (with certain properties). Either $q_2$ is a left alternative (\hyperlink{r3}{$R.3$}), or not. If $q_2$ is a right alternative, we distinguish between the case that $q_2$ substitutes more (\hyperlink{r2}{$R.2$}) or less (\hyperlink{r1}{$R.1$}) commonly used edges than $q_1$. In $R.1$, we consider a third alternative $q_2'$ which can be a left (\hyperlink{r11}{$R.1.1$}) or a right (\hyperlink{r12}{$R.1.2$}) alternative. The subcase $R.1.2$ has two further subcases (\hyperlink{r121}{$R.1.2.1$} and \hyperlink{r122}{$R.1.2.2$}) which we explain in more detail in \autoref{subsubsec:caseR}, where even further subcases of Case $R.1$ and $R.2$ are discussed.

\subsubsection{Case L}\label{subsubsec:caseL}
\begin{proposition} \label{lemma1}
In Case \hypertarget{l}{$L$}, that means $\ord(e) \leq \ell_1+\ell_2$, we get a contradiction.
\end{proposition}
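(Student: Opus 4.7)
The plan is to choose a pushed-to-the-left (\ptl{}) optimal solution of \lp{} and to exhibit a Steiner forest strictly cheaper than $F$, contradicting the optimality of $F$. Without loss of generality assume $e\in P_1$; since $\ord(e)\le\ell_1+\ell_2$, the edge $e$ lies in Player~$1$'s left part, so $S_e(P)=\{1\}$ and $\xi_{1,e}<c(e)$. I first claim that Player~$1$ must possess a tight alternative substituting $e$. Otherwise no constraint~$(2)$ of \lp{} in which $e$ appears on the left-hand side is tight, so increasing $\xi_{1,e}$ by a small $\epsilon>0$ preserves all constraints and strictly improves the LP objective, contradicting optimality. Among all such tight alternatives I pick $q_1$ \emph{smallest}, i.e.\ minimizing the maximum order $\gamma$ of an edge in $C(q_1)\cap P_1$.

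Next I analyse Player~$1$'s cost shares on the edges of $C(q_1)\cap P_1$. Every edge $g$ in $C(q_1)\cap P_1$ with $\ord(g)<\ord(e)$ also satisfies $\ord(g)\le\ell_1$, hence lies in $P_1\setminus P_2$; because $e$ is the first edge that is not completely paid, $\xi_{1,g}=c(g)$. The critical claim is $\xi_{1,h}=0$ for every $h\in C(q_1)\cap P_1$ with $\ord(h)>\ord(e)$. Suppose instead $\xi_{1,h}>0$, and push a small amount from $h$ to $e$. Constraint~$(1)$ at $e$ continues to hold because $\xi_{1,e}<c(e)$, and the only way such a push can violate a tight constraint~$(2)$ is if there is a tight alternative for Player~$1$ substituting $e$ but not $h$. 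Any such alternative would have maximum order strictly less than $\ord(h)\le\gamma(q_1)$, contradicting the minimality of $q_1$. Hence the push is feasible, contradicting the \ptl{}-property.

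Finally set $F^{*}:=(F\setminus E')\cup q_1$, where $E'$ is the set of edges of $C(q_1)\cap P_1$ lying in Player~$1$'s left part; since $\ord(e)\le\ell_1+\ell_2$, every edge of $C(q_1)\cap P_1$ of order at most $\ord(e)$ belongs to $E'$. Player~$1$'s $s_1$--$t_1$ connection is now supplied by $q_1$ together with the remaining piece of $P_1$, while Player~$2$'s path is undisturbed because $E'\subseteq P_1\setminus P_2$. Tightness of $q_1$ combined with the cost-share identities above yields
\[
c(q_1)=\!\!\sum_{g\in C(q_1)\cap P_1}\!\!\xi_{1,g}=\!\!\sum_{\substack{g\in C(q_1)\cap P_1\\ \ord(g)<\ord(e)}}\!\!c(g)+\xi_{1,e}<\!\!\sum_{\substack{g\in C(q_1)\cap P_1\\ \ord(g)\le\ord(e)}}\!\!c(g)\le\sum_{g\in E'}c(g),
\]
so $c(F^{*})<c(F)$; trimming any cycle that may have been created gives a Steiner forest strictly cheaper than $F$, the desired contradiction. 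The only delicate step is the $\xi_{1,h}=0$ claim of the middle paragraph, where the smallestness of $q_1$ meets the \ptl{}-property; the assumption $\ord(e)\le\ell_1+\ell_2$ is essential, since it ensures that the fully paid prefix of $C(q_1)\cap P_1$ lives on edges private to Player~$1$, so cost sharing with Player~$2$ never interferes with the accounting.
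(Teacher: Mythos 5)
Your proposal is correct and follows essentially the same route as the paper's proof of Case $L$: existence of a tight alternative via LP optimality, choice of a smallest such alternative $q_1$, the cost-share pattern (fully paid before $e$, strictly unpaid at $e$, zero after $e$ via a push operation against the minimality of $q_1$), and the construction of a cheaper Steiner forest $F^*$ by swapping in $q_1$ for the left-part edges it substitutes. The only cosmetic difference is that your uniform definition of $E'$ absorbs the boundary case $\gamma\le\ell_1$ that the paper relegates to a footnote.
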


\begin{proof}
We describe the case $\ord(e) \leq \ell_1$ in detail, the case $\ell_1+1 \leq \ord(e) \leq \ell_1 + \ell_2$ follows analogously. Since $\xi_{1,e}<c(e)$ (and $e \notin P_2$), we get that Player 1 has a (tight) left alternative that substitutes $e$:
Suppose there would be no such tight alternative. In this case, we could increase the cost share $\xi_{1,e}<c(e)$ of Player 1 on edge $e$ without changing the cost shares on the other edges and would still get a feasible solution for \lp{}. Furthermore, this solution has a higher objective function value, a contradiction to the optimality of the given cost shares for \lp{}.

We consider a smallest left alternative $q_1$ for $e$ (defined by $\beta$ and $\gamma$).
The situation is illustrated in Figure \ref{case1}\footnote{Note that $\gamma \geq \ell_1+\ell_2+m+1$ or $\gamma \leq \ell_1$ doesn't change the argumentation; in the latter case, we only have to choose a slightly different $F^*$.}, where the solid lines represent the optimal Steiner forest $F$.

\begin{figure}[H] \centering \psset{unit=0.8cm}
\subfloat[Case $L$]{
 \begin{pspicture}(-3.5,-1.5)(6.1,1.5) 
\Knoten{-3}{1.5}{v1}\nput{180}{v1}{$s_1$}
\Knoten{-2.25}{1.2}{v2}
\Knoten{-1.5}{0.9}{v3}
\Knoten{-0.75}{0.6}{v4}
\Knoten{0}{0.3}{v5}
\Knoten{0.75}{0}{v6}
\Knoten{-3}{-1.5}{v7}\nput{180}{v7}{$s_2$}
\Knoten{2}{0}{v12}
\Knoten{3}{0}{v13}
\Knoten{4}{0}{v14}
\Knoten{5.25}{1.5}{v15}\nput{0}{v15}{$t_1$}
\Knoten{5.25}{-1.5}{v16}\nput{0}{v16}{$t_2$}

\Kante{v2}{v3}{\beta}
\Kante{v13}{v12}{\gamma}
\Kante{v4}{v5}{e}
\Bogendashed{v2}{v13}{q_1}{40}
\ncline{v1}{v2}
\ncline{v3}{v4}
\ncline{v5}{v6}
\ncline{v6}{v7}
\ncline{v6}{v12}
\ncline{v13}{v14}
\ncline{v14}{v15}
\ncline{v14}{v16}

\end{pspicture}
\label{case1}
} 
\subfloat[$F^*$ in Case $L$]{
 \begin{pspicture}(-3.9,-1.5)(5.75,1.5) 
\Knoten{-3}{1.5}{v1}\nput{180}{v1}{$s_1$}
\Knoten{-2.25}{1.2}{v2}
\Knoten{-1.5}{0.9}{v3}
\Knoten{-0.75}{0.6}{v4}
\Knoten{0}{0.3}{v5}
\Knoten{0.75}{0}{v6}
\Knoten{-3}{-1.5}{v7}\nput{180}{v7}{$s_2$}
\Knoten{2}{0}{v12}
\Knoten{3}{0}{v13}
\Knoten{4}{0}{v14}
\Knoten{5.25}{1.5}{v15}\nput{0}{v15}{$t_1$}
\Knoten{5.25}{-1.5}{v16}\nput{0}{v16}{$t_2$}

\Kantedashed{v2}{v3}{\beta}
\Kante{v13}{v12}{\gamma}
\Kantedashed{v4}{v5}{e}
\Bogen{v2}{v13}{q_1}{40}
\ncline{v1}{v2}
\ncline[linestyle=dashed]{v3}{v4}
\ncline[linestyle=dashed]{v5}{v6}
\ncline{v6}{v7}
\ncline{v6}{v12}
\ncline{v13}{v14}
\ncline{v14}{v15}
\ncline{v14}{v16}

\end{pspicture}
\label{F*case1}
}
\caption{Situation in Case $L$}
\end{figure}
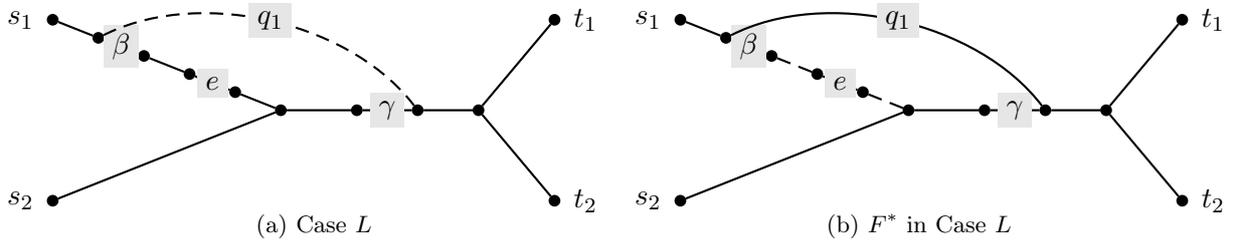


Regarding the cost shares of Player 1, we get
\[
\xi_{1,e_{i}} \ \ \begin{cases} =c(e_{i}), & \beta \leq i < \ord(e), \\
< c(e), & i = \ord(e), \\
=0, & \ord(e) < i \leq \gamma.
\end{cases}
\]
The first two cases are clear due to the fact that those edges are not contained in $P_2$ and the choice of $e$ as the first edge that is not paid completely. 
The third case holds since $q_1$ is a smallest left alternative for $e$ and the cost shares are pushed to the left.

Now we use this to construct a different Steiner forest $F^*$ with $c(F^*)<c(F)$, which is a contradiction.
Let $F^*=F \setminus \{e_{\beta}, \ldots, e_{\ell_1}\} \cup q_1$, see Figure \ref{F*case1}, where the solid lines represent $F^*$. 
Note that it is possible that $F^*$ (as defined above) is no Steiner forest, because $q_1$ can contain nodes or edges of $P_2 \setminus P_1$. But then it is clear that $F^*$ contains a Steiner forest with cost at most $c(F^*)$. In the following, we will not mention that in detail again: If we speak of a cheaper Steiner forest or a cheaper solution, we include the possibility that we have to delete edges of the considered subgraph to get a Steiner forest with lower costs than $F$. 

Considering the difference in costs, we get (for simplification, we omit $i \in P_1$ in the summation indices of the following sums):
\begin{align*}
c(F^*)-c(F) &=c(q_1)-\sum_{i=\beta }^{\ell_1}{c(e_{i})} = \sum_{i=\beta}^{\gamma}{\xi_{1,e_{i}}}-\sum_{i=\beta}^{\ell_1}{c(e_{i})} = \\ &= \sum_{i=\beta}^{\ord(e)-1}{\underbrace{\xi_{1,e_{i}}}_{=c(e_i)}}+\underbrace{\xi_{1,e}}_{<c(e)}+\sum_{i=\ord(e)+1}^{\gamma}{\underbrace{\xi_{1,e_{i}}}_{=0}}-\sum_{i=\beta}^{\ell_1}{c(e_{i})}  < \\
&< \sum_{i=\beta}^{\ord(e)}{c(e_{i})}-\sum_{i=\beta}^{\ell_1}{c(e_{i})} \leq \sum_{i=\beta}^{\ell_1}{c(e_{i})}-\sum_{i=\beta}^{\ell_1}{c(e_{i})} =0.
\end{align*}

The second equality follows from the fact that $q_1$ is a tight alternative. The strict inequality is due to our above observations of $\xi_{1, e_i}$ and the last inequality follows from $\ord(e)\leq \ell_1$ and that costs are nonnegative. 
\end{proof}

\subsubsection{Case M}\label{subsubsec:caseM}
\begin{proposition} \label{lemma2}
In Case \hypertarget{m}{$M$}, that means $\ell_1+\ell_2+1 \leq \ord(e) \leq \ell_1+\ell_2+m$, we get a contradiction.
\end{proposition}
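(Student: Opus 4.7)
The plan is to proceed much as in Case $L$, except that now both players must contribute tight alternatives for $e$: since $e$ lies in the middle part $P_1 \cap P_2$ and $\xi_{1,e}+\xi_{2,e} < c(e)$, if either Player $i$ lacked a tight alternative for $e$ we could increase $\xi_{i,e}$ without changing any other share and strictly improve the LP objective, contradicting optimality. So I would begin by fixing tight alternatives $q_1$ for Player $1$ and $q_2$ for Player $2$ substituting $e$, and then follow the subcase tree outlined after Figure~\ref{erklaerung} (Subcases $M.1$--$M.7$).

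The easy subcases are handled first. In Subcase $M.1$, some tight alternative $q$ for $e$ is neither a left nor a right alternative, i.e.\ it substitutes only edges of the middle part. Since every middle edge is shared by both players, $q$ is a valid alternative for both, so deleting the substituted commonly used edges and adding $q$ yields a feasible $F^*$; the same computation as in Case $L$ shows $c(F^*)<c(F)$. Subcases $M.2$--$M.4$ treat the possibility that $q_1$ and $q_2$ substitute exactly the same subset of middle edges: then one of them can play the role of $q$ in the previous argument, or the two alternatives together replace all substituted edges, again yielding a cheaper Steiner forest.

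In the remaining subcases every tight alternative of at least one player reaches into the left or right part. I would split according to whether all tight alternatives of some player for $e$ are right alternatives (Subcases $M.5$, $M.6$), or both players admit a tight left alternative for $e$ (Subcase $M.7$). The first situation is close in spirit to Case $L$: take a smallest right alternative and rely on the pushed-to-the-left property to conclude that the player pays zero on substituted edges of order strictly larger than $\ord(e)$ and the full cost on substituted edges of order strictly smaller, so that the swap inequality goes through. For Subcase $M.7$ I would take smallest left alternatives $q_1$ and $q_2$ (their substituted middle segments must differ, else we would be in $M.2$--$M.4$), add both alternatives, and delete the dashed edges of Figure~\ref{sketchcaseMb}; the pushed-to-the-left property together with the minimality of $q_1,q_2$ forces Player $2$ (say) to pay $0$ on the edges of $P_2$ strictly after $e$ that are substituted by $q_2$, while Player $1$ pays the full cost on the edges of the left part substituted by $q_1$.

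The main obstacle is Subcase $M.7$, together with the routine but necessary verification that $F^*$ really is a Steiner forest. The substituted middle segments of $q_1$ and $q_2$ may be nested or disjoint, the alternatives themselves may share internal nodes with each other or with $P_1 \cup P_2$, and the unpaid edge $e$ enters both players' accounting. The argument therefore requires a precise telescoping of cost shares: the tight inequalities for $q_j$ give $c(q_j)=\sum_i \xi_{j,e_i}$ summed over the edges of $C(q_j)\cap P_j$, and the minimality-plus-pushed-to-the-left combination pins down exactly which of these shares vanish and which saturate their edge costs, so that the unique strict inequality $\xi_{i,e}<c(e)$ propagates into $c(F^*)<c(F)$, completing the contradiction. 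Unlike Case $R$, no appeal to a bad configuration is required in Case $M$: the strictness on the single unpaid middle edge already suffices to close every subcase.
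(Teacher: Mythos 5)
Your proposal is correct and follows essentially the same route as the paper: the same observation that both players must have tight alternatives for $e$, the same subcase decomposition $M.1$--$M.7$, the same constructions of a cheaper Steiner forest $F^*$, and the same use of the tightness of the alternatives combined with the pushed-to-the-left property (full payment on the left-part edges before $e$, zero payment on substituted edges after $e$) to propagate the strict inequality $\xi_{1,e}+\xi_{2,e}<c(e)$ into $c(F^*)<c(F)$. The paper's proof fills in the explicit telescoping computations you defer, but introduces no additional ideas.
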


\begin{proof}
Since the cost shares are optimal and $\xi_{1,e}+\xi_{2,e}<c(e)$, 
both players need to have a tight alternative for $e$.

In Subcase \hypertarget{m1}{$M.1$}, Player 1 has a tight alternative $q_1$ for $e$, defined by $\beta$ and $\gamma$ (or Player 2 has a tight alternative $q_2$ for $e$, defined by $\mu$ and $\nu$), which is neither a left nor a right alternative. Figure \ref{case2} illustrates this Subcase for Player 1.
%
We get a contradiction since one can construct a cheaper Steiner forest $F^*$:
For Player 1, consider $F^*:= F \setminus \{e_{\beta},\ldots,e_{\gamma}\} \cup q_1$.
Using that $q_1$ is a tight alternative, $\xi_{1,e_i} \leq c(e_i)$ for each edge $e_i$ and 
$\xi_{1,e}<c(e)$, we get
\begin{align*}
c(F^*)-c(F) = c(q_1)-\sum_{i=\beta }^{\gamma}{c(e_{i})} = \sum_{i=\beta }^{\gamma}{\xi_{1,e_{i}}}-\sum_{i=\beta}^{\gamma}{c(e_{i})} < 0.
\end{align*}

\vspace{-1em}
\begin{figure}[H] \centering \psset{unit=0.85cm}
\subfloat[Subcase $M.1$]{
%
 \begin{pspicture}(-1,-1.5)(8.5,1.5) 
\Knoten{-0.5}{1.5}{v-1}\nput{180}{v-1}{$s_1$}
\Knoten{-0.5}{-1.5}{v0}\nput{180}{v0}{$s_2$}
\Knoten{1}{0}{v3}
\Knoten{1.7}{0}{v4}
\Knoten{2.4}{0}{v5}
\Knoten{3.1}{0}{v6}
\Knoten{3.8}{0}{v7}
\Knoten{4.5}{0}{v8}
\Knoten{5.2}{0}{v11}
\Knoten{6}{0}{v12}
\Knoten{7.5}{1.5}{v9}\nput{0}{v9}{$t_1$}
\Knoten{7.5}{-1.5}{v10}\nput{0}{v10}{$t_2$}

\ncline{-}{v-1}{v3}
\ncline{-}{v0}{v3}
\ncline{-}{v2}{v12}
\ncline{-}{v3}{v4}
\Kante{v4}{v5}{\beta}
\ncline{-}{v5}{v6}
\Kante{v7}{v6}{e}
\ncline{-}{v7}{v8}
\Kante{v8}{v11}{\gamma}
\ncline{-}{v11}{v12}
\ncline{-}{v12}{v9}
\ncline{-}{v12}{v10}
\Bogendashed{v4}{v11}{q_1}{60}
\end{pspicture}
\label{case2}
}
\subfloat[Subcase $M.2$]{
 \begin{pspicture}(-1,-1.5)(8,1.5) 
\Knoten{-0.5}{1.5}{v-1}\nput{180}{v-1}{$s_1$}
\Knoten{-0.5}{-1.5}{v0}\nput{180}{v0}{$s_2$}
\Knoten{0}{1}{v1}
\Knoten{0}{-1}{v2}
\Knoten{0.5}{0.5}{v11}
\Knoten{0.5}{-0.5}{v12}
\Knoten{1}{0}{v3}
\Knoten{1.8}{0}{v4}
\Knoten{2.6}{0}{v5}
\Knoten{3.4}{0}{v6}
\Knoten{5}{0}{v7}
\Knoten{6}{0}{v8}
\Knoten{7.5}{1.5}{v9}\nput{0}{v9}{$t_1$}
\Knoten{7.5}{-1.5}{v10}\nput{0}{v10}{$t_2$}

\ncline{-}{v-1}{v1}
\Kante{v1}{v11}{\beta}
\ncline{-}{v0}{v2}
\ncline{-}{v11}{v3}
\Kante{v2}{v12}{\mu}
\ncline{-}{v12}{v3}
\ncline{-}{v3}{v4}
\Kante{v4}{v5}{e}
\ncline{-}{v5}{v6}
\Kante{v7}{v6}{\gamma=\nu}
\ncline{-}{v7}{v8}
\ncline{-}{v8}{v9}
\ncline{-}{v8}{v10}
\Bogendashed{v1}{v7}{q_1}{35}
\Bogendashed{v2}{v7}{q_2}{-30}
\end{pspicture}\label{case3a}
}
\end{figure}

\begin{figure}[H] \centering \psset{unit=0.85cm}
\subfloat[Subcase $M.3$]{
\begin{pspicture}(-1,-1.6)(8.5,1.5) 
\Knoten{-0.5}{1.5}{v-1}\nput{180}{v-1}{$s_1$}
\Knoten{-0.5}{-1.5}{v0}\nput{180}{v0}{$s_2$}
\Knoten{7}{1}{v1}
\Knoten{7}{-1}{v2}
\Knoten{6.5}{0.5}{v11}
\Knoten{6.5}{-0.5}{v12}
\Knoten{1}{0}{v3}
\Knoten{2}{0}{v4}
\Knoten{3.6}{0}{v5}
\Knoten{4.4}{0}{v6}
\Knoten{5.2}{0}{v7}
\Knoten{6}{0}{v8}
\Knoten{7.5}{1.5}{v9}\nput{0}{v9}{$t_1$}
\Knoten{7.5}{-1.5}{v10}\nput{0}{v10}{$t_2$}

\ncline{-}{v-1}{v3}
\ncline{-}{v0}{v3}
\ncline{-}{v3}{v4}
\Kante{v4}{v5}{\beta=\mu}
\ncline{-}{v5}{v6}
\Kante{v7}{v6}{e}
\ncline{-}{v7}{v8}
\ncline{-}{v8}{v11}
\ncline{-}{v8}{v12}
\Kante{v11}{v1}{\gamma}
\Kante{v12}{v2}{\nu}
\ncline{-}{v2}{v10}
\ncline{-}{v1}{v9}
\Bogendashed{v1}{v4}{q_1}{-30}
\Bogendashed{v2}{v4}{q_2}{30}
\end{pspicture}\label{case3b}
}
\subfloat[Subcase $M.4$]{
 \begin{pspicture}(-1,-1.6)(8,1.5) 
\Knoten{-0.5}{1.5}{v-1}\nput{180}{v-1}{$s_1$}
\Knoten{-0.5}{-1.5}{v0}\nput{180}{v0}{$s_2$}
\Knoten{0}{1}{v13}
\Knoten{0.5}{0.5}{v14}
\Knoten{0}{-1}{v15}
\Knoten{0.5}{-0.5}{v16}
\Knoten{7}{1}{v1}
\Knoten{7}{-1}{v2}
\Knoten{6.5}{0.5}{v11}
\Knoten{6.5}{-0.5}{v12}
\Knoten{1}{0}{v3}
\Knoten{2.67}{0}{v6}
\Knoten{4.33}{0}{v7}
\Knoten{6}{0}{v8}
\Knoten{7.5}{1.5}{v9}\nput{0}{v9}{$t_1$}
\Knoten{7.5}{-1.5}{v10}\nput{0}{v10}{$t_2$}

\ncline{-}{v-1}{v13}
\ncline{-}{v0}{v15}
\Kante{v13}{v14}{\beta}
\Kante{v15}{v16}{\mu}
\ncline{-}{v14}{v3}
\ncline{-}{v16}{v3}
\ncline{-}{v3}{v6}
\Kante{v7}{v6}{e}
\ncline{-}{v7}{v8}
\ncline{-}{v8}{v11}
\ncline{-}{v8}{v12}
\Kante{v11}{v1}{\gamma}
\Kante{v12}{v2}{\nu}
\ncline{-}{v2}{v10}
\ncline{-}{v1}{v9}
\Bogendashed{v1}{v13}{q_1}{-15}
\Bogendashed{v2}{v15}{q_2}{15}
\end{pspicture}\label{case3c}
}
\caption{Some subcases in Case $M$}
\end{figure}
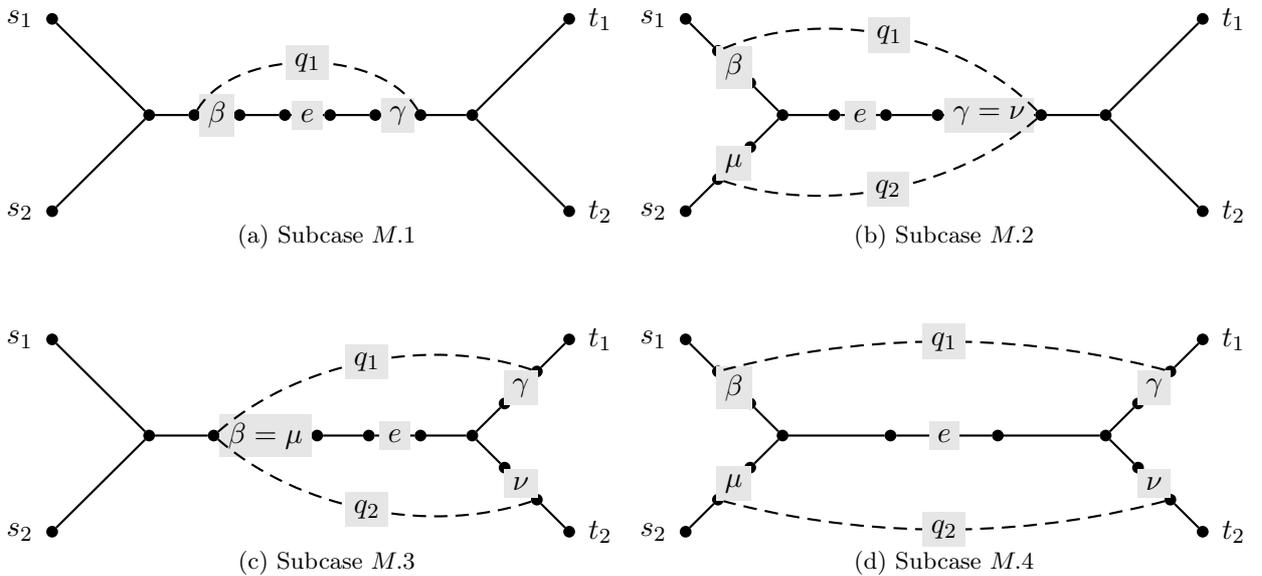
%

Obviously, we can construct a similar $F^*$ if Player 2 has a tight alternative which substitutes only commonly used edges. 
Therefore we can assume that all tight alternatives which substitute $e$ are either left or right alternatives (or both). 


Another subcase is that Player 1 has a tight alternative $q_1$ for $e$, defined by $\beta$ and $\gamma$, and Player 2 has a tight alternative $q_2$ for $e$, defined by $\mu$ and $\nu$, so that both alternatives substitute the same edges of $P_1 \cap P_2$. Referring to the assumption above this leads to three different subcases, illustrated in Figures \ref{case3a}, \ref{case3b} and \ref{case3c}:
\begin{enumerate}[label=Subcase $M$.\arabic*:, leftmargin=2.4cm] \setcounter{enumi}{1}
	\item If $q_1$ is a left alternative which does not substitute all commonly used edges, $q_2$ also has to be a left alternative and $\gamma=\nu$ has to hold;
	\item If $q_1$ is a right alternative which does not substitute all commonly used edges, $q_2$ also has to be a right alternative and $\beta=\mu$ has to hold; 
	\item  If $q_1$ is an alternative which substitutes all commonly used edges, $q_2$ also has to be an alternative which substitutes all commonly used edges. 
\end{enumerate} 

In all three subcases, we can again construct a cheaper Steiner forest $F^*$:
For Subcase \hypertarget{m2}{$M.2$}, consider
\[
F^*:= F \setminus \left(\{e_{\beta}, \ldots , e_{\ell_1}\} \cup \{e_{\mu}, \ldots , e_{\ell_1+\ell_2}\} \cup \{e_{\ell_1+\ell_2+1}, \ldots, e, \ldots, e_{\gamma}\} \right) \cup \left(q_1 \cup q_2 \right),
\]
illustrated in Figure \ref{case3aF*}.

\begin{figure}[H] \centering \psset{unit=0.95cm}
 \begin{pspicture}(-1,-1.5)(8,1.5) 
\Knoten{-0.5}{1.5}{v-1}\nput{180}{v-1}{$s_1$}
\Knoten{-0.5}{-1.5}{v0}\nput{180}{v0}{$s_2$}
\Knoten{0}{1}{v1}
\Knoten{0}{-1}{v2}
\Knoten{0.5}{0.5}{v11}
\Knoten{0.5}{-0.5}{v12}
\Knoten{1}{0}{v3}
\Knoten{1.8}{0}{v4}
\Knoten{2.6}{0}{v5}
\Knoten{3.4}{0}{v6}
\Knoten{5}{0}{v7}
\Knoten{6}{0}{v8}
\Knoten{7.5}{1.5}{v9}\nput{0}{v9}{$t_1$}
\Knoten{7.5}{-1.5}{v10}\nput{0}{v10}{$t_2$}

\ncline{-}{v-1}{v1}
\Kantedashed{v1}{v11}{\beta}
\ncline{-}{v0}{v2}
\ncline[linestyle=dashed]{-}{v11}{v3}
\Kantedashed{v2}{v12}{\mu}
\ncline[linestyle=dashed]{-}{v12}{v3}
\ncline[linestyle=dashed]{-}{v3}{v4}
\Kantedashed{v4}{v5}{e}
\ncline[linestyle=dashed]{-}{v5}{v6}
\Kantedashed{v7}{v6}{\gamma=\nu}
\ncline{-}{v7}{v8}
\ncline{-}{v8}{v9}
\ncline{-}{v8}{v10}
\Bogen{v1}{v7}{q_1}{35}
\Bogen{v2}{v7}{q_2}{-35}
\end{pspicture}
\caption{$F^*$ for Subcase $M.2$}
\label{case3aF*}
\end{figure}
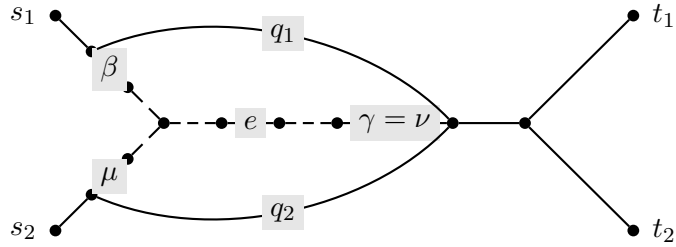
For the difference in costs, we get 
\begin{align*}
&c(F^*)-c(F) = c(q_1)+c(q_2)-\sum_{i=\beta}^{\ell_1}{c(e_{i})} - \sum_{i=\mu}^{\ell_1+\ell_2}{c(e_{i})} -\sum_{i=\ell_1+\ell_2+1}^{\gamma}{c(e_{i})} = \\
&= \sum_{i=\beta}^{\ell_1}{\underbrace{\xi_{1,e_{i}}}_{\leq c(e_i)}} + \sum_{i=\mu}^{\ell_1+\ell_2}{\underbrace{\xi_{2,e_{i}}}_{\leq c(e_i)}} +\sum_{i=\ell_1+\ell_2+1}^{\gamma}{\underbrace{(\xi_{1,e_{i}}+\xi_{2,e_{i}})}_{\leq c(e_i)}} -\sum_{i=\beta}^{\ell_1}{c(e_{i})} - \sum_{i=\mu}^{\ell_1+\ell_2}{c(e_{i})} -\sum_{i=\ell_1+\ell_2+1}^{\gamma}{c(e_{i})} <0.
\end{align*}
The second equality follows from the fact that $q_1$ and $q_2$ are tight alternatives and
the strict inequality is due to $\xi_{1,e}+\xi_{2,e}<c(e)$.
The Subcases \hypertarget{m3}{$M.3$} and \hypertarget{m4}{$M.4$} are very similar, only $F^*$ differs slightly; the proof is therefore left to the reader.

We can now assume that there are no tight alternatives for the players in which they substitute the same commonly used edges. 
The following subcases, for which we will need the properties of the \ptl-cost shares, cover the remaining situation of Case~$M$:
%
%
\begin{enumerate}[leftmargin=2.4cm,itemsep=-0em]
	\item[Subcase M.5:] All tight alternatives of Player 1 for $e$ are right alternatives,
	\item[Subcase M.6:] All tight alternatives of Player 2 for $e$ are right alternatives,
	\item[Subcase M.7:] Player 1 and Player 2 both have a tight left alternative for $e$. 
\end{enumerate}

For Subcase \hypertarget{m5}{$M.5$}, let $q_1$, defined by $\beta$ and $\gamma$,  be a tight right alternative of Player 1 for $e$ which has the smallest possible value of $\gamma$ among all such alternatives. The situation is illustrated in Figure \ref{case2bwas}. 

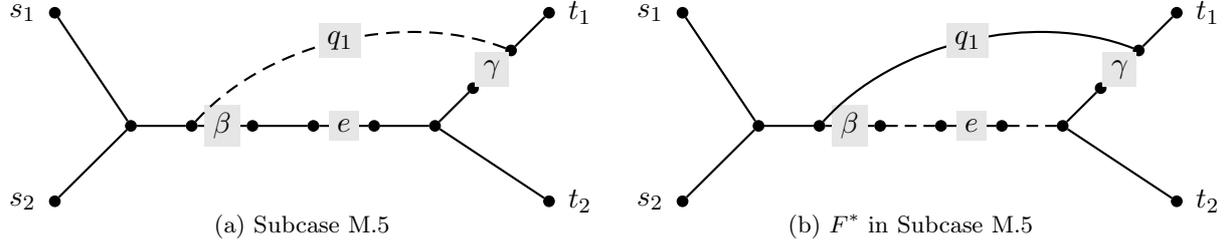
\begin{figure}[H] \centering \psset{unit=1cm}
\subfloat[Subcase M.5]{
\begin{pspicture}(0.5,-1)(8,1.5) 
\Knoten{1}{1.5}{v-1}\nput{180}{v-1}{$s_1$}
\Knoten{1}{-1}{v0}\nput{180}{v0}{$s_2$}
\Knoten{2}{0}{v-3}
\Knoten{2.8}{0}{v4}
\Knoten{3.6}{0}{v5}
\Knoten{4.4}{0}{v6}
\Knoten{5.2}{0}{v7}
\Knoten{6}{0}{v8}
\Knoten{7.5}{1.5}{v9}\nput{0}{v9}{$t_1$}
\Knoten{7.5}{-1}{v10}\nput{0}{v10}{$t_2$}
\Knoten{7}{1}{v1}
\Knoten{6.5}{0.5}{v11}

\ncline{-}{v-1}{v-3}
\ncline{-}{v0}{v-3}
\ncline{-}{v-3}{v4}
\Kante{v4}{v5}{\beta}
\ncline{-}{v5}{v6}
\Kante{v7}{v6}{e}
\ncline{-}{v7}{v8}
\ncline{-}{v8}{v11}
\ncline{-}{v8}{v10}
\Kante{v11}{v1}{\gamma}
\ncline{-}{v1}{v9}
\Bogendashed{v1}{v4}{q_1}{-35}
\end{pspicture}
\label{case2bwas}
}
\subfloat[$F^*$ in Subcase M.5]{
\begin{pspicture}(0,-1)(8,1.5) 
\Knoten{1}{1.5}{v-1}\nput{180}{v-1}{$s_1$}
\Knoten{1}{-1}{v0}\nput{180}{v0}{$s_2$}
\Knoten{2}{0}{v-3}
\Knoten{2.8}{0}{v4}
\Knoten{3.6}{0}{v5}
\Knoten{4.4}{0}{v6}
\Knoten{5.2}{0}{v7}
\Knoten{6}{0}{v8}
\Knoten{7.5}{1.5}{v9}\nput{0}{v9}{$t_1$}
\Knoten{7.5}{-1}{v10}\nput{0}{v10}{$t_2$}
\Knoten{7}{1}{v1}
\Knoten{6.5}{0.5}{v11}

\ncline{-}{v-1}{v-3}
\ncline{-}{v0}{v-3}
\ncline{-}{v-3}{v4}
\Kantedashed{v4}{v5}{\beta}
\ncline[linestyle=dashed]{-}{v5}{v6}
\Kantedashed{v7}{v6}{e}
\ncline[linestyle=dashed]{-}{v7}{v8}
\ncline{-}{v8}{v11}
\ncline{-}{v8}{v10}
\Kante{v11}{v1}{\gamma}
\ncline{-}{v1}{v9}
\Bogen{v1}{v4}{q_1}{-35}
\end{pspicture}
\label{case2bF*}
}
\caption{Situation in Subcase M.5}
\end{figure}

Since $q_1$ is a tight alternative, $c(q_1)$ equals the sum of cost shares of Player 1 for the edges which are substituted by $q_1$.
Furthermore, we get for the \ptl-cost~shares (by our choice of $q_1$)
\[
\xi_{1,e_i} \begin{cases}
\leq c(e_i), & \beta \leq i \leq \ord(e)-1, \\
< c(e_i), & i=\ord(e), \\
=0, & \ord(e)+1 \leq i \leq \gamma.
\end{cases}
\]
Altogether, since costs are nonnegative and $\ord(e)\leq \ell_1+\ell_2+m$,
\[
c(q_1)<\sum_{i=\beta}^{\ell_1+\ell_2+m}{c(e_i)}
\]
holds and thus $F^*:=F \setminus \{e_{\beta}, \ldots, e, \ldots, e_{\ell_1+\ell_2+m}\} \cup q_1$ (see Figure \ref{case2bF*}) is a cheaper Steiner forest.
It is clear that a similar $F^*$ with lower cost than $F$ can be constructed using a suitable tight alternative for Player 2 in Subcase \hypertarget{m6}{$M.6$}. 

Concluding, we consider Subcase \hypertarget{m7}{$M.7$}. Let $q_1$ (defined by $\beta$ and $\gamma$) be a smallest left alternative of Player 1 for $e$ and $q_2$ (defined by $\mu$ and $\nu$) a smallest left alternative of Player 2 for $e$. We now describe the subcase $\gamma < \nu$ in detail (see Figure \ref{case4a} for $\nu \leq \ell_1+\ell_2+m$), the subcase $\gamma>\nu$ follows analogously. 

\begin{figure}[H] \centering \psset{unit=0.9cm}
\subfloat[Subcase M.7, $\gamma<\nu$]{
 \begin{pspicture}(-1,-1.5)(7.8,1.5) 
\Knoten{-0.5}{1.5}{v-1}\nput{180}{v-1}{$s_1$}
\Knoten{-0.5}{-1.5}{v0}\nput{180}{v0}{$s_2$}
\Knoten{0}{1}{v1}
\Knoten{0}{-1}{v2}
\Knoten{0.5}{0.5}{v11}
\Knoten{0.5}{-0.5}{v12}
\Knoten{1}{0}{v3}
\Knoten{1.8}{0}{v4}
\Knoten{2.6}{0}{v5}
\Knoten{3.4}{0}{v6}
\Knoten{4.2}{0}{v7}
\Knoten{4.7}{0}{v8}
\Knoten{5.5}{0}{v9}
\Knoten{6.0}{0}{v10}
\Knoten{7}{1.5}{v13}\nput{0}{v13}{$t_1$}
\Knoten{7}{-1.5}{v14}\nput{0}{v14}{$t_2$}

\ncline{-}{v-1}{v1}
\Kante{v1}{v11}{\beta}
\ncline{-}{v0}{v2}
\ncline{-}{v11}{v3}
\Kante{v2}{v12}{\mu}
\ncline{-}{v12}{v3}
\ncline{-}{v3}{v4}
\Kante{v4}{v5}{e}
\ncline{-}{v5}{v6}
\Kante{v7}{v6}{\gamma}
\Kante{v8}{v9}{\nu}
\ncline{-}{v7}{v8}
\ncline{-}{v9}{v10}
\ncline{-}{v10}{v13}
\ncline{-}{v10}{v14}
\Bogendashed{v1}{v7}{q_1}{35}
\Bogendashed{v2}{v9}{q_2}{-30}
\end{pspicture}
\label{case4a}
}
\subfloat[$F^*$ in Subcase M.7, $\gamma<\nu$]{
 \begin{pspicture}(-1,-1.5)(7.5,1.5) 
\Knoten{-0.5}{1.5}{v-1}\nput{180}{v-1}{$s_1$}
\Knoten{-0.5}{-1.5}{v0}\nput{180}{v0}{$s_2$}
\Knoten{0}{1}{v1}
\Knoten{0}{-1}{v2}
\Knoten{0.5}{0.5}{v11}
\Knoten{0.5}{-0.5}{v12}
\Knoten{1}{0}{v3}
\Knoten{1.8}{0}{v4}
\Knoten{2.6}{0}{v5}
\Knoten{3.4}{0}{v6}
\Knoten{4.2}{0}{v7}
\Knoten{4.7}{0}{v8}
\Knoten{5.5}{0}{v9}
\Knoten{6.0}{0}{v10}
\Knoten{7}{1.5}{v13}\nput{0}{v13}{$t_1$}
\Knoten{7}{-1.5}{v14}\nput{0}{v14}{$t_2$}

\ncline{-}{v-1}{v1}
\Kantedashed{v1}{v11}{\beta}
\ncline{-}{v0}{v2}
\ncline[linestyle=dashed]{-}{v11}{v3}
\Kantedashed{v2}{v12}{\mu}
\ncline[linestyle=dashed]{-}{v12}{v3}
\ncline[linestyle=dashed]{-}{v3}{v4}
\Kantedashed{v4}{v5}{e}
\ncline[linestyle=dashed]{-}{v5}{v6}
\Kantedashed{v7}{v6}{\gamma}
\Kante{v8}{v9}{\nu}
\ncline{-}{v7}{v8}
\ncline{-}{v9}{v10}
\ncline{-}{v10}{v13}
\ncline{-}{v10}{v14}
\Bogen{v1}{v7}{q_1}{35}
\Bogen{v2}{v9}{q_2}{-30}
\end{pspicture}
\label{F*case4a}
}
\caption{Situation in Subcase M.7, $\gamma < \nu$}
\end{figure}
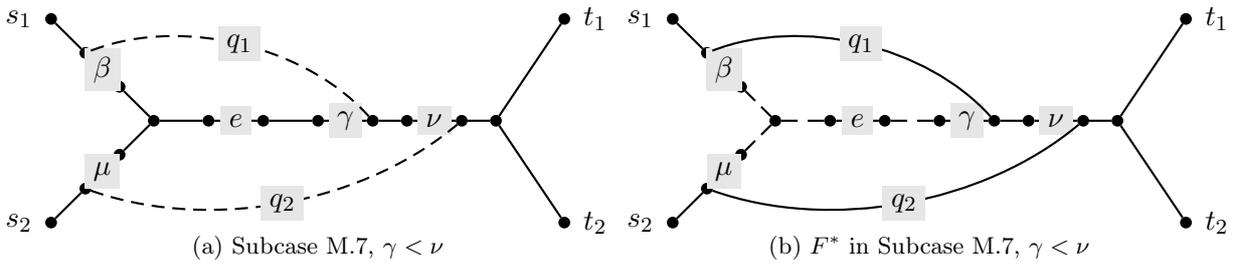

Since $q_1$ and $q_2$ are tight alternatives, we get
\[
c(q_1)=\sum_{i=\beta}^{\ell_1}{\xi_{1,e_{i}}}+\sum_{i=\ell_1+\ell_2+1}^{\gamma}{\xi_{1,e_{i}}} \text{ and } c(q_2)=\sum_{i=\mu}^{\ell_1+\ell_2}{\xi_{2,e_{i}}}+\sum_{i=\ell_1+\ell_2+1}^{\nu}{\xi_{2,e_{i}}}.
\]

Furthermore we get for the \ptl-cost~shares (by our choice of $q_1$, $q_2$ and $e$)
\[
\xi_{1,e_{i}}= \ \ \begin{cases} c(e_{i}), & \beta \leq i \leq \ell_1, \\
0, & \ord(e) < i \leq \gamma,
\end{cases}
\hspace{1cm}
\xi_{2,e_{i}}= \ \ \begin{cases} c(e_{i}), & \mu \leq i \leq \ell_1+\ell_2, \\
0, & \ord(e) < i \leq \nu,
\end{cases}
\]
and
\[
\xi_{1,e_{i}}+\xi_{2,e_{i}} \ \ \begin{cases} =c(e_{i}), & \ell_1+\ell_2+1 \leq i <\ord(e), \\
<c(e), &  i= \ord(e).
\end{cases}
\]
This yields
\[
c(q_1)+c(q_2)=\sum_{i=\beta}^{\ell_1}{c(e_i)}+\sum_{i=\mu}^{\ell_1+\ell_2}{c(e_i)}+\sum_{i=\ell_1+\ell_2+1}^{\ord(e)-1}{c(e_{i})}+\underbrace{\xi_{1,e}+\xi_{2,e}}_{<c(e)}
\]
and therefore 
\[
F^*:=F \setminus \left(\{e_{\beta}, \ldots , e_{\ell_1}\} \cup \{e_{\mu}, \ldots , e_{\ell_1+\ell_2}\} \cup \{e_{\ell_1+\ell_2+1}, \ldots, e, \ldots, e_{\gamma}\} \right) \cup (q_1 \cup q_2),
\]
see Figure \ref{F*case4a}, is a cheaper Steiner forest.

\end{proof}
\subsubsection{Case R}\label{subsubsec:caseR}

Now we consider Case $R$, that is, $\ell_1+ \ell_2+ m + 1 \leq \ord(e)$.
We can assume w.l.o.g. $e \in P_1$  since the other case ($e \in P_2$) follows analogously. 
To proof this last case, we need special properties of the cost shares. We introduce the following operation (which we denote as \texttt{CHANGE($j,i$)}): 
\begin{center}
	Increase $\xi_{2,e_i}$ and $\xi_{1,e_j}$, and simultaneously decrease $\xi_{2,e_j}$ and $\xi_{1,e_i}$, until either
\begin{itemize}[itemsep=-0em]
	\item an alternative of Player 1 for $e_j$ gets tight, or
	\item an alternative of Player 2 for $e_i$ gets tight, or
	\item $\xi_{2,e_i}=c(e_i)$, or 
	\item $\xi_{2,e_j}=0$.
\end{itemize}  
\end{center}
If this changes the cost shares we call \texttt{CHANGE($j,i$)} \emph{feasible} (and otherwise \emph{infeasible}).
The reason why we introduce the described operation is to get cost shares with the following properties:

\begin{definition}\label{def:properties} Let $F$ be an optimal Steiner forest which is not enforceable, $(\xi_{i,f})_{i \in \{1,2\}, f \in P_i}$ an optimal solution for \lp{} 
and $e$ (with $\ell_1+\ell_2+m+1 \leq \ord(e)$) the first edge which is not completely paid. We call this assignment of cost shares \emph{maximized for Player 2} if the following two properties hold:
\begin{enumerate}
\item[\mylabel{2m}{(2M)}] The sum of cost shares of Player 2 is maximal among all optimal assignments in which $e$ is the first edge that is not completely paid.
	\item[\mylabel{nc}{(NC)}] For every pair $i,j \in \{\ell_1+\ell_2+1, \ldots, \ell_1+\ell_2+m\}$ with $j<i$, \texttt{CHANGE($j,i$)} is not feasible.
\end{enumerate}
\end{definition}
Note that property~\ref{2m} implies that for every commonly used edge $e_i$ that Player 2 does not pay completely
she has a tight alternative for $e_i$,
otherwise we could simultaneously increase $\xi_{2,e_i}$ and decrease $\xi_{1,e_i}$ and get a solution in which the sum of cost shares of Player 2 is larger than before.

We now want to give a short intuition why it seems reasonable to consider cost shares which are maximized for Player 2. In an optimal assignment of cost shares, changing the cost shares in a feasible way according to \lp{} can not result in a higher objective function value. In our case this means that changing the cost shares can not yield that Player 1 can increase her cost share on $e$ (while the sum of the other cost shares remains the same). 
Property~\ref{2m} is therefore clear: If Player 2 could pay more, this could possibly yield that there is no tight alternative left for $e$ and then Player 1 can increase her cost share on $e$.
%
Property~\ref{nc} is linked with the fact that the edge $e$ is not contained in the set of the commonly used edges, in particular that the order of $e$ is at least $\ell_1+\ell_2+m+1$. As we will see in the proof of \autoref{lemma3}, there has to be a right alternative for Player 1 which substitutes $e$. 
If there is a pair $i,j \in \{\ell_1+\ell_2+1, \ldots, \ell_1+\ell_2+m\}$ with $j<i$ for which \texttt{CHANGE($j,i$)} is feasible, this could possibly yield an assignment of cost shares for which this right alternative for $e$ is not tight anymore and Player 1 could possibly increase her cost share on $e$. 

In the following, we assume that the cost shares are maximized for Player 2. The existence of such cost shares follows from the next lemma.

\begin{lemma}\label{lemma:properties}
Let $F$ be an optimal Steiner forest which is not enforceable. Then there is an optimal solution for \lp{} which is maximized for Player 2.
\end{lemma}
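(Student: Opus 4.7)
The plan is to produce the desired solution in two stages. First, I would obtain an optimal LP solution satisfying property \ref{2m} by maximizing Player~2's payment over a natural face of the optimality polytope. Then, starting from that solution, I would drive the system towards \ref{nc} by repeatedly applying $\texttt{CHANGE}(j,i)$ operations; the key algebraic property of these operations is that they leave the linear functional $g(\xi):=\sum_f\xi_{2,f}$, the total sum $\sum_{i,f}\xi_{i,f}$, and each commonly used edge-sum $\xi_{1,e_k}+\xi_{2,e_k}$ invariant, so they can never destroy \ref{2m} or change the identity of the first edge not completely paid.

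For \ref{2m}, let $v^\ast$ denote the optimal value of \lp{} and consider
\[
N:=\left\{\xi\text{ feasible for LP}(F)\;:\;\sum_{i,f}\xi_{i,f}=v^\ast\text{ and }\sum_i\xi_{i,f}=c(f)\text{ for all }f\text{ with }\ord(f)<\ord(e)\right\}.
\]
The set $N$ is a closed and bounded polyhedron, nonempty because it contains the \ptl{} solution returned by Algorithm~\nameref{push}. The linear functional $g$ therefore attains its maximum on $N$ at some $\xi^\ast$; because $F$ is not enforceable, $\xi^\ast$ leaves at least one edge unpaid. Let $e^\ast$ denote the first such edge. Then $\ord(e^\ast)\geq\ord(e)\geq \ell_1+\ell_2+m+1$, so we remain in Case~$R$ after renaming $e\leftarrow e^\ast$. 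Any other optimal LP solution $\xi'$ in which $e^\ast$ is the first edge not completely paid lies in $N$ (all edges before $e^\ast$, and a fortiori those before the original $e$, are completely paid in $\xi'$), whence $g(\xi')\leq g(\xi^\ast)$; this is exactly property~\ref{2m} for~$\xi^\ast$.

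For \ref{nc}, starting from $\xi^\ast$ I would repeatedly pick any pair $j<i$ in the commonly used part for which $\texttt{CHANGE}(j,i)$ is feasible and apply it. The main obstacle is termination of this loop, since $\texttt{CHANGE}$ conserves both LP-optimality and $g$, so no single linear potential is strictly improved and a naive argument does not rule out an infinite cascade of operations. I would handle this by a polyhedral argument inside the face $Q:=\{\xi\in N:g(\xi)=g(\xi^\ast)\}$, which is itself a bounded polytope. Each $\texttt{CHANGE}(j,i)$ is a line move in $Q$ along the fixed direction that increases $\xi_{2,e_i}$ and $\xi_{1,e_j}$ while decreasing $\xi_{2,e_j}$ and $\xi_{1,e_i}$, and the four stopping rules (``an alternative of Player~1 for $e_j$ becomes tight'', ``an alternative of Player~2 for $e_i$ becomes tight'', $\xi_{2,e_i}=c(e_i)$, or $\xi_{2,e_j}=0$) each activate a new facet of $Q$. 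Running every operation to its full extent ensures that the current iterate moves into a face of $Q$ of strictly smaller dimension after each application, so the process halts in finitely many iterations at a point of $Q$ from which no $\texttt{CHANGE}(j,i)$ is feasible for any $j<i$ in the commonly used part. This point satisfies both \ref{2m} and \ref{nc} and is therefore the sought maximized-for-Player-$2$ optimal LP solution.
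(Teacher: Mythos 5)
Your construction of a \ref{2m}-solution is correct and essentially the paper's: you maximize Player~2's total share over the (closed, bounded, nonempty) set of optimal solutions paying all edges before $e$, and your renaming $e\leftarrow e^\ast$ cleanly handles the possibility that the maximizer pays $e$ itself. The observation that \texttt{CHANGE} preserves $\sum_f\xi_{2,f}$, overall optimality, and the per-edge sums on the commonly used part, so that \ref{2m} and the identity of the first unpaid edge survive the second stage, also matches the paper.

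The gap is in your termination argument for the \ref{nc} stage. A full-extent \texttt{CHANGE}$(j,i)$ does activate a constraint that was slack at the start of that move, but it can simultaneously \emph{deactivate} constraints that were tight: the move strictly decreases the left-hand side of any tight \emph{left} alternative of Player~2 that substitutes $e_j$ but not $e_i$, and of any tight \emph{right} alternative of Player~1 that substitutes $e_i$ but not $e_j$ (only left alternatives of Player~1 and right alternatives of Player~2 are guaranteed to have their left-hand sides non-decreased, because for those the index structure forces ``substitutes $e_i$ $\Rightarrow$ substitutes $e_j$'', resp.\ conversely). Hence the minimal face of $Q$ containing the iterate need not shrink, the claim ``strictly smaller dimension after each application'' fails, and since the iterates range over a continuum there is no finite state space to fall back on; an infinite cascade (or cycling through combinatorially equivalent states) is not excluded. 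Without termination your loop never reaches a point where ``no \texttt{CHANGE} is feasible,'' so \ref{nc} is not established. The paper sidesteps this entirely: Algorithm~\textsc{Change} executes each pair $(j,i)$ exactly \emph{once}, in a fixed nested order ($i$ descending, then $j$ descending), so termination is trivial, and the substantive work is a separate contradiction argument showing that after this single pass no \texttt{CHANGE}$(j,i)$ can be feasible --- an argument that relies precisely on the invariance of tight left alternatives of Player~1 and tight right alternatives of Player~2 throughout the pass. You would need either to reproduce an argument of that kind or to supply a genuinely well-founded potential for your repeat-until-stable loop.
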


To show this lemma, we will use the following algorithm that yields cost shares with property~\ref{nc}.

\begin{algorithm}[H]
\KwData{Optimal Steiner forest $F$ and an optimal solution $(\xi_{i,f})_{i \in \{1,2\}, f \in P_i}$ for \lp{}.}
\KwResult{Transformed optimal solution $(\xi_{i,f})_{i \in \{1,2\}, f \in P_i}$ for \lp{}.}
\For{$i = \ell_1+\ell_2+m$ \upshape down to $\ell_1+\ell_2+1$}{
	\For{$j = i-1$ \upshape down to $\ell_1+\ell_2+1$}{
		\If{\upshape \texttt{CHANGE($j,i$)} is feasible}{
		\texttt{CHANGE($j,i$)}
		}
	}
}
\caption{\textsc{Change}} 
\label{Schiebe2rechts}
\end{algorithm}

\begin{proof}[Proof of \autoref{lemma:properties}]
As $F$ is not enforceable there is an optimal \ptl-solution for \lp{} with $\ell_1+\ell_2+m+1 \leq \ord(e)$ (where $e$ is the first edge which is not completely paid). 
Among all optimal assignments of cost shares in which $e$ is the first edge that is not completely paid, choose an assignment in which the sum of the cost shares of Player 2 is maximal. 
Now we transform these cost shares by using Algorithm~\nameref{Schiebe2rechts} and analyze the resulting cost shares. Obviously the transformed solution $(\xi_{i,f})_{i \in \{1,2\}, f \in P_i}$ is feasible and still optimal. Note that the cost shares are only changed for the commonly used edges and these edges remain completely paid. Therefore, $e$ is still the first edge that is not completely paid. Furthermore the sum of cost shares of Player 2 remains the same. Therefore property~\ref{2m} always holds. 


We now show that property~\ref{nc} holds after the execution of the procedure.  
Assume that this is not true and consider a pair $(i,j)$ with $j<i$ where \texttt{CHANGE($j,i$)} is feasible. In particular, this means that $\xi_{2,e_j}>0$ and $\xi_{2,e_i}<c(e_i)$. Let $\xi_{2,e_j}'$ and $\xi_{2,e_i}'$ be the cost shares for these two edges directly after we executed \texttt{CHANGE($j,i$)} during Algorithm~\nameref{Schiebe2rechts} (or detected infeasibility of this change). It is clear that $\xi_{2,e_i}' < c(e_i)$ holds since the cost shares for $e_i$ were never decreased after \texttt{CHANGE($j,i$)}.
There are two cases: Either $\xi_{2,e_j}'>0$ or $\xi_{2,e_j}'=0$ holds. 
In the first case there either has to be a left alternative of Player 1 for $e_j$ that does not substitute $e_i$, or a right alternative of Player 2 for $e_i$ that does not substitute $e_j$. But this is a contradiction to our assumption that \texttt{CHANGE($j,i$)} is now feasible, since left alternatives of Player 1 and right alternatives of Player 2 stay tight during the algorithm. 
Therefore $\xi_{2,e_j}'=0$ has to hold. Since $\xi_{2,e_j}>0$, there has to be a suitable $k<j$ so that \texttt{CHANGE($k,j$)} was feasible during Algorithm~\nameref{Schiebe2rechts}. 
Let $\xi_{2,e_k}''$ and $\xi_{2,e_i}''$ be the cost shares directly after we executed \texttt{CHANGE($k,i$)} in Algorithm~\nameref{Schiebe2rechts} (or detected infeasibility of this change). Note that $\xi_{2,e_k}''>0$ because \texttt{CHANGE($k,j$)} was feasible and $\xi_{2,e_i}''<c(e_i)$ holds since we never decreased this cost share after \texttt{CHANGE($k,i$)}. Therefore there either has to be a tight left alternative of Player 1 for $e_k$ that does not substitute $e_i$ or a tight right alternative of Player 2 for $e_i$ that does not substitute $e_k$. 
In the first case, this alternative has to substitute $e_j$ because otherwise \texttt{CHANGE($k,j$)} would not have been feasible. But then \texttt{CHANGE($j,i$)} can not be feasible now. 
In the other case this alternative has to substitute $e_j$ because otherwise \texttt{CHANGE($j,i$)} can not be feasible now. But then \texttt{CHANGE($k,j$)} would not have been feasible.
\end{proof}

As already mentioned in \autoref{subsec:proof_sketch}, we have to consider subgraphs which are ``almost'' Bad Configurations. In the following we give an exact definition. 
\begin{definition}\label{def:PBC}
We call a subgraph of $(G, (s_1,t_1), (s_2, t_2))$ a \textit{Preliminary Bad Configuration (PBC)}, if there are vertices $u,v,w,x$ 
satisfying the following conditions (see Figure~\ref{PBC} for illustration): 
\begin{enumerate}[(1)]
	\item $u,v$ are the terminal nodes of one player and $w,x$ the terminal nodes of the other player;
	\item There is a $u$-$v$-path
	$P_u$ and a $w$-$x$-path $P_{\ell}$ with 
	\begin{itemize}[leftmargin=*,itemsep=-0em]
		\item $P_u \cup P_{\ell}$ contains no cycle;
		\item $P_u$ and $P_{\ell}$ are not edge-disjoint (let $M$ be the unique subpath of $P_u$ which is also contained in $P_{\ell}$ (``commonly used edges''));
		\item If we consider $P_u$ as directed from $u$ to $v$ and $P_{\ell}$ as directed from $w$ to $x$, this induces the same direction on $M$;
		\item The subpath $R_u$ of $P_u$ which connects $M$ with $v$ contains at least one edge;
		\item The subpath $L_{\ell}$ of $P_{\ell}$ which connects $w$ with $M$ contains at least one edge;
		\item The subpath $R_{\ell}$ of $P_{\ell}$ which connects $x$ with $M$ contains at least one edge.
	\end{itemize}
	\item There is a path $q_1$ which closes a unique cycle $C_1$ with $P_u$, where $C_1$ contains edges from $R_u$ and $M$ (edges from $L_u$ are also allowed, but not necessary; this results in the two cases displayed in Figure~\ref{PBC}).
	\item There is a path $q_2$ which closes a unique cycle $C_2$ with $P_{\ell}$, where $C_2$ contains edges from $L_{\ell}$ and $M$, and $M \setminus C_2$ contains at least one edge.  
	
	\item There is a path $q_3$ which closes a unique cycle $C_3$ with $P_{\ell}$, where $C_3$ contains edges from $R_{\ell}$ and $M$, and $M \setminus C_3$ contains at least one edge. 
	\item $C:=C_2 \cap C_3 \cap M$ contains at least one edge and $C \subseteq C_1$. 
	\item $(M \cap C_1 \cap C_2)\setminus C$ contains at least one edge.
	\item $(M \cap C_1 \cap C_3)\setminus C$ contains at least one edge .
\end{enumerate}

\vspace{-1em}
\begin{figure}[H] \centering \psset{unit=1cm}
\subfloat[$C_1$ does not contain edges of $L_u$, ``$q_1$ small'']{
 \begin{pspicture}(-0.5,-1)(7.5,1) 
\Knoten{0}{1}{v1}\nput{180}{v1}{$u$}
\Knoten{0}{-1}{v2}\nput{180}{v2}{$w$}
\Knoten{0.5}{-0.5}{n2}
\Knoten{1}{0}{v3}
\Knoten{2.125}{0}{n5}
\Knoten{3.25}{0}{v4}
\Knoten{4.375}{0}{v6}
\Knoten{5.5}{0}{v7}
\Knoten{6}{0.5}{n3}
\Knoten{6}{-0.5}{n4}
\Knoten{6.5}{1}{v10}\nput{0}{v10}{$v$}
\Knoten{6.5}{-1}{v11}\nput{0}{v11}{$x$}

\ncline[linestyle=dashed]{-}{v1}{v3}
\ncline[linestyle=dashed]{-}{n2}{v3}
\ncline{-}{v2}{n2}
\ncline[linestyle=dashed]{-}{v3}{n5}
\ncline{-}{n5}{v4}
\ncline{-}{v4}{v6}
\ncline{-}{v6}{v7}
\ncline{-}{v7}{n3}
\ncline[linestyle=dashed]{-}{n3}{v10}
\ncline{-}{v7}{n4}
\ncline[linestyle=dashed]{-}{n4}{v11}
\Bogen{n5}{n3}{q_1}{35}
\Bogen{n2}{v6}{q_2}{-35}
\Bogen{v4}{n4}{q_3}{-35}
\end{pspicture}\label{PBCa}
}
\subfloat[$C_1$ contains at least one edge of $L_u$, ``$q_1$ big'']{
 \begin{pspicture}(-1,-1)(7,1) 
\Knoten{1}{1}{v1}\nput{180}{v1}{$u$}
\Knoten{1.5}{0.5}{n1}
\Knoten{1}{-1}{v2}\nput{180}{v2}{$w$}
\Knoten{1.5}{-0.5}{n2}
\Knoten{2}{0}{v3}
\Knoten{3.25}{0}{v4}
\Knoten{4.375}{0}{v6}
\Knoten{5.5}{0}{v7}
\Knoten{6}{0.5}{n3}
\Knoten{6}{-0.5}{n4}
\Knoten{6.5}{1}{v10}\nput{0}{v10}{$v$}
\Knoten{6.5}{-1}{v11}\nput{0}{v11}{$x$}

\ncline[linestyle=dashed]{-}{v1}{n1}
\ncline{-}{n1}{v3}
\ncline{-}{n2}{v3}
\ncline[linestyle=dashed]{-}{v2}{n2}
\ncline{-}{v3}{v4}
\ncline{-}{v4}{v6}
\ncline{-}{v6}{v7}
\ncline{-}{v7}{n3}
\ncline[linestyle=dashed]{-}{n3}{v10}
\ncline{-}{v7}{n4}
\ncline[linestyle=dashed]{-}{n4}{v11}
\Bogen{n1}{n3}{q_1}{25}
\Bogen{n2}{v6}{q_2}{-35}
\Bogen{v4}{n4}{q_3}{-35}
\end{pspicture}\label{PBCb}
}
\caption{PBCs; dashed lines may contain only one node.}
\label{PBC}
\end{figure}
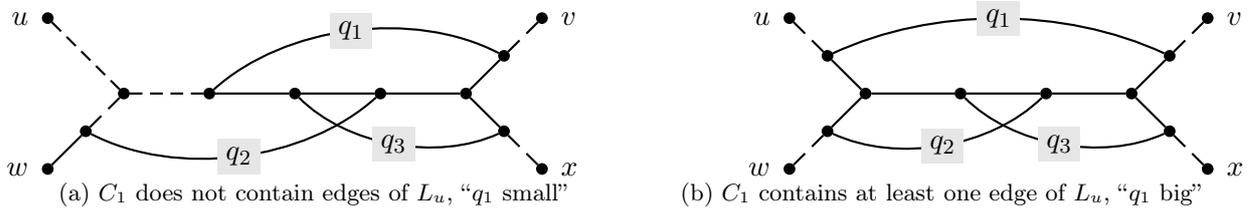
\end{definition}
Note that the uniqueness of the cycle $C_1$ implies that $q_1$ and $P_u$ are internal (i.e. except for the endnodes of $q_1$) node-disjoint. Analogously we get that $q_2$ ($q_3$) and $P_{\ell}$ are internal node-disjoint. 
However, the paths $q_1,q_2$ and $q_3$ do not have to be node-disjoint. 
Furthermore, $q_1$ can (internally; i.e. not an endnode of $q_1$) contain nodes of $L_{\ell}$ and/or $R_{\ell}$, as well as $q_2$ or $q_3$ can (internally) contain nodes of $L_u$ and/or $R_u$. In the following, we omit the term ``internally''. 
Depending on these properties, we call a PBC a Bad Configuration (BC) or a No Bad Configuration (NBC) (see \autoref{formaldef:BC} and \autoref{def:NBC}). 

We now start with the proof of Case $R$. Similar as for Case $M$, we get several subcases which mostly depend on properties of constructed tight alternatives. Figure~\ref{tree2} illustrates the most important subcases of Subcase $R.1$ (cf. \autoref{tree}); not displayed subcases easily yield cheaper Steiner forests. In the subcases $R.1.2.1$ and $R.1.2.2$ we analyze why \texttt{CHANGE($\rho,\sigma$)} (for certain edges $e_{\rho}$ and $e_{\sigma}$) is not feasible: either Player 1 has a tight alternative which substitutes $e_{\rho}$, but not $e_{\sigma}$ ($R.1.2.1$) or Player 2 has a tight alternative which substitutes $e_{\sigma}$, but not $e_{\rho}$ ($R.1.2.2$). 
The subcases labelled with PBC1 - PBC11 correspond to constructed subgraphs which are PBCs and these subcases are further analyzed in \autoref{subsubsec:pbcs}: We first derive properties for certain PBCs which are no BCs (so-called NBCs, see \autoref{def:NBC} and \autoref{lemmanbcanfang} - \autoref{lemmanbcende}) and show that the subgraphs PBC1-PBC11 have to be NBCs (see \autoref{lemma:pbc}). Finally we show that the properties of NBCs contradict the properties of PBC1-PBC11 (\autoref{lemmapbc1} - \autoref{lemmarest}).

For Subcase $R.2$, we get the two subcases \hyperlink{pbc12}{PBC12} and \hyperlink{pbc13}{PBC13} which are analyzed in the same manner as PBC1-PBC11 (see \autoref{subsubsec:pbcs}). In Subcase $R.3$ we have to consider \hyperlink{pbc14}{PBC14} (also analyzed in the same manner).
 
\begin{figure}[H]
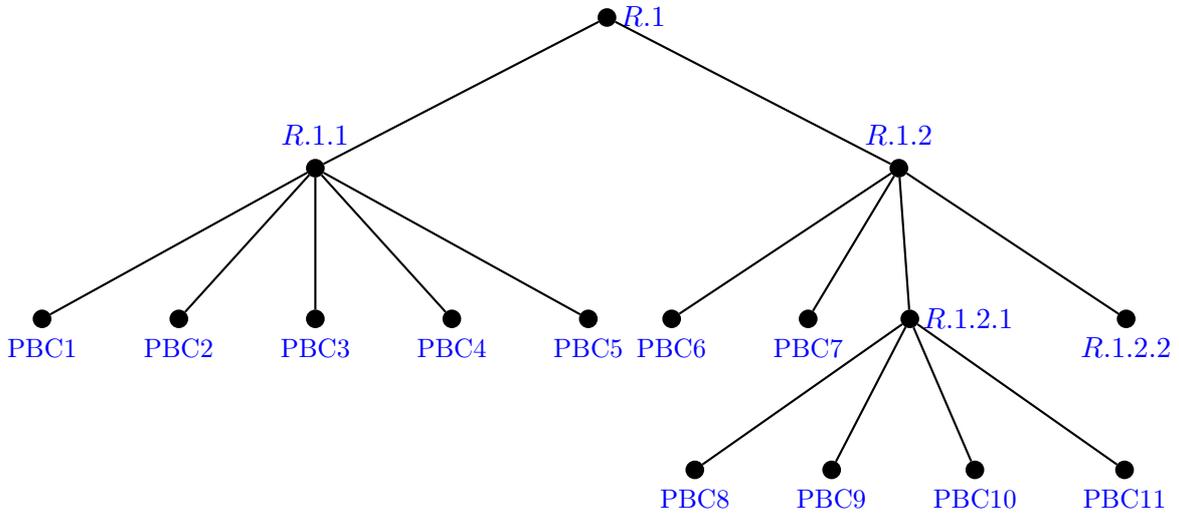
\centering\psset{treemode=D, dotsize=7pt}
		\pstree[thistreesep=5pt]{\Tdot~[tnpos=r]{\hyperlink{r1}{$R.1$}}}{
			\pstree[thistreesep=25pt]{\Tdot~[tnpos=a]{\hyperlink{r11}{$R.1.1$}}}{
				\Tdot~[tnpos=b]{\hyperlink{pbc1}{\small{PBC1}}}
				\Tdot~[tnpos=b]{\hyperlink{pbc2}{\small{PBC2}}}
				\Tdot~[tnpos=b]{\hyperlink{pbc3}{\small{PBC3}}}
				\Tdot~[tnpos=b]{\hyperlink{pbc4}{\small{PBC4}}}
				\Tdot~[tnpos=b]{\hyperlink{pbc4}{\small{PBC5}}}
			}		
			\pstree[thistreesep=25pt]{\Tdot~[tnpos=a]{\hyperlink{r12}{$R.1.2$}}}{
				\Tdot~[tnpos=b]{\hyperlink{pbc6}{\small{PBC6}}}
				\Tdot~[tnpos=b]{\hyperlink{pbc7}{\small{PBC7}}}
				\pstree[thistreesep=25pt]{\Tdot~[tnpos=r]{\hyperlink{r121}{$R.1.2.1$}}}{
					\Tdot~[tnpos=b]{\hyperlink{pbc8}{\small{PBC8}}}
					\Tdot~[tnpos=b]{\hyperlink{pbc9}{\small{PBC9}}}
					\Tdot~[tnpos=b]{\hyperlink{pbc10}{\small{PBC10}}}
					\Tdot~[tnpos=b]{\hyperlink{pbc11}{\small{PBC11}}}
				}		
				\Tdot~[tnpos=b]{\hyperlink{r122}{$R.1.2.2$}}
			}
		}
\caption{Tree for case distinction of Subcase R.1 in the proof of~\autoref{lemma3}}\label{tree2}
\end{figure}

\begin{proposition} \label{lemma3}
In Case \hypertarget{r}{\textbf{$R$}}, that means $\ell_1+\ell_2+m+1 \leq \ord(e)$, we get a contradiction.
\end{proposition}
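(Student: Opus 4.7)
The plan is to derive a contradiction in Case~$R$ by the same blueprint used for Cases~$L$ and~$M$: identify tight alternatives for the two players, then either construct a strictly cheaper Steiner forest (contradicting optimality of $F$) or exhibit one of the forbidden subgraphs from Figure~\ref{allbc} (contradicting hypothesis~\ref{maintheo1}). To make this work I first strengthen the LP-solution via \autoref{lemma:properties}, so that the cost shares may be assumed maximized for Player~2, i.e.\ they satisfy properties~\ref{2m} and~\ref{nc}. Property~\ref{2m} guarantees that whenever Player~2 underpays a commonly used edge she already owns a tight alternative for that edge, a fact I will use repeatedly; property~\ref{nc} will restrict how cost shares on the commonly used edges can be redistributed.

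Without loss of generality $e\in P_1$. Since $\xi_{1,e}<c(e)$, Player~1 must possess a tight alternative for $e$; I take $q_1$ to be the smallest tight right alternative for $e$, defined by $\beta,\gamma$. If $q_1$ substituted no commonly used edge, the argument of \autoref{lemma1} would immediately yield a cheaper Steiner forest. Among the commonly used edges substituted by $q_1$ there must exist one, call it $f$, that Player~2 underpays: otherwise Player~1's shares on those edges vanish by the \ptl-property, and removing them together with $e_\gamma,\dots,e$ while inserting $q_1$ produces a strictly cheaper forest. Property~\ref{2m} then supplies a tight alternative $q_2$ of Player~2 substituting $f$, defined by $\mu,\nu$. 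I split on the type of $q_2$: Subcase~$R.3$ when $q_2$ is only a left alternative, Subcase~$R.2$ when $q_2$ is a right alternative substituting strictly more middle edges than $q_1$, and Subcase~$R.1$ when $q_2$ is a right alternative substituting strictly fewer middle edges than $q_1$; the remaining possibility, that $q_1$ and $q_2$ substitute the same middle edges, yields a cheaper forest via a direct swap as in Case~$M$.

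In Subcases~$R.2$ and~$R.3$ I combine the tightness equalities for $q_1,q_2$ with the \ptl-property to build a cheaper forest whenever possible; the residual configurations fit the template of \autoref{def:PBC} and thereby produce Preliminary Bad Configurations (labelled PBC12, PBC13 and PBC14 in Figure~\ref{tree2}). Subcase~$R.1$ is richer: the commonly used segment between what $q_1$ and $q_2$ substitute must contain an edge $h$ that Player~2 underpays (otherwise $q_1\cup q_2$ already gives a cheaper forest), so~\ref{2m} provides yet another tight alternative $q_2'$ substituting $h$. Whether $q_2'$ is a left alternative (Subcase~$R.1.1$) or a right alternative (Subcase~$R.1.2$) yields PBC1--PBC5 or PBC6--PBC7 respectively. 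In~$R.1.2$ I additionally exploit~\ref{nc}: infeasibility of \texttt{CHANGE}$(\rho,\sigma)$ forces either a tight alternative of Player~1 through some edge $e_\rho$ avoiding $e_\sigma$ (Subcase~$R.1.2.1$, producing PBC8--PBC11) or a tight alternative of Player~2 through $e_\sigma$ avoiding $e_\rho$ (Subcase~$R.1.2.2$, closed by a direct cheaper-forest calculation).

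Finally, every PBC produced above is discharged with the auxiliary results of \autoref{subsubsec:pbcs}: \autoref{lemma:pbc} classifies each PBC as either a Bad Configuration (immediate contradiction with~\ref{maintheo1}) or a No Bad Configuration, and \autoref{lemmapbc1}--\autoref{lemmarest} then show that the twelve possible NBC-types are incompatible with the geometric constraints imposed by the alternatives $q_1,q_2,q_2'$. The hard part will be precisely this last stage: because the three alternative paths can share internal nodes with one another and with $P_1\cup P_2$ in many different ways, closing the roughly $16\cdot 12$ subcases requires careful bookkeeping of internal node-disjointness, and each subcase must end either with a strictly cheaper Steiner forest or with an explicit forbidden subgraph of the type listed in Figure~\ref{allbc}.
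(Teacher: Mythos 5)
Your proposal follows the paper's own proof of Case $R$ essentially verbatim: the same reduction to cost shares maximized for Player~2 via \autoref{lemma:properties}, the same choice of a smallest tight right alternative $q_1$, the same three-way split into Subcases $R.1$--$R.3$ according to the type of $q_2$, the same refinement of $R.1$ via \texttt{CHANGE} and property~\ref{nc}, and the same discharge of the resulting PBCs through \autoref{lemma:pbc} and \autoref{lemmapbc1}--\autoref{lemmarest}. One small imprecision: when Player~2 fully pays every commonly used edge substituted by $q_1$, the cheaper forest is obtained by deleting only the right-part edges $e_{\ell_1+\ell_2+m+1},\dots,e_{\gamma}$ (Player~2 still needs the middle ones), and Player~1's shares on those middle edges vanish by the budget constraint~(1), not by the \ptl{} property.
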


\begin{proof} 

As already mentioned we can assume w.l.o.g. $e \in P_1$.
Since $e$ is not completely paid, Player 1 must have a tight right alternative which substitutes $e$. Let $q_1$, defined by $\beta$ and $\gamma$, be a smallest such alternative. It is clear that $\beta \leq \ell_1+\ell_2+m$ has to hold since otherwise we get a cheaper Steiner forest by using $q_1$.
If Player 2 pays all commonly used edges which are substituted by $q_1$, we get a cheaper Steiner forest by substituting $q_1$ for $\{e_{\ell_1+\ell_2+m+1}, \ldots, e, \ldots, e_{\gamma}\}$.
Therefore let $e_{\alpha}$ be the biggest commonly used edge which is not paid completely by Player 2. This implies $\xi_{2,e_i}=c(e_i)$ for all $i \in \{\alpha+1, \ldots, \ell_1+\ell_2+m\}$. 
Since our cost shares are maximized for Player 2, she has a tight alternative which substitutes $e_{\alpha}$. 
If there is such an alternative which is a right alternative, let $q_2$, defined by $\mu$ and $\nu$, be a smallest. 
Otherwise, let $q_2$ be any left alternative.
 Note that any tight alternative  which substitutes $e_{\alpha}$ has to be a right or left one, otherwise we get a contradiction to the optimality of $F$. Furthermore we can again exclude the case that there is a tight alternative for Player 1 which substitutes $e$ and a corresponding tight alternative for Player 2 which substitutes the same commonly used edges.
This leads to the following three different subcases which are illustrated in Figure~\ref{4case3c}:
\begin{enumerate}[itemsep=-0em,leftmargin=2.3cm]
	\item[Subcase $R.1$] $\nu \geq \ell_1+\ell_2+m+r_1+1$ and $\beta +1 \leq \mu \leq \alpha$;
	\item[Subcase $R.2$] $\nu \geq \ell_1+\ell_2+m+r_1+1$ and $\ell_1+1 \leq \mu \leq \beta-1$;
	\item[Subcase $R.3$] $\ell_1+1 \leq \mu \leq \ell_1+\ell_2$ and $\nu\leq\ell_1+\ell_2+m$.
\end{enumerate}

\vspace{-1em}
\begin{figure}[H] \centering \psset{unit=0.75cm}
\subfloat[Subcase $R.1$ (note that $\beta \leq \ell_1$ is also possible.)]{
\begin{pspicture}(-1.5,-1.8)(8,1.8) 
\Knoten{-1.1}{1.5}{v-1}\nput{180}{v-1}{$s_1$}
\Knoten{-1.1}{-1.5}{v0}\nput{180}{v0}{$s_2$}
\Knoten{0.4}{0}{v-3}
\Knoten{1.2}{0}{v-2}
\Knoten{2}{0}{v3}
\Knoten{2.8}{0}{v4}
\Knoten{3.6}{0}{v5}
\Knoten{4.4}{0}{v6}
\Knoten{5.2}{0}{v7}
\Knoten{6}{0}{v8}
\Knoten{7.5}{1.5}{v9}\nput{0}{v9}{$t_1$}
\Knoten{7.5}{-1.5}{v10}\nput{0}{v10}{$t_2$}
\Knoten{7.27}{1.27}{v1}
\Knoten{7.27}{-1.27}{v2}
\Knoten{6.87}{0.87}{v11}
\Knoten{6.87}{-0.87}{v12}
\Knoten{6.64}{0.64}{n1}
\Knoten{6.24}{0.24}{n2}

\ncline{-}{v-1}{v-3}
\ncline{-}{v0}{v-3}
\ncline{-}{v-3}{v4}
\Kante{v3}{v-2}{\beta}
\ncline{-}{v4}{v5}
\ncline{v4}{v3}
\ncline{-}{v5}{v6}
\Kante{v7}{v6}{\alpha}
\Kante{v4}{v5}{\mu}
\ncline{-}{v7}{v8}
\ncline{-}{v8}{v11}
\ncline{-}{v8}{v10}
\Kante{v11}{v1}{\gamma}
\Kante{v12}{v2}{\nu}
\ncline{-}{v2}{v10}
\ncline{-}{v1}{v9}
\Kante{n2}{n1}{e}
\Bogendashed{v1}{v-2}{q_1}{-35}
\Bogendashed{v4}{v2}{q_2}{-35}
\end{pspicture}
}
\hspace{0.5cm}
\subfloat[Subcase $R.2$ (note that $\mu \leq \ell_1+\ell_2$ is also possible.)]{
\begin{pspicture}(-1.5,-1.8)(8,1.8) 
\Knoten{-1.1}{1.5}{v-1}\nput{180}{v-1}{$s_1$}
\Knoten{-1.1}{-1.5}{v0}\nput{180}{v0}{$s_2$}
\Knoten{0.4}{0}{v-3}
\Knoten{1.2}{0}{v-2}
\Knoten{2}{0}{v3}
\Knoten{2.8}{0}{v4}
\Knoten{3.6}{0}{v5}
\Knoten{4.4}{0}{v6}
\Knoten{5.2}{0}{v7}
\Knoten{6}{0}{v8}
\Knoten{7.5}{1.5}{v9}\nput{0}{v9}{$t_1$}
\Knoten{7.5}{-1.5}{v10}\nput{0}{v10}{$t_2$}
\Knoten{7.27}{1.27}{v1}
\Knoten{7.27}{-1.27}{v2}
\Knoten{6.87}{0.87}{v11}
\Knoten{6.87}{-0.87}{v12}
\Knoten{6.64}{0.64}{n1}
\Knoten{6.24}{0.24}{n2}

\ncline{-}{v-1}{v-3}
\ncline{-}{v0}{v-3}
\ncline{-}{v-3}{v4}
\Kante{v3}{v-2}{\mu}
\ncline{-}{v4}{v5}
\ncline{v4}{v3}
\ncline{-}{v5}{v6}
\Kante{v7}{v6}{\alpha}
\Kante{v4}{v5}{\beta}
\ncline{-}{v7}{v8}
\ncline{-}{v8}{v11}
\ncline{-}{v8}{v10}
\Kante{v11}{v1}{\gamma}
\Kante{v12}{v2}{\nu}
\ncline{-}{v2}{v10}
\ncline{-}{v1}{v9}
\Kante{n2}{n1}{e}
\Bogendashed{v1}{v4}{q_1}{-35}
\Bogendashed{v-2}{v2}{q_2}{-35}
\end{pspicture}
}
\end{figure}
\begin{figure}[H] \centering \psset{unit=0.75cm}
\subfloat[Subcase $R.3$]{
\begin{pspicture}(-1.5,-1.8)(8,1.8) 
\Knoten{-1.1}{1.5}{v-1}\nput{180}{v-1}{$s_1$}
\Knoten{-1.1}{-1.5}{v0}\nput{180}{v0}{$s_2$}
\Knoten{0.4}{0}{v-3}
\Knoten{1.2}{0}{v-2}
\Knoten{2}{0}{v3}
\Knoten{2.8}{0}{v4}
\Knoten{3.6}{0}{v5}
\Knoten{4.4}{0}{v6}
\Knoten{5.2}{0}{v7}
\Knoten{6}{0}{v8}
\Knoten{7.5}{1.5}{v9}\nput{0}{v9}{$t_1$}
\Knoten{7.5}{-1.5}{v10}\nput{0}{v10}{$t_2$}
\Knoten{-0.6}{-1}{v13}
\Knoten{-0.1}{-0.5}{v14}
\Knoten{7.27}{1.27}{v1}
\Knoten{6.87}{0.87}{v11}
\Knoten{6.64}{0.64}{n1}
\Knoten{6.24}{0.24}{n2}

\ncline{-}{v-1}{v-3}
\ncline{-}{v0}{v-3}
\ncline{-}{v-3}{v4}
\Kante{v3}{v-2}{\beta}
\ncline{-}{v4}{v5}
\ncline{v4}{v3}
\ncline{-}{v5}{v6}
\Kante{v7}{v6}{\nu}
\Kante{v4}{v5}{\alpha}
\ncline{-}{v7}{v8}
\ncline{-}{v8}{v11}
\ncline{-}{v8}{v10}
\Kante{v11}{v1}{\gamma}
\Kante{v13}{v14}{\mu}
\ncline{-}{v8}{v10}
\ncline{-}{v1}{v9}
\Kante{n2}{n1}{e}
\Bogendashed{v1}{v-2}{q_1}{-35}
\Bogendashed{v13}{v7}{q_2}{-35}
\end{pspicture}
}
\caption{Subcases in Case $R$} \label{4case3c}
\end{figure}
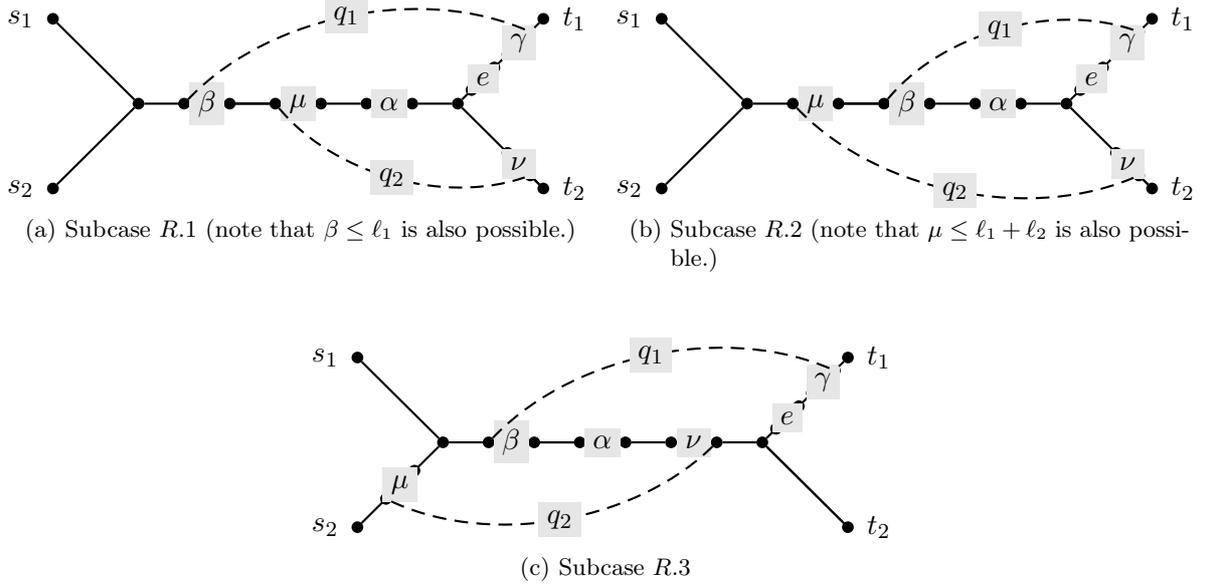


\vspace{1em}
\hypertarget{r1}{\textbf{Subcase R.1}}

If Player 2 pays all commonly used edges which are substituted by $q_1$, but not by $q_2$, we get a cheaper solution.
Therefore let $e_{\sigma}$ be the largest such edge which Player 2 does not pay completely. Since the cost shares are maximized by Player 2 there has to be a tight alternative for this edge. 

\vspace{1em}
\hypertarget{r11}{\textit{Subcase R.1.1}}

We first consider the subcase that there is a left alternative for $e_{\sigma}$; let $q_2'$, defined by $\mu'$ and $\nu'$, be a largest such one.
We now show that $\beta\geq \ell_1+\ell_2+1$ (i.e. $q_1$ small) and $\nu' < \mu$ has to hold: 

\vspace{0.5em}
First assume that $q_1$ is big, i.e. $\beta\leq \ell_1$ holds. If $\nu' \geq \alpha$ or $\nu' \leq \mu-1$ we can use $q_1, q_2'$ (and $q_2$ in the latter case) to get a cheaper solution. Therefore we can assume that $\nu'\in\{\mu, \ldots, \alpha-1\}$. It is clear that this has to hold for all left alternatives which substitute $e_{\sigma}$. In the following, we consider the smallest such alternative $q_2''$. Then $(P_1, P_2, q_1,q_2'',q_2)$ is a PBC for $(u,v,w,x)=(s_1,t_1,s_2,t_2)$.
We denote this subcase of $R.1.1$ as \hypertarget{pbc1}{PBC1} and analyze it in \autoref{lemmapbc1}, showing that PBC1 is not possible. Thus we have shown that $q_1$ cannot be big.

\vspace{0.5em}
Now assume that $\nu'\geq \mu$ holds (and $q_1$ is small). 
Then we distinguish between $\mu \leq \nu' < \ell_1+\ell_2+m$ and $\nu'\geq \ell_1+\ell_2+m$ (i.e. $\nu'=\ell_1+\ell_2+m$ or $\nu' \geq \ell_1+\ell_2+m+r_1+1$). 

If $\mu \leq \nu' < \ell_1+\ell_2+m$ holds, $(P_1,P_2,q_1,q_2',q_2)$ is a PBC for $(u,v,w,x)=(s_1,t_1,s_2,t_2)$ (Subcase \hypertarget{pbc2}{PBC2}). 
We treat this case in \autoref{lemmarest} and get $\nu' < \ell_1+\ell_2+m$ is not possible.
Now we consider the case $\nu'\geq \ell_1+\ell_2+m$. 
If Player 1 pays the edges $e_{\ell_1+\ell_2+1}, \ldots, e_{\beta-1}$ completely or $q_1$ substitutes all commonly used edges, we get a cheaper solution by using $q_1$ and $q_2'$. Therefore let $e_{\tau}$ be the largest such edge which Player 1 does not pay completely. 
Furthermore, let $e_{\rho}$ be the smallest edge in $\{e_{\beta}, \ldots, e_{\mu-1}\}$ which is not paid completely by Player 2. We now analyze why \texttt{CHANGE($\tau, \rho$)} is not feasible. 
Either there is a right alternative for Player 2 which substitutes $e_{\rho}$, but not $e_{\tau}$ (we say that the changes are \textit{not feasible for Player 2}), or there is a left alternative for Player 1 which substitutes $e_{\tau}$, but not $e_{\rho}$ (the changes are \textit{not feasible for Player 1}). 
If the changes are not feasible for Player 2, there has to be a right alternative which substitutes $e_{\rho}$, but not $e_{\tau}$. But then we get a cheaper solution by using this alternative and $q_1$ (note that Player 1 pays the edges $e_{\tau+1}, \ldots, e_{\beta-1}$ and Player 2 pays  $e_{\beta}, \ldots, e_{\rho-1}$). Therefore the changes for Player 1 can not be feasible. That means that there is a left alternative for Player 1 which substitutes $e_{\tau}$, but not $e_{\rho}$. Let $q_1'$, defined by $\beta'$ and $\gamma'$, be a smallest.
If $\tau \leq \gamma' \leq \beta-1$, we can use $q_1',q_1$ and $q_2'$ to construct a cheaper Steiner forest. 
Therefore, the situation is as illustrated in Figure~\ref{fig:R.1}. 
 
\begin{figure}[H] \centering \psset{unit=1cm}
\begin{pspicture}(1,-1.7)(14,1.8) 
\Knoten{1.2}{1.9}{v1}\nput{180}{v1}{$s_1$}
\Knoten{1.2}{-1.9}{v4}\nput{180}{v4}{$s_2$}
\Knoten{1.5}{-1.6}{v5}
\Knoten{2.0}{-1.1}{v6}
\Knoten{3.1}{0}{v7}
\Knoten{1.5}{1.6}{v8}
\Knoten{2.0}{1.1}{v9}
\Knoten{3.5}{0}{v10}
\Knoten{4.3}{0}{v11}
\Knoten{4.7}{0}{v12}
\Knoten{5.5}{0}{v13}
\Knoten{5.9}{0}{v14}
\Knoten{6.7}{0}{v15}
\Knoten{7.1}{0}{v16}
\Knoten{7.9}{0}{v17}
\Knoten{8.3}{0}{v18}
\Knoten{9.1}{0}{v19}
\Knoten{9.5}{0}{v20}
\Knoten{10.3}{0}{v21}
\Knoten{10.7}{0}{v22}
\Knoten{11.5}{0}{v23}
\Knoten{11.9}{0}{v24}
\Knoten{12.2}{0.3}{v25}
\Knoten{12.7}{0.8}{v26}
\Knoten{13.0}{1.1}{v27}
\Knoten{13.5}{1.6}{v28}
\Knoten{13.8}{1.9}{v29}\nput{0}{v29}{$t_1$}
\Knoten{12.2}{-0.3}{v30}
\Knoten{12.7}{-0.8}{v31}
\Knoten{13.0}{-1.1}{v32}
\Knoten{13.5}{-1.6}{v33}
\Knoten{13.8}{-1.9}{v34}\nput{0}{v34}{$t_2$}

\ncline{v1}{v7}
\ncline{v4}{v7}
\ncline{v7}{v24}
\ncline{v24}{v29}
\ncline{v24}{v34}

\Kante{v5}{v6}{\mu'}
\Kante{v8}{v9}{\beta'}
\Kante{v10}{v11}{\tau}
\Kante{v12}{v13}{\beta}
\Kante{v14}{v15}{\gamma'}
\Kante{v16}{v17}{\rho}
\Kante{v18}{v19}{\sigma}
\Kante{v20}{v21}{\mu}
\Kante{v22}{v23}{\alpha}
\Kante{v25}{v26}{e}
\Kante{v27}{v28}{\gamma}
\Kante{v30}{v31}{\nu}
\Kante{v32}{v33}{\nu'}

\Bogendashed{v8}{v15}{q_1'}{25}
\Bogendashed{v20}{v31}{q_2}{-35}
\Bogendashed{v5}{v33}{q_2'}{0}
\Bogendashed{v12}{v28}{q_1}{25}
\end{pspicture}
\caption{Subcase PBC3 (note that the order of $\nu$ and  $\nu'$ is not relevant)} \label{fig:R.1}
\end{figure}
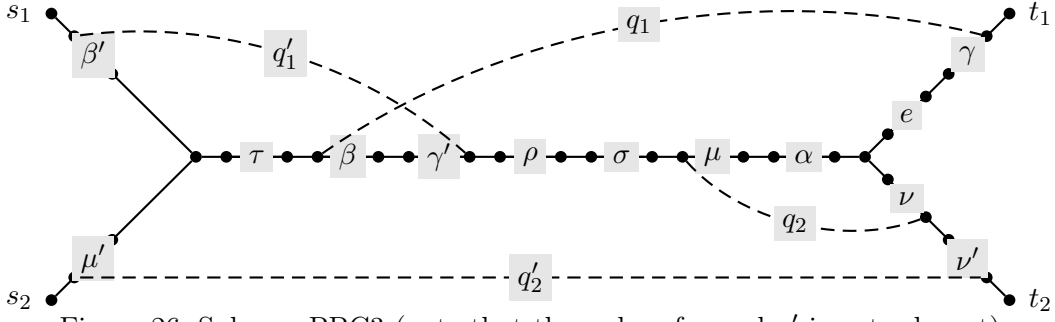

Now $(P_2, P_1, q_2',q_1,q_1')$ is a PBC for $(u,v,w,x)=(t_2,s_2,t_1,s_1)$ (Subcase \hypertarget{pbc3}{PBC3}).
\autoref{lemmarest} shows that $\nu' \geq \ell_1+\ell_2+m$ is also not possible.

\vspace{1em}
Overall we showed that $\nu' < \mu$ and $\beta \geq \ell_1+\ell_2+1$ has to hold, see Figure~\ref{case1i}.
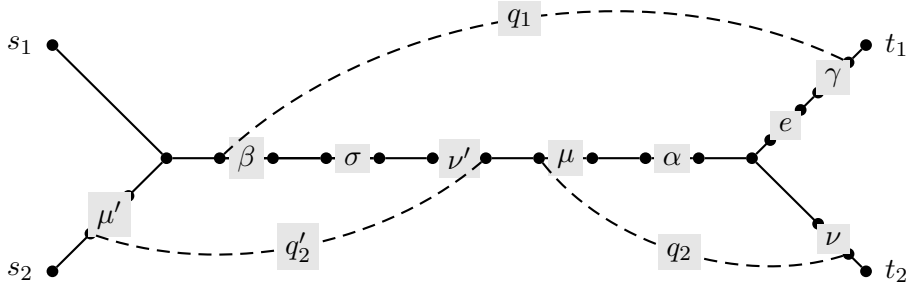
\begin{figure}[H] \centering \psset{unit=1cm}
\begin{pspicture}(-1.5,-1.8)(10,1.8) 
\Knoten{-1.1}{1.5}{v-1}\nput{180}{v-1}{$s_1$}
\Knoten{-1.1}{-1.5}{v0}\nput{180}{v0}{$s_2$}
\Knoten{0.4}{0}{v-3}
\Knoten{1.1}{0}{v-2}
\Knoten{1.8}{0}{v3}
\Knoten{2.5}{0}{v4}
\Knoten{3.2}{0}{v5}
\Knoten{3.9}{0}{v6}
\Knoten{4.6}{0}{v7}
\Knoten{5.3}{0}{v8}
\Knoten{6.0}{0}{n1}
\Knoten{6.7}{0}{n2}
\Knoten{7.4}{0}{n3}
\Knoten{8.1}{0}{n4}
\Knoten{9.6}{1.5}{v9}\nput{0}{v9}{$t_1$}
\Knoten{9.6}{-1.5}{v10}\nput{0}{v10}{$t_2$}
\Knoten{9.37}{1.27}{v1}
\Knoten{9.37}{-1.27}{v2}
\Knoten{8.97}{0.87}{v11}
\Knoten{8.97}{-0.87}{v12}
\Knoten{-0.6}{-1}{v13}
\Knoten{-0.1}{-0.5}{v14}
\Knoten{8.74}{0.64}{n5}
\Knoten{8.34}{0.24}{n6}

\ncline{-}{v-1}{v-3}
\ncline{-}{v0}{v-3}
\ncline{-}{v-3}{v4}
\Kante{v3}{v-2}{\beta}
\ncline{-}{v4}{v5}
\ncline{v4}{v3}
\ncline{-}{v5}{v6}
\Kante{v7}{v6}{\nu '}
\Kante{v4}{v5}{\sigma}
\ncline{-}{v7}{v8}
\Kante{v8}{n1}{\mu}
\ncline{n1}{n2}
\ncline{-}{n4}{v11}
\ncline{-}{n4}{v10}
\Kante{v11}{v1}{\gamma}
\Kante{v13}{v14}{\mu'}
\ncline{-}{v1}{v9}
\Kante{v12}{v2}{\nu}
\Kante{n2}{n3}{\alpha}
\ncline{n3}{n4}
\Kante{n5}{n6}{e}
\Bogendashed{v1}{v-2}{q_1}{-35}
\Bogendashed{v13}{v7}{q_2'}{-30}
\Bogendashed{v8}{v2}{q_2}{-35}
\end{pspicture}
\caption{Subcase of $R.1.1$}
\label{case1i}
\end{figure}

If Player 1 pays the edges $e_{\ell_1+\ell_2+1},\ldots,e_{\beta-1}$ or $q_1$ substitutes all commonly used edges, we get a cheaper solution by using $q_1, q_2'$ and $q_2$ because Player 2 pays the edges $e_{\nu'+1},\ldots,e_{\mu-1}$. Otherwise let $e_{\rho}$ be the largest such edge which Player 1 does not pay completely. Since the cost shares are maximized for Player 2, \texttt{CHANGE($\rho,\sigma$)} can not be feasible and we have to analyze why. 

If the changes for Player 1 are not feasible, there has to be a left alternative $q_1'$, defined by $\beta'$ and $\gamma'$, that substitutes $e_{\rho}$ but not $e_{\sigma}$. 
If $\gamma'\leq \beta-1$, we can use $q_1,q_1',q_2$ and $q_2'$ to construct a cheaper Steiner forest. Therefore $\gamma' \in \{\beta, \ldots, \sigma-1\}$ has to hold and $(P_2,P_1,q_2',q_1,q_1')$ is a PBC for $(u,v,w,x)=(t_2,s_2,t_1,s_1)$ (Subcase \hypertarget{pbc4}{PBC4}).
As \autoref{lemmarest} shows, Subcase PBC4 can not occur, i.e. the changes for Player~1 are feasible, but not the changes for Player~2.

Thus there has to be a right alternative $\bar{q_2}$ for Player 2, defined by $\bar{\mu}$ and $\bar{\nu}$, which substitutes $e_{\sigma}$ but not $e_{\rho}$. 
If $\bar{\mu}\leq \beta$ holds, we can use $q_1$ and $\bar{q_2}$ to get a cheaper solution (note that Player~1 pays the edges $e_{\rho+1},\ldots,e_{\beta-1}$ by the choice of $\rho$).
Therefore we get $\bar{\mu} \in \{\beta+1, \ldots, \sigma\}$ and $(P_1,P_2,q_1,q_2',\bar{q_2})$ is a PBC for $(u,v,w,x)=(s_1,t_1,s_2,t_2)$  (Subcase \hypertarget{pbc5}{PBC5}). 
As this subcase can not occur (see \autoref{lemmarest}), this completes Subcase R.1.1 (Player 2 has a left alternative for $e_{\sigma}$).

\vspace{1em}
\hypertarget{r12}{\textit{Subcase R.1.2}}

We can now assume that there are only right alternatives for $e_{\sigma}$ and consider a largest  such alternative $q_2'$ (defined by $\mu'$ and $\nu'$). 

\vspace{0.5em}
Let us first assume that $q_1$ substitutes more commonly used edges than $q_2'$ (if both alternatives substitute the same commonly used edges, we obviously get a cheaper Steiner forest). If Player~2 pays the edges of the commonly used path which are substituted by $q_1$, but not by $q_2'$, we can use $q_1$ and $q_2'$ to get a cheaper Steiner forest. Therefore let $e_{\rho}$ be the largest such edge that is not paid completely by Player 2. Since the cost shares are maximized for Player 2 and $q_2'$ is the largest right alternative, there has to be a left alternative for $e_{\rho}$. Let $\bar{q_2}$, defined by $\bar{\mu}$ and $\bar{\nu}$, be a smallest such alternative.
We now show that $\bar{\nu}\leq\mu'-1$ has to hold, since we get a contradiction if this is not true:

If $\bar{\nu}\geq \mu'$ holds, we get that $(P_1,P_2,q_1,\bar{q_2},q_2')$ is a PBC for $(u,v,w,x)=(s_1,t_1,s_2,t_2)$ ((Subcase \hypertarget{pbc6}{PBC6}); note that $\bar{\nu}<\sigma$ since there are no left alternatives for $e_{\sigma}$).
We have to distinguish between the two cases $q_1$ big and $q_1$ small.
In both cases we get that Subcase \hyperlink{pbc6}{PBC6} can not occur (if $q_1$ is big, see \autoref{lemmapbc7}, else see \autoref{lemmarest}). Therefore $\bar{\nu}\leq \mu'-1$ has to hold, cf. Figure~\ref{case1iia}. 

\begin{figure}[H] \centering \psset{unit=1cm}
\begin{pspicture}(0,-2)(13.8,2) 
\Knoten{0}{1.9}{v1}\nput{180}{v1}{$s_1$}
\Knoten{0}{-1.9}{v4}\nput{180}{v4}{$s_2$}
\Knoten{0.3}{-1.6}{v5}
\Knoten{0.8}{-1.1}{v6}
\Knoten{1.9}{0}{v7}
\Knoten{2.3}{0}{v8}
\Knoten{3.1}{0}{v9}
\Knoten{3.5}{0}{v10}
\Knoten{4.3}{0}{v11}
\Knoten{4.7}{0}{v12}
\Knoten{5.5}{0}{v13}
\Knoten{5.9}{0}{v14}
\Knoten{6.7}{0}{v15}
\Knoten{7.1}{0}{v16}
\Knoten{7.9}{0}{v17}
\Knoten{8.3}{0}{v18}
\Knoten{9.1}{0}{v19}
\Knoten{9.5}{0}{v20}
\Knoten{10.3}{0}{v21}
\Knoten{10.7}{0}{v22}
\Knoten{11.5}{0}{v23}
\Knoten{11.9}{0}{v24}
\Knoten{12.2}{0.3}{v25}
\Knoten{12.7}{0.8}{v26}
\Knoten{13.0}{1.1}{v27}
\Knoten{13.5}{1.6}{v28}
\Knoten{13.8}{1.9}{v29}\nput{0}{v29}{$t_1$}
\Knoten{12.2}{-0.3}{v30}
\Knoten{12.7}{-0.8}{v31}
\Knoten{13.0}{-1.1}{v32}
\Knoten{13.5}{-1.6}{v33}
\Knoten{13.8}{-1.9}{v34}\nput{0}{v34}{$t_2$}

\ncline{v1}{v7}
\ncline{v4}{v7}
\ncline{v7}{v24}
\ncline{v24}{v29}
\ncline{v24}{v34}

\Kante{v5}{v6}{\bar{\mu}}
\Kante{v8}{v9}{\omega}
\Kante{v10}{v11}{\beta}
\Kante{v12}{v13}{\rho}
\Kante{v14}{v15}{\bar{\nu}}
\Kante{v16}{v17}{\mu '}
\Kante{v18}{v19}{\sigma}
\Kante{v20}{v21}{\mu}
\Kante{v22}{v23}{\alpha}
\Kante{v25}{v26}{e}
\Kante{v27}{v28}{\gamma}
\Kante{v30}{v31}{\nu}
\Kante{v32}{v33}{\nu'}

\Bogendashed{v5}{v15}{\bar{q_2}}{-35}
\Bogendashed{v20}{v31}{q_2}{-35}
\Bogendashed{v16}{v33}{q_2'}{-35}
\Bogendashed{v10}{v28}{q_1}{25}
\end{pspicture}
\caption{Subcase of $R.1.2$ ($q_1$ substitutes more commonly used edges than $q_2'$)} \label{case1iia}
\end{figure}
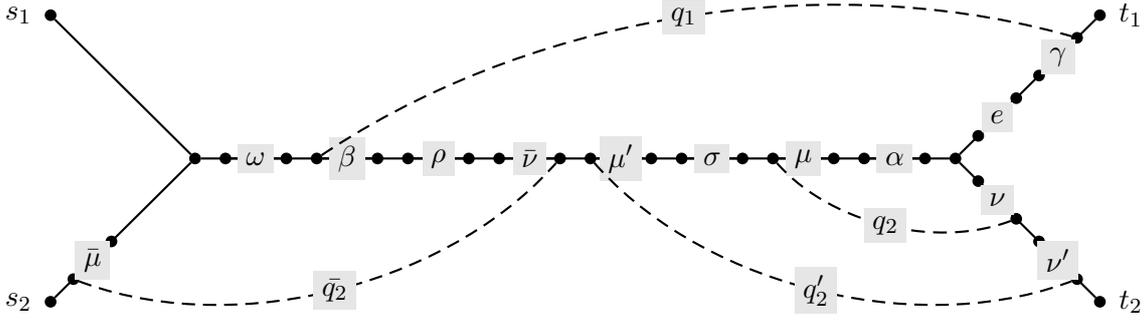

If $q_1$ substitutes all commonly used edges or Player 1 pays the edges $e_{\ell_1+\ell_2+1},\ldots,e_{\beta-1}$ completely, we get a cheaper solution. Otherwise consider the largest such edge $e_{\omega}$ that is not paid completely by Player 1. Since the cost shares are maximized for Player 2, \texttt{CHANGE($\omega,\rho$)} can not be feasible and we now analyze why. The changes are feasible for Player 2 because of the choice of $q_2'$. Therefore they can not be feasible for Player 1 and there is a left alternative for Player 1 that substitutes $e_{\omega}$, but not $e_{\rho}$. Let $q_1'$, defined by $\beta'$ and $\gamma'$, be such an alternative. If $\gamma'\leq \beta-1$ holds, we get a cheaper solution by using $q_1,q_1',\bar{q_2}$ and $q_2'$. Therefore $\beta \leq \gamma' \leq \rho-1$ holds and $(P_2,P_1,\bar{q_2}, q_1,q_1')$ is a PBC for $(u,v,w,x)=(t_2,s_2,t_1,s_1)$ (Subcase \hypertarget{pbc7}{PBC7}). 
\autoref{lemmarest} shows that this subcase is not possible, completing the case that $q_1$ substitutes more commonly used edges than $q_2'$.

\vspace{0.5em}
Therefore $q_1$ substitutes less commonly used edges than $q_2'$. If Player 1 pays the commonly used edges which are substituted by $q_2'$, but not by $q_1$, we can use $q_1$ and $q_2'$ to get a cheaper solution. Therefore let $e_{\rho}$ be the largest such edge that is not paid completely by Player 1 (see Figure~\ref{fig:r121}). Now \texttt{CHANGE($\rho,\sigma$)} must not be feasible. 

\begin{figure}[H] \centering \psset{unit=1cm}
\begin{pspicture}(3.6,-2)(15,2) 
\Knoten{3.6}{1.9}{v1}\nput{180}{v1}{$s_1$}
\Knoten{3.6}{-1.9}{v4}\nput{180}{v4}{$s_2$}
\Knoten{5.5}{0}{v7}
\Knoten{5.9}{0}{v10}
\Knoten{6.7}{0}{v11}
\Knoten{7.1}{0}{v12}
\Knoten{7.9}{0}{v13}
\Knoten{8.3}{0}{v16}
\Knoten{9.1}{0}{v17}
\Knoten{9.5}{0}{v20}
\Knoten{10.3}{0}{v21}
\Knoten{10.7}{0}{v22}
\Knoten{11.5}{0}{v23}
\Knoten{11.9}{0}{v24}
\Knoten{12.7}{0}{n1}
\Knoten{13.1}{0}{n2}
\Knoten{13.4}{0.3}{v25}
\Knoten{13.9}{0.8}{v26}
\Knoten{14.2}{1.1}{v27}
\Knoten{14.7}{1.6}{v28}
\Knoten{15}{1.9}{v29}\nput{0}{v29}{$t_1$}
\Knoten{13.4}{-0.3}{v30}
\Knoten{13.9}{-0.8}{v31}
\Knoten{14.2}{-1.1}{v32}
\Knoten{14.7}{-1.6}{v33}
\Knoten{15}{-1.9}{v34}\nput{0}{v34}{$t_2$}
\ncline{v1}{v7}
\ncline{v4}{v7}
\ncline{v7}{n2}
\ncline{n2}{v29}
\ncline{n2}{v34}
\ncline{v2}{v3}
\Kante{v10}{v11}{\mu'}
\Kante{v12}{v13}{\rho}
\Kante{v16}{v17}{\beta}
\Kante{v20}{v21}{\sigma}
\Kante{v22}{v23}{\mu}
\Kante{v25}{v26}{e}
\Kante{v27}{v28}{\gamma}
\Kante{v30}{v31}{\nu}
\Kante{v32}{v33}{\nu'}
\Kante{n1}{v24}{\alpha}
%
\Bogendashed{v22}{v31}{q_2}{-35}
\Bogendashed{v10}{v33}{q_2'}{-20}
\Bogendashed{v16}{v28}{q_1}{25}
\end{pspicture}
\caption{Subcase of $R.1.2$ ($q_1$ substitutes less commonly used edges than $q_2'$)}\label{fig:r121}
\end{figure}
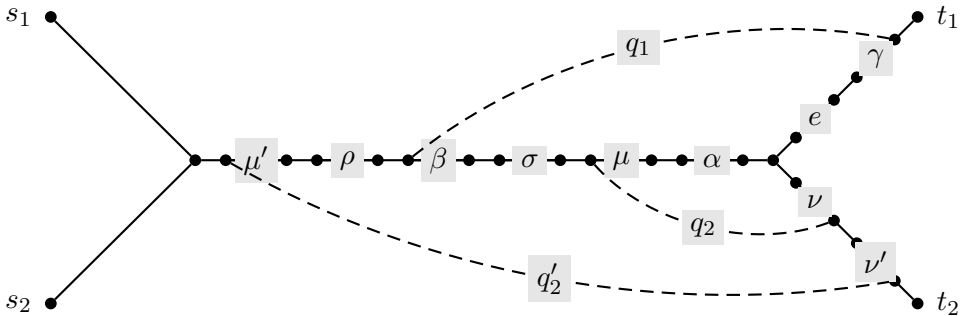

\vspace{0.5em}
\hypertarget{r121}{\textit{Subcase R.1.2.1}}

Let us first assume that the changes are not feasible for Player 1. Then there has to be a left alternative for Player 1 which substitutes $e_{\rho}$ but not $e_{\sigma}$. Let $q_1'$, defined by $\beta'$ and $\gamma'$, be a smallest such alternative. 
For $\gamma' \in \{\beta, \ldots, \sigma-1\}$ we get that $(P_2,P_1,q_2',q_1',q_1)$ is a PBC for $(u,v,w,x)=(s_2,t_2,s_1,t_1)$ ((Subcase \hypertarget{pbc8}{PBC8});
note that $\mu'\geq\ell_1+\ell_2+1$ holds since there are no left alternatives for $e_{\sigma}$).
This is only possible if the costs of the edges $e_{\beta}, \ldots, e_{\gamma'}$ are all zero (see \autoref{lemmapbc9}).
Therefore either $\gamma' \leq \beta-1$ holds, or, if this is not true, the costs of the edges $e_{\beta}, \ldots, e_{\gamma'}$ all have to be zero.

Now if $q_2'$ substitutes all commonly used edges or Player 2 pays the edges $e_{\ell_1+\ell_2+1},\ldots,e_{\mu'-1}$ we get a cheaper Steiner forest by using $q_1',q_1$ and $q_2'$. Thus let $e_{\omega}$ be the largest such edge that is not paid completely by Player 2. Since $q_2'$ is the largest right alternative, there are only left alternatives for $e_{\omega}$. We consider a smallest such alternative $\bar{q_2}$, defined by $\bar{\mu}$ and $\bar{\nu}$. 
If $\bar{\nu} \leq \mu'-1$, we can construct a cheaper Steiner forest by using $q_1',q_1,\bar{q_2}$ and $q_2'$. Furthermore, $\bar{\nu}\leq \sigma-1$ has to hold since there are no left alternatives for $e_{\sigma}$. Therefore either $\bar{\nu}\in\{\mu',\ldots,\gamma'-1\}$ or $\bar{\nu}\in\{\gamma',\ldots,\sigma-1\}$ holds.
In the first case $(P_1,P_2,q_1',q_2',\bar{q_2})$ is a PBC for $(u,v,w,x)=(t_1,s_1,t_2,s_2)$ (Subcase \hypertarget{pbc9}{PBC9}). This is only possible if Player 1 pays the edges $e_{\bar{\nu}+1}, \ldots, e_{\gamma'}$ completely, see \autoref{lemmarest}. Since this is not true for $e_{\rho}$, $\bar{\nu}\geq \rho$ has to hold.
Figure~\ref{case1iib} illustrates all remaining possibilities. As we will see, we can treat all these cases almost analogously.

\begin{figure}[H] \centering \psset{unit=1cm}
\subfloat[First remaining possibility (note: costs of $e_{\beta}, \ldots, e_{\gamma'}$ zero; $\bar{\nu}\in \{\rho,\ldots,\gamma'-1\}$)]{
\begin{pspicture}(0,-2.2)(13.9,2.2) 
\Knoten{0}{1.9}{v1}\nput{180}{v1}{$s_1$}
\Knoten{0.3}{1.6}{v2}
\Knoten{0.8}{1.1}{v3}
\Knoten{0}{-1.9}{v4}\nput{180}{v4}{$s_2$}
\Knoten{0.3}{-1.6}{v5}
\Knoten{0.8}{-1.1}{v6}
\Knoten{1.9}{0}{v7}
\Knoten{2.3}{0}{v8}
\Knoten{3.1}{0}{v9}
\Knoten{3.5}{0}{v10}
\Knoten{4.3}{0}{v11}
\Knoten{4.7}{0}{v12}
\Knoten{5.5}{0}{v13}
\Knoten{5.9}{0}{v14}
\Knoten{6.7}{0}{v15}
\Knoten{7.1}{0}{v16}
\Knoten{7.9}{0}{v17}
\Knoten{8.3}{0}{v18}
\Knoten{9.1}{0}{v19}
\Knoten{9.5}{0}{v20}
\Knoten{10.3}{0}{v21}
\Knoten{10.7}{0}{v22}
\Knoten{11.5}{0}{v23}
\Knoten{11.9}{0}{v24}
\Knoten{12.7}{0}{n1}
\Knoten{13.1}{0}{n2}
\Knoten{13.4}{0.3}{v25}
\Knoten{13.9}{0.8}{v26}
\Knoten{14.2}{1.1}{v27}
\Knoten{14.7}{1.6}{v28}
\Knoten{15}{1.9}{v29}\nput{0}{v29}{$t_1$}
\Knoten{13.4}{-0.3}{v30}
\Knoten{13.9}{-0.8}{v31}
\Knoten{14.2}{-1.1}{v32}
\Knoten{14.7}{-1.6}{v33}
\Knoten{15}{-1.9}{v34}\nput{0}{v34}{$t_2$}
\ncline{v1}{v7}
\ncline{v4}{v7}
\ncline{v7}{n2}
\ncline{n2}{v29}
\ncline{n2}{v34}
\Kante{v2}{v3}{\beta'}
\Kante{v5}{v6}{\bar{\mu}}
\Kante{v8}{v9}{\omega}
\Kante{v10}{v11}{\mu'}
\Kante{v12}{v13}{\rho}
\Kante{v14}{v15}{\bar{\nu}}
\Kante{v16}{v17}{\beta}
\Kante{v18}{v19}{\gamma'}
\Kante{v20}{v21}{\sigma}
\Kante{v22}{v23}{\mu}
\Kante{v25}{v26}{e}
\Kante{v27}{v28}{\gamma}
\Kante{v30}{v31}{\nu}
\Kante{v32}{v33}{\nu'}
\Kante{n1}{v24}{\alpha}
\Bogendashed{v5}{v15}{\bar{q_2}}{-25}
\Bogendashed{v22}{v31}{q_2}{-35}
\Bogendashed{v10}{v33}{q_2'}{-25}
\Bogendashed{v16}{v28}{q_1}{25}
\Bogendashed{v2}{v19}{q_1'}{25}
\end{pspicture}
\label{case1iib1}
}
\end{figure}

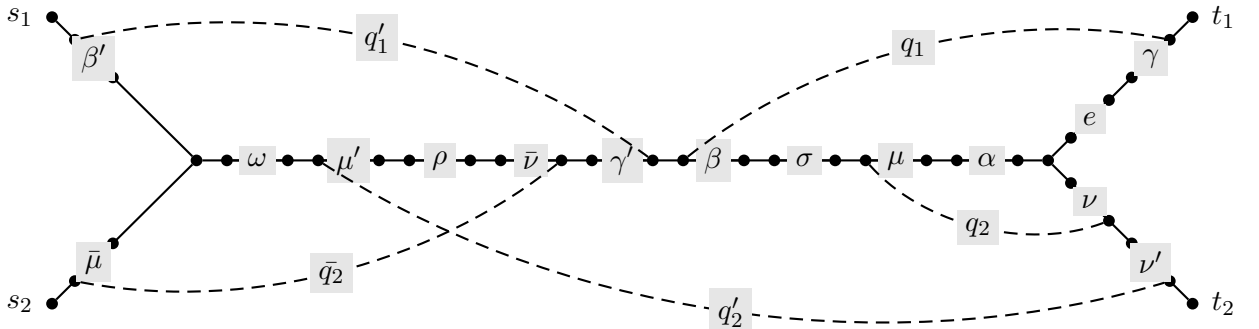
\begin{figure}[H] \centering \psset{unit=1cm}\captionsetup[subfigure]{labelformat=empty}
\subfloat[Second remaining possibility (note: $\bar{\nu}\in \{\rho,\ldots,\gamma'-1\}$)]{
\begin{pspicture}(0,-2.2)(13.9,2.2) 
\Knoten{0}{1.9}{v1}\nput{180}{v1}{$s_1$}
\Knoten{0.3}{1.6}{v2}
\Knoten{0.8}{1.1}{v3}
\Knoten{0}{-1.9}{v4}\nput{180}{v4}{$s_2$}
\Knoten{0.3}{-1.6}{v5}
\Knoten{0.8}{-1.1}{v6}
\Knoten{1.9}{0}{v7}
\Knoten{2.3}{0}{v8}
\Knoten{3.1}{0}{v9}
\Knoten{3.5}{0}{v10}
\Knoten{4.3}{0}{v11}
\Knoten{4.7}{0}{v12}
\Knoten{5.5}{0}{v13}
\Knoten{5.9}{0}{v14}
\Knoten{6.7}{0}{v15}
\Knoten{7.1}{0}{v16}
\Knoten{7.9}{0}{v17}
\Knoten{8.3}{0}{v18}
\Knoten{9.1}{0}{v19}
\Knoten{9.5}{0}{v20}
\Knoten{10.3}{0}{v21}
\Knoten{10.7}{0}{v22}
\Knoten{11.5}{0}{v23}
\Knoten{11.9}{0}{v24}
\Knoten{12.7}{0}{n1}
\Knoten{13.1}{0}{n2}
\Knoten{13.4}{0.3}{v25}
\Knoten{13.9}{0.8}{v26}
\Knoten{14.2}{1.1}{v27}
\Knoten{14.7}{1.6}{v28}
\Knoten{15}{1.9}{v29}\nput{0}{v29}{$t_1$}
\Knoten{13.4}{-0.3}{v30}
\Knoten{13.9}{-0.8}{v31}
\Knoten{14.2}{-1.1}{v32}
\Knoten{14.7}{-1.6}{v33}
\Knoten{15}{-1.9}{v34}\nput{0}{v34}{$t_2$}
\ncline{v1}{v7}
\ncline{v4}{v7}
\ncline{v7}{n2}
\ncline{n2}{v29}
\ncline{n2}{v34}
\Kante{v2}{v3}{\beta'}
\Kante{v5}{v6}{\bar{\mu}}
\Kante{v8}{v9}{\omega}
\Kante{v10}{v11}{\mu'}
\Kante{v12}{v13}{\rho}
\Kante{v14}{v15}{\bar{\nu}}
\Kante{v16}{v17}{\gamma'}
\Kante{v18}{v19}{\beta}
\Kante{v20}{v21}{\sigma}
\Kante{v22}{v23}{\mu}
\Kante{v25}{v26}{e}
\Kante{v27}{v28}{\gamma}
\Kante{v30}{v31}{\nu}
\Kante{v32}{v33}{\nu'}
\Kante{n1}{v24}{\alpha}
\Bogendashed{v5}{v15}{\bar{q_2}}{-25}
\Bogendashed{v22}{v31}{q_2}{-35}
\Bogendashed{v10}{v33}{q_2'}{-25}
\Bogendashed{v18}{v28}{q_1}{25}
\Bogendashed{v2}{v17}{q_1'}{25}
\end{pspicture}
\label{case1iib2}
}
\end{figure}

\begin{figure}[H] \centering \psset{unit=1cm}\captionsetup[subfigure]{labelformat=empty}
\subfloat[Third remaining possibility (note: costs of $e_{\beta}, \ldots, e_{\gamma'}$ zero; $\bar{\nu}\in\{\gamma',\ldots,\sigma-1\}$)]{
\begin{pspicture}(0,-2.2)(13.9,2.2) 
\Knoten{0}{1.9}{v1}\nput{180}{v1}{$s_1$}
\Knoten{0.3}{1.6}{v2}
\Knoten{0.8}{1.1}{v3}
\Knoten{0}{-1.9}{v4}\nput{180}{v4}{$s_2$}
\Knoten{0.3}{-1.6}{v5}
\Knoten{0.8}{-1.1}{v6}
\Knoten{1.9}{0}{v7}
\Knoten{2.3}{0}{v8}
\Knoten{3.1}{0}{v9}
\Knoten{3.5}{0}{v10}
\Knoten{4.3}{0}{v11}
\Knoten{4.7}{0}{v12}
\Knoten{5.5}{0}{v13}
\Knoten{5.9}{0}{v14}
\Knoten{6.7}{0}{v15}
\Knoten{7.1}{0}{v16}
\Knoten{7.9}{0}{v17}
\Knoten{8.3}{0}{v18}
\Knoten{9.1}{0}{v19}
\Knoten{9.5}{0}{v20}
\Knoten{10.3}{0}{v21}
\Knoten{10.7}{0}{v22}
\Knoten{11.5}{0}{v23}
\Knoten{11.9}{0}{v24}
\Knoten{12.7}{0}{n1}
\Knoten{13.1}{0}{n2}
\Knoten{13.4}{0.3}{v25}
\Knoten{13.9}{0.8}{v26}
\Knoten{14.2}{1.1}{v27}
\Knoten{14.7}{1.6}{v28}
\Knoten{15}{1.9}{v29}\nput{0}{v29}{$t_1$}
\Knoten{13.4}{-0.3}{v30}
\Knoten{13.9}{-0.8}{v31}
\Knoten{14.2}{-1.1}{v32}
\Knoten{14.7}{-1.6}{v33}
\Knoten{15}{-1.9}{v34}\nput{0}{v34}{$t_2$}
\ncline{v1}{v7}
\ncline{v4}{v7}
\ncline{v7}{n2}
\ncline{n2}{v29}
\ncline{n2}{v34}
\Kante{v2}{v3}{\beta'}
\Kante{v5}{v6}{\bar{\mu}}
\Kante{v8}{v9}{\omega}
\Kante{v10}{v11}{\mu'}
\Kante{v12}{v13}{\rho}
\Kante{v14}{v15}{\beta}
\Kante{v16}{v17}{\gamma'}
\Kante{v18}{v19}{\bar{\nu}}
\Kante{v20}{v21}{\sigma}
\Kante{v22}{v23}{\mu}
\Kante{v25}{v26}{e}
\Kante{v27}{v28}{\gamma}
\Kante{v30}{v31}{\nu}
\Kante{v32}{v33}{\nu'}
\Kante{n1}{v24}{\alpha}
\Bogendashed{v5}{v19}{\bar{q_2}}{-25}
\Bogendashed{v22}{v31}{q_2}{-35}
\Bogendashed{v10}{v33}{q_2'}{-25}
\Bogendashed{v14}{v28}{q_1}{25}
\Bogendashed{v2}{v17}{q_1'}{25}
\end{pspicture}
\label{case1iib3}
}
\end{figure}
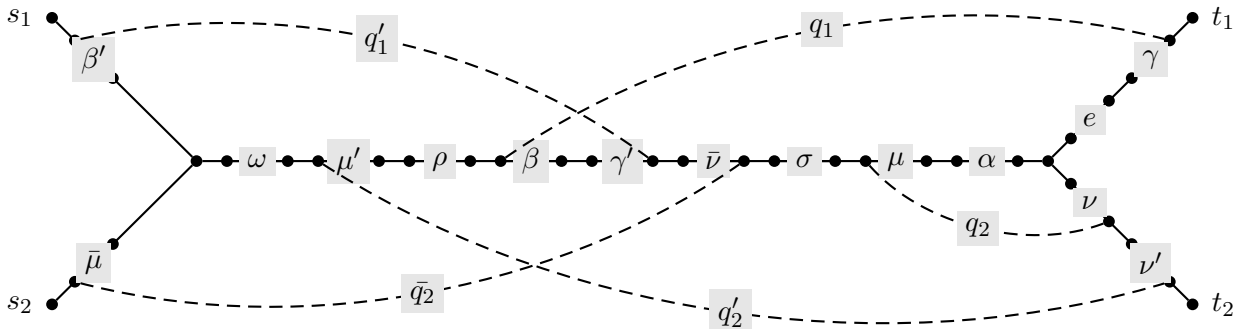

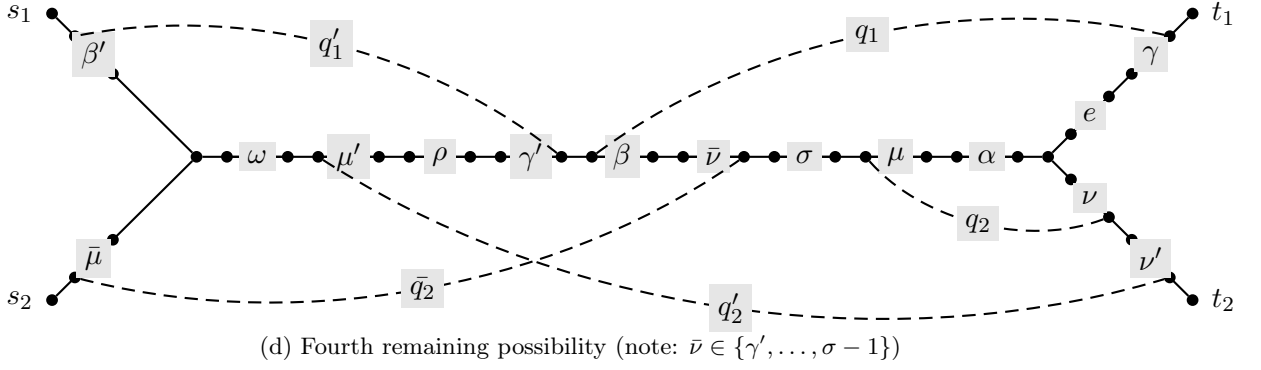
\begin{figure}[H] \centering \psset{unit=1cm}
\subfloat[Fourth remaining possibility (note: $\bar{\nu}\in\{\gamma',\ldots,\sigma-1\}$)]{
\begin{pspicture}(0,-2.2)(13.9,2.2) 
\Knoten{0}{1.9}{v1}\nput{180}{v1}{$s_1$}
\Knoten{0.3}{1.6}{v2}
\Knoten{0.8}{1.1}{v3}
\Knoten{0}{-1.9}{v4}\nput{180}{v4}{$s_2$}
\Knoten{0.3}{-1.6}{v5}
\Knoten{0.8}{-1.1}{v6}
\Knoten{1.9}{0}{v7}
\Knoten{2.3}{0}{v8}
\Knoten{3.1}{0}{v9}
\Knoten{3.5}{0}{v10}
\Knoten{4.3}{0}{v11}
\Knoten{4.7}{0}{v12}
\Knoten{5.5}{0}{v13}
\Knoten{5.9}{0}{v14}
\Knoten{6.7}{0}{v15}
\Knoten{7.1}{0}{v16}
\Knoten{7.9}{0}{v17}
\Knoten{8.3}{0}{v18}
\Knoten{9.1}{0}{v19}
\Knoten{9.5}{0}{v20}
\Knoten{10.3}{0}{v21}
\Knoten{10.7}{0}{v22}
\Knoten{11.5}{0}{v23}
\Knoten{11.9}{0}{v24}
\Knoten{12.7}{0}{n1}
\Knoten{13.1}{0}{n2}
\Knoten{13.4}{0.3}{v25}
\Knoten{13.9}{0.8}{v26}
\Knoten{14.2}{1.1}{v27}
\Knoten{14.7}{1.6}{v28}
\Knoten{15}{1.9}{v29}\nput{0}{v29}{$t_1$}
\Knoten{13.4}{-0.3}{v30}
\Knoten{13.9}{-0.8}{v31}
\Knoten{14.2}{-1.1}{v32}
\Knoten{14.7}{-1.6}{v33}
\Knoten{15}{-1.9}{v34}\nput{0}{v34}{$t_2$}
\ncline{v1}{v7}
\ncline{v4}{v7}
\ncline{v7}{n2}
\ncline{n2}{v29}
\ncline{n2}{v34}
\Kante{v2}{v3}{\beta'}
\Kante{v5}{v6}{\bar{\mu}}
\Kante{v8}{v9}{\omega}
\Kante{v10}{v11}{\mu'}
\Kante{v12}{v13}{\rho}
\Kante{v14}{v15}{\gamma'}
\Kante{v16}{v17}{\beta}
\Kante{v18}{v19}{\bar{\nu}}
\Kante{v20}{v21}{\sigma}
\Kante{v22}{v23}{\mu}
\Kante{v25}{v26}{e}
\Kante{v27}{v28}{\gamma}
\Kante{v30}{v31}{\nu}
\Kante{v32}{v33}{\nu'}
\Kante{n1}{v24}{\alpha}
\Bogendashed{v5}{v19}{\bar{q_2}}{-25}
\Bogendashed{v22}{v31}{q_2}{-35}
\Bogendashed{v10}{v33}{q_2'}{-25}
\Bogendashed{v16}{v28}{q_1}{25}
\Bogendashed{v2}{v15}{q_1'}{25}
\end{pspicture}
\label{case1iib4}
}
\caption{Remaining subcases in Subcase R.1.2.1} \label{case1iib}
\end{figure}

We now show (for all remaining cases) that the given assignment of cost shares cannot be
maximized for Player 2. To see this, we show that the following changes of cost shares (by a suitable small amount) preserve the feasibility for \lp{} (we just say that the changes are feasible):
\[
\begin{array}{l}
\text{Player 1 increases on $e_{\rho}$ and decreases on $e_{\omega}$ and $e_{\sigma}$;} \\
\text{Player 2 decreases on $e_{\rho}$ and increases on $e_{\omega}$ and $e_{\sigma}$.} 
\end{array}
\]

Since the sum of cost shares of Player 2 is then higher than before, we get a (final) contradiction to our assumption that the changes of \texttt{CHANGE($\rho,\sigma$)} are not feasible for Player 1.


It is clear that Player 1 can increase on $e_{\rho}$ and decrease on $e_{\omega}$ and $e_{\sigma}$ since every (left or right) alternative that substitutes $e_{\rho}$ also substitutes $e_{\omega}$ or $e_{\sigma}$.

\vspace{0.5em}
The changes of Player 2 are more complicated. First, decreasing on $e_{\rho}$ and increasing on $e_{\omega}$ is feasible because of our choice of $\bar{q_2}$. 
	Additionally we want to increase on $e_{\sigma}$. Assume that this is not possible. Since $q_2'$ is a largest right alternative and there are no left alternatives for $e_{\sigma}$, there has to be a right alternative which substitutes $e_{\sigma}$, but not $e_{\rho}$. Let $\widehat{q_2}$, defined by $\widehat{\mu}$ and $\widehat{\nu}$, be a largest such alternative. If $\rho+1 \leq \widehat{\mu} \leq \beta$ we get a cheaper solution by using $q_1$ and $\widehat{q_2}$ since Player 1 pays the edges $e_{\rho+1},\ldots,e_{\beta-1}$. 
	We now have to distinguish between the different cases of Figure~\ref{case1iib}. 
	
	\vspace{0.5em}
	We start with Figures~\ref{case1iib1} and~\ref{case1iib2}. In fact we can restrict to the case of Figure~\ref{case1iib2}. It is quite clear that the case of Figure~\ref{case1iib1} can be treated almost analogously, just imagine to contract the edges $e_{\beta}, \ldots, e_{\gamma'}$ (all those  edges have cost zero), and this yields the case of Figure~\ref{case1iib2} (for $\gamma'=\beta-1$). Therefore consider the case of Figure~\ref{case1iib2}.
	
If Player 2 pays the edges $e_{\beta}, \ldots, e_{\widehat{\mu}-1}$, we get a cheaper solution by using $q_1$ and $\widehat{q_2}$. Therefore let $e_{\sigma'}$ be the largest edge of $e_{\beta}, \ldots, e_{\widehat{\mu}-1}$ that is not completely paid by Player 2. 
If the described changes of cost shares are feasible for $\sigma'$ instead of $\sigma$, we redefine $\sigma$ by $\sigma'$ and get that the changes are feasible. 
 In the other case there either has to be a right alternative which substitutes $e_{\sigma'}$, but not $e_{\rho}$, a right alternative which substitutes $e_{\omega}$ or a left alternative which substitutes $e_{\sigma'}$. Since the first two cases yield contradictions to the choice of $\widehat{q_2}$ and $q_2'$, the third case has to be true. Let $\tilde{q_2}$, defined by $\tilde{\mu}$ and $\tilde{\nu}$, be a left alternative which substitutes $e_{\sigma'}$. Note that $\tilde{q_2}$ does not substitute $e_{\sigma}$ since there are no left alternatives for this edge. If $\tilde{\nu}\leq \widehat{\mu}-1$, we can construct a cheaper Steiner Forest by using $q_1',q_1,\tilde{q_2}$ and $\widehat{q_2}$. Thus $(P_1,P_2,q_1,\tilde{q_2},\widehat{q_2})$ has to be a PBC for $(u,v,w,x)=(s_1,t_1,s_2,t_2)$ (Subcase \hypertarget{pbc10}{PBC10}). \autoref{lemmarest} shows that the Subcase PBC10 can not occur. Therefore the changes of Player 2 are also feasible (perhaps for a slightly changed $\sigma$) for the situations of Figure~\ref{case1iib1} and Figure~\ref{case1iib2}. 

\vspace{0.5em}
Now consider the situations of Figures~\ref{case1iib3} and~\ref{case1iib4}. As above it is quite clear that Figure~\ref{case1iib3} can be treated very similar to Figure~\ref{case1iib4}, therefore we only consider the latter case.
	
For the situation displayed in Figure~\ref{case1iib4} we first consider the case $\widehat{\mu} \in \{\beta+1, \ldots,\bar{\nu}\}$. We get that $(P_1,P_2,q_1,\bar{q_2},\widehat{q_2})$ is a PBC for $(u,v,w,x)=(s_1,t_1,s_2,t_2)$ (Subcase \hypertarget{pbc11}{PBC11}). By \autoref{lemmarest} we get that $\widehat{\mu}\geq \bar{\nu}+1$ has to hold, because Subcase PBC11 is not possible. 
	 If Player 2 pays the edges $e_{\bar{\nu}+1},\ldots,e_{\widehat{\mu}-1}$ we use $q_1', q_1, \bar{q_2}$ and $\widehat{q_2}$ to construct a cheaper solution. 
	Therefore let $e_{\sigma'}$ be the largest edge of $e_{\bar{\nu}+1}, \ldots, e_{\widehat{\mu}-1}$ that is not completely paid by Player 2. If the changes are feasible for $\sigma'$ instead of $\sigma$, we redefine $\sigma$ by $\sigma'$ and get that the changes are feasible. 
	In the other case we get a contradiction. We omit the proof since it is analogous to the proof for the case of Figure~\ref{case1iib3}. This shows that the changes of Player 2 are also feasible (perhaps for a slightly changed $\sigma$) for the situations of Figure~\ref{case1iib3} and Figure~\ref{case1iib4}. 

\vspace{1em}
\hypertarget{r122}{\textit{Subcase R.1.2.2}}

Overall we showed now that the changes of \texttt{CHANGE($\rho,\sigma$)} are feasible for Player 1. Therefore they can not be feasible for Player 2. Then there has to be a right alternative that substitutes $e_{\sigma}$, but not $e_{\rho}$; let us consider a largest such alternative $\bar{q_2}$, defined by $\bar{\mu}$ and $\bar{\nu}$. Obviously, $\rho+1 \leq \bar{\mu} \leq \sigma$. The case $\beta < \bar{\mu}$ is illustrated in Figure~\ref{case1iibp2}. In this case we can use $q_1$ and $\bar{q_2}$ to get a cheaper solution if Player 2 pays the edges $e_{\beta},\ldots,e_{\bar{\mu}-1}$. Otherwise consider such an edge $e_{\omega}$ which is not paid completely by Player 2. But then we get a contradiction since \texttt{CHANGE($\rho,\omega$)} is feasible (feasible for Player 2 because of our choice of $\bar{q_2}$; for Player 1 because of the feasibility of \texttt{CHANGE($\rho,\sigma$)}).
Therefore $\bar{\mu} \leq \beta$ has to hold. Now use $\bar{q_2}$ and $q_1$ to construct a cheaper Steiner forest (note that Player 1 pays the edges $e_{\rho+1},\ldots,e_{\beta-1}$).

This completes our analysis of Subcase R.1 since we showed that this case can not occur.
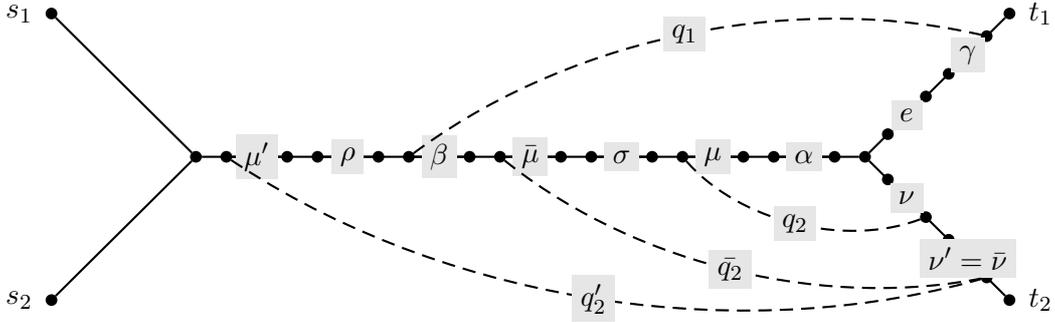
\begin{figure}[H] \centering \psset{unit=1cm}
\begin{pspicture}(1,-2)(14,2) 
\Knoten{1.2}{1.9}{v1}\nput{180}{v1}{$s_1$}
\Knoten{1.2}{-1.9}{v4}\nput{180}{v4}{$s_2$}
\Knoten{3.1}{0}{v7}
\Knoten{3.5}{0}{v10}
\Knoten{4.3}{0}{v11}
\Knoten{4.7}{0}{v12}
\Knoten{5.5}{0}{v13}
\Knoten{5.9}{0}{v14}
\Knoten{6.7}{0}{v15}
\Knoten{7.1}{0}{v16}
\Knoten{7.9}{0}{v17}
\Knoten{8.3}{0}{v18}
\Knoten{9.1}{0}{v19}
\Knoten{9.5}{0}{v20}
\Knoten{10.3}{0}{v21}
\Knoten{10.7}{0}{v22}
\Knoten{11.5}{0}{v23}
\Knoten{11.9}{0}{v24}
\Knoten{12.2}{0.3}{v25}
\Knoten{12.7}{0.8}{v26}
\Knoten{13.0}{1.1}{v27}
\Knoten{13.5}{1.6}{v28}
\Knoten{13.8}{1.9}{v29}\nput{0}{v29}{$t_1$}
\Knoten{12.2}{-0.3}{v30}
\Knoten{12.7}{-0.8}{v31}
\Knoten{13.0}{-1.1}{v32}
\Knoten{13.5}{-1.6}{v33}
\Knoten{13.8}{-1.9}{v34}\nput{0}{v34}{$t_2$}

\ncline{v1}{v7}
\ncline{v4}{v7}
\ncline{v7}{v24}
\ncline{v24}{v29}
\ncline{v24}{v34}

\Kante{v10}{v11}{\mu'}
\Kante{v12}{v13}{\rho}
\Kante{v14}{v15}{\beta}
\Kante{v16}{v17}{\bar{\mu}}
\Kante{v18}{v19}{\sigma}
\Kante{v20}{v21}{\mu}
\Kante{v22}{v23}{\alpha}
\Kante{v25}{v26}{e}
\Kante{v27}{v28}{\gamma}
\Kante{v30}{v31}{\nu}
\Kante{v32}{v33}{\nu'=\bar{\nu}}

\Bogendashed{v16}{v33}{\bar{q_2}}{-25}
\Bogendashed{v20}{v31}{q_2}{-35}
\Bogendashed{v10}{v33}{q_2'}{-25}
\Bogendashed{v14}{v28}{q_1}{25}
\end{pspicture}
\caption{Subcase R.1.2.2 (note that the order of $\nu, \nu'$ and $\bar{\nu}$ is not relevant, therefore we have chosen $\nu'=\bar{\nu}$ for simplification)} \label{case1iibp2}
\end{figure}

\vspace{1em}
\hypertarget{r2}{\textbf{Subcase R.2}}

Now we analyze Subcase R.2 ($q_2$ smallest right alternative for $e_{\alpha}$ and $\mu \leq \beta-1$).
If $q_1$ substitutes all commonly used edges or Player 1 pays all edges of the commonly used path which are substituted by $q_2$, but not by $q_1$, using $q_1$ and $q_2$ yields a cheaper solution. Thus let $e_{\sigma}$ be the largest such edge which Player 1 does not pay completely. Figure~\ref{r2} illustrates the situation for $q_2$ small.

\begin{figure}[H] \centering \psset{unit=0.95cm}
\begin{pspicture}(-0.3,-2)(11,1.7) 
\Knoten{0.2}{1.5}{s1}\nput{180}{s1}{$s_1$}
\Knoten{0.2}{-1.5}{s2}\nput{180}{s2}{$s_2$}
%
\Knoten{1.7}{0}{v5}
\Knoten{2.5}{0}{v6}
\Knoten{3.3}{0}{v7}
\Knoten{4.1}{0}{v8}
\Knoten{4.9}{0}{v9}
\Knoten{5.7}{0}{v10}
\Knoten{6.5}{0}{v11}
\Knoten{7.7}{0}{v14}
\Knoten{8.5}{0}{v15}
\Knoten{9.3}{0}{v16}
\Knoten{10.16}{0.86}{v17}
\Knoten{10.56}{1.26}{v18}
\Knoten{10.8}{1.5}{t1}\nput{0}{t1}{$t_1$}
\Knoten{9.8}{-0.5}{v19}
\Knoten{10.3}{-1}{v20}
\Knoten{10.8}{-1.5}{t2}\nput{0}{t2}{$t_2$}
\Knoten{9.53}{0.23}{n1}
\Knoten{9.93}{0.63}{n2}
\ncline{s1}{v5}
\ncline{s2}{v5}
\ncline{v5}{v6}
\Kante{v6}{v7}{\mu}
\ncline{v7}{v8}
\Kante{v8}{v9}{\sigma}
\ncline{v9}{v10}
\Kante{v10}{v11}{\beta}
\ncline{v11}{v14}
\Kante{v14}{v15}{\alpha}
\ncline{v15}{v16}
\ncline{v16}{n1}
\ncline{n2}{v17}
\Kante{v17}{v18}{\gamma}
\ncline{v18}{t1}
\ncline{v16}{v19}
\Kante{v19}{v20}{\nu}
\ncline{v20}{t2}
\Kante{n1}{n2}{e}

\Bogendashed{v10}{v18}{q_1}{35}
\Bogendashed{v6}{v20}{q_2}{-35}
\end{pspicture}
\caption{Subcase R.2 for $q_2$ small} \label{r2}
\end{figure}
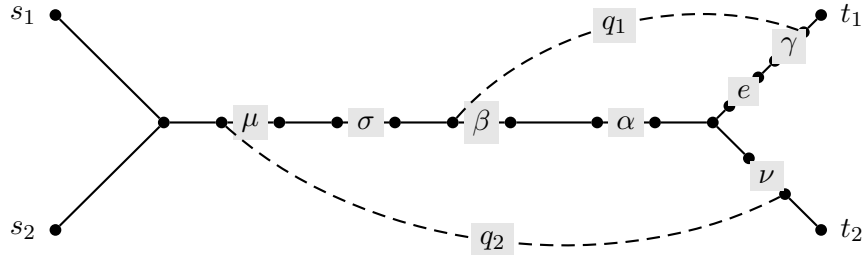

We have to analyze why \texttt{CHANGE($\sigma, \alpha$)} is not feasible. It it clear that the changes are feasible for Player~2 because of our choice of $q_2$. Therefore the changes can not be feasible for Player~1, thus there has to be a left alternative for Player~1 which substitutes $e_{\sigma}$, but not $e_{\alpha}$. We consider a smallest such alternative $q_1'$ (defined by $\beta'$ and $\gamma'$). 
If $\gamma'\geq\beta$ holds, $(P_2,P_1,q_2,q_1',q_1)$ is a PBC for $(u,v,w,x)=(s_2,t_2,s_1,t_1)$ (Subcase \hypertarget{pbc12}{PBC12}). 
We have to distinguish if $q_2$ is small or big. In both cases we get that Subcase PBC12 is not possible (the case where $q_2$ is small is discussed in \autoref{lemmapbc14}, the case $q_2$ big in \autoref{lemmarest}). Therefore $\sigma \leq \gamma' \leq \beta-1$ has to hold, see Figure~\ref{case3b2} (for $q_2$ small).

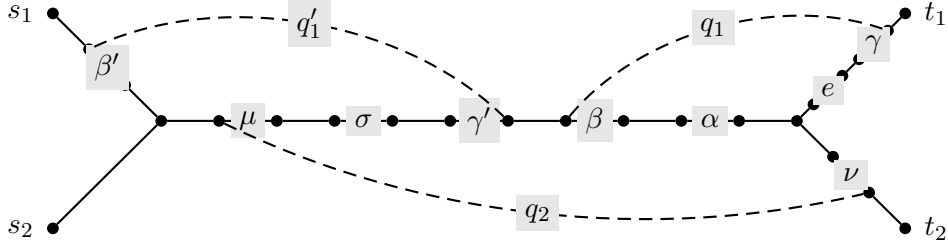
\begin{figure}[H] \centering \psset{unit=0.95cm}
\begin{pspicture}(-1.5,-1.5)(11,1.7) 
\Knoten{-1}{1.5}{s1}\nput{180}{s1}{$s_1$}
\Knoten{-0.5}{1}{v1}
\Knoten{0}{0.5}{v2}
\Knoten{-1}{-1.5}{s2}\nput{180}{s2}{$s_2$}
%
\Knoten{0.5}{0}{v5}
\Knoten{1.3}{0}{v6}
\Knoten{2.1}{0}{v7}
\Knoten{2.9}{0}{v8}
\Knoten{3.7}{0}{v9}
\Knoten{4.5}{0}{v10}
\Knoten{5.3}{0}{v11}
\Knoten{6.1}{0}{v12}
\Knoten{6.9}{0}{v13}
\Knoten{7.7}{0}{v14}
\Knoten{8.5}{0}{v15}
\Knoten{9.3}{0}{v16}
\Knoten{10.16}{0.86}{v17}
\Knoten{10.56}{1.26}{v18}
\Knoten{10.8}{1.5}{t1}\nput{0}{t1}{$t_1$}
\Knoten{9.8}{-0.5}{v19}
\Knoten{10.3}{-1}{v20}
\Knoten{10.8}{-1.5}{t2}\nput{0}{t2}{$t_2$}
\Knoten{9.53}{0.23}{n1}
\Knoten{9.93}{0.63}{n2}
\ncline{s1}{v1}
\Kante{v1}{v2}{\beta '}
\ncline{v2}{v5}
\ncline{s2}{v5}
\ncline{v5}{v6}
\Kante{v6}{v7}{\mu}
\ncline{v7}{v8}
\Kante{v8}{v9}{\sigma}
\ncline{v9}{v10}
\Kante{v10}{v11}{\gamma '}
\ncline{v11}{v12}
\Kante{v12}{v13}{\beta}
\ncline{v13}{v14}
\Kante{v14}{v15}{\alpha}
\ncline{v15}{v16}
\ncline{v16}{n1}
\ncline{n2}{v17}
\Kante{v17}{v18}{\gamma}
\ncline{v18}{t1}
\ncline{v16}{v19}
\Kante{v19}{v20}{\nu}
\ncline{v20}{t2}
\Kante{n1}{n2}{e}

\Bogendashed{v1}{v11}{q_1'}{35}
\Bogendashed{v12}{v18}{q_1}{35}
\Bogendashed{v6}{v20}{q_2}{-20}
\end{pspicture}
\caption{Subcase of R.2 for $q_2$ small} \label{case3b2}
\end{figure}

Now let $q_2'$ be a largest right alternative for Player 2. If $q_2'$ substitutes all commonly used edges or Player 2 pays all commonly used edges which are not substituted by $q_2'$, we get a cheaper Steiner forest by using $q_1,q_1'$ and $q_2'$ (note that Player 1 pays the edges $e_{\gamma'+1}, \ldots, e_{\beta-1}$).
Therefore let $e_{\tau}$ be the largest edge in $\{\ell_1+\ell_2+1, \ldots, \mu'-1\}$ which Player 2 does not pay completely.
Since the cost shares are maximized for Player 2 there has to be a tight alternative for Player 2 which substitutes $e_{\tau}$. 
Since $q_2'$ is the largest right alternative, this has to be a left alternative. 
Let $\bar{q_2}$, defined by $\bar{\mu}$ and $\bar{\nu}$, be a smallest such alternative. 
If $\bar{\nu}\leq\mu'-1$ holds, we can construct a cheaper solution by using $q_1',q_1,q_2'$ and $\bar{q_2}$ since Player 2 pays the edges $e_{\tau+1}, \ldots, e_{\mu'-1}$ by the choice of $\tau$.
For $\bar{\nu}\geq\alpha$ we also get a cheaper solution since Player 2 pays the edges $e_{\alpha+1}, \ldots, e_{\ell_1+\ell_2+m}$. Therefore $\bar{\nu} \in \{\mu',\ldots, \alpha-1\}$ holds.
For $\mu' \leq \bar{\nu} \leq \sigma-1$ we get that $(P_1,P_2,q_1',q_2',\bar{q_2})$ is a PBC for $(u,v,w,x)=(t_1,s_1,t_2,s_2)$ (Subcase \hypertarget{pbc13}{PBC13}) We get that this subcase is not possible (see \autoref{lemmarest}), therefore $\sigma \leq \bar{\nu}\leq \alpha-1$ is the only remaining possibility, see Figure~\ref{abb:R.2.R2} where we have illustrated the situation for $\bar{\nu}\leq \gamma'$. 
%
Now we simultaneously change the cost shares (by a suitably small amount) as described below to get a contradiction to the assumption that the cost shares are maximized for Player 2. 
\[
\begin{array}{l}
\text{Player 1 increases on $e_{\sigma}$ and decreases on $e_{\alpha}$ and $e_{\tau}$:} \\
\text{Player 2 decreases on $e_{\sigma}$ and increases on $e_{\alpha}$ and $e_{\tau}$.} 
\end{array}
\]
It is clear that Player 1 can increase on $e_{\sigma}$ and decrease on $e_{\alpha}$ and $e_{\tau}$ since every alternative that substitutes $e_{\sigma}$ also substitutes $e_{\alpha}$ or $e_{\tau}$.
It is also clear that Player 2 can decrease on $e_{\sigma}$ and increase on $e_{\tau}$ since $\bar{q_2}$ is a smallest left alternative for $e_{\tau}$. Additionally increasing on $e_{\alpha}$ is also possible: There can not be a right alternative which substitutes $e_{\alpha}$, but not $e_{\sigma}$ (note that $q_2$ also substitutes $e_{\sigma}$); there can not be a right alternative which substitutes $e_{\tau}$ because $q_2'$ is the largest right alternative and a left alternative which substitutes $e_{\alpha}$ would lead to a cheaper Steiner forest (together with $q_1'$ and $q_1$).

\begin{figure}[H] \centering \psset{unit=1cm}
\begin{pspicture}(-5,-2.3)(10,1.7) 
\Knoten{-5}{1.5}{s1}\nput{180}{s1}{$s_1$}
\Knoten{-4.5}{1}{v1}
\Knoten{-4}{0.5}{v2}
\Knoten{-5}{-1.5}{s2}\nput{180}{s2}{$s_2$}
\Knoten{-4.5}{-1}{v3}
\Knoten{-4}{-0.5}{v4}
\Knoten{-3.5}{0}{v5}
\Knoten{-2.7}{0}{n1}
\Knoten{-1.9}{0}{n2}
\Knoten{-1.1}{0}{v6}
\Knoten{-0.3}{0}{v7}
\Knoten{0.5}{0}{v8}
\Knoten{1.3}{0}{v9}
\Knoten{2.1}{0}{v10}
\Knoten{2.9}{0}{v11}
\Knoten{3.7}{0}{n3}
\Knoten{4.5}{0}{n4}
\Knoten{5.3}{0}{v12}
\Knoten{6.1}{0}{v13}
\Knoten{6.9}{0}{v14}
\Knoten{7.7}{0}{v15}
\Knoten{8.5}{0}{v16}
\Knoten{8.73}{0.23}{n5}
\Knoten{9.13}{0.63}{n6}
\Knoten{9.36}{0.86}{v17}
\Knoten{9.76}{1.26}{v18}
\Knoten{10}{1.5}{t1}\nput{0}{t1}{$t_1$}
\Knoten{9}{-0.5}{v19}
\Knoten{9.5}{-1}{v20}
\Knoten{10}{-1.5}{t2}\nput{0}{t2}{$t_2$}
\ncline{s1}{v1}
\Kante{v1}{v2}{\beta '}
\ncline{v2}{v5}
\ncline{s2}{v3}
\ncline{v4}{v5}
\ncline{v5}{n1}
\Kante{n1}{n2}{\tau}
\ncline{n2}{v6}
\Kante{v6}{v7}{\mu'}
\Kante{v3}{v4}{\bar{\mu}}
\ncline{v7}{v8}
\Kante{v8}{v9}{\sigma}
\ncline{v9}{v10}
\Kante{v10}{v11}{\bar{\nu}}
\ncline{v11}{n3}
\Kante{n3}{n4}{\gamma '}
\ncline{n4}{v12}
\Kante{v12}{v13}{\beta}
\ncline{v13}{v14}
\Kante{v14}{v15}{\alpha}
\ncline{v15}{v16}
\ncline{v16}{n5}
\Kante{n5}{n6}{e}
\ncline{n6}{v17}
\Kante{v17}{v18}{\gamma}
\ncline{v18}{t1}
\ncline{v16}{v19}
\Kante{v19}{v20}{\nu'}
\ncline{v20}{t2}

\Bogendashed{v1}{n4}{q_1'}{25}
\Bogendashed{v12}{v18}{q_1}{30}
\Bogendashed{v6}{v20}{q_2'}{-25}
\Bogendashed{v3}{v11}{\bar{q_2}}{-25}
\end{pspicture}
\caption{Remaining possibility for Subcase R.2 ($\gamma'<\bar{\nu}$ also possible)} \label{abb:R.2.R2}
\end{figure}
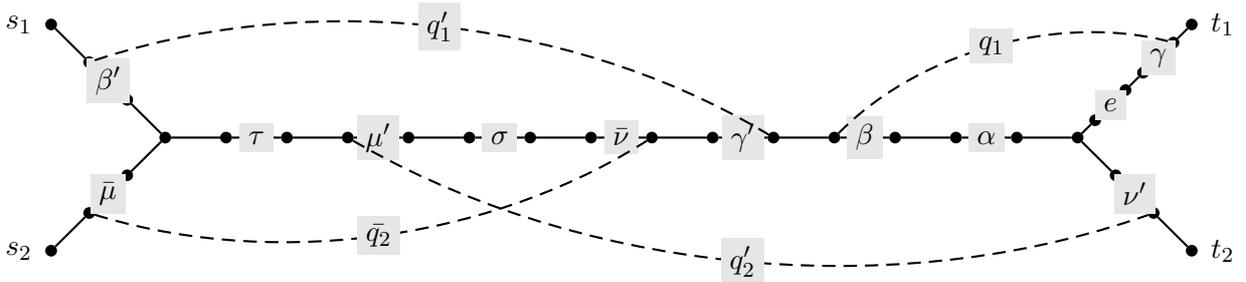

This completes the proof of Subcase R.2 since we showed that this case can not occur.  

\vspace{1em}
\hypertarget{r3}{\textbf{Subcase R.3}}

In Subcase R.3, we assume that $\ell_1+1 \leq \mu \leq \ell_1+\ell_2$ and $\nu\leq\ell_1+\ell_2+m$ (remember that $q_2$ is a tight left alternative which substitutes $e_{\alpha}$ and there are no tight right alternatives for $e_{\alpha}$), see Figure~\ref{case3bild}. 

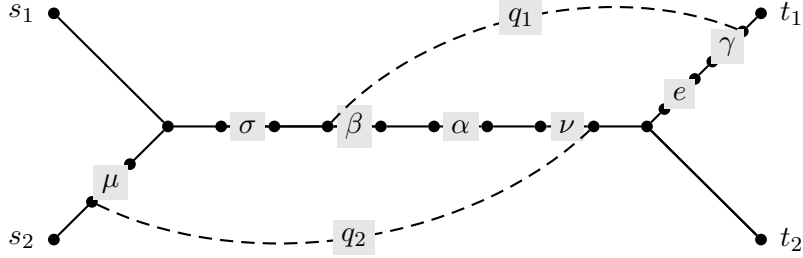
\begin{figure}[H] \centering \psset{unit=1cm}
\begin{pspicture}(-1.5,-1.7)(8.6,1.7) 
\Knoten{-1.1}{1.5}{v-1}\nput{180}{v-1}{$s_1$}
\Knoten{-1.1}{-1.5}{v0}\nput{180}{v0}{$s_2$}
\Knoten{0.4}{0}{v-3}
\Knoten{1.1}{0}{v-2}
\Knoten{1.8}{0}{v3}
\Knoten{2.5}{0}{v4}
\Knoten{3.2}{0}{v5}
\Knoten{3.9}{0}{v6}
\Knoten{4.6}{0}{v7}
\Knoten{5.3}{0}{v8}
\Knoten{6.0}{0}{n1}
\Knoten{6.7}{0}{n2}

\Knoten{6.93}{0.23}{n3}
\Knoten{7.33}{0.63}{n4}
\Knoten{8.2}{1.5}{v9}\nput{0}{v9}{$t_1$}
\Knoten{8.2}{-1.5}{v10}\nput{0}{v10}{$t_2$}
\Knoten{7.96}{1.26}{v1}
\Knoten{7.56}{0.86}{v11}
\Knoten{-0.6}{-1}{v13}
\Knoten{-0.1}{-0.5}{v14}

\ncline{-}{v-1}{v-3}
\ncline{-}{v0}{v-3}
\ncline{-}{v-3}{v4}
\Kante{v3}{v-2}{\sigma}
\ncline{-}{v4}{v5}
\ncline{v4}{v3}
\ncline{-}{v5}{v6}
\Kante{v7}{v6}{\alpha}
\Kante{v4}{v5}{\beta}
\ncline{-}{v7}{v8}
\Kante{v8}{n1}{\nu}
\ncline{n1}{n2}
\ncline{n2}{n3}
\Kante{n3}{n4}{e}
\ncline{n4}{v11}
\ncline{-}{n2}{v10}
\Kante{v11}{v1}{\gamma}
\Kante{v13}{v14}{\mu}
\ncline{-}{n2}{v10}
\ncline{-}{v1}{v9}
\Bogendashed{v1}{v4}{q_1}{-35}
\Bogendashed{v13}{n1}{q_2}{-35}
\end{pspicture}
\caption{Subcase R.3}
\label{case3bild}
\end{figure}
As Player 2 pays the edges $e_{\alpha+1},\ldots,e_{\ell_1+\ell_2+m}$ completely, we get a cheaper solution if $q_1$ substitutes all commonly used edges or Player 1 pays the edges $e_{\ell_1+\ell_2+1},\ldots,e_{\beta-1}$ completely. 
Thus let $e_{\sigma}$ be the largest of these edges that Player 1 does not pay completely. We now analyze why \texttt{CHANGE($\sigma,\alpha$)} is not feasible. 
The changes for Player 2 are feasible since there is no right alternative for $e_{\alpha}$.
Therefore, the changes of Player 1 must not be feasible. Thus there has to be a tight left alternative for Player 1 which substitutes $e_{\sigma}$, but not $e_{\alpha}$. Let $q_1'$, defined by $\beta'$ and $\gamma'$, be a smallest such one. If $\gamma'\leq\beta-1$, we get a cheaper Steiner forest by using $q_1',q_1$ and $q_2$ because Player 1 pays the edges $e_{\sigma+1},\ldots,e_{\beta-1}$ completely. 
Therefore $\beta \leq \gamma'\leq \alpha-1$ has to hold and $(P_2,P_1,q_2,q_1,q_1')$ is a PBC for $(u,v,w,x)=(t_2,s_2,t_1,s_1)$ (Subcase \hypertarget{pbc14}{PBC14}). \autoref{lemmarest} shows that this is also not possible, completing the whole proof of \autoref{lemma3}.
\end{proof}
\subsubsection{Analysis of constructed PBCs} \label{subsubsec:pbcs}
The rest of the paper analyzes the PBCs which we constructed in the proof of \autoref{lemma3}. 
Since we excluded the existence of BCs, these subgraphs cannot be BCs. We first define twelve types of PBCs which are no BCs (called NBCs, see~\autoref{def:NBC}) and show that each PBC has to be an NBC if we exclude the existence of BCs (see \autoref{lemma:pbc}). 
Then we derive properties for each type of NBC (see \autoref{lemmanbcanfang} - \autoref{lemmanbcende}) which are finally used to get contradictions to the properties of the PBCs constructed in \autoref{lemma3} (see \autoref{lemmapbc1} - \autoref{lemmarest}).

\vspace{1em}
Using the definition of a PBC, we first give an exact definition of Bad Configurations.
To this aim it is useful to consider the paths $q_1, q_2$ and $q_3$ as directed paths. For $q_1$, we choose the direction as follows: If we consider $P_u$ as directed from $u$ to $v$, the first node of this directed path which is contained in $q_1$ is the start node of $q_1$. 
Considering Figure \ref{PBC} this means that the start node of $q_1$ is the left node of the two endnodes of $q_1$. 
Therefore we refer to this direction as ``directed from left to right'' and write $\vec{q_1}$ for this directed version of $q_1$. The directed versions $\vec{q_2}$ and $\vec{q_3}$ of $q_2$ and $q_3$ are chosen analogously by considering $P_{\ell}$ as directed from $w$ to $x$. 

 Furthermore we need to subdivide the paths $q_1, q_2$ and $q_3$ in subpaths. In terms of notation we will always use $\alpha_i$ for subpaths of $q_1$, $\beta_i$ for subpaths of $q_2$ and $\gamma_i$ for $q_3$. The (directed) subpaths of $\vec{q_1}$, $\vec{q_2}$ and $\vec{q_3}$ are written as $\vec{\alpha_i}$, $\vec{\beta_i}$ and $\vec{\gamma_i}$.

\begin{definition}\label{formaldef:BC}
We call a PBC $(P_u, P_{\ell}, q_1,q_2,q_3)$ a \emph{BC$i$} for an $i \in \{1,2,3,4\}$, if the properties of BC$i$ (described below) are fulfilled. 
Note that some of the BC$i$s have different subtypes which are described and illustrated in the corresponding figures.

\newpage
	\textbf{BC1:}
	\begin{itemize}[itemsep=-0em,leftmargin=*]
		\item $q_1,q_2$ and $q_3$ are pairwise node-disjoint;
		\item $q_1,q_2$ and $q_3$ are internal node-disjoint with $P_u \cup P_{\ell}$.
	\end{itemize}
	\begin{figure}[H] \centering \psset{unit=1cm}
	\captionsetup[subfigure]{labelformat=empty}
\subfloat[BC1a ($q_1$ small)]{
 \begin{pspicture}(-0.5,-1.2)(7,1.2) 
\Knoten{0}{1}{v1}\nput{180}{v1}{$u$}
\Knoten{0}{-1}{v2}\nput{180}{v2}{$w$}
\Knoten{0.5}{-0.5}{n2}
\Knoten{1}{0}{v3}
\Knoten{2.125}{0}{n5}
\Knoten{3.25}{0}{v4}
\Knoten{4.375}{0}{v6}
\Knoten{5.5}{0}{v7}
\Knoten{6}{0.5}{n3}
\Knoten{6}{-0.5}{n4}
\Knoten{6.5}{1}{v10}\nput{0}{v10}{$v$}
\Knoten{6.5}{-1}{v11}\nput{0}{v11}{$x$}

\ncline{-}{v1}{v3}
\ncline{-}{n2}{v3}
\ncline{-}{v2}{n2}
\ncline{-}{v3}{v4}
\ncline{-}{v4}{n5}
\ncline{-}{v4}{v6}
\ncline{-}{v6}{v7}
\ncline{-}{v7}{n3}
\ncline{-}{n3}{v10}
\ncline{-}{v7}{n4}
\ncline{-}{n4}{v11}
\Bogen{n5}{n3}{q_1}{35}
\Bogen{n2}{v6}{q_2}{-35}
\Bogen{v4}{n4}{q_3}{-35}
\end{pspicture}
}
\hspace{1cm}
\subfloat[BC1b ($q_1$ big)]{
 \begin{pspicture}(0.5,-1.2)(7,1.2) 
\Knoten{1}{1}{v1}\nput{180}{v1}{$u$}
\Knoten{1.5}{0.5}{n1}
\Knoten{1}{-1}{v2}\nput{180}{v2}{$w$}
\Knoten{1.5}{-0.5}{n2}
\Knoten{2}{0}{v3}
\Knoten{3.25}{0}{v4}
\Knoten{4.375}{0}{v6}
\Knoten{5.5}{0}{v7}
\Knoten{6}{0.5}{n3}
\Knoten{6}{-0.5}{n4}
\Knoten{6.5}{1}{v10}\nput{0}{v10}{$v$}
\Knoten{6.5}{-1}{v11}\nput{0}{v11}{$x$}

\ncline{-}{v1}{v3}
\ncline{-}{n2}{v3}
\ncline{-}{v2}{n2}
\ncline{-}{v3}{v4}
\ncline{-}{v4}{v3}
\ncline{-}{v4}{v6}
\ncline{-}{v6}{v7}
\ncline{-}{v7}{n3}
\ncline{-}{n3}{v10}
\ncline{-}{v7}{n4}
\ncline{-}{n4}{v11}
\Bogen{n1}{n3}{q_1}{25}
\Bogen{n2}{v6}{q_2}{-35}
\Bogen{v4}{n4}{q_3}{-35}
\end{pspicture}
}
\caption{BC1a and b}
\end{figure}
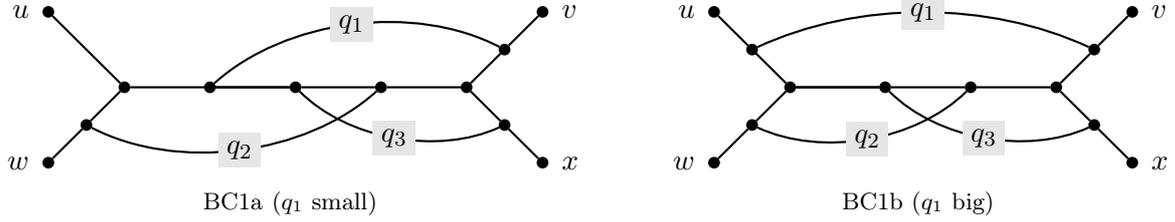

	\textbf{BC2:}
	\begin{itemize}[itemsep=-0em,leftmargin=*]
		\item $q_3$ is node-disjoint with $q_1$ and with $q_2$;
		\item $q_1$ and $q_2$ are not node-disjoint and $\alpha_2=\beta_2$, where $\alpha_2$ $(\beta_2)$ is the subpath of $\vec{q_1}$ $(\vec{q_2})$ from the first until the last node which is contained in $q_2$ $(q_1)$;
		\item $q_1,q_2$ and $q_3$ are internal node-disjoint with $P_u \cup P_{\ell}$.
	\end{itemize}
\begin{figure}[H] \centering \psset{unit=1cm}
\captionsetup[subfigure]{labelformat=empty}
\subfloat[BC2a ($q_1$ small, $\vec{\alpha_2}=\vec{\beta_2}$)]{
\begin{pspicture}(-0.5,-1)(7,1.3) 
\Knoten{0}{1}{v1}\nput{180}{v1}{$u$}
\Knoten{0}{-1}{v2}\nput{180}{v2}{$w$}
\Knoten{0.5}{-0.5}{n2}
\Knoten{1}{0}{v3}
\Knoten{2.125}{0}{n5}
\Knoten{3.25}{0}{v4}
\Knoten{4.375}{0}{v6}
\Knoten{5.5}{0}{v7}
\Knoten{6}{0.5}{n3}
\Knoten{6}{-0.5}{n4}
\Knoten{6.5}{1}{v10}\nput{0}{v10}{$v$}
\Knoten{6.5}{-1}{v11}\nput{0}{v11}{$x$}

\Knoten{3.25}{0.75}{n6}
\Knoten{4.375}{0.75}{n7}

\Kante{n6}{n7}{\alpha_2}
\ncline{n7}{n3}\ncput{\colorbox{almostwhite}{$\alpha_3$}}

\ncline{n5}{n6}\ncput{\colorbox{almostwhite}{$\alpha_1$}}
\ncline{n7}{v6}\ncput{\colorbox{almostwhite}{$\beta_3$}}

\ncline{-}{v1}{v3}
\ncline{-}{n2}{v3}
\ncline{-}{v2}{n2}
\ncline{-}{v3}{v4}
\ncline{-}{v4}{n5}
\ncline{-}{v4}{v6}
\ncline{-}{v6}{v7}
\ncline{-}{v7}{n3}
\ncline{-}{n3}{v10}
\ncline{-}{v7}{n4}
\ncline{-}{n4}{v11}

\pscurve(0.5,-0.5)(-0.5,1.1)(0.5,1.3)(3.25,0.75)\rput{0}(1,1.25){\colorbox{almostwhite}{$\beta_1$}}
\Bogen{v4}{n4}{q_3}{-35}
\end{pspicture}
}
\hspace{0.2cm}
\subfloat[BC2b ($q_1$ small, $\vec{\alpha_2}\neq\vec{\beta_2}$)]{
\begin{pspicture}(-0.5,-1)(7,1.3) 
\Knoten{0}{1}{v1}\nput{180}{v1}{$u$}
\Knoten{0}{-1}{v2}\nput{180}{v2}{$w$}
\Knoten{0.5}{-0.5}{n2}
\Knoten{1}{0}{v3}
\Knoten{2.125}{0}{n5}
\Knoten{3.25}{0}{v4}
\Knoten{4.375}{0}{v6}
\Knoten{5.5}{0}{v7}
\Knoten{6}{0.5}{n3}
\Knoten{6}{-0.5}{n4}
\Knoten{6.5}{1}{v10}\nput{0}{v10}{$v$}
\Knoten{6.5}{-1}{v11}\nput{0}{v11}{$x$}

\Knoten{3.25}{0.85}{n6}
\Knoten{4.375}{0.85}{n7}

\ncline{n5}{n6}\ncput{\colorbox{almostwhite}{$\alpha_1$}}
\Kante{n6}{n7}{\alpha_2}
\ncline{n7}{n3}\ncput{\colorbox{almostwhite}{$\alpha_3$}}
\ncline{n6}{v6}\ncput{\colorbox{almostwhite}{$\beta_3$}}

\ncline{-}{v1}{v3}
\ncline{-}{n2}{v3}
\ncline{-}{v2}{n2}
\ncline{-}{v3}{v4}
\ncline{-}{v4}{n5}
\ncline{-}{v4}{v6}
\ncline{-}{v6}{v7}
\ncline{-}{v7}{n3}
\ncline{-}{n3}{v10}
\ncline{-}{v7}{n4}
\ncline{-}{n4}{v11}

\pscurve(0.5,-0.5)(-0.5,1.1)(0.5,1.5)(4.375,0.85)\rput{0}(1,1.4){\colorbox{almostwhite}{$\beta_1$}}
\Bogen{v4}{n4}{q_3}{-35}
\end{pspicture} 
}
\end{figure}

\begin{figure}[H] \centering \psset{unit=1cm}
\captionsetup[subfigure]{labelformat=empty}
\subfloat[BC2c ($q_1$ big, $\vec{\alpha_2}=\vec{\beta_2}$)]{
\begin{pspicture}(0.5,-1)(7,1) 
\Knoten{1}{1}{v1}\nput{180}{v1}{$u$}
\Knoten{1.5}{0.5}{n1}
\Knoten{1}{-1}{v2}\nput{180}{v2}{$w$}
\Knoten{1.5}{-0.5}{n2}
\Knoten{2}{0}{v3}
\Knoten{3.25}{0}{v4}
\Knoten{4.375}{0}{v6}
\Knoten{5.5}{0}{v7}
\Knoten{6}{0.5}{n3}
\Knoten{6}{-0.5}{n4}
\Knoten{6.5}{1}{v10}\nput{0}{v10}{$v$}
\Knoten{6.5}{-1}{v11}\nput{0}{v11}{$x$}

\Knoten{3.25}{0.75}{n6}
\Knoten{4.375}{0.75}{n7}

\Kante{n6}{n7}{\alpha_2}
\ncline{n7}{n3}\ncput{\colorbox{almostwhite}{$\alpha_3$}}

\ncline{n1}{n6}\ncput{\colorbox{almostwhite}{$\alpha_1$}}
\ncline{n7}{v6}\ncput{\colorbox{almostwhite}{$\beta_3$}}

\ncline{-}{v1}{v3}
\ncline{-}{n2}{v3}
\ncline{-}{v2}{n2}
\ncline{-}{v3}{v4}
\ncline{-}{v4}{v6}
\ncline{-}{v6}{v7}
\ncline{-}{v7}{n3}
\ncline{-}{n3}{v10}
\ncline{-}{v7}{n4}
\ncline{-}{n4}{v11}

\pscurve(1.5,-0.5)(0.5,1.1)(3.25,0.75)\rput{0}(1.5,1.3){\colorbox{almostwhite}{$\beta_1$}}
\Bogen{v4}{n4}{q_3}{-35}
\end{pspicture}
}
\hspace{0.5cm}
\subfloat[BC2d ($q_1$ big, $\vec{\alpha_2}\neq\vec{\beta_2}$)]{
\begin{pspicture}(0.5,-1)(7,1) 
\Knoten{1}{1}{v1}\nput{180}{v1}{$u$}
\Knoten{1.5}{0.5}{n1}
\Knoten{1}{-1}{v2}\nput{180}{v2}{$w$}
\Knoten{1.5}{-0.5}{n2}
\Knoten{2}{0}{v3}
\Knoten{3.25}{0}{v4}
\Knoten{4.375}{0}{v6}
\Knoten{5.5}{0}{v7}
\Knoten{6}{0.5}{n3}
\Knoten{6}{-0.5}{n4}
\Knoten{6.5}{1}{v10}\nput{0}{v10}{$v$}
\Knoten{6.5}{-1}{v11}\nput{0}{v11}{$x$}

\Knoten{3.25}{0.85}{n6}
\Knoten{4.375}{0.85}{n7}

\ncline{n1}{n6}\ncput{\colorbox{almostwhite}{$\alpha_1$}}
\Kante{n6}{n7}{\alpha_2}
\ncline{n7}{n3}\ncput{\colorbox{almostwhite}{$\alpha_3$}}
\ncline{n6}{v6}\ncput{\colorbox{almostwhite}{$\beta_3$}}

\ncline{-}{v1}{v3}
\ncline{-}{n2}{v3}
\ncline{-}{v2}{n2}
\ncline{-}{v3}{v4}
\ncline{-}{v4}{v6}
\ncline{-}{v6}{v7}
\ncline{-}{v7}{n3}
\ncline{-}{n3}{v10}
\ncline{-}{v7}{n4}
\ncline{-}{n4}{v11}

\pscurve(1.5,-0.5)(0.5,1.1)(4.375,0.9)\rput{0}(1.5,1.5){\colorbox{almostwhite}{$\beta_1$}}
\Bogen{v4}{n4}{q_3}{-35}
\end{pspicture} 
}
\caption{BC2a, b, c and d}
\end{figure}
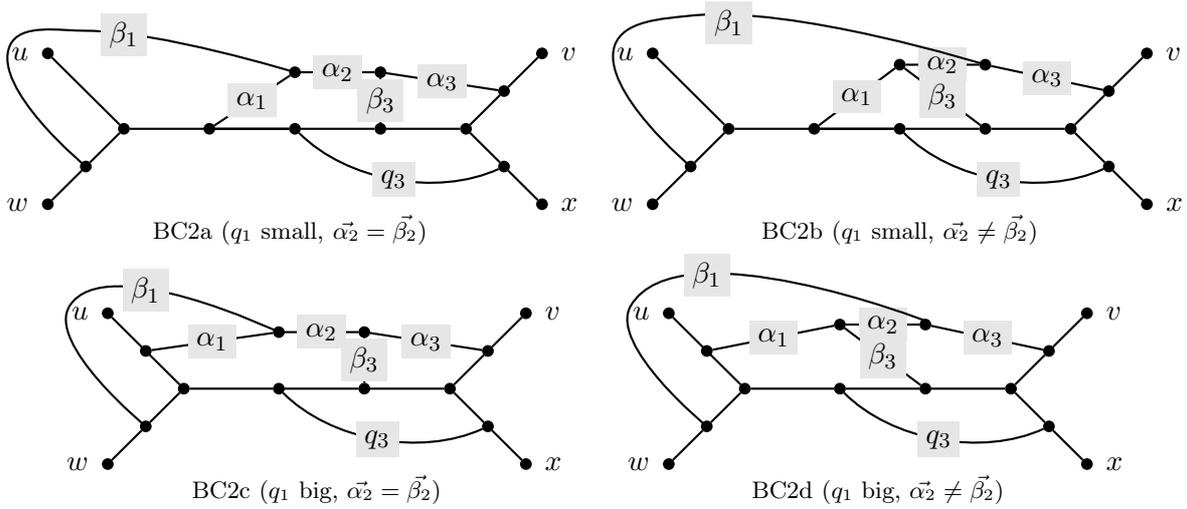

\textbf{BC3:}
\begin{itemize}[itemsep=-0em,leftmargin=*]
	\item $q_1$ small and substitutes all commonly used edges;
	\item $q_1,q_2$ and $q_3$ are pairwise node-disjoint;
	\item $q_2$ and $q_3$ do not contain nodes of $R_u$ or $L_u$;
	\item $q_1$ does not contain nodes of $R_{\ell}$;
	\item $q_1$ contains nodes of $L_{\ell}$, but not the start node of $\vec{q_2}$, and $\alpha_1 \subseteq C_2$, where $\alpha_1$ is the subpath of $\vec{q_1}$ beginning with the start node of $\vec{q_1}$ and ending with the last node which is contained in $L_{\ell}$. 
	\end{itemize}
	
	\begin{figure}[H] \centering \psset{unit=1.2cm}
	\begin{pspicture}(0.5,-1)(6.5,1) 
\Knoten{1.2}{0.8}{v1}\nput{180}{v1}{$u$}
\Knoten{1}{-1}{v2}\nput{180}{v2}{$w$}
\Knoten{1.3}{-0.7}{n2}
\Knoten{1.6}{-0.4}{n7}
\Knoten{2}{0}{v3}
\Knoten{3.25}{0}{v4}
\Knoten{4.375}{0}{v6}
\Knoten{5.5}{0}{v7}
\Knoten{6}{0.5}{n3}
\Knoten{6}{-0.5}{n4}
\Knoten{6.5}{1}{v10}\nput{0}{v10}{$v$}
\Knoten{6.5}{-1}{v11}\nput{0}{v11}{$x$}

\ncline{-}{v1}{v3}
\ncline{-}{n2}{v3}
\ncline{-}{v2}{n2}
\ncline{-}{v3}{v4}
\ncline{-}{v4}{v6}
\ncline{-}{v6}{v7}
\ncline{-}{v7}{n3}
\ncline{-}{n3}{v10}
\ncline{-}{v7}{n4}
\ncline{-}{n4}{v11}
\Kante{v3}{n7}{\alpha_1}
\pscurve(1.6,-0.4)(0.4,0.7)(6,0.5)\rput(2.5,1.3){\colorbox{almostwhite}{$\alpha_2$}}
\Bogen{n2}{v6}{q_2}{-35}
\Bogen{v4}{n4}{q_3}{-35}
\end{pspicture}
\caption{BC3} 
\end{figure}

%
%

\textbf{BC4:}
\begin{itemize}[itemsep=-0em,leftmargin=*]
  \item $q_1$ small and substitutes all commonly used edges;
	\item $q_3$ is node-disjoint with $q_1$ and with $q_2$;
	\item $q_1$ and $q_2$ are not node-disjoint and $\alpha_3=\beta_2$, where $\alpha_3$ $(\beta_2)$ is the subpath of $\vec{q_1}$ $(\vec{q_2})$ from the first until the last node which is contained in $q_2$ $(q_1)$;
	\item $q_2$ and $q_3$ do not contain nodes of $R_u$ or $L_u$;
	\item $q_1$ does not contain nodes of $R_{\ell}$;
	\item $q_1$ contains nodes of $L_{\ell}$, but not the start node of $\vec{q_2}$, and $\alpha_1 \subseteq C_2$, where $\alpha_1$ is the subpath of $\vec{q_1}$ beginning with the start node of $\vec{q_1}$ and ending with the last node which is contained in $L_{\ell}$.
\end{itemize}

\begin{figure}[H] \centering \psset{unit=1.1cm}\captionsetup[subfigure]{labelformat=empty}
\subfloat[BC4a ($\vec{\alpha_3}=\vec{\beta_2}$)]{
\begin{pspicture}(0.5,-1)(6.75,1.3) 
\Knoten{1.2}{0.8}{v1}\nput{180}{v1}{$u$}
\Knoten{1}{-1}{v2}\nput{180}{v2}{$w$}
\Knoten{1.25}{-0.75}{n9}
\Knoten{1.5}{-0.5}{n2}
\Knoten{2}{0}{v3}
\Knoten{3.25}{0}{v4}
\Knoten{4.375}{0}{v6}
\Knoten{5.5}{0}{v7}
\Knoten{6}{0.5}{n3}
\Knoten{6}{-0.5}{n4}
\Knoten{6.5}{1}{v10}\nput{0}{v10}{$v$}
\Knoten{6.5}{-1}{v11}\nput{0}{v11}{$x$}

\Knoten{3.25}{0.65}{n6}
\Knoten{4.375}{0.65}{n7}

\Kante{n6}{n7}{\alpha_3}
\ncline{n7}{n3}\ncput{\colorbox{almostwhite}{$\alpha_4$}}

\ncline{n7}{v6}\ncput{\colorbox{almostwhite}{$\beta_3$}}

\Kante{n2}{v3}{\alpha_1}
\ncline{-}{v1}{v3}
\ncline{-}{v2}{n9}
\ncline{-}{n9}{n2}
\ncline{-}{v3}{v4}
\ncline{-}{v4}{v6}
\ncline{-}{v6}{v7}
\ncline{-}{v7}{n3}
\ncline{-}{n3}{v10}
\ncline{-}{v7}{n4}
\ncline{-}{n4}{v11}

\pscurve(1.5,-0.5)(0.5,0.9)(3.25,0.65)\rput{0}(1,0.1){\colorbox{almostwhite}{$\alpha_2$}}
\pscurve(1.25,-0.75)(0,1.3)(3.25,0.65)\rput{0}(1,1.5){\colorbox{almostwhite}{$\beta_1$}}

\Bogen{v4}{n4}{q_3}{-35}
\end{pspicture}
}
\hspace{1cm}
\subfloat[BC4b ($\vec{\alpha_3}\neq\vec{\beta_2})$]{
\begin{pspicture}(0.25,-1)(6.5,1.3) 
\Knoten{1.2}{0.8}{v1}\nput{180}{v1}{$u$}
\Knoten{1}{-1}{v2}\nput{180}{v2}{$w$}
\Knoten{1.25}{-0.75}{n9}
\Knoten{1.5}{-0.5}{n2}
\Knoten{2}{0}{v3}
\Knoten{3.25}{0}{v4}
\Knoten{4.375}{0}{v6}
\Knoten{5.5}{0}{v7}
\Knoten{6}{0.5}{n3}
\Knoten{6}{-0.5}{n4}
\Knoten{6.5}{1}{v10}\nput{0}{v10}{$v$}
\Knoten{6.5}{-1}{v11}\nput{0}{v11}{$x$}

\Knoten{3.25}{0.65}{n6}
\Knoten{4.375}{0.65}{n7}

\Kante{n6}{n7}{\alpha_3}
\ncline{n7}{n3}\ncput{\colorbox{almostwhite}{$\alpha_4$}}

\ncline{n6}{v6}\ncput{\colorbox{almostwhite}{$\beta_3$}}

\Kante{n2}{v3}{\alpha_1}
\ncline{-}{v1}{v3}
\ncline{-}{v2}{n9}
\ncline{-}{n9}{n2}
\ncline{-}{v3}{v4}
\ncline{-}{v4}{v6}
\ncline{-}{v6}{v7}
\ncline{-}{v7}{n3}
\ncline{-}{n3}{v10}
\ncline{-}{v7}{n4}
\ncline{-}{n4}{v11}

\pscurve(1.5,-0.5)(0.5,0.9)(3.25,0.65)\rput{0}(1,0.1){\colorbox{almostwhite}{$\alpha_2$}}
\pscurve(1.25,-0.75)(0,1.3)(4.375,0.65)\rput{0}(1,1.6){\colorbox{almostwhite}{$\beta_1$}}

\Bogen{v4}{n4}{q_3}{-35}
\end{pspicture}
}
\caption{BC4a and b}
\end{figure}
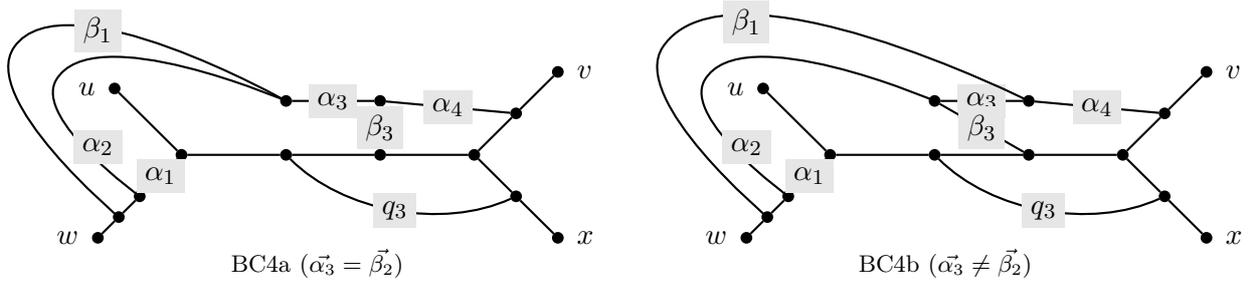
\end{definition}

In the following definition, we group PBCs which are no BCs into twelve ``types''.  

\begin{definition}\label{def:NBC}
We call a PBC $(P_u, P_{\ell}, q_1,q_2,q_3)$ a \emph{NBC$i$} for an $i \in \{1,2,\ldots,12\}$, if the properties of NBC$i$ (described below) are fulfilled. 
Note that some of the NBC$i$s have different subtypes, which are described and illustrated in the corresponding figures.
	
	\vspace{1em}
	\textbf{NBC1:} $q_1$ is small and $q_1$ and $q_3$ are not node-disjoint 

\vspace{0.5em}	
	\textbf{NBC2:} $q_2$ and $q_3$ are not node-disjoint
	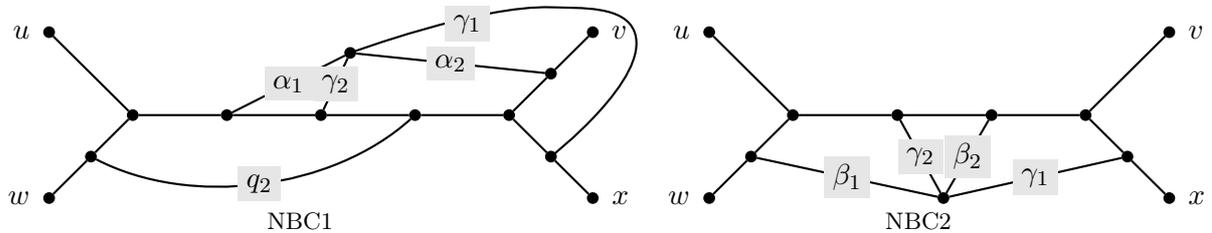
\begin{figure}[H] \centering \psset{unit=1.1cm}\captionsetup[subfigure]{labelformat=empty}
	\subfloat[NBC1]{
 \begin{pspicture}(-0.5,-1)(6.5,1.2) 
\Knoten{0}{1}{s1}\nput{180}{s1}{$u$}
\Knoten{0}{-1}{s2}\nput{180}{s2}{$w$}
\Knoten{0.5}{-0.5}{v1}
\Knoten{1}{0}{v2}
\Knoten{2.125}{0}{v3}
\Knoten{3.25}{0}{v4}
\Knoten{4.375}{0}{v5}
\Knoten{5.5}{0}{v6}
\Knoten{3.6}{0.75}{v8}
\Knoten{6}{0.5}{v9}
\Knoten{6.5}{1}{t1}\nput{0}{t1}{$v$}
\Knoten{6}{-0.5}{v10}
\Knoten{6.5}{-1}{t2}\nput{0}{t2}{$x$}
\ncline{s1}{v2}
\ncline{s2}{v1}
\ncline{v1}{v2}
\ncline{v2}{v3}
\ncline{v3}{v4}
\ncline{v4}{v5}
\ncline{v5}{v6}
\ncline{v6}{v9}
\ncline{v6}{v10}
\ncline{v9}{t1}
\ncline{v10}{t2}
\Bogen{v1}{v5}{q_2}{-35}
\pscurve(3.6,0.75)(6,1.3)(7,1)(6,-0.5)\rput(5,1.1){\colorbox{almostwhite}{$\gamma_1$}}
\Kante{v8}{v9}{\alpha_2}
\Kante{v3}{v8}{\alpha_1}
\Kante{v4}{v8}{\gamma_2}
\end{pspicture}
}
\hspace{0.6cm}
\subfloat[NBC2]{
\begin{pspicture}(0.5,-1)(6.5,1.2) 
\Knoten{1}{1}{s1}\nput{180}{s1}{$u$}
\Knoten{1}{-1}{s2}\nput{180}{s2}{$w$}
\Knoten{1.5}{-0.5}{v1}
\Knoten{2}{0}{v2}
\Knoten{3.25}{0}{v4}
\Knoten{4.375}{0}{v5}
\Knoten{5.5}{0}{v6}
\Knoten{3.8}{-1}{v7}
\Knoten{6.5}{1}{t1}\nput{0}{t1}{$v$}
\Knoten{6}{-0.5}{v10}
\Knoten{6.5}{-1}{t2}\nput{0}{t2}{$x$}
\ncline{s1}{v2}
\ncline{s2}{v1}
\ncline{v1}{v2}
\ncline{v2}{v4}
\ncline{v4}{v5}
\ncline{v5}{v6}
\ncline{v6}{v10}
\ncline{v6}{t1}

\ncline{v10}{t2}
\Kante{v1}{v7}{\beta_1}
\Kante{v10}{v7}{\gamma_1}
\Kante{v7}{v4}{\gamma_2}
\Kante{v7}{v5}{\beta_2}
%
\end{pspicture}
}
\caption{NBC1 and 2} \label{NBC1_2}
\end{figure}

\newpage
\textbf{NBC3:} $q_2$ contains a node of $L_u$ or $R_u$
\begin{figure}[H] \centering \psset{unit=1cm}\captionsetup[subfigure]{labelformat=empty}
\subfloat[NBC3a ($q_1$ small, $q_2$ contains a node of $L_u$)]{
 \begin{pspicture}(-0.5,-1.3)(7.5,1) 
\Knoten{0}{1}{v1}\nput{180}{v1}{$u$}
\Knoten{0.5}{0.5}{n1}
\Knoten{0}{-1}{v2}\nput{180}{v2}{$w$}
\Knoten{0.5}{-0.5}{n2}
\Knoten{1}{0}{v3}
\Knoten{2.125}{0}{n5}
\Knoten{3.25}{0}{v4}
\Knoten{4.375}{0}{v6}
\Knoten{5.5}{0}{v7}
\Knoten{6}{0.5}{n3}
\Knoten{6}{-0.5}{n4}
\Knoten{6.5}{1}{v10}\nput{0}{v10}{$v$}
\Knoten{6.5}{-1}{v11}\nput{0}{v11}{$x$}

\ncline{-}{v1}{v3}
\ncline{-}{n2}{v3}
\ncline{-}{v2}{n2}
\ncline{-}{v3}{v4}
\ncline{-}{v4}{n5}
\ncline{-}{v4}{v6}
\ncline{-}{v6}{v7}
\ncline{-}{v7}{n3}
\ncline{-}{n3}{v10}
\ncline{-}{v7}{n4}
\ncline{-}{n4}{v11}
\Bogen{n5}{n3}{q_1}{35}
\Bogen{n1}{v6}{\beta_2}{35}
\Kante{n2}{n1}{\beta_1}
\Bogen{v4}{n4}{q_3}{-35}
\end{pspicture}\label{NBC3a}
}
\subfloat[NBC3b ($q_1$ small, $q_2$ contains a node of $R_u$ which is not substituted by $q_1$)]{
 \begin{pspicture}(-0.5,-1.3)(7,1) 
\Knoten{0}{1}{v1}\nput{180}{v1}{$u$}
\Knoten{0}{-1}{v2}\nput{180}{v2}{$w$}
\Knoten{0.5}{-0.5}{n2}
\Knoten{1}{0}{v3}
\Knoten{2.125}{0}{n5}
\Knoten{3.25}{0}{v4}
\Knoten{4.375}{0}{v6}
\Knoten{5.5}{0}{v7}
\Knoten{6}{0.5}{n3}
\Knoten{6.25}{0.75}{n6}
\Knoten{6}{-0.5}{n4}
\Knoten{6.5}{1}{v10}\nput{0}{v10}{$v$}
\Knoten{6.3}{-0.8}{v11}\nput{0}{v11}{$x$}

\ncline{-}{v1}{v3}
\ncline{-}{n2}{v3}
\ncline{-}{v2}{n2}
\ncline{-}{v3}{v4}
\ncline{-}{v4}{n5}
\ncline{-}{v4}{v6}
\ncline{-}{v6}{v7}
\ncline{-}{v7}{n3}
\ncline{-}{n3}{v10}
\ncline{-}{v7}{n4}
\ncline{-}{n4}{v11}
\Bogen{n5}{n3}{q_1}{35}
\Bogen{v6}{n6}{\beta_2}{25}
\Bogen{v4}{n4}{q_3}{-35}
\pscurve(0.5,-0.5)(7.2,-0.5)(6.25,0.75)\rput(1.4,-0.9){\colorbox{almostwhite}{$\beta_1$}}
\end{pspicture}\label{NBC3b}
}

\subfloat[NBC3c ($q_1$ small, $q_2$ contains a node of $R_u$ which is substituted by $q_1$)]{
 \begin{pspicture}(-1.5,-1.5)(8.5,1) 
\Knoten{0}{1}{v1}\nput{180}{v1}{$u$}
\Knoten{0}{-1}{v2}\nput{180}{v2}{$w$}
\Knoten{0.5}{-0.5}{n2}
\Knoten{1}{0}{v3}
\Knoten{2.125}{0}{n5}
\Knoten{3.25}{0}{v4}
\Knoten{4.375}{0}{v6}
\Knoten{5.5}{0}{v7}
\Knoten{6}{0.5}{n3}
\Knoten{6.25}{0.75}{n6}
\Knoten{6}{-0.5}{n4}
\Knoten{6.5}{1}{v10}\nput{0}{v10}{$v$}
\Knoten{6.3}{-0.8}{v11}\nput{0}{v11}{$x$}

\ncline{-}{v1}{v3}
\ncline{-}{n2}{v3}
\ncline{-}{v2}{n2}
\ncline{-}{v3}{v4}
\ncline{-}{v4}{n5}
\ncline{-}{v4}{v6}
\ncline{-}{v6}{v7}
\ncline{-}{v7}{n3}
\ncline{-}{n3}{v10}
\ncline{-}{v7}{n4}
\ncline{-}{n4}{v11}
\Bogen{n5}{n6}{q_1}{35}
\Bogen{v6}{n3}{\beta_2}{25}
\Bogen{v4}{n4}{q_3}{-35}
\pscurve(0.5,-0.5)(7.2,-0.5)(6,0.5)\rput(1.4,-0.9){\colorbox{almostwhite}{$\beta_1$}}
\end{pspicture}\label{NBC3c}
}
\end{figure}

\begin{figure}[H] \centering \psset{unit=1cm}\captionsetup[subfigure]{labelformat=empty}
\subfloat[NBC3d ($q_1$ big, $q_2$ contains a node of $L_u$ which is not substituted by $q_1$)]{
 \begin{pspicture}(0.5,-1.2)(7,1.5) 
\Knoten{1}{1}{v1}\nput{180}{v1}{$u$}
\Knoten{1.5}{0.5}{n1}
\Knoten{1.25}{0.75}{n6}
\Knoten{1}{-1}{v2}\nput{180}{v2}{$w$}
\Knoten{1.5}{-0.5}{n2}
\Knoten{2}{0}{v3}
\Knoten{3.25}{0}{v4}
\Knoten{4.375}{0}{v6}
\Knoten{5.5}{0}{v7}
\Knoten{6}{0.5}{n3}
\Knoten{6}{-0.5}{n4}
\Knoten{6.5}{1}{v10}\nput{0}{v10}{$v$}
\Knoten{6.5}{-1}{v11}\nput{0}{v11}{$x$}

\ncline{-}{v1}{v3}
\ncline{-}{n2}{v3}
\ncline{-}{v2}{n2}
\ncline{-}{v3}{v4}
\ncline{-}{v4}{v3}
\ncline{-}{v4}{v6}
\ncline{-}{v6}{v7}
\ncline{-}{v7}{n3}
\ncline{-}{n3}{v10}
\ncline{-}{v7}{n4}
\ncline{-}{n4}{v11}
\Bogen{n1}{n3}{q_1}{25}
\Bogen{n6}{v6}{\beta_2}{25}
\Bogen{v4}{n4}{q_3}{-35}
\Kante{n2}{n6}{\beta_1}
\end{pspicture}\label{NBC3d}
}
\hspace{1cm}
\subfloat[NBC3e ($q_1$ big, $q_2$ contains a node of $L_u$ which is substituted by $q_1$)]{
 \begin{pspicture}(0.5,-1.2)(7,1.5) 
\Knoten{1}{1}{v1}\nput{180}{v1}{$u$}
\Knoten{1.5}{0.5}{n1}
\Knoten{1.25}{0.75}{n6}
\Knoten{1}{-1}{v2}\nput{180}{v2}{$w$}
\Knoten{1.5}{-0.5}{n2}
\Knoten{2}{0}{v3}
\Knoten{3.25}{0}{v4}
\Knoten{4.375}{0}{v6}
\Knoten{5.5}{0}{v7}
\Knoten{6}{0.5}{n3}
\Knoten{6}{-0.5}{n4}
\Knoten{6.5}{1}{v10}\nput{0}{v10}{$v$}
\Knoten{6.5}{-1}{v11}\nput{0}{v11}{$x$}

\ncline{-}{v1}{v3}
\ncline{-}{n2}{v3}
\ncline{-}{v2}{n2}
\ncline{-}{v3}{v4}
\ncline{-}{v4}{v3}
\ncline{-}{v4}{v6}
\ncline{-}{v6}{v7}
\ncline{-}{v7}{n3}
\ncline{-}{n3}{v10}
\ncline{-}{v7}{n4}
\ncline{-}{n4}{v11}
\Bogen{n6}{n3}{q_1}{25}
\Bogen{n1}{v6}{\beta_2}{25}
\Bogen{v4}{n4}{q_3}{-35}
\Kante{n2}{n1}{\beta_1}
\end{pspicture}\label{NBC3e}
}
\end{figure}

\begin{figure}[H] \centering \psset{unit=1cm}
\captionsetup[subfigure]{labelformat=empty}
\subfloat[NBC3f ($q_1$ big, $q_2$ contains a node of $R_u$ which is not substituted by $q_1$)]{
 \begin{pspicture}(0.5,-1.7)(7,1.7) 
\Knoten{1}{1}{v1}\nput{180}{v1}{$u$}
\Knoten{1.5}{0.5}{n1}
\Knoten{1}{-1}{v2}\nput{180}{v2}{$w$}
\Knoten{1.5}{-0.5}{n2}
\Knoten{2}{0}{v3}
\Knoten{3.25}{0}{v4}
\Knoten{4.375}{0}{v6}
\Knoten{5.5}{0}{v7}
\Knoten{6}{0.5}{n3}
\Knoten{6}{-0.5}{n4}
\Knoten{6.25}{0.75}{n6}
\Knoten{6.5}{1}{v10}\nput{0}{v10}{$v$}
\Knoten{6.3}{-0.8}{v11}\nput{0}{v11}{$x$}

\ncline{-}{v1}{v3}
\ncline{-}{n2}{v3}
\ncline{-}{v2}{n2}
\ncline{-}{v3}{v4}
\ncline{-}{v4}{v3}
\ncline{-}{v4}{v6}
\ncline{-}{v6}{v7}
\ncline{-}{v7}{n3}
\ncline{-}{n3}{v10}
\ncline{-}{v7}{n4}
\ncline{-}{n4}{v11}
\Bogen{n1}{n3}{q_1}{25}
\Bogen{v4}{n4}{q_3}{-35}
\pscurve(1.5,-0.5)(7.2,-0.5)(6.25,0.75)\rput(2.4,-0.8){\colorbox{almostwhite}{$\beta_1$}}
\Bogen{v6}{n6}{\beta_2}{25}
\end{pspicture}\label{NBC3f}
}
\hspace{1cm}
\subfloat[NBC3g ($q_1$ big, $q_2$ contains a node of $R_u$ which is substituted by $q_1$)]{
 \begin{pspicture}(0.5,-1.7)(7,1.7) 
\Knoten{1}{1}{v1}\nput{180}{v1}{$u$}
\Knoten{1.5}{0.5}{n1}
\Knoten{1}{-1}{v2}\nput{180}{v2}{$w$}
\Knoten{1.5}{-0.5}{n2}
\Knoten{2}{0}{v3}
\Knoten{3.25}{0}{v4}
\Knoten{4.375}{0}{v6}
\Knoten{5.5}{0}{v7}
\Knoten{6}{0.5}{n3}
\Knoten{6}{-0.5}{n4}
\Knoten{6.25}{0.75}{n6}
\Knoten{6.5}{1}{v10}\nput{0}{v10}{$v$}
\Knoten{6.3}{-0.8}{v11}\nput{0}{v11}{$x$}

\ncline{-}{v1}{v3}
\ncline{-}{n2}{v3}
\ncline{-}{v2}{n2}
\ncline{-}{v3}{v4}
\ncline{-}{v4}{v3}
\ncline{-}{v4}{v6}
\ncline{-}{v6}{v7}
\ncline{-}{v7}{n3}
\ncline{-}{n3}{v10}
\ncline{-}{v7}{n4}
\ncline{-}{n4}{v11}
\Bogen{n1}{n6}{q_1}{25}
\Bogen{v4}{n4}{q_3}{-35}
\pscurve(1.5,-0.5)(7.2,-0.5)(6,0.5)\rput(2.4,-0.8){\colorbox{almostwhite}{$\beta_1$}}
\Bogen{v6}{n3}{\beta_2}{25}
\end{pspicture}\label{NBC3g}
}
\caption{NBC3a, b, c, d, e, f and g}
\label{NBC3}
\end{figure}

\textbf{NBC4:} $q_3$ contains a node of $L_u$ or $R_u$ 
\begin{figure}[H] \centering \psset{unit=1cm}\captionsetup[subfigure]{labelformat=empty}
\subfloat[NBC4a ($q_1$ small, $q_3$ contains a node of $L_u$)]{
 \begin{pspicture}(-0.5,-1.4)(7,1) 
\Knoten{0}{1}{v1}\nput{180}{v1}{$u$}
\Knoten{0.5}{0.5}{n1}
\Knoten{0.2}{-0.8}{v2}\nput{180}{v2}{$w$}
\Knoten{0.5}{-0.5}{n2}
\Knoten{1}{0}{v3}
\Knoten{2.125}{0}{n5}
\Knoten{3.25}{0}{v4}
\Knoten{4.375}{0}{v6}
\Knoten{5.5}{0}{v7}
\Knoten{6}{0.5}{n3}
\Knoten{6}{-0.5}{n4}
\Knoten{6.5}{1}{v10}\nput{0}{v10}{$v$}
\Knoten{6.5}{-1}{v11}\nput{0}{v11}{$x$}

\ncline{-}{v1}{v3}
\ncline{-}{n2}{v3}
\ncline{-}{v2}{n2}
\ncline{-}{v3}{v4}
\ncline{-}{v4}{n5}
\ncline{-}{v4}{v6}
\ncline{-}{v6}{v7}
\ncline{-}{v7}{n3}
\ncline{-}{n3}{v10}
\ncline{-}{v7}{n4}
\ncline{-}{n4}{v11}
\Bogen{n5}{n3}{q_1}{35}
\Bogen{n2}{v6}{q_2}{-35}
\Bogen{n1}{v4}{\gamma_2}{25}
\pscurve(6,-0.5)(-0.8,-0.6)(0.5,0.5)\rput(4.375,-1){\colorbox{almostwhite}{$\gamma_1$}}
\end{pspicture}\label{NBC4a}
}
\hspace{0.2cm}
\subfloat[NBC4b ($q_1$ small, $q_3$ contains a node of $R_u$ which is not substituted by $q_1$)]{
 \begin{pspicture}(-0.5,-1.4)(7,1) 
\Knoten{0}{1}{v1}\nput{180}{v1}{$u$}
\Knoten{0}{-1}{v2}\nput{180}{v2}{$w$}
\Knoten{0.5}{-0.5}{n2}
\Knoten{1}{0}{v3}
\Knoten{2.125}{0}{n5}
\Knoten{3.25}{0}{v4}
\Knoten{4.375}{0}{v6}
\Knoten{5.5}{0}{v7}
\Knoten{6}{0.5}{n3}
\Knoten{6.25}{0.75}{n6}
\Knoten{6}{-0.5}{n4}
\Knoten{6.5}{1}{v10}\nput{0}{v10}{$v$}
\Knoten{6.5}{-1}{v11}\nput{0}{v11}{$x$}

\ncline{-}{v1}{v3}
\ncline{-}{n2}{v3}
\ncline{-}{v2}{n2}
\ncline{-}{v3}{v4}
\ncline{-}{v4}{n5}
\ncline{-}{v4}{v6}
\ncline{-}{v6}{v7}
\ncline{-}{v7}{n3}
\ncline{-}{n3}{v10}
\ncline{-}{v7}{n4}
\ncline{-}{n4}{v11}
\Bogen{n2}{v6}{q_2}{-35}
\Bogen{n5}{n3}{q_1}{35}
\Bogen{v4}{n6}{\gamma_2}{20}
\Kante{n4}{n6}{\gamma_1}
\end{pspicture}\label{NBC4b}
}
\end{figure}

\begin{figure}[H] \centering \psset{unit=1cm}\captionsetup[subfigure]{labelformat=empty}
\subfloat[NBC4c ($q_1$ small, $q_3$ contains a node of $R_u$ which is substituted by $q_1$)]{
 \begin{pspicture}(-1.5,-1)(8.2,1) 
\Knoten{0}{1}{v1}\nput{180}{v1}{$u$}
\Knoten{0}{-1}{v2}\nput{180}{v2}{$w$}
\Knoten{0.5}{-0.5}{n2}
\Knoten{1}{0}{v3}
\Knoten{2.125}{0}{n5}
\Knoten{3.25}{0}{v4}
\Knoten{4.375}{0}{v6}
\Knoten{5.5}{0}{v7}
\Knoten{6}{0.5}{n3}
\Knoten{6.25}{0.75}{n6}
\Knoten{6}{-0.5}{n4}
\Knoten{6.5}{1}{v10}\nput{0}{v10}{$v$}
\Knoten{6.5}{-1}{v11}\nput{0}{v11}{$x$}

\ncline{-}{v1}{v3}
\ncline{-}{n2}{v3}
\ncline{-}{v2}{n2}
\ncline{-}{v3}{v4}
\ncline{-}{v4}{n5}
\ncline{-}{v4}{v6}
\ncline{-}{v6}{v7}
\ncline{-}{v7}{n3}
\ncline{-}{n3}{v10}
\ncline{-}{v7}{n4}
\ncline{-}{n4}{v11}
\Bogen{n2}{v6}{q_2}{-35}
\Bogen{n5}{n6}{q_1}{35}
\Bogen{v4}{n3}{\gamma_2}{20}
\Kante{n4}{n3}{\gamma_1}
\end{pspicture}\label{NBC4c}
}
\end{figure}

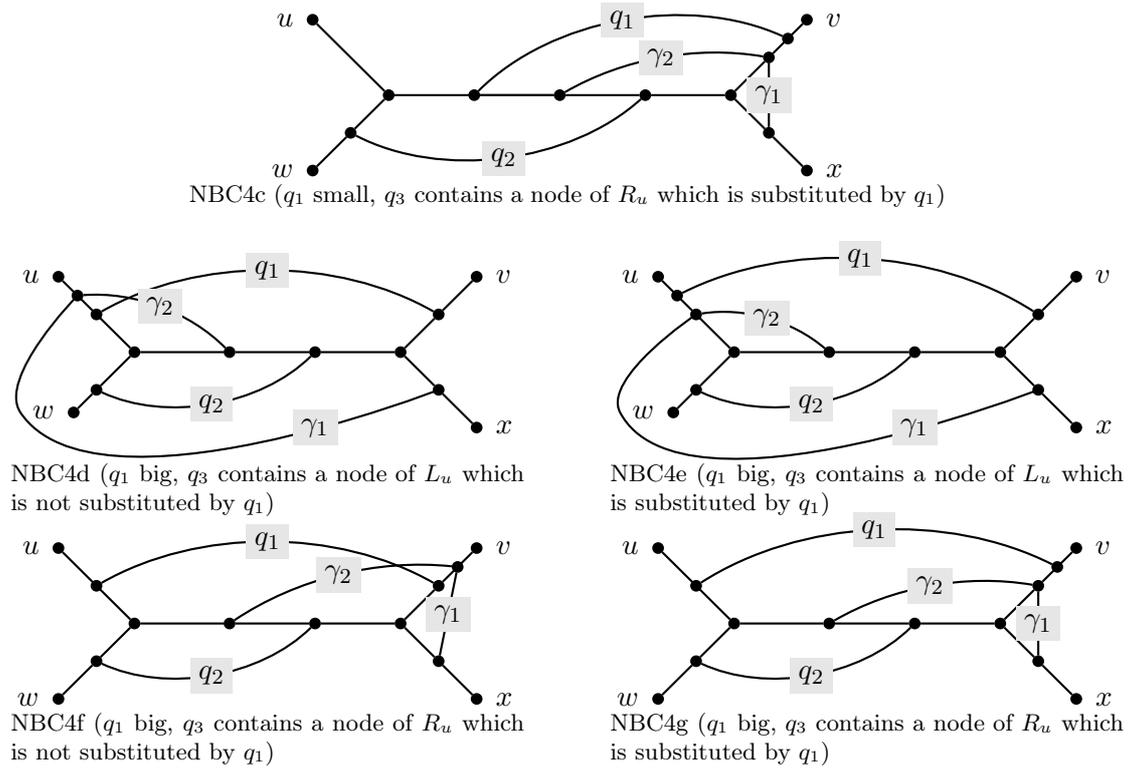
\begin{figure}[H] \centering \psset{unit=1cm}\captionsetup[subfigure]{labelformat=empty}
\subfloat[NBC4d ($q_1$ big, $q_3$ contains a node of $L_u$  which is not substituted by $q_1$)]{
 \begin{pspicture}(0.5,-1.3)(7,1) 
\Knoten{1}{1}{v1}\nput{180}{v1}{$u$}
\Knoten{1.5}{0.5}{n1}
\Knoten{1.25}{0.75}{n6}
\Knoten{1.2}{-0.8}{v2}\nput{180}{v2}{$w$}
\Knoten{1.5}{-0.5}{n2}
\Knoten{2}{0}{v3}
\Knoten{3.25}{0}{v4}
\Knoten{4.375}{0}{v6}
\Knoten{5.5}{0}{v7}
\Knoten{6}{0.5}{n3}
\Knoten{6}{-0.5}{n4}
\Knoten{6.5}{1}{v10}\nput{0}{v10}{$v$}
\Knoten{6.5}{-1}{v11}\nput{0}{v11}{$x$}

\ncline{-}{v1}{v3}
\ncline{-}{n2}{v3}
\ncline{-}{v2}{n2}
\ncline{-}{v3}{v4}
\ncline{-}{v4}{v6}
\ncline{-}{v6}{v7}
\ncline{-}{v7}{n3}
\ncline{-}{n3}{v10}
\ncline{-}{v7}{n4}
\ncline{-}{n4}{v11}
\Bogen{n1}{n3}{q_1}{30}
\Bogen{n2}{v6}{q_2}{-35}
\Bogen{n6}{v4}{\gamma_2}{25}
\pscurve(6,-0.5)(0.5,-0.8)(1.25,0.75)\rput(4.375,-1){\colorbox{almostwhite}{$\gamma_1$}}
\end{pspicture}\label{NBC4d}
}
\hspace{1cm}
\subfloat[NBC4e ($q_1$ big, $q_3$ contains a node of $L_u$ which is substituted by $q_1$)]{
 \begin{pspicture}(0.5,-1.3)(7,1) 
\Knoten{1}{1}{v1}\nput{180}{v1}{$u$}
\Knoten{1.5}{0.5}{n1}
\Knoten{1.25}{0.75}{n6}
\Knoten{1.2}{-0.8}{v2}\nput{180}{v2}{$w$}
\Knoten{1.5}{-0.5}{n2}
\Knoten{2}{0}{v3}
\Knoten{3.25}{0}{v4}
\Knoten{4.375}{0}{v6}
\Knoten{5.5}{0}{v7}
\Knoten{6}{0.5}{n3}
\Knoten{6}{-0.5}{n4}
\Knoten{6.5}{1}{v10}\nput{0}{v10}{$v$}
\Knoten{6.5}{-1}{v11}\nput{0}{v11}{$x$}

\ncline{-}{v1}{v3}
\ncline{-}{n2}{v3}
\ncline{-}{v2}{n2}
\ncline{-}{v3}{v4}
\ncline{-}{v4}{v6}
\ncline{-}{v6}{v7}
\ncline{-}{v7}{n3}
\ncline{-}{n3}{v10}
\ncline{-}{v7}{n4}
\ncline{-}{n4}{v11}
\Bogen{n6}{n3}{q_1}{30}
\Bogen{n2}{v6}{q_2}{-35}
\Bogen{n1}{v4}{\gamma_2}{25}
\pscurve(6,-0.5)(0.5,-0.8)(1.5,0.5)\rput(4.375,-1){\colorbox{almostwhite}{$\gamma_1$}}
\end{pspicture}\label{NBC4e}
}

\subfloat[NBC4f ($q_1$ big, $q_3$ contains a node of $R_u$ which is not substituted by $q_1$)]{
 \begin{pspicture}(0.5,-1)(7,1) 
\Knoten{1}{1}{v1}\nput{180}{v1}{$u$}
\Knoten{1.5}{0.5}{n1}
\Knoten{1}{-1}{v2}\nput{180}{v2}{$w$}
\Knoten{1.5}{-0.5}{n2}
\Knoten{2}{0}{v3}
\Knoten{3.25}{0}{v4}
\Knoten{4.375}{0}{v6}
\Knoten{5.5}{0}{v7}
\Knoten{6}{0.5}{n3}
\Knoten{6.25}{0.75}{n6}
\Knoten{6}{-0.5}{n4}
\Knoten{6.5}{1}{v10}\nput{0}{v10}{$v$}
\Knoten{6.5}{-1}{v11}\nput{0}{v11}{$x$}

\ncline{-}{v1}{v3}
\ncline{-}{n2}{v3}
\ncline{-}{v2}{n2}
\ncline{-}{v3}{v4}
\ncline{-}{v4}{v6}
\ncline{-}{v6}{v7}
\ncline{-}{v7}{n3}
\ncline{-}{n3}{v10}
\ncline{-}{v7}{n4}
\ncline{-}{n4}{v11}
\Bogen{n2}{v6}{q_2}{-35}
\Bogen{n1}{n3}{q_1}{30}
\Bogen{v4}{n6}{\gamma_2}{20}
\Kante{n4}{n6}{\gamma_1}
\end{pspicture}\label{NBC4f}
}
\hspace{1cm}
\subfloat[NBC4g ($q_1$ big, $q_3$ contains a node of $R_u$ which is substituted by $q_1$)]{
 \begin{pspicture}(0.5,-1)(7,1) 
\Knoten{1}{1}{v1}\nput{180}{v1}{$u$}
\Knoten{1.5}{0.5}{n1}
\Knoten{1}{-1}{v2}\nput{180}{v2}{$w$}
\Knoten{1.5}{-0.5}{n2}
\Knoten{2}{0}{v3}
\Knoten{3.25}{0}{v4}
\Knoten{4.375}{0}{v6}
\Knoten{5.5}{0}{v7}
\Knoten{6}{0.5}{n3}
\Knoten{6.25}{0.75}{n6}
\Knoten{6}{-0.5}{n4}
\Knoten{6.5}{1}{v10}\nput{0}{v10}{$v$}
\Knoten{6.5}{-1}{v11}\nput{0}{v11}{$x$}

\ncline{-}{v1}{v3}
\ncline{-}{n2}{v3}
\ncline{-}{v2}{n2}
\ncline{-}{v3}{v4}
\ncline{-}{v4}{v6}
\ncline{-}{v6}{v7}
\ncline{-}{v7}{n3}
\ncline{-}{n3}{v10}
\ncline{-}{v7}{n4}
\ncline{-}{n4}{v11}
\Bogen{n2}{v6}{q_2}{-35}
\Bogen{n1}{n6}{q_1}{30}
\Bogen{v4}{n3}{\gamma_2}{20}
\Kante{n4}{n3}{\gamma_1}
\end{pspicture}\label{NBC4g}
}
\caption{NBC4a, b, c, d, e, f and g}
\label{NBC4}
\end{figure}
\textbf{NBC5:} $q_1$ contains a node of $L_{\ell}$ and a node of $R_{\ell}$
\begin{figure}[H] \centering \psset{unit=1.1cm}\captionsetup[subfigure]{labelformat=empty}
\subfloat[NBC5a ($q_1$ small, $\vec{q_1}$ contains first a node of $L_{\ell}$)]{
\begin{pspicture}(1.5,-1)(7.5,1) 
\Knoten{2}{1}{s1}\nput{180}{s1}{$u$}
\Knoten{2}{-1}{s2}\nput{180}{s2}{$w$}
\Knoten{2.5}{-0.5}{v1}
\Knoten{3}{0}{v2}
\Knoten{4.25}{0}{v3}

\Knoten{5.5}{0}{v6}
\Knoten{6}{0.5}{v9}
\Knoten{6.5}{1}{t1}\nput{0}{t1}{$v$}
\Knoten{6}{-0.5}{v10}
\Knoten{6.5}{-1}{t2}\nput{0}{t2}{$x$}
\ncline{s1}{v2}
\ncline{s2}{v1}
\ncline{v1}{v2}
\ncline{v2}{v3}
\ncline{v3}{v6}
\ncline{v6}{v9}
\ncline{v6}{v10}
\ncline{v9}{t1}
\ncline{v10}{t2}
\Kante{v1}{v3}{\alpha_1}
\Kante{v10}{v9}{\alpha_3}
\Bogen{v1}{v10}{\alpha_2}{-15}
\end{pspicture}
	\label{NBC5a}
	}
	\hspace{1cm}
	\subfloat[NBC5b ($q_1$ small, $\vec{q_1}$ contains first a node of $R_{\ell}$)]{
	 \begin{pspicture}(-1,-1)(5,1.2) 
\Knoten{0.2}{0.8}{s1}\nput{180}{s1}{$u$}
\Knoten{0}{-1}{s2}\nput{180}{s2}{$w$}
\Knoten{0.5}{-0.5}{v1}
\Knoten{1}{0}{v2}
\Knoten{2.25}{0}{v3}
\Knoten{3.5}{0}{v6}

\Knoten{4}{0.5}{v9}
\Knoten{4.5}{1}{t1}\nput{0}{t1}{$v$}
\Knoten{4}{-0.5}{v10}
\Knoten{4.5}{-1}{t2}\nput{0}{t2}{$x$}
\ncline{s1}{v2}
\ncline{s2}{v1}
\ncline{v1}{v2}
\ncline{v2}{v3}
\ncline{v3}{v6}
\ncline{v6}{v9}
\ncline{v6}{v10}
\ncline{v9}{t1}
\ncline{v10}{t2}
\Kante{v3}{v10}{\alpha_1}
\pscurve(0.5,-0.5)(-0.8,0.7)(4,0.5)\rput(1.4,1.2){\colorbox{almostwhite}{$\alpha_3$}}
\Bogen{v1}{v10}{\alpha_2}{-15}
\end{pspicture}
	\label{NBC5b}
	}
	
	\subfloat[NBC5c ($q_1$ big, $\vec{q_1}$ contains first a node of $L_{\ell}$)]{
	\begin{pspicture}(1.5,-1)(7.5,1.2) 
	\Knoten{2}{1}{s1}\nput{180}{s1}{$u$}
\Knoten{2}{-1}{s2}\nput{180}{s2}{$w$}
\Knoten{2.5}{-0.5}{v1}
\Knoten{2.5}{0.5}{n1}
\Knoten{3}{0}{v2}
\Knoten{5.5}{0}{v6}
\Knoten{6}{0.5}{v9}
\Knoten{6.5}{1}{t1}\nput{0}{t1}{$v$}
\Knoten{6}{-0.5}{v10}
\Knoten{6.5}{-1}{t2}\nput{0}{t2}{$x$}
\ncline{s1}{v2}
\ncline{s2}{v1}
\ncline{v1}{v2}
\ncline{v2}{v6}
\ncline{v6}{v9}
\ncline{v6}{v10}
\ncline{v9}{t1}
\ncline{v10}{t2}
\Kante{v1}{n1}{\alpha_1}
\Kante{v10}{v9}{\alpha_3}
\Bogen{v1}{v10}{\alpha_2}{-15}
\end{pspicture}
	\label{NBC5c}
	}
	\hspace{1cm}
	\subfloat[NBC5d ($q_1$ big, $\vec{q_1}$ contains first a node of $R_{\ell}$)]{
	\begin{pspicture}(1,-1)(7.5,1.2) 
\Knoten{2.2}{0.8}{s1}\nput{180}{s1}{$u$}
\Knoten{2}{-1}{s2}\nput{180}{s2}{$w$}
\Knoten{2.5}{-0.5}{v1}
\Knoten{2.5}{0.5}{n1}
\Knoten{3}{0}{v2}
\Knoten{5.5}{0}{v6}
\Knoten{6}{0.5}{v9}
\Knoten{6.3}{0.8}{t1}\nput{0}{t1}{$v$}
\Knoten{6}{-0.5}{v10}
\Knoten{6.5}{-1}{t2}\nput{0}{t2}{$x$}
\ncline{s1}{v2}
\ncline{s2}{v1}
\ncline{v1}{v2}
\ncline{v2}{v6}
\ncline{v6}{v9}
\ncline{v6}{v10}
\ncline{v9}{t1}
\ncline{v10}{t2}
\pscurve(2.5,-0.5)(1.5,0.7)(6,0.5)\rput(1.7,0){\colorbox{almostwhite}{$\alpha_3$}}
\pscurve(6,-0.5)(7.1,0.7)(2.5,0.5)\rput(6,1.2){\colorbox{almostwhite}{$\alpha_1$}}
\Bogen{v1}{v10}{\alpha_2}{-15}
\end{pspicture}
	\label{NBC5d}
	}
	\caption{NBC5a, b, c and d}\label{NBC5}
	\end{figure}
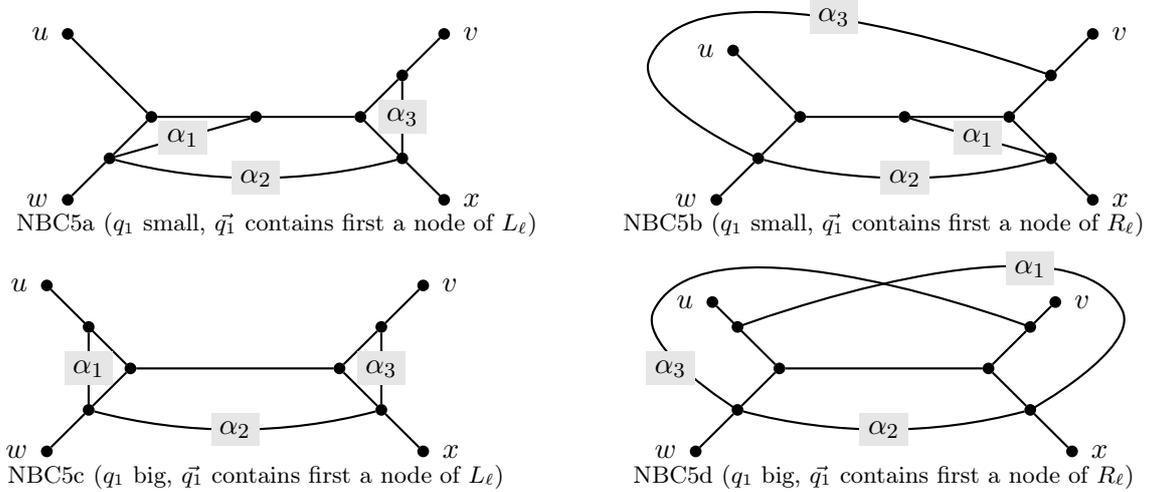
	
	\textbf{NBC6:} $q_1$ is small and contains a node of $R_{\ell}$
	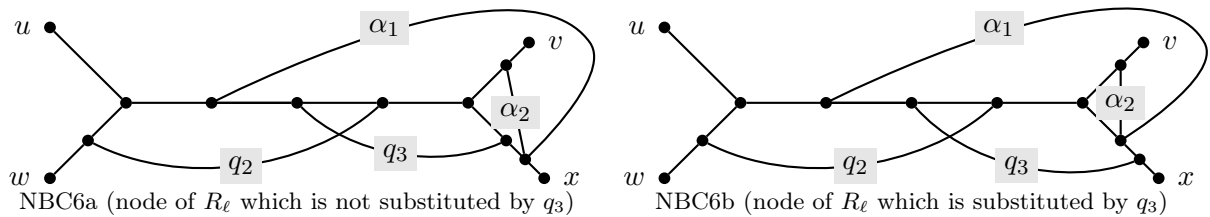
\begin{figure}[H] \centering \psset{unit=1cm}\captionsetup[subfigure]{labelformat=empty}
	\subfloat[NBC6a (node of $R_{\ell}$ which is not substituted by $q_3$)]{
	\begin{pspicture}(-0.5,-1)(7,1) 
\Knoten{0}{1}{v1}\nput{180}{v1}{$u$}
\Knoten{0}{-1}{v2}\nput{180}{v2}{$w$}
\Knoten{0.5}{-0.5}{n2}
\Knoten{1}{0}{v3}
\Knoten{2.125}{0}{n5}
\Knoten{3.25}{0}{v4}
\Knoten{4.375}{0}{v6}
\Knoten{5.5}{0}{v7}
\Knoten{6}{0.5}{n3}
\Knoten{6}{-0.5}{n4}
\Knoten{6.25}{-0.75}{n6}
\Knoten{6.3}{0.8}{v10}\nput{0}{v10}{$v$}
\Knoten{6.5}{-1}{v11}\nput{0}{v11}{$x$}

\ncline{-}{v1}{v3}
\ncline{-}{n2}{v3}
\ncline{-}{v2}{n2}
\ncline{-}{v3}{v4}
\ncline{-}{v4}{n5}
\ncline{-}{v4}{v6}
\ncline{-}{v6}{v7}
\ncline{-}{v7}{n3}
\ncline{-}{n3}{v10}
\ncline{-}{v7}{n4}
\ncline{-}{n4}{v11}
\Kante{n6}{n3}{\alpha_2}
\pscurve(6.25,-0.75)(7.1,0.8)(2.125,0)\rput(4.4,1){\colorbox{almostwhite}{$\alpha_1$}}
\Bogen{n2}{v6}{q_2}{-35}
\Bogen{v4}{n4}{q_3}{-35}
\end{pspicture}
	\label{NBC6a}
	}
	\hspace{0.2cm}
	\subfloat[NBC6b (node of $R_{\ell}$ which is substituted by $q_3$)]{
 	\begin{pspicture}(-0.5,-1)(7,1) 
\Knoten{0}{1}{v1}\nput{180}{v1}{$u$}
\Knoten{0}{-1}{v2}\nput{180}{v2}{$w$}
\Knoten{0.5}{-0.5}{n2}
\Knoten{1}{0}{v3}
\Knoten{2.125}{0}{n5}
\Knoten{3.25}{0}{v4}
\Knoten{4.375}{0}{v6}
\Knoten{5.5}{0}{v7}
\Knoten{6}{0.5}{n3}
\Knoten{6}{-0.5}{n4}
\Knoten{6.25}{-0.75}{n6}
\Knoten{6.3}{0.8}{v10}\nput{0}{v10}{$v$}
\Knoten{6.5}{-1}{v11}\nput{0}{v11}{$x$}

\ncline{-}{v1}{v3}
\ncline{-}{n2}{v3}
\ncline{-}{v2}{n2}
\ncline{-}{v3}{v4}
\ncline{-}{v4}{n5}
\ncline{-}{v4}{v6}
\ncline{-}{v6}{v7}
\ncline{-}{v7}{n3}
\ncline{-}{n3}{v10}
\ncline{-}{v7}{n4}
\ncline{-}{n4}{v11}
\Kante{n4}{n3}{\alpha_2}
\pscurve(6,-0.5)(7.1,0.8)(2.125,0)\rput(4.4,1){\colorbox{almostwhite}{$\alpha_1$}}
\Bogen{n2}{v6}{q_2}{-35}
\Bogen{v4}{n6}{q_3}{-35}
\end{pspicture}
	\label{NBC6b}
	}
	\caption{NBC6a and b}\label{NBC6}
	\end{figure}
	
	\newpage
	\textbf{NBC7:} $q_1$ is big and contains a node of $R_{\ell}$ which is not substituted by $q_3$
	\begin{figure}[H] \centering \psset{unit=1cm}
	\begin{pspicture}(-0.5,-1)(7.5,1.2) 
\Knoten{1}{1}{v1}\nput{180}{v1}{$u$}
\Knoten{1.5}{0.5}{n1}
\Knoten{1}{-1}{v2}\nput{180}{v2}{$w$}
\Knoten{1.5}{-0.5}{n2}
\Knoten{2}{0}{v3}
\Knoten{3.25}{0}{v4}
\Knoten{4.375}{0}{v6}
\Knoten{5.5}{0}{v7}
\Knoten{6}{0.5}{n3}
\Knoten{6}{-0.5}{n4}
\Knoten{6.25}{-0.75}{n5}
\Knoten{6.3}{0.8}{v10}\nput{0}{v10}{$v$}
\Knoten{6.5}{-1}{v11}\nput{0}{v11}{$x$}

\ncline{-}{v1}{v3}
\ncline{-}{n2}{v3}
\ncline{-}{v2}{n2}
\ncline{-}{v3}{v4}
\ncline{-}{v4}{v6}
\ncline{-}{v6}{v7}
\ncline{-}{v7}{n3}
\ncline{-}{n3}{v10}
\ncline{-}{v7}{n4}
\ncline{-}{n4}{v11}
\Kante{n5}{n3}{\alpha_2}
\pscurve(6.25,-0.75)(7.1,0.6)(1.5,0.5)\rput(4.4,1.2){\colorbox{almostwhite}{$\alpha_1$}}
\Bogen{n2}{v6}{q_2}{-35}
\Bogen{v4}{n4}{q_3}{-35}
\end{pspicture}
 \label{NBC7} \caption{NBC7}
\end{figure}

\textbf{NBC8:} $q_1$ is big and either
\begin{itemize}
	\item $q_1$ is not node-disjoint with $q_2$ and contains a node of $R_{\ell}$ which is substituted by $q_3$, or
	\item $q_1$ is not node-disjoint with $q_3$ and contains a node of $L_{\ell}$ which is substituted by $q_2$
\end{itemize}

\begin{figure}[H] \centering \psset{unit=1cm}\captionsetup[subfigure]{labelformat=empty}
\subfloat[NBC8a ($q_2$, $R_{\ell}$, $\vec{q_1}$ contains first a node of $q_2$)]{
	\begin{pspicture}(0.5,-1)(7.5,1.6) 
\Knoten{1.2}{0.8}{v1}\nput{180}{v1}{$u$}
\Knoten{1.5}{0.5}{n1}
\Knoten{1.2}{-0.8}{v2}\nput{180}{v2}{$w$}
\Knoten{1.5}{-0.5}{n2}
\Knoten{2}{0}{v3}
\Knoten{3.25}{0}{v4}
\Knoten{4.375}{0}{v6}
\Knoten{4.375}{1}{n7}
\Knoten{5.5}{0}{v7}
\Knoten{6}{0.5}{n3}
\Knoten{6}{-0.5}{n4}
\Knoten{6.25}{-0.75}{n6}
\Knoten{6.3}{0.8}{v10}\nput{0}{v10}{$v$}
\Knoten{6.5}{-1}{v11}\nput{0}{v11}{$x$}

\ncline{-}{v1}{v3}
\ncline{-}{n2}{v3}
\ncline{-}{v2}{n2}
\ncline{-}{v3}{v4}
\ncline{-}{v4}{v6}
\ncline{-}{v6}{v7}
\ncline{-}{v7}{n3}
\ncline{-}{n3}{v10}
\ncline{-}{v7}{n4}
\ncline{-}{n4}{v11}
\Kante{n7}{v6}{\beta_2}
\Kante{n4}{n3}{\alpha_3}
\Kante{n7}{n1}{\alpha_1}
\pscurve(6,-0.5)(7.1,1)(4.375,1)\rput(5.2,1.2){\colorbox{almostwhite}{$\alpha_2$}}
\pscurve(1.5,-0.5)(0.5,0.8)(4.375,1)\rput(2.2,1.3){\colorbox{almostwhite}{$\beta_1$}}
\Bogen{v4}{n6}{q_3}{-30}
\end{pspicture}\label{NBC8a}
} 
\hspace{1cm}
\subfloat[NBC8b ($q_2$, $R_{\ell}$, $\vec{q_1}$ contains first a node of $R_{\ell}$)]{
	\begin{pspicture}(0.5,-1)(7.5,1.6) 
\Knoten{1.2}{0.8}{v1}\nput{180}{v1}{$u$}
\Knoten{1.5}{0.5}{n1}
\Knoten{1.2}{-0.8}{v2}\nput{180}{v2}{$w$}
\Knoten{1.5}{-0.5}{n2}
\Knoten{2}{0}{v3}
\Knoten{3.25}{0}{v4}
\Knoten{4.375}{0}{v6}
\Knoten{4.375}{1}{n7}
\Knoten{5.5}{0}{v7}
\Knoten{6}{0.5}{n3}
\Knoten{6}{-0.5}{n4}
\Knoten{6.25}{-0.75}{n6}
\Knoten{6.3}{0.8}{v10}\nput{0}{v10}{$v$}
\Knoten{6.5}{-1}{v11}\nput{0}{v11}{$x$}

\ncline{-}{v1}{v3}
\ncline{-}{n2}{v3}
\ncline{-}{v2}{n2}
\ncline{-}{v3}{v4}
\ncline{-}{v4}{v6}
\ncline{-}{v6}{v7}
\ncline{-}{v7}{n3}
\ncline{-}{n3}{v10}
\ncline{-}{v7}{n4}
\ncline{-}{n4}{v11}
\Kante{n7}{v6}{\beta_2}
\Kante{n7}{n3}{\alpha_3}
\pscurve(6,-0.5)(7.1,1)(4.375,1)\rput(6.25,1.3){\colorbox{almostwhite}{$\alpha_2$}}
\pscurve(1.5,-0.5)(0.5,0.8)(4.375,1)\rput(2.2,1.3){\colorbox{almostwhite}{$\beta_1$}}
\pscurve(6,-0.5)(7.4,1.3)(1.5,0.5)\rput(4.2,1.6){\colorbox{almostwhite}{$\alpha_1$}}
\Bogen{v4}{n6}{q_3}{-30}
\end{pspicture}\label{NBC8b}
}
\end{figure}

\begin{figure}[H] \centering \psset{unit=1cm}\captionsetup[subfigure]{labelformat=empty}
\subfloat[NBC8c ($q_3$, $L_{\ell}$, $\vec{q_1}$ contains first a node of $q_3$)]{
	\begin{pspicture}(0.5,-1)(7.5,1.3) 
\Knoten{1.2}{0.8}{v1}\nput{180}{v1}{$u$}
\Knoten{1.5}{0.5}{n1}
\Knoten{1}{-1}{v2}\nput{180}{v2}{$w$}
\Knoten{1.5}{-0.5}{n2}
\Knoten{2}{0}{v3}
\Knoten{3.25}{0}{v4}
\Knoten{4.375}{0}{v6}
\Knoten{3.25}{0.9}{n7}
\Knoten{5.5}{0}{v7}
\Knoten{6}{0.5}{n3}
\Knoten{6}{-0.5}{n4}
\Knoten{1.25}{-0.75}{n6}
\Knoten{6.3}{0.8}{v10}\nput{0}{v10}{$v$}
\Knoten{6.5}{-1}{v11}\nput{0}{v11}{$x$}

\ncline{-}{v1}{v3}
\ncline{-}{n2}{v3}
\ncline{-}{v2}{n2}
\ncline{-}{v3}{v4}
\ncline{-}{v4}{v6}
\ncline{-}{v6}{v7}
\ncline{-}{v7}{n3}
\ncline{-}{n3}{v10}
\ncline{-}{v7}{n4}
\ncline{-}{n4}{v11}
\Kante{n7}{v4}{\gamma_2}
\Kante{n1}{n7}{\alpha_1}
\pscurve(6,-0.5)(7.1,0.8)(3.25,0.9)\rput(6.25,1.3){\colorbox{almostwhite}{$\gamma_1$}}
\pscurve(1.5,-0.5)(0.5,0.8)(3.25,0.9)\rput(2,1.1){\colorbox{almostwhite}{$\alpha_2$}}
\pscurve(6,0.5)(0.3,0.8)(1.5,-0.5)\rput(5.3,0.8){\colorbox{almostwhite}{$\alpha_3$}}
\Bogen{n6}{v6}{q_2}{-30}
\end{pspicture}\label{NBC8c}
}
\hspace{1cm}
\subfloat[NBC8d ($q_3$, $L_{\ell}$, $\vec{q_1}$ contains first a node of $L_{\ell}$)]{
	\begin{pspicture}(0.5,-1)(7.5,1.3) 
\Knoten{1.2}{0.8}{v1}\nput{180}{v1}{$u$}
\Knoten{1.5}{0.5}{n1}
\Knoten{1}{-1}{v2}\nput{180}{v2}{$w$}
\Knoten{1.5}{-0.5}{n2}
\Knoten{2}{0}{v3}
\Knoten{3.25}{0}{v4}
\Knoten{4.375}{0}{v6}
\Knoten{3.25}{1}{n7}
\Knoten{5.5}{0}{v7}
\Knoten{6}{0.5}{n3}
\Knoten{6}{-0.5}{n4}
\Knoten{1.25}{-0.75}{n6}
\Knoten{6.3}{0.8}{v10}\nput{0}{v10}{$v$}
\Knoten{6.5}{-1}{v11}\nput{0}{v11}{$x$}

\ncline{-}{v1}{v3}
\ncline{-}{n2}{v3}
\ncline{-}{v2}{n2}
\ncline{-}{v3}{v4}
\ncline{-}{v4}{v6}
\ncline{-}{v6}{v7}
\ncline{-}{v7}{n3}
\ncline{-}{n3}{v10}
\ncline{-}{v7}{n4}
\ncline{-}{n4}{v11}
\Kante{n7}{v4}{\gamma_2}
\Kante{n7}{n3}{\alpha_3}
\Kante{n1}{n2}{\alpha_1}
\pscurve(6,-0.5)(7,0.8)(3.25,1)\rput(5.2,1.3){\colorbox{almostwhite}{$\gamma_1$}}
\pscurve(1.5,-0.5)(0.2,0.8)(3.25,1)\rput(2.3,1.2){\colorbox{almostwhite}{$\alpha_2$}}
\Bogen{n6}{v6}{q_2}{-30}
\end{pspicture}\label{NBC8d}
} 
\label{NBC8}\caption{NBC8a, b, c and d}
\end{figure}

\textbf{NBC9:} $q_1$ contains a node of $L_{\ell}$ which is not substituted by $q_2$
\begin{figure}[H] \centering \psset{unit=1.1cm}\captionsetup[subfigure]{labelformat=empty}
\subfloat[NBC9a ($q_1$ small)]{
	\begin{pspicture}(-0.5,-1)(6.5,1.2) 
\Knoten{0.2}{0.8}{v1}\nput{180}{v1}{$u$}
\Knoten{0}{-1}{v2}\nput{180}{v2}{$w$}
\Knoten{0.5}{-0.5}{n2}
\Knoten{1}{0}{v3}
\Knoten{2.125}{0}{n5}
\Knoten{3.25}{0}{v4}
\Knoten{4.375}{0}{v6}
\Knoten{5.5}{0}{v7}
\Knoten{6}{0.5}{n3}
\Knoten{6}{-0.5}{n4}
\Knoten{0.25}{-0.75}{n6}
\Knoten{6.5}{1}{v10}\nput{0}{v10}{$v$}
\Knoten{6.5}{-1}{v11}\nput{0}{v11}{$x$}

\ncline{-}{v1}{v3}
\ncline{-}{n2}{v3}
\ncline{-}{v2}{n2}
\ncline{-}{v3}{v4}
\ncline{-}{v4}{n5}
\ncline{-}{v4}{v6}
\ncline{-}{v6}{v7}
\ncline{-}{v7}{n3}
\ncline{-}{n3}{v10}
\ncline{-}{v7}{n4}
\ncline{-}{n4}{v11}
\pscurve(6,0.5)(-0.3,0.7)(0.25,-0.75)\rput(4.4,1){\colorbox{almostwhite}{$\alpha_2$}}
\Bogen{n2}{v6}{q_2}{-30}
\Bogen{v4}{n4}{q_3}{-35}
\Kante{n6}{n5}{\alpha_1}
\end{pspicture} 
}
\hspace{0.5cm}
\subfloat[NBC9b ($q_1$ big)
]{
	\begin{pspicture}(0.5,-1)(6.5,1.2) 
\Knoten{1.2}{0.8}{v1}\nput{180}{v1}{$u$}
\Knoten{1.5}{0.5}{n1}
\Knoten{1}{-1}{v2}\nput{180}{v2}{$w$}
\Knoten{1.5}{-0.5}{n2}
\Knoten{2}{0}{v3}
\Knoten{3.25}{0}{v4}
\Knoten{4.375}{0}{v6}
\Knoten{5.5}{0}{v7}
\Knoten{6}{0.5}{n3}
\Knoten{6}{-0.5}{n4}
\Knoten{1.25}{-0.75}{n6}
\Knoten{6.5}{1}{v10}\nput{0}{v10}{$v$}
\Knoten{6.5}{-1}{v11}\nput{0}{v11}{$x$}

\ncline{-}{v1}{v3}
\ncline{-}{n2}{v3}
\ncline{-}{v2}{n2}
\ncline{-}{v3}{v4}
\ncline{-}{v4}{v6}
\ncline{-}{v6}{v7}
\ncline{-}{v7}{n3}
\ncline{-}{n3}{v10}
\ncline{-}{v7}{n4}
\ncline{-}{n4}{v11}
\pscurve(6,0.5)(0.5,0.7)(1.25,-0.75)\rput(4.4,1){\colorbox{almostwhite}{$\alpha_2$}}
\Bogen{n2}{v6}{q_2}{-35}
\Bogen{v4}{n4}{q_3}{-35}
\Kante{n6}{n1}{\alpha_1}
\end{pspicture} 
}
\label{NBC9} \caption{NBC9a and b}
\end{figure}
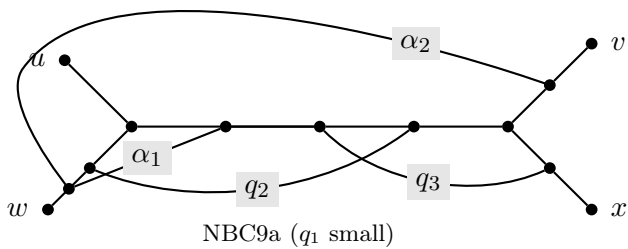
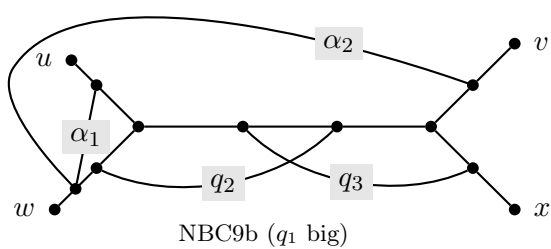




\textbf{NBC10:} $q_1$ is small and $\vec{q_1}$ contains a node of $L_{\ell}$ which is substituted by $q_2$ after a node of $q_2$
\begin{figure}[H] \centering \psset{unit=1cm}
	\begin{pspicture}(-0.5,-1.3)(7,1.5) 
\Knoten{0.2}{0.8}{v1}\nput{180}{v1}{$u$}
\Knoten{0}{-1}{v2}\nput{180}{v2}{$w$}
\Knoten{0.5}{-0.5}{n2}
\Knoten{1}{0}{v3}
\Knoten{2.125}{0}{n5}
\Knoten{3.25}{0}{v4}
\Knoten{4.375}{0}{v6}
\Knoten{5.5}{0}{v7}
\Knoten{6}{0.5}{n3}
\Knoten{6}{-0.5}{n4}
\Knoten{0.25}{-0.75}{n6}
\Knoten{6.5}{1}{v10}\nput{0}{v10}{$v$}
\Knoten{6.3}{-0.8}{v11}\nput{0}{v11}{$x$}
\Knoten{3.25}{0.9}{n7}

\ncline{-}{v1}{v3}
\ncline{-}{n2}{v3}
\ncline{-}{v2}{n2}
\ncline{-}{v3}{v4}
\ncline{-}{v4}{n5}
\ncline{-}{v4}{v6}
\ncline{-}{v6}{v7}
\ncline{-}{v7}{n3}
\ncline{-}{n3}{v10}
\ncline{-}{v7}{n4}
\ncline{-}{n4}{v11}
\pscurve(0.25,-0.75)(-1,1.3)(3.25,0.9)\rput(-1,1.3){\colorbox{almostwhite}{$\beta_1$}}
\pscurve(0.5,-0.5)(-0.5,1)(3.25,0.9)\rput(0,1.3){\colorbox{almostwhite}{$\alpha_2$}}
\pscurve(0.5,-0.5)(6.9,-0.8)(6,0.5)\rput(1.7,-1){\colorbox{almostwhite}{$\alpha_3$}}
\Bogen{v4}{n4}{q_3}{-35}
\Kante{n5}{n7}{\alpha_1}
\Kante{n7}{v6}{\beta_2}
\end{pspicture} 
\label{NBC10} \caption{NBC10}
\end{figure}

\textbf{NBC11:} $q_1$ is big and either 
\begin{itemize}
\item $\vec{q_1}$ contains a node of $L_{\ell}$ which is substituted by $q_2$ after a node of $q_2$ or
	%
		\item $\vec{q_1}$ contains a node of $q_3$ after a node of $R_{\ell}$ which is substituted by $q_3$ 
	%
\end{itemize}

\begin{figure}[H] \centering \psset{unit=1.1cm}\captionsetup[subfigure]{labelformat=empty}
\subfloat[NBC11a ($q_2$, $L_{\ell}$)]{
	\begin{pspicture}(0.5,-1.4)(7,1.2) 
\Knoten{1.2}{0.8}{v1}\nput{180}{v1}{$u$}
\Knoten{1.5}{0.5}{n1}
\Knoten{1}{-1}{v2}\nput{180}{v2}{$w$}
\Knoten{1.5}{-0.5}{n2}
\Knoten{2}{0}{v3}
\Knoten{3.25}{0}{v4}
\Knoten{4.375}{0}{v6}
\Knoten{5.5}{0}{v7}
\Knoten{6}{0.5}{n3}
\Knoten{6}{-0.5}{n4}
\Knoten{1.25}{-0.75}{n6}
\Knoten{6.5}{1}{v10}\nput{0}{v10}{$v$}
\Knoten{6.3}{-0.8}{v11}\nput{0}{v11}{$x$}
\Knoten{0.4}{0.2}{n7}

\ncline{-}{v1}{v3}
\ncline{-}{n2}{v3}
\ncline{-}{v2}{n2}
\ncline{-}{v3}{v4}
\ncline{-}{v4}{v6}
\ncline{-}{v6}{v7}
\ncline{-}{v7}{n3}
\ncline{-}{n3}{v10}
\ncline{-}{v7}{n4}
\ncline{-}{n4}{v11}
\pscurve(0.4,0.1)(0.8,1)(4.375,0)\rput(2.5,0.9){\colorbox{almostwhite}{$\beta_2$}}
\pscurve(1.5,-0.5)(6.9,-0.8)(6,0.5)\rput(2.5,-0.9){\colorbox{almostwhite}{$\alpha_3$}}

\Bogen{v4}{n4}{q_3}{-35}
\Kante{n1}{n7}{\alpha_1}
\Kante{n7}{n2}{\alpha_2}
\Bogen{n6}{n7}{\beta_1}{70}
\end{pspicture} 
}
\hspace{0.5cm}
\subfloat[NBC11b ($R_{\ell}$, $q_3$)]{
	\begin{pspicture}(0.5,-1.4)(7,1.2) 
\Knoten{1}{1}{v1}\nput{180}{v1}{$u$}
\Knoten{1.5}{0.5}{n1}
\Knoten{1.2}{-0.8}{v2}\nput{180}{v2}{$w$}
\Knoten{1.5}{-0.5}{n2}
\Knoten{2}{0}{v3}
\Knoten{3.25}{0}{v4}
\Knoten{4.375}{0}{v6}
\Knoten{5.5}{0}{v7}
\Knoten{6}{0.5}{n3}
\Knoten{6}{-0.5}{n4}
\Knoten{6.25}{-0.75}{n6}
\Knoten{6.3}{0.8}{v10}\nput{0}{v10}{$v$}
\Knoten{6.5}{-1}{v11}\nput{0}{v11}{$x$}
\Knoten{7}{0.2}{n7}

\ncline{-}{v1}{v3}
\ncline{-}{n2}{v3}
\ncline{-}{v2}{n2}
\ncline{-}{v3}{v4}
\ncline{-}{v4}{v6}
\ncline{-}{v6}{v7}
\ncline{-}{v7}{n3}
\ncline{-}{n3}{v10}
\ncline{-}{v7}{n4}
\ncline{-}{n4}{v11}
\pscurve(3.25,0)(7,1)(7,0.2)\rput(4.5,0.6){\colorbox{almostwhite}{$\gamma_2$}}
\pscurve(1.5,0.5)(0.5,-0.8)(6,-0.5)\rput(4.8,-1){\colorbox{almostwhite}{$\alpha_1$}}

\Bogen{n2}{v6}{q_2}{-35}
\Kante{n3}{n7}{\alpha_3}
\Kante{n4}{n7}{\alpha_2}
\Bogen{n6}{n7}{\gamma_1}{-70}
\end{pspicture} 
}
\label{NBC11} \caption{NBC11a and b}
\end{figure}
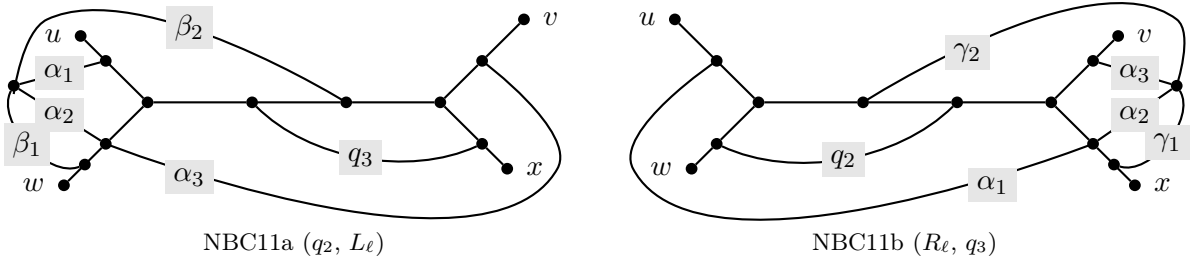
\textbf{NBC12:} $q_1$ is big, not node-disjoint with $q_2$ and not node-disjoint with $q_3$

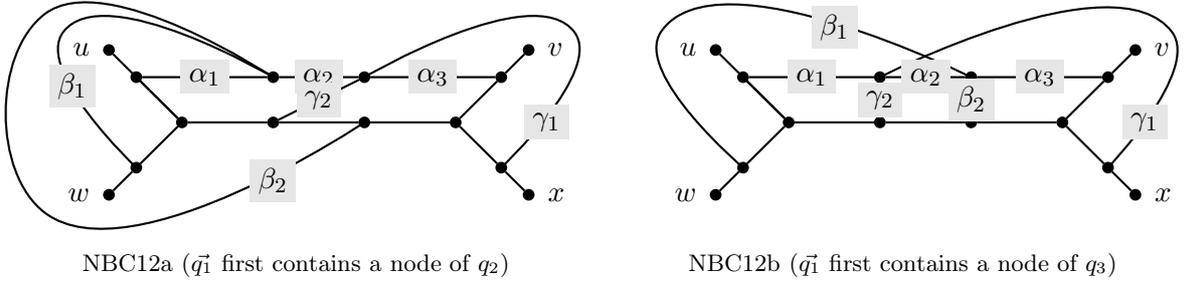
\begin{figure}[H] \centering \psset{unit=1.2cm}\captionsetup[subfigure]{labelformat=empty}
\subfloat[NBC12a ($\vec{q_1}$ first contains a node of $q_2$)]{
 \begin{pspicture}(-0.5,-1.3)(5,1.3) 
\Knoten{0.2}{0.8}{s1}\nput{180}{s1}{$u$}
\Knoten{0.5}{0.5}{n1}
\Knoten{0.2}{-0.8}{s2}\nput{180}{s2}{$w$}
\Knoten{0.5}{-0.5}{v1}
\Knoten{1}{0}{v2}
\Knoten{2}{0}{v4}
\Knoten{3}{0}{v5}
\Knoten{4}{0}{v6}
\Knoten{2}{0.5}{n3}
\Knoten{3}{0.5}{v7}
\Knoten{4.5}{0.5}{v9}
\Knoten{4.8}{0.8}{t1}\nput{0}{t1}{$v$}
\Knoten{4.5}{-0.5}{v10}
\Knoten{4.8}{-0.8}{t2}\nput{0}{t2}{$x$}
\ncline{s1}{v2}
\ncline{s2}{v1}
\ncline{n1}{v2}
\ncline{v1}{v2}
\ncline{v2}{v4}
\ncline{v4}{v5}
\ncline{v5}{v6}
\ncline{v6}{v9}
\ncline{v6}{v10}
\ncline{v9}{t1}
\ncline{v10}{t2}

\pscurve(3,0.5)(5.3,1)(4.5,-0.5)\rput(5,0){\colorbox{almostwhite}{$\gamma_1$}}
\pscurve(0.5,-0.5)(-0.3,1)(2,0.5)\rput(-0.2,0.4){\colorbox{almostwhite}{$\beta_1$}}
\pscurve(2,0.5)(-0.5,1.2)(-0.5,-1)(3,0)\rput(2,-0.65){\colorbox{almostwhite}{$\beta_2$}}
\Kante{v7}{v9}{\alpha_3}
\Kante{n1}{n3}{\alpha_1}
\Kante{n3}{v7}{\alpha_2}
\Kante{v4}{v7}{\gamma_2}
\end{pspicture}
}
\hspace{1cm}
\subfloat[NBC12b ($\vec{q_1}$ first contains a node of $q_3$)]{
\begin{pspicture}(-0.5,-1.3)(5,1.3) 
\Knoten{0.2}{0.8}{s1}\nput{180}{s1}{$u$}
\Knoten{0.5}{0.5}{n1}
\Knoten{0.2}{-0.8}{s2}\nput{180}{s2}{$w$}
\Knoten{0.5}{-0.5}{v1}
\Knoten{1}{0}{v2}
\Knoten{2}{0}{v4}
\Knoten{3}{0}{v5}
\Knoten{4}{0}{v6}
\Knoten{2}{0.5}{v7}
\Knoten{3}{0.5}{n2}
\Knoten{4.5}{0.5}{v9}
\Knoten{4.8}{0.8}{t1}\nput{0}{t1}{$v$}
\Knoten{4.5}{-0.5}{v10}
\Knoten{4.8}{-0.8}{t2}\nput{0}{t2}{$x$}
\ncline{s1}{v2}
\ncline{s2}{v1}
\ncline{n1}{v2}
\ncline{v1}{v2}
\ncline{v2}{v4}
\ncline{v4}{v5}
\ncline{v5}{v6}
\ncline{v6}{v9}
\ncline{v6}{v10}
\ncline{v9}{t1}
\ncline{v10}{t2}
\pscurve(2,0.5)(5.2,1)(4.5,-0.5)\rput(4.9,0){\colorbox{almostwhite}{$\gamma_1$}}
\pscurve(0.5,-0.5)(-0.4,1)(3,0.5)\rput(1.5,1.05){\colorbox{almostwhite}{$\beta_1$}}
\Kante{n2}{v5}{\beta_2}
\Kante{v7}{n2}{\alpha_2}
\Kante{n2}{v9}{\alpha_3}
\Kante{n1}{v7}{\alpha_1}
\Kante{v4}{v7}{\gamma_2}
\end{pspicture}
}
\caption{NBC12a and b}
\label{NBC12}
\end{figure}

\end{definition} 

The following lemma shows that the PBCs constructed in the proof of \autoref{lemma3} have to be NBCs.
\begin{lemma}\label{lemma:pbc}
	If $(G,(s_1,t_1),(s_2,t_2))$ does not contain a BC, then each PBC has to be a NBC.
\end{lemma}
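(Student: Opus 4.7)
The plan is to prove the contrapositive: given a PBC $(P_u, P_\ell, q_1, q_2, q_3)$ that does not fall into any of the twelve NBC types, show that it must coincide with one of BC1--BC4. The argument is an exhaustive case analysis on (i) whether $q_1$ is small or big (in the sense of \autoref{def:PBC}), (ii) which of the subpaths $L_u, R_u, L_\ell, R_\ell$ is met internally by each of $q_1, q_2, q_3$, and (iii) the pairwise intersection pattern of $q_1, q_2, q_3$, including the order of the intersection nodes along the directed path $\vec{q_1}$ whenever intersections occur.

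The NBC catalogue is tailored to capture each possible ``deviation'' from a BC, so the overall strategy is to eliminate the NBC types one by one and read off which BC survives. First I would rule out NBC3 and NBC4 to force $q_2$ and $q_3$ to be internal node-disjoint from $P_u$, matching the internal-disjointness required by BC1--BC4. Next I would exclude NBC5--NBC8 to control how $q_1$ may meet $L_\ell$ and $R_\ell$: if $q_1$ is small, NBC6 forces $q_1$ to avoid $R_\ell$ entirely; if $q_1$ is big, NBC5, NBC7 and NBC8 together rule out all remaining excursions into $R_\ell$ except the benign one allowed by BC1b, BC2c, BC2d. Then excluding NBC1 and NBC2 yields pairwise node-disjointness of $q_3$ with both $q_1$ and $q_2$, matching BC1--BC4.

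All that remains is the interaction of $q_1$ with $q_2$ via $L_\ell$. I would use NBC9 to ensure that every node of $L_\ell$ lying on $q_1$ is already substituted by $q_2$, and NBC10--NBC12 to force the ordering of contacts along $\vec{q_1}$ with $L_\ell$, $q_2$ (and, for $q_1$ big, with $q_3$) to be the ordering compatible with a matching common subpath $\alpha_2 = \beta_2$ or $\alpha_3 = \beta_2$, instead of the ``crossing'' orderings depicted in NBC10, NBC11 and NBC12. Combining all these constraints leaves exactly one surviving configuration per branch of the case split, matching BC1a/b, BC2a--d, BC3 or BC4a/b respectively, contradicting the assumption that $(G,(s_1,t_1),(s_2,t_2))$ contains no BC.

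The main obstacle is the bookkeeping required to verify that the twelve NBC types together with the BC subtypes truly exhaust every PBC. The $q_1$-big case in particular proliferates into numerous sub-configurations depending on whether $\vec{q_1}$ first meets $q_2$, $L_\ell$, $q_3$ or $R_\ell$, and on whether each contact node is itself substituted by one of the other $q_i$. Each such ordering/substitution pattern has been drawn as a separate figure in \autoref{def:NBC}, so once the decomposition is set up cleanly the verification reduces to matching each residual configuration to the appropriate picture and confirming that no BC-like fallback remains.
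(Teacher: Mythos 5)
Your overall strategy --- assume the PBC is none of the twelve NBC types and run an exhaustive case analysis on $q_1$ small/big, on which of $L_u,R_u,L_\ell,R_\ell$ each $q_i$ meets, and on the pairwise intersection patterns --- is exactly the paper's decomposition (the paper organizes it via properties (P1)--(P8), (P6'), (P6'')). However, there is one genuine gap: you claim the surviving configuration ``must coincide with one of BC1--BC4,'' and that is not true, nor is it what the lemma needs. When, say, $q_1$ is small, node-disjointness with $q_2$ fails but every node of $L_\ell$ on $q_1$ is substituted by $q_2$, excluding the NBCs only pins down the \emph{first} and \emph{last} nodes $v_a,v_b$ of $q_1$ that lie on $q_2$; it does not force the subpath $\alpha_2$ of $q_1$ from $v_a$ to $v_b$ to equal the subpath $\beta$ of $q_2$ between the same nodes, which is what the definition of BC2 requires. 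The resulting configuration is then neither an NBC nor a BC as it stands, and your case analysis does not close.

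The paper's fix is a rerouting step you are missing: if $\alpha_2\neq\beta$, replace $\alpha_2$ by $\beta$ inside $q_1$ to obtain a new path $q_1'$, and observe that $(P_u,P_\ell,q_1',q_2,q_3)$ \emph{is} a BC2; similarly one substitutes a subpath $\ell$ of $L_\ell$ for $\alpha_1$ to manufacture a BC3, and combines both substitutions for BC4 (and the analogous swaps in the $q_1$-big branches, possibly with the roles of $q_2$ and $q_3$ exchanged). This is why the lemma is phrased as ``$G$ does not contain a BC'' rather than ``the PBC is a BC'': the contradiction comes from exhibiting \emph{some} BC subgraph of $G$ built from the PBC, not from the PBC itself being one. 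The NBC exclusions (in particular NBC10--NBC12) are what guarantee the contact nodes occur along $\vec{q_1}$ in an order that makes these substitutions produce a valid simple path closing a unique cycle with $P_u$. Without this construction your argument fails precisely on the intersecting-but-not-overlapping configurations.
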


\begin{proof}
Let us assume (by contradiction) that $(G,(s_1,t_1),(s_2,t_2))$ does not contain a BC and $(P_u,P_{\ell},q_1,q_2,q_3)$ is a PBC, but no NBC. 

The following properties are closely related to the properties defining the different types of BCs and NBCs. By investigating which of these properties are fulfilled, we construct a contradiction.

\begin{enumerate}[itemsep=-0em,leftmargin=*,label=(P\arabic*)]
	\item\label{p1} $q_1$ small;
	\item\label{p2} $q_1$ big;
	\item\label{p3} $q_1$ node-disjoint with $q_2$;
	\item\label{p4} $q_1$ node-disjoint with $q_3$;
	\item\label{p5} $q_2$ node-disjoint with $q_3$;
	\item\label{p6} $q_1$ contains nodes of $L_{\ell}$ or $R_{\ell}$;
	\item\label{p7} $q_2$ contains nodes of $L_u$ or $R_u$;
	\item\label{p8} $q_3$ contains nodes of $L_u$ or $R_u$.
\end{enumerate}

Since $(P_u,P_{\ell},q_1,q_2,q_3)$ is no NBC, the property \ref{p5} has to hold (no NBC2), whereas \ref{p7} and \ref{p8} are not fulfilled (no NBC3 and 4). According to \ref{p6} we get that $q_1$ does not contain a node of $L_{\ell}$ \textbf{and} a node of $R_{\ell}$ (no NBC5). 

We first assume that \ref{p1} is fulfilled (thus \ref{p2} not) and show that we get a contradiction in this case. 
Since $(P_u,P_{\ell},q_1,q_2,q_3)$ is no NBC1, \ref{p4} holds. Regarding \ref{p6} we get that $q_1$ does not contain a node of $R_{\ell}$ (no NBC6) and no node of $L_{\ell}$ which is not substituted by $q_2$ (no NBC9). 
What remains according to \ref{p6} is if the following property holds: 
\begin{enumerate}[itemsep=-0em,leftmargin=*,label=(P\arabic*')]
\setcounter{enumi}{5}
\item\label{p6'} $q_1$ does not contain a node of $L_{\ell}$ which is substituted by $q_2$.
\end{enumerate}
Furthermore we have not investigated if \ref{p3} is fulfilled. 
Since $(P_u,P_{\ell},q_1,q_2,q_3)$ cannot be a BC1, at least one of the properties \ref{p3} and \ref{p6'} does not hold.
Assume that \ref{p3} does not hold, but \ref{p6'} is true. As we will see, this cannot happen since we excluded the existence of a BC2. To this aim let us subdivide $q_1$ from left to right into $\alpha_1,\alpha_2,\alpha_3$, where $\alpha_2$ is the subpath from the first node $v_a$ to the last node $v_b$ of $q_1$ which is contained in $q_2$. 
Furthermore, let $\beta$ be the subpath of $q_2$ from $v_a$ to $v_b$. But now either $\alpha_2=\beta$ holds, thus $(P_u,P_{\ell},q_1,q_2,q_3)$ is a BC2, or $(P_u,P_{\ell},q_1',q_2,q_3)$ is a BC2, where $q_1'$ is generated by using $\beta$ instead of $\alpha_2$ in $q_1$.
The argumentation is very similar for the case that \ref{p3} holds, but not \ref{p6'}: Let $v_a$ be the last node of $q_1$ which is contained in $L_{\ell}$ and $v_b$ the first node of the commonly used path (considering it from left to right). Furthermore, $\alpha_1$ is the subpath of $q_1$ beginning with the start node of $q_1$ and ending with $v_a$, and $\ell$ is the subpath of $L_{\ell}$ from $v_b$ to $v_a$.
Now either $(P_u,P_{\ell},q_1,q_2,q_3)$ is a BC3, or we get a BC3 by using $\ell$ instead of $\alpha_1$ in $q_1$. 
%
%
The only remaining case is that both properties are violated. Since $(P_u,P_{\ell},q_1,q_2,q_3)$ is no NBC10, we can subdivide $q_1$ from left to right into $\alpha_1,\alpha_2,\alpha_3,\alpha_4,\alpha_5$, where $\alpha_2$ is the subpath between the first node $v_a$ and the last node $v_b$ of $q_1$ which is contained in $L_{\ell}$ and $\alpha_4$ is the subpath between the first node $v_c$ and the last node $v_d$ of $q_1$ which is contained in $q_2$. Furthermore let $v_e$ be the first node of the commonly used path.
But now we again get a contradiction since we can show that there is a BC4:
Let $\ell$ be the subpath of $L_{\ell}$ from $v_e$ to $v_b$ and $\beta$ the subpath of $q_2$ from $v_c$ to $v_d$. 
Now either $(P_u,P_{\ell},q_1,q_2,q_3)$ is a BC4 or we get a BC4 by using $\ell$ instead of $\alpha_1$ and $\alpha_2$, and $\beta$ instead of $\alpha_4$ in $q_1$.

This shows that \ref{p1} cannot hold and therefore \ref{p2} has to be fulfilled. 
Since $(P_u,P_{\ell},q_1,q_2,q_3)$ is no NBC7 and no NBC9, $q_1$ does not contain a node of $R_{\ell}$ which is not substituted by $q_3$, and not a node of $L_{\ell}$ which is not substituted by $q_2$. 
It remains to investigate if \ref{p3}, \ref{p4}, \ref{p6'} and the following property \ref{p6''} hold:
\begin{enumerate}[itemsep=-0em,leftmargin=*,label=(P\arabic*'')]
\setcounter{enumi}{5}
\item\label{p6''} $q_1$ does not contain a node of $R_{\ell}$ which is substituted by $q_3$.
\end{enumerate}

Since $(P_u,P_{\ell},q_1,q_2,q_3)$ cannot be a BC1, at least one of these properties does not hold. Also note that at most one of the properties \ref{p3} and \ref{p4} can be violated (no NBC12), and also at most one of \ref{p6'} and \ref{p6''} (no NBC5). Therefore at most two properties can be violated.
Let us first consider the case that exactly one of these properties is violated: 
If this is \ref{p3} or \ref{p4}, we get a contradiction since there is no BC2 (either $(P_u,P_{\ell},q_1,q_2,q_3)$ or $(P_u,P_{\ell},q_1,q_3,q_2)$ is a BC2, or we can change $q_1$ to $q_1'$ (as in the case of $q_1$ small) and get that $(P_u,P_{\ell},q_1',q_2,q_3)$ or $(P_u,P_{\ell},q_1',q_3,q_2)$ is a BC2).
If \ref{p6'} or \ref{p6''} does not hold, we also get contradictions since there is no BC3 (by a suitable change of $q_1$ to $q_1'$, we get that either $(P_u,P_{\ell},q_1',q_2,q_3)$ or $(P_u,P_{\ell},q_1',q_3,q_2)$ is a BC3).
Now we consider the case that two properties are violated: First assume that $q_1$ is not node-disjoint with $q_2$ (i.e. \ref{p4} is true, \ref{p3} not). Since $(P_u,P_{\ell},q_1,q_2,q_3)$ is no NBC8, \ref{p6''} has to be true, whereas \ref{p6'} is violated. Furthermore $(P_u,P_{\ell},q_1,q_2,q_3)$ is no NBC11, thus we can subdivide $q_1$ from left to right into $\alpha_1,\alpha_2,\alpha_3,\alpha_4,\alpha_5$, where $\alpha_2$ is the subpath between the first and the last node of $q_1$ which is contained in $L_{\ell}$ and $\alpha_4$ is the subpath between the first and the last node of $q_1$ which is contained in $q_2$.  
But now we get a contradiction, since we get a BC4 by a suitable change of $q_1$ (as for $q_1$ small). 
Therefore \ref{p3} has to hold, whereas \ref{p4} is violated. 
As $(P_u,P_{\ell},q_1,q_2,q_3)$ is no NBC8, \ref{p6'} has to be true and \ref{p6''} is violated. 
Furthermore (no NBC11) we can subdivide $q_1$ from left to right into $\alpha_1,\alpha_2,\alpha_3,\alpha_4,\alpha_5$, where $\alpha_2$ is the subpath between the first and the last node of $q_1$ which is contained in $q_3$ and $\alpha_4$ is the subpath between the first and the last node of $q_1$ which is contained in $R_{\ell}$.
But now we get a contradiction, since $(P_u,P_{\ell},q_1',q_3,q_2)$ is a BC4 for a suitable change of $q_1$ to $q_1'$ (as for $q_1$ small). 

This shows that our initial assumption, that $(P_u,P_{\ell},q_1,q_2,q_3)$ is no NBC, cannot be true.
\end{proof}

Next we derive some properties of the NBCs which we use in \autoref{lemmapbc1} - \autoref{lemmarest} to analyze the PBCs constructed in the proof of \autoref{lemma3}. In view of the proof of \autoref{lemma3} it is reasonable to assume that $P_u \cup P_{\ell}=F$ and $q_1,q_2$ and $q_3$ are tight alternatives. In particular, all commonly used edges are completely paid, and $e$ is either in $L_u \cup L_{\ell}$ or in $R_u \cup R_{\ell}$. 
Furthermore we denote the player who corresponds to the source-sink-pair $(u,v)$ as the \textit{upper Player} and the other one as the \textit{lower Player}, and their cost shares by $\xi_{u,.}$ and $\xi_{\ell,.}$, respectively. 
To simplify notation, we use the displayed path labels for the paths itself, but also for the costs of the corresponding paths. Additionally we just write $\xi_{i,p}$ for the sum of cost shares that Player $i$ pays on the edges of a path $p$. 

	\begin{figure}[H] \centering \psset{unit=1.3cm}
 \begin{pspicture}(-0.5,-1)(7,1) 
\Knoten{0.2}{0.8}{s1}\nput{180}{s1}{$u$}
\Knoten{0.2}{-0.8}{s2}\nput{180}{s2}{$w$}
\Knoten{0.5}{-0.5}{v1}
\Knoten{1}{0}{v2}
\Knoten{2.125}{0}{v3}
\Knoten{3.25}{0}{v4}
\Knoten{4.375}{0}{v5}
\Knoten{5.5}{0}{v6}
\Knoten{3.6}{0.75}{v8}
\Knoten{6}{0.5}{v9}
\Knoten{6.3}{0.8}{t1}\nput{0}{t1}{$v$}
\Knoten{6}{-0.5}{v10}
\Knoten{6.3}{-0.8}{t2}\nput{0}{t2}{$x$}
\ncline{s1}{v2}
\ncline{s2}{v1}
\ncline{v1}{v2}
\ncline{v2}{v3}
\Kante{v3}{v4}{a}
\Kante{v4}{v5}{b}
\Kante{v5}{v6}{c}
\Kante{v6}{v9}{d}
\Kante{v6}{v10}{f}
\ncline{v9}{t1}
\ncline{v10}{t2}
\Bogendashed{v1}{v5}{q_2}{-35}
\pscurve[linestyle=dashed](3.6,0.75)(6,1.3)(7,0.8)(6,-0.5)\rput(5,1.1){\colorbox{almostwhite}{$\gamma_1$}}
\Kantedashed{v8}{v9}{\alpha_2}
\Kantedashed{v3}{v8}{\alpha_1}
\Kantedashed{v4}{v8}{\gamma_2}
\end{pspicture}
\caption{NBC1}\label{propNBC1}
\end{figure}

\begin{lemma} \label{lemmanbcanfang} For NBC1 (cf. Figure~\ref{propNBC1}), we get the following properties: 
\begin{enumerate}[itemsep=-0em,leftmargin=*,label=\roman*)]
	\item All edges which are substituted by $q_1$ or by $q_3$ are completely paid.
	\item For $(u,v,w,x) \in \{(s_1,t_1,s_2,t_2), (t_1,s_1,t_2,s_2)\}$, Player 1 pays all edges of the commonly used path which are substituted by $q_1$, but not by $q_3$ $(a)$.
	\item For $(u,v,w,x) \in \{(s_2,t_2,s_1,t_1), (t_2,s_2,t_1,s_1)\}$, Player 2 pays all edges of the commonly used path which are substituted by $q_1$, but not by $q_3$ $(a)$.
\end{enumerate}
\end{lemma}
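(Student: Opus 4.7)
The plan is to exploit the shared node of $q_1$ and $q_3$ (the vertex $v_8$ in Figure~\ref{propNBC1}) to build a single alternative Steiner forest whose existence, together with optimality of $F$, pins down all three claims at once. Concretely, I will consider
\[
F^\star := \bigl(F \setminus \{a,b,c,d,f\}\bigr) \cup \alpha_1 \cup \alpha_2 \cup \gamma_1,
\]
in which the upper player is routed along $\alpha_1+\alpha_2$ from $v_3$ to $v_9$ and the lower player shares the prefix $\alpha_1$ before branching off to $v_{10}$ via $\gamma_1$. First I would verify that $F^\star$ is a feasible Steiner forest: the left parts of $P_u$ and $P_\ell$ still reach $v_3$, and the right portions of $P_u$ and $P_\ell$ are reached via $v_9$ and $v_{10}$ respectively; no cycle is created because $q_1$ and $q_3$ are internal node-disjoint from $P_u\cup P_\ell$.

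From optimality of $F$ I get
\[
c(\alpha_1)+c(\alpha_2)+c(\gamma_1) \;\ge\; c(a)+c(b)+c(c)+c(d)+c(f).
\]
The key algebraic step is to substitute on the left the two tightness equations $c(\alpha_1)+c(\alpha_2) = \xi_{u,a}+\xi_{u,b}+\xi_{u,c}+\xi_{u,d}$ (from $q_1$ being tight for the upper player) and $c(\gamma_2)+c(\gamma_1) = \xi_{\ell,b}+\xi_{\ell,c}+\xi_{\ell,f}$ (from $q_3$ being tight for the lower player), and then to use budget-balance $\xi_{u,x}+\xi_{\ell,x}=c(x)$ on the commonly used edges $x\in\{a,b,c\}$. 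After cancellation the inequality collapses to
\[
\bigl(c(a)-\xi_{u,a}\bigr)+\bigl(c(d)-\xi_{u,d}\bigr)+\bigl(c(f)-\xi_{\ell,f}\bigr)+c(\gamma_2) \;\le\; 0.
\]

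Every term on the left is nonnegative: the three $c-\xi$ terms by the LP constraint $\xi_{i,e}\le c(e)$, and $c(\gamma_2)\ge 0$ because edge costs are nonnegative. Hence each term must vanish, giving $\xi_{u,a}=c(a)$, $\xi_{u,d}=c(d)$, $\xi_{\ell,f}=c(f)$, and $c(\gamma_2)=0$. Combined with the standing assumption that the commonly used edges $a,b,c$ are completely paid, this yields (i), since the edges substituted by $q_1$ are $a,b,c,d$ and those substituted by $q_3$ are $b,c,f$. For (ii) and (iii), $\xi_{u,a}=c(a)$ together with $\xi_{u,a}+\xi_{\ell,a}=c(a)$ forces $\xi_{\ell,a}=0$, so the upper player pays $a$ completely; depending on whether Player~$1$ or Player~$2$ sits in the upper role, this is precisely (ii) or (iii).

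The main obstacle, modest as it is, lies in choosing $F^\star$ correctly: one has to include exactly $\alpha_1,\alpha_2,\gamma_1$ (and not $\gamma_2$) so that both players are connected while the shared prefix $\alpha_1$ is counted only once, and one has to match up the two tightness equations to the right edge sets before invoking budget-balance. Once this combinatorial bookkeeping is pinned down, the algebra reduces to the single displayed inequality and the rest is immediate.
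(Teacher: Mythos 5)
Your proposal is correct and follows essentially the same route as the paper: both arguments add the tightness equations for $q_1$ and $q_3$, cancel the shared cost shares on $b$ and $c$ via budget balance, and compare against the optimality bound $\alpha_1+\alpha_2+\gamma_1 \geq a+b+c+d+f$ obtained from the same modified forest $F^\star$, forcing $\xi_{u,a}=a$, $\xi_{u,d}=d$, $\xi_{\ell,f}=f$ (your additional observation $\gamma_2=0$ is implicit in the paper's version). The only quibble is your claim that no cycle is created in $F^\star$ — since $q_1$ and $q_3$ are not node-disjoint in NBC1 this need not hold, but it is immaterial because connectivity of both terminal pairs already yields the required cost inequality, as the paper notes once and for all in the proof of Case $L$.
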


\begin{proof}
Since $q_1$ is a tight alternative for the upper player and $q_3$ is a tight alternative for the lower player, we get
\[
\alpha_1+\alpha_2=\xi_{u,a}+\xi_{u,b}+\xi_{u,c}+\xi_{u,d} \text{ and }  \gamma_1+\gamma_2=\xi_{\ell,b}+\xi_{\ell,c}+\xi_{\ell,f}.
\]
Adding these equalities yields
\[
\alpha_1+\alpha_2+\gamma_1+\gamma_2=\xi_{u,a}+b+c+\xi_{u,d}+\xi_{\ell,f}.
\]
Since $P_u \cup P_{\ell}$ is an optimal Steiner forest and adding $\alpha_1, \alpha_2$ and $\gamma_1$, while deleting $a,b,c,d$ and $f$, yields another Steiner forest, $\alpha_1+\alpha_2+ \gamma_1 \geq a+b+c+d+f$ has to hold. Altogether we get that $\xi_{u,a}+\xi_{u,d}+\xi_{\ell,f} \geq a+d+f$ and therefore $\xi_{u,a}=a, \xi_{u,d}=d$ and $\xi_{\ell,f}=f$ holds.
\end{proof}

	\begin{figure}[H] \centering \psset{unit=1.3cm}
 \begin{pspicture}(0.5,-1)(7,0.8) 
\Knoten{1.2}{0.8}{s1}\nput{180}{s1}{$u$}
\Knoten{1.2}{-0.8}{s2}\nput{180}{s2}{$w$}
\Knoten{1.5}{-0.5}{v1}
\Knoten{2}{0}{v2}
\Knoten{3.25}{0}{v4}
\Knoten{4.375}{0}{v5}
\Knoten{5.5}{0}{v6}
\Knoten{3.8}{-1}{v7}
%
\Knoten{6.3}{0.8}{t1}\nput{0}{t1}{$v$}
\Knoten{6}{-0.5}{v10}
\Knoten{6.3}{-0.8}{t2}\nput{0}{t2}{$x$}
\ncline{s1}{v2}
\ncline{s2}{v1}
\Kante{v1}{v2}{a}
\Kante{v2}{v4}{b}
\Kante{v4}{v5}{c}
\Kante{v5}{v6}{d}
\ncline{v6}{t1}
\Kante{v6}{v10}{f}
\ncline{v10}{t2}
\Kantedashed{v1}{v7}{\beta_1}
\Kantedashed{v10}{v7}{\gamma_1}
\Kantedashed{v7}{v4}{\gamma_2}
\Kantedashed{v7}{v5}{\beta_2}
%
\end{pspicture}
\caption{NBC2}\label{propNBC2}
\end{figure}

\begin{lemma} For NBC2 (cf. Figure~\ref{propNBC2}), we get the following properties:
	\begin{enumerate}[itemsep=-0em,leftmargin=*,label=\roman*)]
	\item For $(u,v,w,x) \in \{(s_1,t_1,s_2,t_2), (t_1,s_1,t_2,s_2)\}$, there is a tight alternative for Player 2 which substitutes all edges of $P_{2}$ which are substituted by $q_2$ or $q_3$ (in particular all commonly used edges).
	Furthermore, Player 2 pays all edges of the commonly used path which are substituted by $q_2$ \textbf{and} $q_3$ $(c)$. 
	\item For $(u,v,w,x) \in \{(s_2,t_2,s_1,t_1), (t_2,s_2,t_1,s_1)\}$, there is a tight alternative for Player 1 which substitutes all edges of $P_{1}$ which are substituted by $q_2$ or $q_3$ (in particular all commonly used edges). 
	Furthermore, Player 1 pays all edges of the commonly used path which are substituted by $q_2$ \textbf{and} $q_3$ $(c)$. 
\end{enumerate} 
\end{lemma}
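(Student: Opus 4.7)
The plan is to build the desired tight alternative for the lower player by stitching together the subpaths of $q_{2}$ and $q_{3}$ at a common node, and then to use optimality of $F$ together with a short chain of LP inequalities to force complete payment of the commonly substituted middle edge(s). Throughout I abbreviate by writing $\xi_{\ell,p}$ for the sum of cost shares of the lower player on the edges of a subpath $p$ of $P_{\ell}$, and I use the labels $a,b,c,d,f$ from Figure~\ref{propNBC2}, where the single edge $c$ schematically stands for the (possibly longer) subpath of $M$ that is substituted by both $q_{2}$ and $q_{3}$.

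First, pick any node $v^{*}\in q_{2}\cap q_{3}$ (guaranteed by NBC2) and split $\vec{q_{2}}=\beta_{1}\beta_{2}$ and $\vec{q_{3}}=\gamma_{1}\gamma_{2}$ at $v^{*}$, with $\beta_{1}$ and $\gamma_{1}$ containing the endpoints of $q_{2}$ and $q_{3}$ lying off $M$. Then $\beta_{1}\cup\gamma_{1}$ connects those two off-$M$ endpoints, yielding an alternative for the lower player that substitutes exactly the edges of $P_{\ell}\cap(C_{2}\cup C_{3})=\{a,b,c,d,f\}$, while $\beta_{2}\cup\gamma_{2}$ connects the two entry nodes of $q_{2},q_{3}$ on $M$ and therefore substitutes only the middle edges $c\in C_{2}\cap C_{3}$. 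Summing the tightness identities for $q_{2}$ and $q_{3}$ gives
\[
\beta_{1}+\beta_{2}+\gamma_{1}+\gamma_{2}=\xi_{\ell,a}+\xi_{\ell,b}+2\xi_{\ell,c}+\xi_{\ell,d}+\xi_{\ell,f},
\]
whereas the LP constraint~\eqref{Nash1} applied to the two new alternatives provides the lower bounds $\beta_{1}+\gamma_{1}\ge \xi_{\ell,a}+\xi_{\ell,b}+\xi_{\ell,c}+\xi_{\ell,d}+\xi_{\ell,f}$ and $\beta_{2}+\gamma_{2}\ge\xi_{\ell,c}$. Adding these lower bounds reproduces exactly the right-hand side of the displayed equation, so both LP inequalities must in fact be equalities; in particular $\beta_{1}\cup\gamma_{1}$ is a tight alternative for the lower player that substitutes all of $P_{\ell}\cap(C_{2}\cup C_{3})$, which is the first assertion of (i) and (ii).

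For the second assertion, I replace the middle subpath $c$ of $F$ by the walk $\beta_{2}\cup\gamma_{2}$: since this walk joins the two endpoints of $c$ on $M$, both players' terminals remain connected, so optimality of $F$ forces $c\le\beta_{2}+\gamma_{2}$. Combining this with $\beta_{2}+\gamma_{2}=\xi_{\ell,c}$ from the previous step and the budget-balance bound $\xi_{\ell,c}\le c$ yields $\xi_{\ell,c}=c$, which is the claim that the lower player pays all edges of $C_{2}\cap C_{3}\cap M$ completely. The main technical nuisance is the bookkeeping when $q_{2}\cap q_{3}$ contains several nodes (so that $\beta_{1}\cup\gamma_{1}$ is only a walk) or when the middle set $C$ consists of several edges: in both cases one chooses $v^{*}$ as (say) the first common node along $\vec{q_{2}}$ and, if necessary, shortcuts the walk $\beta_{1}\cup\gamma_{1}$ to a simple path of no greater cost; the LP chain remains valid and forces the shortcut to be tight as well, and the optimality argument for $\beta_{2}\cup\gamma_{2}$ applies edgewise to each edge of $C$.
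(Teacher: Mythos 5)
Your proof is correct and follows essentially the same route as the paper's: both split $q_2$ and $q_3$ at a common node into $\beta_1,\beta_2$ and $\gamma_1,\gamma_2$, add the two tightness identities, apply the LP constraint to the stitched alternative $\beta_1\cup\gamma_1$ (which substitutes $a,b,c,d,f$) to force $\xi_{\ell,c}=\beta_2+\gamma_2$ and the tightness of that alternative, and then use optimality of $F$ (replacing $c$ by $\beta_2\cup\gamma_2$) together with budget balance to conclude $\xi_{\ell,c}=c$. The only cosmetic difference is that you additionally invoke the LP constraint for $\beta_2\cup\gamma_2$ to close the chain of inequalities, whereas the paper extracts $\xi_{\ell,c}\ge\beta_2+\gamma_2$ directly from the arithmetic; both variants are valid.
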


\begin{proof}
Since $q_2$ and $q_3$ are tight alternatives for the lower player we get
\[
\beta_1+\beta_2=\xi_{\ell,a}+\xi_{\ell,b}+\xi_{\ell,c} \text{ and } \gamma_1+\gamma_2=\xi_{\ell,c}+\xi_{\ell,d}+\xi_{\ell,f}.
\]
Adding these two equalities yields 
\[
\beta_1+\beta_2+\gamma_1+\gamma_2=\xi_{\ell,a}+\xi_{\ell,b}+2\xi_{\ell,c}+\xi_{\ell,d}+\xi_{\ell,f}.
\]
Furthermore there is an alternative $q$ for the lower Player which substitutes $a,b,c,d$ and $f$ (contained in the union of the paths $\beta_1$ and $\gamma_1$) and therefore 
\[
\beta_1+\gamma_1 \geq c(q) \geq \xi_{\ell,a}+\xi_{\ell,b}+\xi_{\ell,c}+\xi_{\ell,d}+\xi_{\ell,f}
\]
holds, where $c(q)$ denotes the cost of the alternative $q$. Using this we get
\[
\xi_{\ell,a}+\xi_{\ell,b}+2\xi_{\ell,c}+\xi_{\ell,d}+\xi_{\ell,f}=\beta_1+\beta_2+\gamma_1+\gamma_2 \geq  \xi_{\ell,a}+\xi_{\ell,b}+\xi_{\ell,c}+\xi_{\ell,d}+\xi_{\ell,f} + \beta_2+\gamma_2
\]
and thus $\xi_{\ell,c} \geq \beta_2+\gamma_2$ holds. On the other hand, $\xi_{\ell,c}\leq c \leq \gamma_2+\beta_2$, since adding $\gamma_2$ and $\beta_2$ while deleting $c$ also yields a Steiner forest. This implies $\xi_{\ell,c}= c = \gamma_2+\beta_2$. Furthermore we get that $q$ is a tight alternative since $\xi_{\ell,a}+\xi_{\ell,b}+2\xi_{\ell,c}+\xi_{\ell,d}+\xi_{\ell,f}=\beta_1+\gamma_1+\xi_{\ell,c}$ and therefore $\xi_{\ell,a}+\xi_{\ell,b}+\xi_{\ell,c}+\xi_{\ell,d}+\xi_{\ell,f}=\beta_1+\gamma_1=c(q)$ holds.
\end{proof}

	
	\begin{figure}[H] \centering \psset{unit=1cm}\captionsetup[subfigure]{labelformat=empty}
\subfloat[NBC3a]{
 \begin{pspicture}(-0.5,-1.5)(7,0.7) 
\Knoten{0}{1}{v1}\nput{180}{v1}{$u$}
\Knoten{0.5}{0.5}{n1}
\Knoten{0}{-1}{v2}\nput{180}{v2}{$w$}
\Knoten{0.5}{-0.5}{n2}
\Knoten{1}{0}{v3}
\Knoten{2.125}{0}{n5}
\Knoten{3.25}{0}{v4}
\Knoten{4.375}{0}{v6}
\Knoten{5.5}{0}{v7}
\Knoten{6}{0.5}{n3}
\Knoten{6}{-0.5}{n4}
\Knoten{6.5}{1}{v10}\nput{0}{v10}{$v$}
\Knoten{6.5}{-1}{v11}\nput{0}{v11}{$x$}

\ncline{-}{v1}{n1}
\Kante{n1}{v3}{a}
\Kante{n2}{v3}{b}
\ncline{-}{v2}{n2}
\Kante{v3}{n5}{c}
\Kante{v4}{n5}{d}
\Kante{v4}{v6}{f}
\Kante{v6}{v7}{g}
\Kante{v7}{n3}{h}
\ncline{-}{n3}{v10}
\Kante{v7}{n4}{i}
\ncline{-}{n4}{v11}
\Bogendashed{n5}{n3}{q_1}{35}
\Bogendashed{n1}{v6}{\beta_2}{35}
\Bogendashed{n2}{n1}{\beta_1}{30}
\Bogendashed{v4}{n4}{q_3}{-35}
\end{pspicture}
}
\hspace{0.2cm}
\subfloat[NBC3b]{
 \begin{pspicture}(-0.5,-1.5)(7.2,0.7) 
\Knoten{0}{1}{v1}\nput{180}{v1}{$u$}
\Knoten{0}{-1}{v2}\nput{180}{v2}{$w$}
\Knoten{0.5}{-0.5}{n2}
\Knoten{1}{0}{v3}
\Knoten{2.125}{0}{n5}
\Knoten{3.25}{0}{v4}
\Knoten{4.375}{0}{v6}
\Knoten{5.5}{0}{v7}
\Knoten{6}{0.5}{n3}
\Knoten{6.45}{0.95}{n6}
\Knoten{6}{-0.5}{n4}
\Knoten{6.7}{1.2}{v10}\nput{0}{v10}{$v$}
\Knoten{6.3}{-0.8}{v11}\nput{0}{v11}{$x$}

\ncline{-}{v1}{v3}
\Kante{n2}{v3}{a}
\ncline{-}{v2}{n2}
\Kante{v3}{n5}{b}
\Kante{v4}{n5}{c}
\Kante{v4}{v6}{d}
\Kante{v6}{v7}{f}
\Kante{v7}{n3}{g}
\Kante{n3}{n6}{h}
\ncline{-}{n6}{v10}
\ncline{-}{v7}{n4}
\ncline{-}{n4}{v11}
\Bogendashed{n5}{n3}{q_1}{45}
\Bogendashed{v6}{n6}{\beta_2}{20}
\Bogendashed{v4}{n4}{q_3}{-35}
\pscurve[linestyle=dashed](0.5,-0.5)(6.9,-0.9)(6.45,0.95)\rput(1.4,-0.9){\colorbox{almostwhite}{$\beta_1$}}
\end{pspicture}
}

\subfloat[NBC3c]{
 \begin{pspicture}(-0.5,-1.5)(7,1) 
\Knoten{0.2}{0.8}{v1}\nput{180}{v1}{$u$}
\Knoten{0.2}{-0.8}{v2}\nput{180}{v2}{$w$}
\Knoten{0.5}{-0.5}{n2}
\Knoten{1}{0}{v3}
\Knoten{2.125}{0}{n5}
\Knoten{3.25}{0}{v4}
\Knoten{4.375}{0}{v6}
\Knoten{5.5}{0}{v7}
\Knoten{6}{0.5}{n3}
\Knoten{6.45}{0.95}{n6}
\Knoten{6}{-0.5}{n4}
\Knoten{6.7}{1.2}{v10}\nput{0}{v10}{$v$}
\Knoten{6.3}{-0.8}{v11}\nput{0}{v11}{$x$}

\ncline{-}{v1}{v3}
\Kante{n2}{v3}{a}
\ncline{-}{v2}{n2}
\Kante{v3}{n5}{b}
\Kante{v4}{n5}{c}
\Kante{v4}{v6}{d}
\Kante{v6}{v7}{f}
\Kante{v7}{n3}{g}
\Kante{n3}{n6}{h}
\ncline{-}{n6}{v10}
\ncline{-}{v7}{n4}
\ncline{-}{n4}{v11}
\Bogendashed{n5}{n6}{q_1}{35}
\Bogendashed{v6}{n3}{\beta_2}{25}
\Bogendashed{v4}{n4}{q_3}{-35}
\pscurve[linestyle=dashed](0.5,-0.5)(6.9,-0.8)(6,0.5)\rput(1.4,-0.9){\colorbox{almostwhite}{$\beta_1$}}
\end{pspicture}
}

\subfloat[NBC3d]{
 \begin{pspicture}(0.5,-1)(7,1.3) 
\Knoten{0.8}{1.2}{v1}\nput{180}{v1}{$u$}
\Knoten{1.5}{0.5}{n1}
\Knoten{1.05}{0.95}{n6}
\Knoten{1.2}{-0.8}{v2}\nput{180}{v2}{$w$}
\Knoten{1.5}{-0.5}{n2}
\Knoten{2}{0}{v3}
\Knoten{3.25}{0}{v4}
\Knoten{4.375}{0}{v6}
\Knoten{5.5}{0}{v7}
\Knoten{6}{0.5}{n3}
\Knoten{6}{-0.5}{n4}
\Knoten{6.3}{0.8}{v10}\nput{0}{v10}{$v$}
\Knoten{6.3}{-0.8}{v11}\nput{0}{v11}{$x$}

\ncline{-}{v1}{n6}
\Kante{n6}{n1}{a}
\Kante{n1}{v3}{b}
\Kante{n2}{v3}{c}
\ncline{-}{v2}{n2}
\Kante{v3}{v4}{d}
\Kante{v4}{v6}{f}
\Kante{v6}{v7}{g}
\Kante{v7}{n3}{h}
\ncline{-}{n3}{v10}
\Kante{v7}{n4}{i}
\ncline{-}{n4}{v11}
\Bogendashed{n1}{n3}{q_1}{35}
\Bogendashed{n6}{v6}{\beta_2}{18}
\Bogendashed{v4}{n4}{q_3}{-35}
\Bogendashed{n2}{n6}{\beta_1}{20}
\end{pspicture}
}
\hspace{0.5cm}
\subfloat[NBC3e]{
 \begin{pspicture}(0.5,-1)(7,1.3) 
\Knoten{0.8}{1.2}{v1}\nput{180}{v1}{$u$}
\Knoten{1.5}{0.5}{n1}
\Knoten{1.05}{0.95}{n6}
\Knoten{1.2}{-0.8}{v2}\nput{180}{v2}{$w$}
\Knoten{1.5}{-0.5}{n2}
\Knoten{2}{0}{v3}
\Knoten{3.25}{0}{v4}
\Knoten{4.375}{0}{v6}
\Knoten{5.5}{0}{v7}
\Knoten{6}{0.5}{n3}
\Knoten{6}{-0.5}{n4}
\Knoten{6.3}{0.8}{v10}\nput{0}{v10}{$v$}
\Knoten{6.3}{-0.8}{v11}\nput{0}{v11}{$x$}

\ncline{-}{v1}{n6}
\Kante{n6}{n1}{a}
\Kante{n1}{v3}{b}
\Kante{n2}{v3}{c}
\ncline{-}{v2}{n2}
\Kante{v3}{v4}{d}
\Kante{v4}{v6}{f}
\Kante{v6}{v7}{g}
\Kante{v7}{n3}{h}
\ncline{-}{n3}{v10}
\Kante{v7}{n4}{i}
\ncline{-}{n4}{v11}
\Bogendashed{n6}{n3}{q_1}{25}
\Bogendashed{n1}{v6}{\beta_2}{25}
\Bogendashed{v4}{n4}{q_3}{-35}
\Bogendashed{n2}{n1}{\beta_1}{30}
\end{pspicture}
}
\end{figure}

\begin{figure}[H] \centering \psset{unit=1cm}\captionsetup[subfigure]{labelformat=empty}
\subfloat[NBC3f]{
 \begin{pspicture}(0.5,-1.7)(7,1.1) 
\Knoten{1.2}{0.8}{v1}\nput{180}{v1}{$u$}
\Knoten{1.5}{0.5}{n1}
\Knoten{1.2}{-0.8}{v2}\nput{180}{v2}{$w$}
\Knoten{1.5}{-0.5}{n2}
\Knoten{2}{0}{v3}
\Knoten{3.25}{0}{v4}
\Knoten{4.375}{0}{v6}
\Knoten{5.5}{0}{v7}
\Knoten{6}{0.5}{n3}
\Knoten{6}{-0.5}{n4}
\Knoten{6.45}{0.95}{n6}
\Knoten{6.7}{1.2}{v10}\nput{0}{v10}{$v$}
\Knoten{6.3}{-0.8}{v11}\nput{0}{v11}{$x$}

\ncline{-}{v1}{n1}
\Kante{n1}{v3}{a}
\Kante{n2}{v3}{b}
\ncline{-}{v2}{n2}
\Kante{v3}{v4}{c}
\Kante{v4}{v6}{d}
\Kante{v6}{v7}{f}
\Kante{v7}{n3}{g}
\Kante{n3}{n6}{h}
\ncline{-}{n6}{v10}
\Kante{v7}{n4}{i}
\ncline{-}{n4}{v11}
\Bogendashed{n1}{n3}{q_1}{35}
\Bogendashed{v4}{n4}{q_3}{-35}
\pscurve[linestyle=dashed](1.5,-0.5)(7,-0.8)(6.45,0.95)\rput(2.4,-0.8){\colorbox{almostwhite}{$\beta_1$}}
\Bogendashed{v6}{n6}{\beta_2}{20}
\end{pspicture}
}
\hspace{0.5cm}
\subfloat[NBC3g]{
 \begin{pspicture}(0.5,-1.7)(7,1.1) 
\Knoten{1.2}{0.8}{v1}\nput{180}{v1}{$u$}
\Knoten{1.5}{0.5}{n1}
\Knoten{1.2}{-0.8}{v2}\nput{180}{v2}{$w$}
\Knoten{1.5}{-0.5}{n2}
\Knoten{2}{0}{v3}
\Knoten{3.25}{0}{v4}
\Knoten{4.375}{0}{v6}
\Knoten{5.5}{0}{v7}
\Knoten{6}{0.5}{n3}
\Knoten{6}{-0.5}{n4}
\Knoten{6.45}{0.95}{n6}
\Knoten{6.7}{1.2}{v10}\nput{0}{v10}{$v$}
\Knoten{6.3}{-0.8}{v11}\nput{0}{v11}{$x$}

\ncline{-}{v1}{n1}
\Kante{n1}{v3}{a}
\Kante{n2}{v3}{b}
\ncline{-}{v2}{n2}
\Kante{v3}{v4}{c}
\Kante{v4}{v6}{d}
\Kante{v6}{v7}{f}
\Kante{v7}{n3}{g}
\Kante{n3}{n6}{h}
\ncline{-}{n6}{v10}
\Kante{v7}{n4}{i}
\ncline{-}{n4}{v11}
\Bogendashed{n1}{n6}{q_1}{20}
\Bogendashed{v4}{n4}{q_3}{-35}
\pscurve[linestyle=dashed](1.5,-0.5)(7,-0.8)(6,0.5)\rput(2.4,-0.8){\colorbox{almostwhite}{$\beta_1$}}
\Bogendashed{v6}{n3}{\beta_2}{25}
\end{pspicture}
}
\caption{NBC3}
\label{propNBC3}
\end{figure}
\begin{lemma} For NBC3 (cf. Figure~\ref{propNBC3}) we get the following properties:
\begin{enumerate}[itemsep=-0em,leftmargin=*,label=\roman*)]
\item For NBC3a, NBC3d and NBC3e: All edges which are substituted by $q_1, q_2$ or by $q_3$ are completely paid. 
\item 
For NBC3a and $(u,v,w,x)=(t_1,s_1,t_2,s_2)$: Player 2 pays all commonly used edges which are not substituted by $q_1$ $(c)$. 

\item For NBC3b, NBC3c, NBC3f and NBC3g: 
\begin{itemize}[itemsep=-0em,leftmargin=*]
		\item For $(u,v,w,x)=(s_1,t_1,s_2,t_2)$, Player 2 has a tight left alternative which substitutes all commonly used edges.
		\item For $(u,v,w,x)=(t_1,s_1,t_2,s_2)$, Player 2 has a tight right alternative which substitutes all commonly used edges. 
		\item For $(u,v,w,x)=(s_2,t_2,s_1,t_1)$, Player 2 pays all edges of the commonly used path which are not substituted by $q_2$ $(f)$. 
		\item For $(u,v,w,x)=(t_2,s_2,t_1,s_1)$, all edges of the commonly used path which are not substituted by $q_2$ $(f)$ have cost 0. 
	\end{itemize}
	\end{enumerate}
	\end{lemma}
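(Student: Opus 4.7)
The proof follows the template of the preceding lemmas for NBC1 and NBC2: combine the tightness identities $c(q_i)=\sum_{e} \xi_{\cdot,e}$ (summed over substituted edges) with the optimality of $F$ evaluated at carefully chosen modified Steiner forests $F^*=(F\cup q_{i_1}\cup\dots\cup q_{i_k})\setminus D$, in which both players are re-routed through the added alternatives and $D$ collects the edges that are no longer needed by either player. Because the NBC3 hypothesis (together with the PBC definition) forces $q_1,q_2,q_3$ to be pairwise node-disjoint, summing the three tightness identities collapses every commonly used edge substituted by both an upper and a lower alternative into $\xi_{u,e}+\xi_{\ell,e}=c(e)$, and the remaining terms are the cost shares of the non-commonly-used substituted edges in $L_u,L_\ell,R_u,R_\ell$.

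For item~(i) I would use two modifications. The first exploits the fact that $q_2$ passes through a node $n\in L_u$: define $F^A:=(F\setminus\{a,b,c,d,f\})\cup q_2$, where the upper player traverses $L_u$ only as far as $n$ and then follows $\beta_2$ to the right end of the middle part, while the lower player uses the full path $\beta_1\cup\beta_2=q_2$. Optimality $c(F^A)\geq c(F)$ combined with the tightness identity $c(q_2)=\xi_{\ell,b}+\xi_{\ell,c}+\xi_{\ell,d}+\xi_{\ell,f}$ and the per-edge LP bounds $\xi_{\ell,e}\leq c(e)$ forces $c(a)=0$ together with $\xi_{\ell,e}=c(e)$ for every $e\in\{b,c,d,f\}$; in particular $\xi_{u,c}=\xi_{u,d}=\xi_{u,f}=0$. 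The second modification $F^B:=(F\cup q_1\cup q_3)\setminus\{f,g,h,i\}$ gives $c(q_1)+c(q_3)\geq c(f)+c(g)+c(h)+c(i)$; substituting the tightness identities for $q_1$ and $q_3$ and using the equalities just derived reduces this to $\xi_{u,h}+\xi_{\ell,i}\geq c(h)+c(i)$, whence $\xi_{u,h}=c(h)$ and $\xi_{\ell,i}=c(i)$ by LP feasibility. This proves item~(i); item~(ii) is the equality $\xi_{\ell,c}=c(c)$ already supplied by $F^A$ under the corresponding role assignment. The subtypes NBC3d and NBC3e differ from NBC3a only in whether $q_1$ is big and whether the $L_u$-entry point $n$ lies inside or outside $C(q_1)$; the same two modifications apply after an obvious adjustment of the deletion sets.

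For item~(iii), the decisive feature of NBC3b, NBC3c, NBC3f, NBC3g is that $q_2$ contains a node $n\in R_u$. Let $\sigma$ be the sub-path of $R_u$ from $n$ back to the right branch point of the commonly used part; then $q':=\beta_1\cup\sigma$ is a simple left alternative for Player~2 substituting every commonly used edge (together with $a$). LP-feasibility of $q'$ combined with tightness of $q_2$ yields $c(\sigma)-c(\beta_2)\geq\xi_{\ell,f}$, while applying optimality to $F\cup\beta_2\setminus\sigma$ (Player~1 re-routes through $\beta_2$, and the edges of $\sigma$ are used only by Player~1) gives the reverse inequality $c(\sigma)\leq c(\beta_2)$. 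Together these force $\xi_{\ell,f}=0$ and turn the LP constraint for $q'$ into equality, producing the claimed tight left alternative. The right-alternative variant for the role $(t_1,s_1,t_2,s_2)$ comes from the symmetric construction in which $L_u$ and $R_u$ exchange roles; the sub-statement for $(s_2,t_2,s_1,t_1)$ is the reading $\xi_{2,f}=c(f)$ of $\xi_{\ell,f}=0$ when Player~2 is the upper player, and the sub-statement for $(t_2,s_2,t_1,s_1)$ is obtained by combining the two symmetric identities (one from each orientation), which together collapse to $c(f)=0$.

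The main obstacle is the routing bookkeeping: for each of the seven subtypes NBC3a--NBC3g one must verify that the constructed modified Steiner forest is valid (both source--sink pairs stay connected) and identify the correct deletion set $D$, which depends on whether $q_1$ is small or big and on the position of the $L_u$- or $R_u$-intersection of $q_2$ relative to the cycle $C(q_1)$. A secondary subtlety in item~(iii) is the verification that $q'=\beta_1\cup\sigma$ is a simple path, which uses the pairwise node-disjointness of $q_1,q_2,q_3$ enforced by NBC3 together with the internal disjointness of $q_2$ from $P_\ell$ guaranteed by the PBC definition.
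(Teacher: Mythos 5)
Your treatment of items (i), (ii) and the first three bullets of item (iii) follows essentially the same route as the paper: tightness identities for $q_1,q_2,q_3$ combined with the optimality of $F$ evaluated at modified Steiner forests. Your $F^A$ and $F^B$ for NBC3a are exactly the paper's two optimality inequalities $\beta_1+\beta_2\geq a+b+c+d+f$ and $q_1+q_3\geq f+g+h+i$ (the paper chains them into one inequality, you process them sequentially, which works), and your $q'=\beta_1\cup\sigma$ together with $c(\sigma)\leq c(\beta_2)$ is precisely the paper's construction of the tight alternative substituting all commonly used edges and its conclusion $\xi_{\ell,f}=0$.

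The gap is in the last bullet of item (iii), the claim that for $(u,v,w,x)=(t_2,s_2,t_1,s_1)$ the edge $f$ has cost $0$. You assert this follows by ``combining the two symmetric identities (one from each orientation)'', but both orientations $(s_2,t_2,s_1,t_1)$ and $(t_2,s_2,t_1,s_1)$ assign Player 1 to the lower role, so each yields only $\xi_{1,f}=0$, i.e.\ $\xi_{2,f}=c(f)$ --- which is the third bullet, not the fourth. The configuration is not symmetric between the upper and the lower player ($q_2$ passes through $R_u$; there is no mirror alternative through $R_{\ell}$), so no second identity is available for free. What is actually needed is the additional inequality $c(\beta_2)\geq\xi_{u,f}+\xi_{u,g}+\xi_{u,h}$; combined with $c(\beta_2)=c(g)+c(h)$ (which you do derive), with $\xi_{u,f}=c(f)$ (from $\xi_{\ell,f}=0$), and with $\xi_{u,g}=c(g)$ and $\xi_{u,h}=c(h)$ (which hold in this orientation because $g,h$ lie in the left part of $P_2$, are used only by Player 2, and precede the first unpaid edge $e$), it forces $c(f)\leq 0$. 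Establishing $c(\beta_2)\geq\xi_{u,f}+\xi_{u,g}+\xi_{u,h}$ is itself nontrivial, since $\beta_2$ need not be a feasible alternative for the upper player (it may internally revisit $L_u$ or $R_u$); the paper devotes a separate case analysis to extracting suitable subpaths of $\beta_2$ that are genuine alternatives. This step is absent from your proposal and does not follow from anything you have set up.
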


\begin{proof}
We start with NBC3a. Since $q_1,q_2$ and $q_3$ are tight alternatives we get 
\[
	q_1=\xi_{u,d}+\xi_{u,f}+\xi_{u,g}+\xi_{u,h}, \ 
	\beta_1+\beta_2=\xi_{\ell,b}+\xi_{\ell,c}+\xi_{\ell,d}+\xi_{\ell,f} \text{ and }
	q_3=\xi_{\ell,f}+\xi_{\ell,g}+\xi_{\ell,i} 
	\]
	and (by adding)
	\[
q_1+\beta_1+\beta_2+q_3=\xi_{\ell,b}+\xi_{\ell,c}+d+f+\xi_{\ell,f}+g+\xi_{u,h}+\xi_{\ell,i}.
	\]
	Since $P_u \cup P_{\ell}$ is optimal, $\beta_1+\beta_2 \geq a+b+c+d+f$ and $q_1+q_3 \geq f+g+h+i$ has to hold. This implies $q_1+\beta_1+\beta_2+q_3 \geq a+b+c+d+2f+g+h+i$ and therefore 
$\xi_{\ell,b}=b, \xi_{\ell,c}=c, \xi_{u,h}=h$ and $\xi_{\ell,i}=i$ holds.

\vspace{1em}
Now we consider NBC3d. Adding the tight alternatives $q_1,q_2$ and $q_3$ yields
\[
q_1+\beta_1+\beta_2+q_3=\xi_{u,b}+\xi_{\ell,c}+d+f+\xi_{\ell,f}+g+\xi_{u,h}+\xi_{\ell,i}.
\]
Since $P_u \cup P_{\ell}$ is optimal, $\beta_1+\beta_2 \geq a+b+c+d+f$ and $q_1+q_3 \geq b+f+g+h+i$ and therefore $q_1+\beta_1+\beta_2+q_3 \geq a+2b+c+d+2f+g+h+i$ has to hold. This implies $\xi_{u,b}=b=0, \xi_{\ell,c}=c, \xi_{u,h}=h$ and $\xi_{\ell,i}=i$.

\vspace{1em}
We omit the proof for NBC3e since it it almost analogous to the proof for NBC3d.
	
	\vspace{1em}
	For NBC3b we get
	\[
	\beta_1+\beta_2=\xi_{\ell,a}+\xi_{\ell,b}+\xi_{\ell,c}+\xi_{\ell,d} 
	\]
	because $q_2$ is a tight alternative of the lower Player. Since $\beta_1$ followed by $h$ and $g$ is an alternative $q$ for the lower Player which substitutes $a,b,c,d$ and $f$, 
	\[
	\beta_1+g+h \geq \xi_{\ell,a}+\xi_{\ell,b}+\xi_{\ell,c}+\xi_{\ell,d}+\xi_{\ell,f}
	\]
	holds. Furthermore we get $\beta_2 \geq g+h$ because $P_u \cup P_{\ell}$ is optimal. Together this implies
	\[
	\xi_{\ell,a}+\xi_{\ell,b}+\xi_{\ell,c}+\xi_{\ell,d} =\beta_1+\beta_2 \geq \xi_{\ell,a}+\xi_{\ell,b}+\xi_{\ell,c}+\xi_{\ell,d}+\xi_{\ell,f}
	\]
	and therefore $\xi_{\ell,f}=0$, $q$ is tight and $\beta_2=g+h$.
	
	We now use $\beta_2 \geq \xi_{u,f}+\xi_{u,g}+\xi_{u,h}$. It is clear that this holds if $\beta_2$ is an alternative for the upper Player; if not, the statement is still true (we will prove this below). 
Therefore we get 
	\[
	g+h=\beta_2 \geq \xi_{u,f}+\xi_{u,g}+\xi_{u,h}=f+\xi_{u,g}+\xi_{u,h}
	\]
	and, if $\xi_{u,g}=g$ and $\xi_{u,h}=h$ holds, $f=0$. 

\vspace{1em}
We now show that $\beta_2 \geq \xi_{u,f}+\xi_{u,g}+\xi_{u,h}$ always holds.
Consider the path $\beta_2$ and assume that $\beta_2$ is not an alternative for the upper player. That means that $\beta_2$ has to contain at least one node of $L_u$ or $R_u$ (additionally to the endnode which is already in $R_u$). 
 
Assume that $\beta_2$ contains a node of $L_u$. Considering $\beta_2$ as directed from left to right, let $q$ be the subpath of $\beta_2$ from the last node which is contained in $L_u$ until the next node which is in $R_u$. This is an alternative for the upper Player which substitutes a subpath $\ell$ of $L_u$, all commonly used edges and a subpath $r$ of $R_u$. Subdivide $g$ and $h$ into $g_1,g_2$ and $h_1,h_2$, where $g_1$ and $h_1$ are the subpaths of $g$ and $h$ which are contained in $r$ (note that $g_1=g$, $h_1=h$ or $h_2=h$ is possible). 
We then get 
\[
q\geq\xi_{u,\ell}+\xi_{u,b}+\xi_{u,c}+\xi_{u,d}+\xi_{u,f}+\xi_{u,r}\geq \xi_{u,f}+\xi_{u,g_1}+\xi_{u,h_1}.
\] 
Furthermore, $\beta_2 \geq q+g_2+h_2$ has to hold, since the remaining subpath of $\beta_2$ after $q$ has cost at least $g_2+h_2$ ($P_u \cup P_{\ell}$ is optimal).  
Together with $g_2+h_2 \geq \xi_{u,g_2}+\xi_{u,h_2}$ this yields the desired inequality. 

Now assume that $\beta_2$ does not contain a node of $L_u$. 
Again considering $\beta_2$ as directed from left to right, let $q$ be the subpath from the beginning of $\beta_2$ until the first node which is contained in $R_u$. This is an alternative for the upper Player which substitutes $f$, a part $g_1$ of $g$ and a part $h_1$ of $h$ (again by subdividing $g$ in $g_1,g_2$ and $h$ in $h_1,h_2$). Using $q \geq \xi_{u,f}+\xi_{u,g_1}+\xi_{u,h_1}$ and $\beta_2 \geq q+g_2+h_2$ yields the desired result in this case as well.

	\vspace{1em}
	We omit the proofs for NBC3c, NBC3f and NBC3g since they are almost analogous to the proof for NBC3b.
\end{proof}


\begin{figure}[H] \centering \psset{unit=1cm}\captionsetup[subfigure]{labelformat=empty}
\subfloat[NBC4a]{
 \begin{pspicture}(0,-1.3)(7,0.7) 
\Knoten{0.2}{0.8}{v1}\nput{180}{v1}{$u$}
\Knoten{0.5}{0.5}{n1}
\Knoten{0.2}{-0.8}{v2}\nput{180}{v2}{$w$}
\Knoten{0.5}{-0.5}{n2}
\Knoten{1}{0}{v3}
\Knoten{2.125}{0}{n5}
\Knoten{3.25}{0}{v4}
\Knoten{4.375}{0}{v6}
\Knoten{5.5}{0}{v7}
\Knoten{6}{0.5}{n3}
\Knoten{6}{-0.5}{n4}
\Knoten{6.3}{0.8}{v10}\nput{0}{v10}{$v$}
\Knoten{6.3}{-0.8}{v11}\nput{0}{v11}{$x$}

\ncline{-}{v1}{n1}
\Kante{n1}{v3}{a}
\Kante{n2}{v3}{b}
\ncline{-}{v2}{n2}
\Kante{v3}{n5}{c}
\Kante{v4}{n5}{d}
\Kante{v4}{v6}{f}
\Kante{v6}{v7}{g}
\Kante{v7}{n3}{h}
\ncline{-}{n3}{v10}
\Kante{v7}{n4}{i}
\ncline{-}{n4}{v11}
\Bogendashed{n5}{n3}{q_1}{35}
\Bogendashed{n2}{v6}{q_2}{-35}
\Bogendashed{n1}{v4}{\gamma_2}{25}
\pscurve[linestyle=dashed](6,-0.5)(-0.5,-0.8)(0.5,0.5)\rput(4.375,-1){\colorbox{almostwhite}{$\gamma_1$}}
\end{pspicture}
}
\hspace{0.5cm}
\subfloat[NBC4b]{
 \begin{pspicture}(0,-1.3)(7.2,0.7) 
\Knoten{0.2}{0.8}{v1}\nput{180}{v1}{$u$}
\Knoten{0.2}{-0.8}{v2}\nput{180}{v2}{$w$}
\Knoten{0.5}{-0.5}{n2}
\Knoten{1}{0}{v3}
\Knoten{2.125}{0}{n5}
\Knoten{3.25}{0}{v4}
\Knoten{4.375}{0}{v6}
\Knoten{5.5}{0}{v7}
\Knoten{6}{0.5}{n3}
\Knoten{6.4}{0.9}{n6}
\Knoten{6}{-0.5}{n4}
\Knoten{6.7}{1.2}{v10}\nput{0}{v10}{$v$}
\Knoten{6.3}{-0.8}{v11}\nput{0}{v11}{$x$}

\ncline{-}{v1}{v3}
\ncline{-}{n2}{v3}
\ncline{-}{v2}{n2}
\ncline{-}{v3}{n5}
\Kante{v4}{n5}{a}
\Kante{v4}{v6}{b}
\Kante{v6}{v7}{c}
\Kante{v7}{n3}{d}
\Kante{n3}{n6}{f}
\ncline{-}{n6}{v10}
\Kante{v7}{n4}{g}
\ncline{-}{n4}{v11}
\Bogendashed{n2}{v6}{q_2}{-35}
\Bogendashed{n5}{n3}{q_1}{40}
\Bogendashed{v4}{n6}{\gamma_2}{15}
\Bogendashed{n4}{n6}{\gamma_1}{-30}
\end{pspicture}
}

\subfloat[NBC4c]{
 \begin{pspicture}(0,-1)(7,1) 
\Knoten{0.2}{0.8}{v1}\nput{180}{v1}{$u$}
\Knoten{0.2}{-0.8}{v2}\nput{180}{v2}{$w$}
\Knoten{0.5}{-0.5}{n2}
\Knoten{1}{0}{v3}
\Knoten{2.125}{0}{n5}
\Knoten{3.25}{0}{v4}
\Knoten{4.375}{0}{v6}
\Knoten{5.5}{0}{v7}
\Knoten{6}{0.5}{n3}
\Knoten{6.25}{0.75}{n6}
\Knoten{6}{-0.5}{n4}
\Knoten{6.5}{1}{v10}\nput{0}{v10}{$v$}
\Knoten{6.3}{-0.8}{v11}\nput{0}{v11}{$x$}

\ncline{-}{v1}{v3}
\ncline{-}{n2}{v3}
\ncline{-}{v2}{n2}
\ncline{-}{v3}{n5}
\ncline{-}{v4}{n5}
\Kante{v4}{v6}{a}
\Kante{v6}{v7}{b}
\Kante{v7}{n3}{c}
\ncline{-}{n3}{v10}
\Kante{v7}{n4}{d}
\ncline{-}{n4}{v11}
\Bogendashed{n2}{v6}{q_2}{-35}
\Bogendashed{n5}{n6}{q_1}{35}
\Bogendashed{v4}{n3}{\gamma_2}{20}
\Bogendashed{n4}{n3}{\gamma_1}{-30}
\end{pspicture}
}

\subfloat[NBC4d]{
 \begin{pspicture}(0.5,-1.4)(7.2,1) 
\Knoten{0.8}{1.2}{v1}\nput{180}{v1}{$u$}
\Knoten{1.5}{0.5}{n1}
\Knoten{1.05}{0.95}{n6}
\Knoten{1.2}{-0.8}{v2}\nput{180}{v2}{$w$}
\Knoten{1.5}{-0.5}{n2}
\Knoten{2}{0}{v3}
\Knoten{3.25}{0}{v4}
\Knoten{4.375}{0}{v6}
\Knoten{5.5}{0}{v7}
\Knoten{6}{0.5}{n3}
\Knoten{6}{-0.5}{n4}
\Knoten{6.3}{0.8}{v10}\nput{0}{v10}{$v$}
\Knoten{6.3}{-0.8}{v11}\nput{0}{v11}{$x$}

\ncline{-}{v1}{n6}
\Kante{n6}{n1}{a}
\Kante{n1}{v3}{b}
\Kante{n2}{v3}{c}
\ncline{-}{v2}{n2}
\Kante{v3}{v4}{d}
\Kante{v4}{v6}{f}
\Kante{v6}{v7}{g}
\Kante{v7}{n3}{h}
\ncline{-}{n3}{v10}
\Kante{v7}{n4}{i}
\ncline{-}{n4}{v11}
\Bogendashed{n1}{n3}{q_1}{35}
\Bogendashed{n2}{v6}{q_2}{-35}
\Bogendashed{n6}{v4}{\gamma_2}{25}
\pscurve[linestyle=dashed](6,-0.5)(0.5,-0.8)(1.05,0.95)\rput(4.375,-1){\colorbox{almostwhite}{$\gamma_1$}}
\end{pspicture}
}
\hspace{1cm}
\subfloat[NBC4e]{
 \begin{pspicture}(0.5,-1.4)(7,1) 
\Knoten{0.8}{1.2}{v1}\nput{180}{v1}{$u$}
\Knoten{1.5}{0.5}{n1}
\Knoten{1.05}{0.95}{n6}
\Knoten{1.2}{-0.8}{v2}\nput{180}{v2}{$w$}
\Knoten{1.5}{-0.5}{n2}
\Knoten{2}{0}{v3}
\Knoten{3.25}{0}{v4}
\Knoten{4.375}{0}{v6}
\Knoten{5.5}{0}{v7}
\Knoten{6}{0.5}{n3}
\Knoten{6}{-0.5}{n4}
\Knoten{6.3}{0.8}{v10}\nput{0}{v10}{$v$}
\Knoten{6.3}{-0.8}{v11}\nput{0}{v11}{$x$}

\ncline{-}{v1}{n6}
\Kante{n6}{n1}{a}
\Kante{n1}{v3}{b}
\Kante{n2}{v3}{c}
\ncline{-}{v2}{n2}
\Kante{v3}{v4}{d}
\Kante{v4}{v6}{f}
\Kante{v6}{v7}{g}
\Kante{v7}{n3}{h}
\ncline{-}{n3}{v10}
\Kante{v7}{n4}{i}
\ncline{-}{n4}{v11}
\Bogendashed{n6}{n3}{q_1}{25}
\Bogendashed{n2}{v6}{q_2}{-35}
\Bogendashed{n1}{v4}{\gamma_2}{25}
\pscurve[linestyle=dashed](6,-0.5)(0.5,-0.8)(1.5,0.5)\rput(4.375,-1){\colorbox{almostwhite}{$\gamma_1$}}
\end{pspicture}
}
\end{figure}

\begin{figure}[H] \centering \psset{unit=1cm}\captionsetup[subfigure]{labelformat=empty}
\subfloat[NBC4f]{
 \begin{pspicture}(0.5,-1)(7,1) 
\Knoten{1.2}{0.8}{v1}\nput{180}{v1}{$u$}
\Knoten{1.5}{0.5}{n1}
\Knoten{1.2}{-0.8}{v2}\nput{180}{v2}{$w$}
\Knoten{1.5}{-0.5}{n2}
\Knoten{2}{0}{v3}
\Knoten{3.25}{0}{v4}
\Knoten{4.375}{0}{v6}
\Knoten{5.5}{0}{v7}
\Knoten{6}{0.5}{n3}
\Knoten{6.45}{0.95}{n6}
\Knoten{6}{-0.5}{n4}
\Knoten{6.7}{1.2}{v10}\nput{0}{v10}{$v$}
\Knoten{6.3}{-0.8}{v11}\nput{0}{v11}{$x$}

\ncline{-}{v1}{n1}
\Kante{n1}{v3}{a}
\Kante{n2}{v3}{b}
\ncline{-}{v2}{n2}
\Kante{v3}{v4}{c}
\Kante{v4}{v6}{d}
\Kante{v6}{v7}{f}
\Kante{v7}{n3}{g}
\Kante{n3}{n6}{h}
\ncline{-}{n6}{v10}
\Kante{v7}{n4}{i}
\ncline{-}{n4}{v11}
\Bogendashed{n2}{v6}{q_2}{-35}
\Bogendashed{n1}{n3}{q_1}{35}
\Bogendashed{v4}{n6}{\gamma_2}{20}
\Bogendashed{n4}{n6}{\gamma_1}{-30}
\end{pspicture}
}
\hspace{1cm}
\subfloat[NBC4g]{
 \begin{pspicture}(0.5,-1)(7,1) 
\Knoten{1.2}{0.8}{v1}\nput{180}{v1}{$u$}
\Knoten{1.5}{0.5}{n1}
\Knoten{1.2}{-0.8}{v2}\nput{180}{v2}{$w$}
\Knoten{1.5}{-0.5}{n2}
\Knoten{2}{0}{v3}
\Knoten{3.25}{0}{v4}
\Knoten{4.375}{0}{v6}
\Knoten{5.5}{0}{v7}
\Knoten{6}{0.5}{n3}
\Knoten{6.45}{0.95}{n6}
\Knoten{6}{-0.5}{n4}
\Knoten{6.7}{1.2}{v10}\nput{0}{v10}{$v$}
\Knoten{6.3}{-0.8}{v11}\nput{0}{v11}{$x$}

\ncline{-}{v1}{n1}
\Kante{n1}{v3}{a}
\Kante{n2}{v3}{b}
\ncline{-}{v2}{n2}
\Kante{v3}{v4}{c}
\Kante{v4}{v6}{d}
\Kante{v6}{v7}{f}
\Kante{v7}{n3}{g}
\Kante{n3}{n6}{h}
\ncline{-}{n6}{v10}
\Kante{v7}{n4}{i}
\ncline{-}{n4}{v11}

\Bogendashed{n2}{v6}{q_2}{-35}
\Bogendashed{n1}{n6}{q_1}{25}
\Bogendashed{v4}{n3}{\gamma_2}{20}
\Bogendashed{n4}{n3}{\gamma_1}{-30}
\end{pspicture}
}
\caption{NBC4}
\label{propNBC4}
\end{figure}
\begin{lemma} For NBC4 (cf. Figure~\ref{propNBC4}) we get the following properties:
\begin{enumerate}[itemsep=-0em,leftmargin=*,label=\roman*)]
\item For NBC4a:
\begin{itemize}[itemsep=-0em,leftmargin=*]
	\item All edges which are substituted by $q_1$ or by $q_3$ are completely paid.
	\item For $(u,v,w,x)=(t_1,s_1,t_2,s_2)$, Player 2 has a tight left alternative which substitutes all commonly used edges. 
	\item For $(u,v,w,x)=(t_2,s_2,t_1,s_1)$, Player 2 pays all commonly used edges which are not substituted by $q_3$ (c,d). 
	\end{itemize}

\item 
For NBC4b:
\begin{itemize}[itemsep=-0em,leftmargin=*]
	\item All edges which are substituted by $q_1$ or by $q_3$ are completely paid.
	\item All commonly used edges which are substituted by $q_3$ $(b,c)$ have cost 0. 
\end{itemize}

\item For NBC4c:
\begin{itemize}[itemsep=-0em,leftmargin=*]
	\item All edges which are substituted by $q_3$ are completely paid.
	\item For $(u,v,w,x) \in \{(s_1,t_1,s_2,t_2), (t_1,s_1,t_2,s_2)\}$, Player 2 pays all commonly used edges which are substituted by $q_3$ $(a,b)$.
	\item For $(u,v,w,x) \in \{(s_2,t_2,s_1,t_1), (t_2,s_2,t_1,s_1)\}$, Player 1 pays all commonly used edges which are substituted by $q_3$ $(a,b)$.
\end{itemize}

\item For NBC4d and NBC4e: 
\begin{itemize}[itemsep=-0em,leftmargin=*]
		\item For $(u,v,w,x)=(s_1,t_1,s_2,t_2)$, Player 2 has a tight right alternative which substitutes all commonly used edges. 
		\item For $(u,v,w,x)=(s_2,t_2,s_1,t_1)$, Player 1 has a tight right alternative which substitutes all commonly used edges.
		\item For $(u,v,w,x)=(t_2,s_2,t_1,s_1)$, Player 2 pays all edges of the commonly used path which are not substituted by $q_3$ $(d)$. 
	\end{itemize}
	
	
\item For NBC4f and NBC4g: All edges which are substituted by $q_1, q_2$ or by $q_3$ are completely paid.
\end{enumerate}
\end{lemma}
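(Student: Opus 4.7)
The plan is to treat each of the seven subcases NBC4a--NBC4g using the same template already employed for the NBC3 lemma. For fixed edge labels as in Figure~\ref{propNBC4}, I would first write the three tightness identities
\[
c(q_1) = \sum_{e} \xi_{u,e}, \quad c(q_2) = \sum_{e} \xi_{\ell,e}, \quad c(q_3) = \sum_{e} \xi_{\ell,e},
\]
where each sum ranges over the $F$-edges substituted by the corresponding alternative. Adding the three identities yields a single equation in which each edge of $F$ appears with multiplicity equal to the number of alternatives substituting it. Optimality of $F = P_u \cup P_\ell$ then supplies the inequalities $c(q_2) \geq c(\text{edges it substitutes})$ and $c(q_1) + c(q_3) \geq c(\text{edges covered by } q_1 \cup q_3)$. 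Combining these with the summed equation pins down $\xi_{i,e} = c(e)$ on every singly covered edge and forces $c(e) = 0$ on every doubly covered edge.

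The bulk of the work is bookkeeping. First I would treat the ``disjoint'' subcases (NBC4a, NBC4d, NBC4e, NBC4f, NBC4g), in which $q_1$ and $q_3$ are node-disjoint; here only the commonly used edges lying in $C_1 \cap C_3$ are covered twice, and the optimality inequalities close up immediately to give items (i), (iv), and (v). In NBC4b the same calculation shows that the commonly used edges substituted by both $q_1$ and $q_3$ must have cost $0$, yielding item (ii). For NBC4c one further observes that $q_1$ small but ending inside $R_u$ makes $q_1$ and $q_3$ share their incidence at the $R_u$--$R_\ell$ end, doubling up on the edges $a,b$ in the notation of the figure and again forcing the claimed cost-share identities for item (iii).

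The statements about existence of tight alternatives of the \emph{other} player (the upper player in the cases tagged $(t_1,s_1,t_2,s_2)$ and $(s_2,t_2,s_1,t_1)$ appearing in items (i) and (iv)) are obtained by constructing the candidate path $q$ as a suitable concatenation of a subpath of $\gamma_1$, the commonly used segment, and a subpath of $q_1$; the already-established cost-share equalities then give $c(q) = \sum \xi_{u,e}$ with the sum over the $P_u$-edges substituted by $q$, so $q$ is tight by definition.

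The main technical obstacle I expect is the one that already arose in the NBC3b proof: the candidate concatenation $q$ need not be a simple path avoiding $L_u \cup R_u$, so one must perform a local surgery at the last node of $q$ lying in $L_u$ (or the first node in $R_u$) to extract a genuine alternative whose cost still dominates the relevant sum of upper-player cost shares. I would repeat the splitting argument from that earlier proof, subdividing the edges $g$ and $h$ into subpaths inside and outside the relevant segment of $R_u$ and invoking optimality of $F$ on the residual piece to recover the full bound. All other steps reduce to routine edge-accounting once the three tightness equations have been lined up.
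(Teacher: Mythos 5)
Your overall strategy (tightness identities plus exchange inequalities from the optimality of $F$, handled case by case) is the same family of argument the paper uses, but two concrete steps in your recipe fail. First, the inequality $c(q_2)\ge c(\text{edges it substitutes})$ is not a valid exchange in general: deleting all edges of $C_2\cap P_{\ell}$ and adding $q_2$ can disconnect the \emph{other} player (in NBC4a, deleting $b,c,d,f$ and inserting $q_2$ cuts $u$ off from $v$), so optimality of $F$ gives you nothing there. The paper instead builds a \emph{new} path for the lower player --- in NBC4a it is $\gamma_1$ followed by the $L_u$-segment $a$ --- whose exchange is feasible; comparing its cost with $\gamma_1+\gamma_2$ via $\gamma_2\ge a$ simultaneously shows that this path is a tight alternative (which is exactly the second bullet of item (i) and, by the NBC3-symmetry, the first two bullets of item (iv)) and forces $\xi_{\ell,c}=\xi_{\ell,d}=0$. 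Your proposed concatenation ``subpath of $\gamma_1$, plus the commonly used segment, plus a subpath of $q_1$'' is not that path and need not be an alternative for the lower player at all, so the tight-alternative bullets are not actually established by your argument.

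Second, the bookkeeping principle ``singly covered $\Rightarrow$ fully paid, doubly covered $\Rightarrow$ cost $0$'' is false and contradicts the lemma itself. In NBC4a the commonly used edges $f,g$ are substituted by both $q_1$ and $q_3$, yet nothing forces $c(f)=c(g)=0$ (and the lemma does not claim it); in NBC4c the edges $a,b$ are covered by both $q_1$ and $q_3$, yet the claim to be proved is that the lower player pays them \emph{in full}, not that they are free. The edges that do get cost $0$ in NBC4b ($b,c$, and also $d$) come from combining $c(q_1)=\xi_{u,a}\le a$ with the exchange $c(q_1)\ge a+b+c+d$, i.e.\ from one alternative being forced to pay for only one of the edges it substitutes --- not from double coverage. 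The accounting has to be redone separately in each subcase with exchange inequalities that are actually realized by Steiner forests (the paper's choices in NBC4a are $\gamma_1+a$ against $c,d,f,g,i$ and $q_1+\gamma_1$ against $d,f,g,h,i$), rather than by applying a uniform summed template. For NBC4d--g the paper simply invokes the symmetry with NBC3; your plan to repeat the subdivision surgery there is workable but unnecessary.
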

\begin{proof}
We start with NBC4a. Since $q_3$ is a tight alternative of the lower player, 
\[
\gamma_1+\gamma_2=\xi_{\ell,f}+\xi_{\ell,g}+\xi_{\ell,i}
\]
holds. 
Furthermore, $\gamma_1$ followed by $a$ is an alternative $q$ for the lower player which substitutes $c,d,f,g$ and $i$. This implies
\[
\gamma_1+a \geq \xi_{\ell,c}+\xi_{\ell,d}+\xi_{\ell,f}+\xi_{\ell,g}+\xi_{\ell,i}.
\]
Using $\gamma_2 \geq a$  ($P_u \cup P_{\ell}$ is optimal) yields that $q$ is tight and $\xi_{\ell,c}=\xi_{\ell,d}=0$ holds. \\
Since $q_1$ is also tight we get
\[
q_1+\gamma_1+\gamma_2=d+f+g+\xi_{u,h}+\xi_{\ell,i}
\]
and the optimality of $P_u \cup P_{\ell}$ implies $q_1 +\gamma_1 \geq d+f+g+h+i$. This results in $\xi_{u,h}=h$ and $\xi_{\ell,i}=i$. 

\vspace{1em}
For NBC4b we get
\[
\xi_{\ell,b}+\xi_{\ell,c}+\xi_{\ell,g}=\gamma_1+\gamma_2 \geq b+c+d+f+g 
\]
since $q_3$ is tight and $P_u \cup P_{\ell}$ is optimal and therefore $\xi_{\ell,b}=b,\xi_{\ell,c}=c,\xi_{u,d}=d=0$ and $\xi_{\ell,g}=g$ holds. Furthermore 
\[
\xi_{u,a}=\xi_{u,a}+\xi_{u,b}+\xi_{u,c}+\xi_{u,d}=q_1\geq a+b+c
\]
and thus $b=c=0$ holds.

\vspace{1em}
For NBC4c we use
\[
\xi_{\ell,a}+\xi_{\ell,b}+\xi_{\ell,d} =\gamma_1+\gamma_2 \geq a+b+c+d 
\]
to get $\xi_{\ell,a}=a, \xi_{\ell,b}=b$ and $\xi_{\ell,d}=d$.

\vspace{1em}
The properties for NBC4d, NBC4e, NBC4f and NBC4g follow from the properties of the corresponding cases of NBC3 (by symmetry).
\end{proof}
	
	\begin{figure}[H] \centering \psset{unit=1.2cm}\captionsetup[subfigure]{labelformat=empty}
	\subfloat[NBC5a]{
	\begin{pspicture}(1.5,-1)(7.5,1) 
\Knoten{2}{1}{s1}\nput{180}{s1}{$u$}
\Knoten{2}{-1}{s2}\nput{180}{s2}{$w$}
\Knoten{2.5}{-0.5}{v1}
\Knoten{3}{0}{v2}
\Knoten{4.25}{0}{v3}

\Knoten{5.5}{0}{v6}
\Knoten{6}{0.5}{v9}
\Knoten{6.5}{1}{t1}\nput{0}{t1}{$v$}
\Knoten{6}{-0.5}{v10}
\Knoten{6.5}{-1}{t2}\nput{0}{t2}{$x$}
\ncline{s1}{v2}
\ncline{s2}{v1}
\Kante{v1}{v2}{a}
\Kante{v2}{v3}{b}
\Kante{v3}{v6}{c}
\Kante{v6}{v9}{d}
\Kante{v6}{v10}{f}
\ncline{v9}{t1}
\ncline{v10}{t2}
\Bogendashed{v1}{v3}{\alpha_1}{-10}
\Bogendashed{v10}{v9}{\alpha_3}{-15}
\Bogendashed{v1}{v10}{\alpha_2}{-15}
\end{pspicture}
	}
	\subfloat[NBC5b]{
	
	\begin{pspicture}(-1,-1)(5,1.2) 
\Knoten{0.2}{0.8}{s1}\nput{180}{s1}{$u$}
\Knoten{0}{-1}{s2}\nput{180}{s2}{$w$}
\Knoten{0.5}{-0.5}{v1}
\Knoten{1}{0}{v2}
\Knoten{2.25}{0}{v3}
\Knoten{3.5}{0}{v6}

\Knoten{4}{0.5}{v9}
\Knoten{4.5}{1}{t1}\nput{0}{t1}{$v$}
\Knoten{4}{-0.5}{v10}
\Knoten{4.5}{-1}{t2}\nput{0}{t2}{$x$}
\ncline{s1}{v2}
\ncline{s2}{v1}
\Kante{v1}{v2}{a}
\Kante{v2}{v3}{b}
\Kante{v3}{v6}{c}
\Kante{v6}{v9}{d}
\Kante{v6}{v10}{f}
\ncline{v9}{t1}
\ncline{v10}{t2}
\Bogendashed{v3}{v10}{\alpha_1}{-10}
\pscurve[linestyle=dashed](0.5,-0.5)(-0.8,0.7)(4,0.5)\rput(1.4,1.2){\colorbox{almostwhite}{$\alpha_3$}}
\Bogendashed{v1}{v10}{\alpha_2}{-15}
\end{pspicture}
	}
	
	\subfloat[NBC5c]{
	\begin{pspicture}(1.5,-1)(7.5,1.2) 
	\Knoten{2}{1}{s1}\nput{180}{s1}{$u$}
\Knoten{2}{-1}{s2}\nput{180}{s2}{$w$}
\Knoten{2.5}{-0.5}{v1}
\Knoten{2.5}{0.5}{n1}
\Knoten{3}{0}{v2}
\Knoten{5.5}{0}{v6}
\Knoten{6}{0.5}{v9}
\Knoten{6.5}{1}{t1}\nput{0}{t1}{$v$}
\Knoten{6}{-0.5}{v10}
\Knoten{6.5}{-1}{t2}\nput{0}{t2}{$x$}
\ncline{s1}{n1}
\Kante{n1}{v2}{a}
\ncline{s2}{v1}
\Kante{v1}{v2}{b}
\Kante{v2}{v6}{c}
\Kante{v6}{v9}{d}
\Kante{v6}{v10}{f}
\ncline{v9}{t1}
\ncline{v10}{t2}
\Bogendashed{v1}{v10}{\alpha_2}{-15}
\Bogendashed{v1}{n1}{\alpha_1}{15}
\Bogendashed{v10}{v9}{\alpha_3}{-15}
\end{pspicture}
	}
	\subfloat[NBC5d]{
	
		\begin{pspicture}(1,-1)(7.5,1.2) 
\Knoten{2.2}{0.8}{s1}\nput{180}{s1}{$u$}
\Knoten{2}{-1}{s2}\nput{180}{s2}{$w$}
\Knoten{2.5}{-0.5}{v1}
\Knoten{2.5}{0.5}{n1}
\Knoten{3}{0}{v2}
\Knoten{5.5}{0}{v6}
\Knoten{6}{0.5}{v9}
\Knoten{6.3}{0.8}{t1}\nput{0}{t1}{$v$}
\Knoten{6}{-0.5}{v10}
\Knoten{6.5}{-1}{t2}\nput{0}{t2}{$x$}
\ncline{s1}{n1}
\Kante{n1}{v2}{a}
\ncline{s2}{v1}
\Kante{v1}{v2}{b}
\Kante{v2}{v6}{c}
\Kante{v6}{v9}{d}
\Kante{v6}{v10}{f}
\ncline{v9}{t1}
\ncline{v10}{t2}
\pscurve[linestyle=dashed](2.5,-0.5)(1.5,0.7)(6,0.5)\rput(1.7,0){\colorbox{almostwhite}{$\alpha_3$}}
\pscurve[linestyle=dashed](6,-0.5)(7.1,0.7)(2.5,0.5)\rput(6,1.2){\colorbox{almostwhite}{$\alpha_1$}}
\Bogendashed{v1}{v10}{\alpha_2}{-15}
\end{pspicture}
	}
	\caption{NBC5} \label{propNBC5}
	\end{figure}
	\begin{lemma}For NBC5 (cf. Figure~\ref{propNBC5}) we get the following properties:
	\begin{enumerate}[itemsep=-0em,leftmargin=*,label=\roman*)]
	\item All edges which are substituted by $q_1$ are completely paid.
		\item For $(u,v,w,x) \in \{(s_1,t_1,s_2,t_2), (t_1,s_1,t_2,s_2)\}$, all commonly used edges are completely paid by Player 1. 
	\item For $(u,v,w,x) \in \{(s_2,t_2,s_1,t_1), (t_2,s_2,t_1,s_1)\}$,  all commonly used edges are completely paid by Player 2. 
	\end{enumerate}
	\end{lemma}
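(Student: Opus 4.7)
I will follow the same template as the NBC1--NBC4 arguments: combine the tight-alternative identity for $q_1$ with one or two cost bounds coming from the optimality of $F=P_u\cup P_\ell$ as a Steiner forest. I describe the proof in detail for NBC5a; the remaining subcases NBC5b, NBC5c, NBC5d are handled analogously (by symmetry or by the same calculation with $L_u$ playing the role of the first commonly used edge).

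Since $q_1=\alpha_1\cup\alpha_2\cup\alpha_3$ is a tight alternative for the upper player substituting the edges $c$ and $d$ of $P_u$,
\[
\alpha_1+\alpha_2+\alpha_3 \;=\; \xi_{u,c}+\xi_{u,d}.
\]
For the first cost bound, I would consider the Steiner forest $F'=(F\setminus\{a,c,d,f\})\cup q_1$. It is feasible: the upper player reaches $v$ via $L_u\cup\{b\}\cup q_1\cup(R_u\setminus\{d\})$, and the lower player reaches $x$ via $(L_\ell\setminus\{a\})\cup\alpha_2\cup(R_\ell\setminus\{f\})$. Optimality of $F$ therefore yields $q_1\ge a+c+d+f$. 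Combined with the tight identity and the LP constraints $\xi_{u,c}\le c$, $\xi_{u,d}\le d$, this forces $a=f=0$, $\xi_{u,c}=c$ and $\xi_{u,d}=d$, which is already property~(i) and one half of property~(ii).

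To finish (ii), I would use a second exchange $F''=(F\setminus\{b,c,d,f\})\cup\alpha_2\cup\alpha_3$. The upper player now reroutes through the cost-zero edge $a$ followed by $\alpha_2$ and $\alpha_3$, so $F''$ is feasible, and optimality gives $\alpha_2+\alpha_3\ge b+c+d$. Together with $\alpha_1+\alpha_2+\alpha_3=\xi_{u,c}+\xi_{u,d}=c+d$ this forces $\alpha_1\le -b$, hence $b=0$ and $\xi_{u,b}=b$. Property~(iii) follows by exactly the same calculation after swapping the two players' labels.

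The main obstacle will be the bookkeeping in the ``big'' subcases NBC5c and NBC5d, where $\vec{q_1}$ starts on $L_u$ rather than on $M$ and therefore additionally substitutes an edge of $L_u$. There one must choose slightly different exchange sets of edges for the auxiliary forests $F'$ and $F''$ (removing a portion of $L_u$ in place of the boundary edge $b$ and keeping a different connector on the $P_\ell$ side), and then verify that both players remain connected in each exchange. Once this routine verification is carried out, the same combination of the tight-alternative equation with two optimality bounds produces the required payment equalities, and the three listed properties follow as above.
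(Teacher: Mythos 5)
Your argument is correct and follows essentially the same route as the paper: the tightness identity $\alpha_1+\alpha_2+\alpha_3=\xi_{u,c}+\xi_{u,d}$ combined with cost lower bounds obtained by exchanging edges of the optimal forest against subpaths of $q_1$ (the paper uses the segment-wise bounds $\alpha_2\ge b+c$, $\alpha_3\ge d$ for NBC5a/b and $\alpha_1\ge a$, $\alpha_2\ge c$, $\alpha_3\ge d$ for NBC5c/d, which give the conclusion in one step, while your whole-$q_1$ and $\alpha_2\cup\alpha_3$ exchanges are equally valid and even yield the extra facts $a=f=0$). The only point needing care is the one you already flag: in NBC5c/d the tight identity reads $\alpha_1+\alpha_2+\alpha_3=\xi_{u,a}+\xi_{u,c}+\xi_{u,d}$ with the segment $a$ now lying on $L_u$, so your first exchange must also delete that substituted $L_u$-segment (applied verbatim it only gives an inconclusive inequality between boundary segments); with that adjustment the same two-exchange calculation goes through.
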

	\begin{proof}

	%
	
	For NBC5a and NBC5b we get $\alpha_1+\alpha_2+\alpha_3=\xi_{u,c}+\xi_{u,d}$ since $q_1$ is tight. The optimality of $P_u \cup P_{\ell}$ implies $\alpha_2 \geq b+c$	and $\alpha_3 \geq d$ and therefore $\alpha_2+\alpha_3 \geq b+c+d$. Altogether we get $\xi_{u,c}=c, \xi_{u,d}=d$ and $\xi_{u,b}=b=0$.
	
	\vspace{1em}
	For NBC5c and NBC5d we use $\alpha_1+\alpha_2+\alpha_3=\xi_{u,a}+\xi_{u,c}+\xi_{u,d}$, $\alpha_1 \geq a$, $\alpha_2 \geq c$ and	$\alpha_3 \geq d$ to get $\xi_{u,a}=a, \xi_{u,c}=c$ and $\xi_{u,d}=d$.
	\end{proof}

	\begin{figure}[H] \centering \psset{unit=1.06cm}\captionsetup[subfigure]{labelformat=empty}
	\subfloat[NBC6a]{
	\begin{pspicture}(-0.5,-1)(7,1) 
\Knoten{0.2}{0.8}{v1}\nput{180}{v1}{$u$}
\Knoten{0.2}{-0.8}{v2}\nput{180}{v2}{$w$}
\Knoten{0.5}{-0.5}{n2}
\Knoten{1}{0}{v3}
\Knoten{2.125}{0}{n5}
\Knoten{3.25}{0}{v4}
\Knoten{4.375}{0}{v6}
\Knoten{5.5}{0}{v7}
\Knoten{6}{0.5}{n3}
\Knoten{6}{-0.5}{n4}
\Knoten{6.45}{-0.95}{n6}
\Knoten{6.3}{0.8}{v10}\nput{0}{v10}{$v$}
\Knoten{6.7}{-1.2}{v11}\nput{0}{v11}{$x$}

\ncline{-}{v1}{v3}
\ncline{n2}{v3}
\ncline{-}{v2}{n2}
\ncline{v3}{n5}
\Kante{v4}{n5}{a}
\Kante{v4}{v6}{b}
\Kante{v6}{v7}{c}
\Kante{v7}{n3}{d}
\ncline{-}{n3}{v10}
\Kante{v7}{n4}{f}
\Kante{n4}{n6}{g}
\ncline{-}{n6}{v11}
\Bogendashed{n6}{n3}{\alpha_2}{-20}
\pscurve[linestyle=dashed](6.45,-0.95)(7.1,1)(2.125,0)\rput(4.4,1){\colorbox{almostwhite}{$\alpha_1$}}
\Bogendashed{n2}{v6}{q_2}{-35}
\Bogendashed{v4}{n4}{q_3}{-35}
\end{pspicture}
	}
	\subfloat[NBC6b]{
 	\begin{pspicture}(-0.5,-1)(7,1) 
\Knoten{0.2}{0.8}{v1}\nput{180}{v1}{$u$}
\Knoten{0.2}{-0.8}{v2}\nput{180}{v2}{$w$}
\Knoten{0.5}{-0.5}{n2}
\Knoten{1}{0}{v3}
\Knoten{2.125}{0}{n5}
\Knoten{3.25}{0}{v4}
\Knoten{4.375}{0}{v6}
\Knoten{5.5}{0}{v7}
\Knoten{6}{0.5}{n3}
\Knoten{6}{-0.5}{n4}
\Knoten{6.45}{-0.95}{n6}
\Knoten{6.3}{0.8}{v10}\nput{0}{v10}{$v$}
\Knoten{6.7}{-1.2}{v11}\nput{0}{v11}{$x$}

\ncline{-}{v1}{v3}
\ncline{n2}{v3}
\ncline{-}{v2}{n2}
\ncline{v3}{n5}
\Kante{v4}{n5}{a}
\Kante{v4}{v6}{b}
\Kante{v6}{v7}{c}
\Kante{v7}{n3}{d}
\ncline{-}{n3}{v10}
\Kante{v7}{n4}{f}
\Kante{n4}{n6}{g}
\ncline{-}{n6}{v11}
\Bogendashed{n4}{n3}{\alpha_2}{-20}
\pscurve[linestyle=dashed](6,-0.5)(7.1,1)(2.125,0)\rput(4.4,1){\colorbox{almostwhite}{$\alpha_1$}}
\Bogendashed{n2}{v6}{q_2}{-35}
\Bogendashed{v4}{n6}{q_3}{-35}
\end{pspicture}
	}
	\caption{NBC6}\label{propNBC6}
	\end{figure}
	\begin{lemma}For NBC6 (cf. Figure~\ref{propNBC6}) we get the following properties:
	\begin{enumerate}[itemsep=-0em,leftmargin=*,label=\roman*)]
		\item All edges which are substituted by $q_1$ or by $q_3$ are completely paid.
		\item For $(u,v,w,x) \in \{(s_1,t_1,s_2,t_2), (t_1,s_1,t_2,s_2)\}$, Player 1 pays all commonly used edges which are substituted by $q_1$ $(a,b,c)$. 
	\item For $(u,v,w,x) \in \{(s_2,t_2,s_1,t_1), (t_2,s_2,t_1,s_1)\}$,  Player 2 pays all commonly used edges which are substituted by $q_1$ $(a,b,c)$. 
	\end{enumerate}
	\end{lemma}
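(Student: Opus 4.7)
My plan is to follow the template of the earlier NBC lemmas: extract the tightness equalities for $q_1$ and $q_3$, and sandwich them between the budget-balance upper bounds and Steiner-forest swap inequalities from below, treating NBC6a and NBC6b in parallel.

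The first step is the basic swap: $F \cup q_1 \setminus \{a, b, c, d\}$ is a Steiner forest because $q_1$ reconnects $n_5$ to $n_3$, the upper player's path uses $q_1$, and the lower player can use the initial segment $\alpha_1$ of $q_1$ to reach the $R_\ell$-node contained in $q_1$ (namely $n_6$ in NBC6a, $n_4$ in NBC6b) and then continue to $t_2$ along the surviving portion of $R_\ell$. Optimality of $F$ therefore gives $q_1 \geq a + b + c + d$, and combining with the tight equation $q_1 = \xi_{u,a} + \xi_{u,b} + \xi_{u,c} + \xi_{u,d}$ and the bounds $\xi_{u,e} \leq e$ forces $q_1 = a + b + c + d$ and $\xi_{u,e} = e$ for every $e \in \{a, b, c, d\}$. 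This already establishes (ii) and (iii) (the upper player is Player 1 in the first case and Player 2 in the second) and the portion of (i) concerning the single-user edge $d$.

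For the remaining edges in (i), I enlarge the swap to $F \cup q_1 \setminus \{a, b, c, d, f\}$. This remains a Steiner forest: the upper player still uses $q_1$, and the lower player's route via $\alpha_1$ bypasses $f$ entirely in NBC6a (the dashed tail of $R_\ell$ past $n_6$ is kept) and uses $\alpha_1$ followed by $g$ and the dashed tail in NBC6b. Optimality therefore yields $q_1 \geq a + b + c + d + f$, and together with $q_1 = a + b + c + d$ this forces $f = 0$. Budget balance applied to the commonly used edges $a, b, c$ (using Step~1) gives $\xi_{\ell,a} = \xi_{\ell,b} = \xi_{\ell,c} = 0$, so the tight equation for $q_3$ collapses to $q_3 = \xi_{\ell,f} = 0$ in NBC6a, which already settles (i) for NBC6a since the only edges $q_3$ substitutes are $b, c$ (commonly used, paid by standing assumption) and $f$ (now zero).

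In NBC6b the tight equation collapses instead to $q_3 = \xi_{\ell, g}$, and I close the argument with a single-cycle optimality: $F \cup q_3 \setminus \{g\}$ is a Steiner forest because the lower player can reach $n_6$ directly via $q_3$ while the upper player is unaffected, so optimality gives $q_3 \geq g$, hence $\xi_{\ell,g} \geq g$, and the budget-balance upper bound forces $\xi_{\ell,g} = g$. The only mild obstacle in the whole plan is verifying the enlarged swap in Step~2: one must use the specific location of the $R_\ell$-node of $q_1$ (different in NBC6a and NBC6b) to see that lower's connection to $t_2$ survives the deletion of $f$, which is straightforward from the respective figures.
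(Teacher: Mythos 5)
Your proposal is correct and follows essentially the same route as the paper: the paper obtains everything in one stroke from the tightness equation $\alpha_1+\alpha_2=\xi_{u,a}+\xi_{u,b}+\xi_{u,c}+\xi_{u,d}$ combined with the single swap inequality $\alpha_1+\alpha_2\geq a+b+c+d+f(+g)$ (rerouting the lower player along $\alpha_1$ to the $R_\ell$-node of $q_1$), then closes NBC6b with $q_3=\xi_{\ell,g}\geq f+g$ exactly as you do. Splitting this into two successive swaps rather than one changes nothing of substance.
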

	
	\begin{proof}
	We start with NBC6a and get
	\[
	\xi_{u,a}+\xi_{u,b}+\xi_{u,c}+\xi_{u,d}= \alpha_1+\alpha_2 \geq a+b+c+d+f+g 
	\]
	since $q_1$ is tight and $P_u \cup P_{\ell}$ is optimal. Therefore $\xi_{u,a}=a, \xi_{u,b}=b, \xi_{u,c}=c, \xi_{u,d}=d$ and $\xi_{\ell,f}=f=0$ holds.
	
	\vspace{1em}
	For NBC6b we analogously get that $\xi_{u,a}=a, \xi_{u,b}=b, \xi_{u,c}=c, \xi_{u,d}=d$ and $\xi_{\ell,f}=f=0$ holds. Together with 
	\[
	\xi_{\ell,g}=\xi_{\ell,b}+\xi_{\ell,c}+\xi_{\ell,f}+\xi_{\ell,g}=q_3 \geq f+g=g
	\]
	this shows $\xi_{\ell,g}=g$.
	\end{proof}
	
	\begin{figure}[H] \centering \psset{unit=1.3cm}
	\begin{pspicture}(0.5,-1)(7.5,1) 
\Knoten{1.2}{0.8}{v1}\nput{180}{v1}{$u$}
\Knoten{1.5}{0.5}{n1}
\Knoten{1.2}{-0.8}{v2}\nput{180}{v2}{$w$}
\Knoten{1.5}{-0.5}{n2}
\Knoten{2}{0}{v3}
\Knoten{3.25}{0}{v4}
\Knoten{4.375}{0}{v6}
\Knoten{5.5}{0}{v7}
\Knoten{6}{0.5}{n3}
\Knoten{5.9}{-0.4}{n4}
\Knoten{6.25}{-0.75}{n5}
\Knoten{6.3}{0.8}{v10}\nput{0}{v10}{$v$}
\Knoten{6.5}{-1}{v11}\nput{0}{v11}{$x$}
\ncline{v1}{n1}
\Kante{n1}{v3}{a}
\Kante{n2}{v3}{b}
\ncline{-}{v2}{n2}
\Kante{v3}{v4}{c}
\Kante{v4}{v6}{d}
\Kante{v6}{v7}{f}
\Kante{v7}{n3}{g}
\ncline{-}{n3}{v10}
\Kante{v7}{n4}{h}
\Kante{n4}{n5}{i}
\ncline{-}{n5}{v11}
\Bogendashed{n5}{n3}{\alpha_2}{-20}
\pscurve[linestyle=dashed](6.25,-0.75)(7.1,0.6)(1.5,0.5)\rput(4.4,1.2){\colorbox{almostwhite}{$\alpha_1$}}
\Bogendashed{n2}{v6}{q_2}{-35}
\Bogendashed{v4}{n4}{q_3}{-35}
\end{pspicture}
\caption{NBC7}\label{propNBC7}
\end{figure}
\begin{lemma}For NBC7 (cf. Figure~\ref{propNBC7}) we get that all edges which are substituted by $q_1, q_2$ or by $q_3$ are completely paid.\end{lemma}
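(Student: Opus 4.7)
The plan is to imitate the strategy of the preceding NBC lemmas: sum the tight alternative equalities, compare with the cost of a rerouted Steiner forest whose feasibility is guaranteed by optimality of $P_u\cup P_\ell$, and squeeze each cost share to its upper bound. Label the edges as in Figure~\ref{propNBC7}, so that $q_1$ substitutes $\{a,c,d,f,g\}$, $q_2$ substitutes $\{b,c,d\}$, and $q_3$ substitutes $\{b,c,d,f,h\}$, with $q_1$ passing through the node $n_5\in R_\ell$ that lies strictly beyond $n_4$ on $P_\ell$.

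First I would record the three tightness identities
\[
q_1=\xi_{u,a}+\xi_{u,c}+\xi_{u,d}+\xi_{u,f}+\xi_{u,g},\quad
q_2=\xi_{\ell,b}+\xi_{\ell,c}+\xi_{\ell,d},\quad
q_3=\xi_{\ell,b}+\xi_{\ell,c}+\xi_{\ell,d}+\xi_{\ell,f}+\xi_{\ell,h}.
\]
Next I would exhibit the candidate Steiner forest
\[
F^\star := \bigl(P_u\cup P_\ell\bigr)\setminus\{a,b,c,d,f,g,h\}\ \cup\ q_1\ \cup\ q_3,
\]
whose feasibility is the geometric heart of the argument: $u$ reaches $v$ via the remaining left segment of $P_u$, then $n_1$, then $\alpha_1\cup\alpha_2=q_1$, then $n_3$, then the right tail of $R_u$; and $w$ reaches $x$ via the remaining left segment of $P_\ell$, then the start of $q_3$, then $n_4$, then the edge $i\in R_\ell$ that was \emph{not} removed, landing at $n_5$, and finally the right tail of $R_\ell$ to $x$. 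The crucial point (and, I expect, the only non-routine step) is this last piece of routing: it is precisely the defining property of NBC7 — namely that $\vec q_1$ enters $R_\ell$ at a node $n_5$ that is \emph{not} substituted by $q_3$ — that makes $n_4$ and $x$ still connected after deleting $h$.

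From the optimality of $P_u\cup P_\ell$ we then get
\[
q_1+q_3\ \geq\ a+b+c+d+f+g+h.
\]
On the other hand, adding the equalities for $q_1$ and $q_3$ and using $\xi_{u,e}\leq c(e)$, $\xi_{\ell,e}\leq c(e)$ and $\xi_{u,e}+\xi_{\ell,e}\leq c(e)$ on the commonly used edges $c,d,f$ yields
\[
q_1+q_3\ =\ \xi_{u,a}+\xi_{\ell,b}+(\xi_{u,c}+\xi_{\ell,c})+(\xi_{u,d}+\xi_{\ell,d})+(\xi_{u,f}+\xi_{\ell,f})+\xi_{u,g}+\xi_{\ell,h}\ \leq\ a+b+c+d+f+g+h.
\]
Both bounds together force equality throughout, so $\xi_{u,a}=a$, $\xi_{\ell,b}=b$, $\xi_{u,g}=g$, $\xi_{\ell,h}=h$, and $\xi_{u,e}+\xi_{\ell,e}=c(e)$ for $e\in\{c,d,f\}$. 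Since every edge substituted by $q_1$, $q_2$ or $q_3$ lies in $\{a,b,c,d,f,g,h\}$, each such edge is completely paid, and the lemma follows. No case split on the identification of $(u,v,w,x)$ is needed because the argument is entirely symmetric in the two players.
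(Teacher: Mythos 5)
There is a genuine gap, and it starts with a misreading of the configuration. In NBC7 (Figure~\ref{propNBC7}) the path $q_3$ runs from the node between $c$ and $d$ to the node between $h$ and $i$, so its cycle with $P_\ell$ contains only $d,f,h$; it does \emph{not} substitute $b$ and $c$ (indeed, by the definition of a PBC, $M\setminus C_3$ must contain at least one edge). Your tightness identity for $q_3$ is therefore wrong (the correct one is $q_3=\xi_{\ell,d}+\xi_{\ell,f}+\xi_{\ell,h}$), and, more seriously, your candidate forest $F^\star=(P_u\cup P_\ell)\setminus\{a,b,c,d,f,g,h\}\cup q_1\cup q_3$ is not a Steiner forest: once $b$ and $c$ are deleted, the terminal $w$ lies alone in the component consisting of the initial segment of $L_\ell$ ending at the left endpoint of $b$, and neither $q_1$ (whose endpoints lie on $L_u$, $R_\ell$ beyond $i$, and $R_u$) nor $q_3$ (whose endpoints lie on $M$ and on $R_\ell$ between $h$ and $i$) touches that component. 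Hence the key inequality $q_1+q_3\ge a+b+c+d+f+g+h$ is unjustified, and it cannot hold in general, since $b$ can be made arbitrarily large without affecting $q_1$ or $q_3$.

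The fix is to bring $q_2$ into the exchange argument: it is the only alternative that reconnects $w$ after $b$ is removed. This is exactly what the paper does. It uses the two reroutings $\alpha_1+q_2\ge a+b+c+d$ (delete $a,b,c,d$; reconnect $w$ via $q_2$ into the common endpoint of $d$ and $f$, and reconnect $u$ via $\alpha_1$, $i$, $h$, $g$) and $\alpha_2+q_3\ge d+f+g+h$, adds them (so $d$ is counted twice), and compares the result with the sum of all three tightness identities,
\[
\alpha_1+\alpha_2+q_2+q_3=\xi_{u,a}+\xi_{\ell,b}+c+d+\xi_{\ell,d}+f+\xi_{u,g}+\xi_{\ell,h},
\]
which squeezes out $\xi_{u,a}=a$, $\xi_{\ell,b}=b$, $\xi_{u,g}=g$ and $\xi_{\ell,h}=h$; together with the standing fact that the commonly used edges $c,d,f$ are completely paid, this gives the lemma. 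Your overall strategy (sum tight identities, compare with an exchange bound) is the right one, but the single-exchange shortcut that omits $q_2$ does not work here.
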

\begin{proof}
Using that $q_1, q_2$ and $q_3$ are tight yields
\[
\alpha_1+\alpha_2+q_2+q_3=\xi_{u,a}+\xi_{\ell,b}+c+d+\xi_{\ell,d}+f+\xi_{u,g}+\xi_{\ell,h}.
\]
The optimality of $P_u \cup P_{\ell}$ implies $\alpha_1+q_2 \geq a+b+c+d$ and 
	$\alpha_2+q_3 \geq d+f+g+h$ and therefore 
	\[
	\alpha_1+\alpha_2+q_2+q_3 \geq a+b+c+2d+f+g+h.
	\]
Altogether we get $\xi_{u,a}=a, \xi_{\ell,b}=b, \xi_{u,g}=g$ and $\xi_{\ell,h}=h$.
	\end{proof}
	
	
\begin{figure}[H] \centering \psset{unit=1cm}\captionsetup[subfigure]{labelformat=empty}
\subfloat[NBC8a]{
\begin{pspicture}(0.5,-1)(7.5,1.2) 
\Knoten{1.2}{0.8}{v1}\nput{180}{v1}{$u$}
\Knoten{1.5}{0.5}{n1}
\Knoten{1.2}{-0.8}{v2}\nput{180}{v2}{$w$}
\Knoten{1.5}{-0.5}{n2}
\Knoten{2}{0}{v3}
\Knoten{3.25}{0}{v4}
\Knoten{4.375}{0}{v6}
\Knoten{4.375}{1}{n7}
\Knoten{5.5}{0}{v7}
\Knoten{6}{0.5}{n3}
\Knoten{6}{-0.5}{n4}
\Knoten{6.4}{-0.9}{n6}
\Knoten{6.3}{0.8}{v10}\nput{0}{v10}{$v$}
\Knoten{6.7}{-1.2}{v11}\nput{0}{v11}{$x$}

\ncline{-}{v1}{n1}
\Kante{n1}{v3}{a}
\Kante{n2}{v3}{b}
\ncline{-}{v2}{n2}
\Kante{v3}{v4}{c}
\Kante{v4}{v6}{d}
\Kante{v6}{v7}{f}
\Kante{v7}{n3}{g}
\ncline{-}{n3}{v10}
\Kante{v7}{n4}{h}
\Kante{n4}{n6}{i}
\ncline{-}{n6}{v11}
\Kantedashed{n7}{v6}{\beta_2}
\Bogendashed{n4}{n3}{\alpha_3}{-20}
\Kantedashed{n7}{n1}{\alpha_1}
\pscurve[linestyle=dashed](6,-0.5)(7.1,1)(4.375,1)\rput(5.2,1.2){\colorbox{almostwhite}{$\alpha_2$}}
\pscurve[linestyle=dashed](1.5,-0.5)(0.5,0.8)(4.375,1)\rput(2.2,1.3){\colorbox{almostwhite}{$\beta_1$}}
\Bogendashed{v4}{n6}{q_3}{-25}
\end{pspicture}
} 
\hspace{1cm}
\subfloat[NBC8b]{
	\begin{pspicture}(0.5,-1)(7.5,1.2) 
\Knoten{1.2}{0.8}{v1}\nput{180}{v1}{$u$}
\Knoten{1.5}{0.5}{n1}
\Knoten{1.2}{-0.8}{v2}\nput{180}{v2}{$w$}
\Knoten{1.5}{-0.5}{n2}
\Knoten{2}{0}{v3}
\Knoten{3.25}{0}{v4}
\Knoten{4.375}{0}{v6}
\Knoten{4.375}{1}{n7}
\Knoten{5.5}{0}{v7}
\Knoten{6}{0.5}{n3}
\Knoten{6}{-0.5}{n4}
\Knoten{6.4}{-0.9}{n6}
\Knoten{6.3}{0.8}{v10}\nput{0}{v10}{$v$}
\Knoten{6.7}{-1.2}{v11}\nput{0}{v11}{$x$}

\ncline{-}{v1}{n1}
\Kante{n1}{v3}{a}
\Kante{n2}{v3}{b}
\ncline{-}{v2}{n2}
\Kante{v3}{v4}{c}
\Kante{v4}{v6}{d}
\Kante{v6}{v7}{f}
\Kante{v7}{n3}{g}
\ncline{-}{n3}{v10}
\Kante{v7}{n4}{h}
\Kante{n4}{n6}{i}
\ncline{-}{n6}{v11}
\Kantedashed{n7}{v6}{\beta_2}
\Kantedashed{n7}{n3}{\alpha_3}

\pscurve[linestyle=dashed](6,-0.5)(7.4,1.3)(1.5,0.5)\rput(4.2,1.6){\colorbox{almostwhite}{$\alpha_1$}}
\pscurve[linestyle=dashed](6,-0.5)(7.1,1)(4.375,1)\rput(6.25,1.3){\colorbox{almostwhite}{$\alpha_2$}}
\pscurve[linestyle=dashed](1.5,-0.5)(0.5,0.8)(4.375,1)\rput(2.2,1.3){\colorbox{almostwhite}{$\beta_1$}}
\Bogendashed{v4}{n6}{q_3}{-25}
\end{pspicture}
}

\subfloat[NBC8c]{
	\begin{pspicture}(0.5,-1)(7.5,1.3) 
\Knoten{1.2}{0.8}{v1}\nput{180}{v1}{$u$}
\Knoten{1.5}{0.5}{n1}
\Knoten{0.8}{-1.2}{v2}\nput{180}{v2}{$w$}
\Knoten{1.1}{-0.9}{n6}
\Knoten{1.5}{-0.5}{n2}
\Knoten{2}{0}{v3}
\Knoten{3.25}{0}{v4}
\Knoten{4.375}{0}{v6}
\Knoten{3.25}{0.9}{n7}
\Knoten{5.5}{0}{v7}
\Knoten{6}{0.5}{n3}
\Knoten{6}{-0.5}{n4}
\Knoten{6.3}{0.8}{v10}\nput{0}{v10}{$v$}
\Knoten{6.5}{-1}{v11}\nput{0}{v11}{$x$}

\ncline{-}{v1}{n1}
\Kante{n1}{v3}{a}
\Kante{n2}{v3}{c}
\ncline{-}{v2}{n6}
\Kante{n6}{n2}{b}
\Kante{v3}{v4}{d}
\Kante{v4}{v6}{f}
\Kante{v6}{v7}{g}
\Kante{v7}{n3}{h}
\ncline{-}{n3}{v10}
\Kante{v7}{n4}{i}
\ncline{-}{n4}{v11}
\Kantedashed{n7}{v4}{\gamma_2}
\Kantedashed{n1}{n7}{\alpha_1}

\pscurve[linestyle=dashed](1.5,-0.5)(0.5,0.8)(3.25,0.9)\rput(2,1.1){\colorbox{almostwhite}{$\alpha_2$}}
\pscurve[linestyle=dashed](6,0.5)(0.3,0.8)(1.5,-0.5)\rput(5.3,0.8){\colorbox{almostwhite}{$\alpha_3$}}
\pscurve[linestyle=dashed](6,-0.5)(7.1,1)(3.25,0.9)\rput(6.25,1.4){\colorbox{almostwhite}{$\gamma_1$}}
\Bogendashed{n6}{v6}{q_2}{-25}
\end{pspicture}
}
\hspace{1cm}
\subfloat[NBC8d]{
	\begin{pspicture}(0.5,-1)(7.5,1.3) 
\Knoten{1.2}{0.8}{v1}\nput{180}{v1}{$u$}
\Knoten{1.5}{0.5}{n1}
\Knoten{0.8}{-1.2}{v2}\nput{180}{v2}{$w$}
\Knoten{1.1}{-0.9}{n6}
\Knoten{1.5}{-0.5}{n2}
\Knoten{2}{0}{v3}
\Knoten{3.25}{0}{v4}
\Knoten{4.375}{0}{v6}
\Knoten{3.25}{1}{n7}
\Knoten{5.5}{0}{v7}
\Knoten{6}{0.5}{n3}
\Knoten{6}{-0.5}{n4}
\Knoten{6.3}{0.8}{v10}\nput{0}{v10}{$v$}
\Knoten{6.5}{-1}{v11}\nput{0}{v11}{$x$}

\ncline{-}{v1}{n1}
\Kante{n1}{v3}{a}
\Kante{n2}{v3}{c}
\ncline{-}{v2}{n6}
\Kante{n6}{n2}{b}
\Kante{v3}{v4}{d}
\Kante{v4}{v6}{f}
\Kante{v6}{v7}{g}
\Kante{v7}{n3}{h}
\ncline{-}{n3}{v10}
\Kante{v7}{n4}{i}
\ncline{-}{n4}{v11}
\Kantedashed{n7}{v4}{\gamma_2}
\Kantedashed{n7}{n3}{\alpha_3}
\Bogendashed{n1}{n2}{\alpha_1}{-30}
\pscurve[linestyle=dashed](1.5,-0.5)(0.2,0.8)(3.25,1)\rput(2.3,1.2){\colorbox{almostwhite}{$\alpha_2$}}
\pscurve[linestyle=dashed](6,-0.5)(7.1,1)(3.25,1)\rput(5.2,1.4){\colorbox{almostwhite}{$\gamma_1$}}
\Bogendashed{n6}{v6}{q_2}{-25}
\end{pspicture}
} 
\caption{NBC8}\label{propNBC8}
\end{figure}
\begin{lemma}For NBC8 (cf. Figure~\ref{propNBC8}) we get the following properties:
\begin{enumerate}[itemsep=-0em,leftmargin=*,label=\roman*)]
\item For NBC8a and NBC8b:
\begin{itemize}[itemsep=-0em,leftmargin=*]
	\item All edges which are substituted by $q_1$ or by $q_2$ are completely paid.
	\item For $(u,v,w,x)=(s_2,t_2,s_1,t_1)$, Player 2 pays all edges of the commonly used path which are not substituted by $q_2$ $(f)$. 
\end{itemize}
\item
For NBC8c and NBC8d:
\begin{itemize}[itemsep=-0em,leftmargin=*]
	\item All edges which are substituted by $q_1$ or by $q_3$ are completely paid. 
	\item For $(u,v,w,x)=(t_2,s_2,t_1,s_1)$, Player 2 pays all edges of the commonly used path which are not substituted by $q_3$ $(d)$.
\end{itemize}\end{enumerate}
\end{lemma}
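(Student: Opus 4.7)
My plan follows the blueprint used for the preceding NBC lemmas, in particular the proof of NBC7 which is the closest structural analogue. I first write down the three tight-alternative equations for $q_1,q_2,q_3$ in the edge labelling of Figure~\ref{NBC8a}, namely $q_1 = \xi_{u,a} + \xi_{u,c} + \xi_{u,d} + \xi_{u,f} + \xi_{u,g}$, $q_2 = \xi_{\ell,b} + \xi_{\ell,c} + \xi_{\ell,d}$, and $q_3 = \xi_{\ell,d} + \xi_{\ell,f} + \xi_{\ell,h} + \xi_{\ell,i}$. Adding them and applying budget balance on the common edges $c,d,f$ collapses the sum to
\[ q_1 + q_2 + q_3 = \xi_{u,a} + c + d + \xi_{\ell,d} + f + \xi_{u,g} + \xi_{\ell,b} + \xi_{\ell,h} + \xi_{\ell,i}, \]
with a single residual $\xi_{\ell,d}$ caused by $q_2$ and $q_3$ both substituting the edge $d$.

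To obtain the matching optimality bound I would construct two alternative Steiner forests that exploit the crossing path $\alpha_2$ linking the node $n_7 \in q_2$ with the node $n_4 \in P_\ell \cap C(q_3)$. The first assigns the upper player the detour $\alpha_1 + \beta_2 + f + g$ and the lower player the detour $q_2$ extended by $f, h, i$ to $n_6$; optimality gives $\alpha_1 + q_2 \geq a + b + c + d$. The second keeps the upper player on $a + c + d + \beta_2 + \alpha_2 + \alpha_3$ and switches the lower player to $q_3$; optimality gives $\beta_2 + \alpha_2 + \alpha_3 + q_3 \geq f + g + h + i$. Summing and using $q_1 = \alpha_1 + \alpha_2 + \alpha_3$ yields
\[ q_1 + q_2 + \beta_2 + q_3 \geq a + b + c + d + f + g + h + i. \]

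The central technical hurdle is the stray $\beta_2$ on the left-hand side, which does not appear in any tight identity. I propose to eliminate it by combining the LP-inequality $\beta_2 + \alpha_2 + \alpha_3 \geq \xi_{u,f} + \xi_{u,g}$ (for the upper alternative $v_6 \to n_3$ that substitutes $\{f,g\}$) with the one-edge optimality bounds $\alpha_3 \geq g$ and $\alpha_3 \geq h$ obtained from the upper alternative $h + \alpha_3$ substituting $g$ and the lower alternative $g + \alpha_3$ substituting $h$. After substitution, the inequality reduces to a sum over $\xi_{u,a}, \xi_{u,g}, \xi_{\ell,b}, \xi_{\ell,d}, \xi_{\ell,h}, \xi_{\ell,i}$ whose lower bound matches the sum of the corresponding edge costs. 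Combined with the universal upper bounds $\xi_{u,a} \leq a$, $\xi_{u,g} \leq g$, $\xi_{\ell,b} \leq b$, $\xi_{\ell,d} \leq d$, $\xi_{\ell,h} \leq h$, $\xi_{\ell,i} \leq i$, this forces equality in each coordinate, giving the first assertion: every non-common edge substituted by $q_1$ or $q_2$ is completely paid by its unique user, and budget balance handles the common edges $c, d, f$.

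For the supplementary claim in the assignment $(u,v,w,x) = (s_2,t_2,s_1,t_1)$, where Player~2 is the upper player, I would use the lower alternative $\beta_1 + \alpha_2 + i$ from $n_2$ to $n_6$ that bypasses the common edge $f$ via the crossing. Its LP-inequality $\beta_1 + \alpha_2 + i \geq \xi_{\ell,b} + \xi_{\ell,c} + \xi_{\ell,d} + \xi_{\ell,f} + \xi_{\ell,h}$, together with the Steiner-forest optimality $\beta_1 + \alpha_2 \geq b + h$ (deploying the alternative on the lower side only while keeping $P_u$ intact) and the equalities $\xi_{\ell,b} = b$, $\xi_{\ell,h} = h$ from the previous step, forces $\xi_{\ell,f} = 0$; by budget balance Player~2 then pays $f$ in full. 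The cases NBC8c and NBC8d follow from the mirror-image argument that swaps the roles of $q_2$ and $q_3$ and of $L_\ell$ and $R_\ell$, yielding the analogous specialized statement $\xi_{u,d} = d$ (i.e.\ $\xi_{\ell,d} = 0$) in the assignment $(u,v,w,x) = (t_2,s_2,t_1,s_1)$.
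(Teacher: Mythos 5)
Your overall strategy (tight-alternative identities plus Steiner-forest optimality bounds) is the right framework, but the particular combination you build does not close. The paper uses only the tightness of $q_1$ and $q_2$, which gives
\[
\alpha_1+\alpha_2+\alpha_3+\beta_1+\beta_2=\xi_{u,a}+\xi_{\ell,b}+c+d+\xi_{u,f}+\xi_{u,g},
\]
together with a \emph{single} alternative Steiner forest: add $\alpha_1\cup\alpha_2\cup\alpha_3\cup\beta_1$ and delete $a,b,c,d,f,g,h$. Because $q_1$ and $q_2$ meet at the node $n_7$, both players can share the subpath $\alpha_2$ there, the lower player reaching $x$ via $\alpha_2$ and the edge $i$; optimality then gives $\alpha_1+\alpha_2+\alpha_3+\beta_1\geq a+b+c+d+f+g+h$. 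The only leftover term $\beta_2$ sits on the harmless (large) side of the resulting chain, so everything tightens at once to $\xi_{u,a}=a$, $\xi_{\ell,b}=b$, $\xi_{u,f}=f$ (hence $\xi_{\ell,f}=0$, which already is the second bullet) and $\xi_{u,g}=g$. The path $q_3$ is never used.

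In your version the stray $\beta_2$ lands on the wrong side and cannot be removed. Summing your two optimality bounds gives $q_1+q_2+q_3+\beta_2\geq a+b+c+d+f+g+h+i$, and after inserting the three tight identities (with the double-counted $\xi_{\ell,d}$, since $q_2$ and $q_3$ both substitute $d$) you are left with $\xi_{u,a}+\xi_{\ell,b}+\xi_{\ell,d}+\xi_{u,g}+\xi_{\ell,h}+\xi_{\ell,i}+\beta_2\geq a+b+g+h+i$, which is slack by $d+\beta_2$ and forces nothing. The auxiliary inequalities you invoke to eliminate $\beta_2$ --- $\beta_2+\alpha_2+\alpha_3\geq\xi_{u,f}+\xi_{u,g}$, $\alpha_3\geq g$, $\alpha_3\geq h$ --- are all \emph{lower} bounds on quantities containing $\beta_2$ or $\alpha_3$, whereas you would need an \emph{upper} bound on $\beta_2$ (and on $\xi_{\ell,d}$); none exists, since $\beta_2$ is an internal subpath of $q_2$ whose cost is not controlled by $F$. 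The same directional problem affects your separate argument for $\xi_{\ell,f}=0$: the LP constraint for the reroute through $\beta_1$ and $\alpha_2$ would only yield $\xi_{\ell,f}=0$ in combination with an upper bound on $\beta_1+\alpha_2$, while your ``$\beta_1+\alpha_2\geq b+h$'' again points the wrong way. The repair is to drop $q_3$ from the computation and use the shared-subpath forest above; your reduction of NBC8c/d to NBC8a/b by symmetry is fine and matches the paper.
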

 
\begin{proof}
For NBC8a and NBC8b we get
\[
\alpha_1+\alpha_2+\alpha_3+\beta_1+\beta_2=\xi_{u,a}+\xi_{\ell,b}+c+d+\xi_{u,f}+\xi_{u,g}
\] 
since $q_1$ and $q_2$ are tight alternatives. On the other hand the optimality of $L_u \cup L_{\ell}$ implies
\[
\alpha_1+\alpha_2+\alpha_3+\beta_1 \geq a+b+c+d+f+g+h
\]
and therefore $ \xi_{u,a}=a, \xi_{\ell,b}=b, \xi_{u,f}=f$ and $\xi_{u,g}=g$ holds. 

\vspace{1em}
The properties for NBC8c and NBC8d follow by symmetry reasons. 
\end{proof}


\begin{figure}[H] \centering \psset{unit=1.1cm}\captionsetup[subfigure]{labelformat=empty}
\subfloat[NBC9a]{
	\begin{pspicture}(-0.5,-1)(6.5,1.2) 
\Knoten{0.2}{0.8}{v1}\nput{180}{v1}{$u$}
\Knoten{-0.2}{-1.2}{v2}\nput{180}{v2}{$w$}
\Knoten{0.05}{-0.95}{n6}
\Knoten{0.5}{-0.5}{n2}
\Knoten{1}{0}{v3}
\Knoten{2.125}{0}{n5}
\Knoten{3.25}{0}{v4}
\Knoten{4.375}{0}{v6}
\Knoten{5.5}{0}{v7}
\Knoten{6}{0.5}{n3}
\Knoten{6}{-0.5}{n4}
\Knoten{6.3}{0.8}{v10}\nput{0}{v10}{$v$}
\Knoten{6.3}{-0.8}{v11}\nput{0}{v11}{$x$}

\ncline{-}{v1}{v3}
\ncline{-}{n2}{v3}
\ncline{-}{v2}{n6}
\Kante{n6}{n2}{a}
\Kante{n2}{v3}{b}
\Kante{v3}{n5}{c}
\Kante{v4}{n5}{d}
\Kante{v4}{v6}{f}
\Kante{v6}{v7}{g}
\Kante{v7}{n3}{h}
\ncline{-}{n3}{v10}
\Kante{v7}{n4}{i}
\ncline{-}{n4}{v11}
\pscurve[linestyle=dashed](6,0.5)(-0.5,0.8)(0.05,-0.95)\rput(4.4,1.1){\colorbox{almostwhite}{$\alpha_2$}}
\Bogendashed{n2}{v6}{q_2}{-40}
\Bogendashed{v4}{n4}{q_3}{-35}
\Bogendashed{n6}{n5}{\alpha_1}{-15}
\end{pspicture} 
}
\hspace{0.5cm}
\subfloat[NBC9b]{
\begin{pspicture}(0.5,-1)(6.5,1.2) 
\Knoten{1.2}{0.8}{v1}\nput{180}{v1}{$u$}
\Knoten{1.5}{0.5}{n1}
\Knoten{0.8}{-1.2}{v2}\nput{180}{v2}{$w$}
\Knoten{1.05}{-0.95}{n6}
\Knoten{1.5}{-0.5}{n2}
\Knoten{2}{0}{v3}
\Knoten{3.25}{0}{v4}
\Knoten{4.375}{0}{v6}
\Knoten{5.5}{0}{v7}
\Knoten{6}{0.5}{n3}
\Knoten{6}{-0.5}{n4}
\Knoten{6.3}{0.8}{v10}\nput{0}{v10}{$v$}
\Knoten{6.3}{-0.8}{v11}\nput{0}{v11}{$x$}

\ncline{-}{v1}{n1}
\Kante{n1}{v3}{a}
\Kante{n2}{v3}{c}
\ncline{v2}{n6}
\Kante{n6}{n2}{b}
\Kante{v3}{v4}{d}
\Kante{v4}{v6}{f}
\Kante{v6}{v7}{g}
\Kante{v7}{n3}{h}
\ncline{-}{n3}{v10}
\Kante{v7}{n4}{i}
\ncline{-}{n4}{v11}
\pscurve[linestyle=dashed](6,0.5)(0.5,0.7)(1.05,-0.95)\rput(4.4,1){\colorbox{almostwhite}{$\alpha_2$}}
\Bogendashed{n2}{v6}{q_2}{-35}
\Bogendashed{v4}{n4}{q_3}{-35}
\Bogendashed{n6}{n1}{\alpha_1}{20}
\end{pspicture} 
}
\caption{NBC9}\label{propNBC9}
\end{figure}
\begin{lemma}For NBC9 (cf. Figure~\ref{propNBC9}) we get the following properties:
\begin{enumerate}[itemsep=-0em,leftmargin=*,label=\roman*)]
	\item All edges which are substituted by $q_1, q_2$ or by $q_3$ are completely paid. 
	\item If $q_1$ is small and $(u,v,w,x)=(t_1,s_1,t_2,s_2)$, we additionally get that Player 2 pays all edges of the commonly used path which are not substituted by $q_1$ $(c)$.
\end{enumerate} 
\end{lemma}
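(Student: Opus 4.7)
The plan is to follow the template established in the earlier lemmas of this subsection (notably NBC1 and NBC3): combine the tightness of the three alternatives with the optimality of $F$ via a suitable substitution Steiner forest, and then sandwich the resulting inequalities against the trivial upper bounds $\xi_{\cdot,e}\le c(e)$.

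First I would write out the three tightness identities. Using LP constraint~\eqref{Nash1} applied to $q_1,q_2,q_3$, I get for NBC9a (with edge labels as in Figure~\ref{propNBC9})
\[
q_1=\xi_{u,d}+\xi_{u,f}+\xi_{u,g}+\xi_{u,h},\quad q_2=\xi_{\ell,b}+\xi_{\ell,c}+\xi_{\ell,d}+\xi_{\ell,f},\quad q_3=\xi_{\ell,f}+\xi_{\ell,g}+\xi_{\ell,i}.
\]
Adding these and applying budget balance on the commonly used edges $c,d,f,g$ collapses the right-hand side to $d+f+g+\xi_{\ell,f}+\xi_{u,h}+\xi_{\ell,b}+\xi_{\ell,c}+\xi_{\ell,i}$.

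Next I would construct the analogue of the Steiner forest used in the NBC1 proof, but now centered at the node $n_6\in L_\ell$ that lies inside $q_1$. Concretely, I expect that $F^*=F\cup q_1\cup q_2\cup q_3\setminus E'$ is a Steiner forest for a carefully chosen set $E'$ of substituted edges of $F$ that retains exactly the bridges needed for both players' connectivity (the crucial observation being that $\alpha_1\cup\alpha_2$ routes through $n_6$ and therefore links $L_\ell$ directly to $R_u$). Applying $c(F^*)\ge c(F)$ then yields $q_1+q_2+q_3\ge\sum_{e\in E'}c(e)$. Combined with the tight identity above and the bounds $\xi_{u,h}\le h$, $\xi_{\ell,b}\le b$, $\xi_{\ell,i}\le i$ (plus $\xi_{u,e}+\xi_{\ell,e}=c(e)$ for $e\in\{c,d,f,g\}$), this should pin down $\xi_{u,h}=h$, $\xi_{\ell,b}=b$, $\xi_{\ell,i}=i$, proving part~i). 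For NBC9b, the same strategy applies with only minor modifications coming from $q_1$ now starting in $L_u$.

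For the strengthened claim in part~ii) (Player~2 pays $c$ in full when $q_1$ is small and $(u,v,w,x)=(t_1,s_1,t_2,s_2)$), I would add a second optimality argument: taking the Steiner forest obtained by substituting $q_1$ for the upper-player segment $d,f,g,h$ alone (leaving $c$ untouched) and invoking $c(F^{**})\ge c(F)$, together with the equalities from part~i), should force $\xi_{\ell,c}=c$.

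The main obstacle I anticipate is pinning down the correct set $E'$. Unlike NBC1, where $q_1,q_2,q_3$ meet in a single internal vertex and form a clean Y, here the bridge node $n_6$ lies on $L_\ell$ but is not visited by $q_2$ or $q_3$; removing all substituted edges simultaneously tends either to isolate the left terminal $v_3$ (if both $b$ and $c$ are dropped) or to detach the component $\{v_4,v_6,v_7\}$ (if all of $d,f,g$ are dropped), so the obvious maximal choice of $E'$ fails. The fix will be to keep exactly one commonly used edge per disconnected piece and to supplement the resulting sandwich by the additional LP alternative constraints available in this configuration -- in particular, the lower-player path $w\to v_2\to n_6\to\alpha_2\to n_3\to h\to v_7\to i\to n_4\to v_{11}\to x$, which gives $\alpha_2+h\ge \xi_{\ell,a}+\xi_{\ell,b}+\xi_{\ell,c}+\xi_{\ell,d}+\xi_{\ell,f}+\xi_{\ell,g}$, and the dual path routed through $\alpha_1$, giving $\alpha_1\ge \xi_{\ell,a}+\xi_{\ell,b}+\xi_{\ell,c}$. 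These two, summed with the three tightness identities, should close the sandwich that the optimality inequality alone leaves slightly open.
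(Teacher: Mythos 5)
Your setup is the right one and matches the paper's template: the three tightness identities for $q_1=\alpha_1\cup\alpha_2$, $q_2$, $q_3$ are exactly those used in the paper, and summing them with budget balance on the commonly used edges gives $\alpha_1+\alpha_2+q_2+q_3=\xi_{\ell,b}+\xi_{\ell,c}+d+f+\xi_{\ell,f}+g+\xi_{u,h}+\xi_{\ell,i}$, whose right-hand side is at most $b+c+d+2f+g+h+i$. The gap is in how you try to produce the matching lower bound. You correctly observe that no single set $E'$ makes $F\cup q_1\cup q_2\cup q_3\setminus E'$ a Steiner forest with all substituted edges removed, but your proposed repair cannot close the sandwich: a single substitution forest yields $\alpha_1+\alpha_2+q_2+q_3\ge\sum_{e\in E'}c(e)$ with each edge counted \emph{once}, whereas the target bound needs $f$ counted \emph{twice} (because $\xi_{\ell,f}$ appears in both the $q_2$ and the $q_3$ tightness identities). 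Your two supplementary inequalities do not supply the missing mass, because they bound path costs from below by \emph{cost shares} (LP constraint~\eqref{Nash1}), not by \emph{edge costs}; summing $\alpha_1\ge\xi_{\ell,a}+\xi_{\ell,b}+\xi_{\ell,c}$ and $\alpha_2+h\ge\xi_{\ell,a}+\dots+\xi_{\ell,g}$ with the tightness identities produces an inequality with the $\xi_{\ell,\cdot}$ on the \emph{smaller} side, which is the wrong direction for forcing $\xi_{\ell,b}=b$, $\xi_{\ell,c}=c$, etc. (it is satisfied, e.g., when all of Player~2's shares are zero).

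The paper's fix is to use \emph{three separate} optimality comparisons rather than one combined forest: $\alpha_1\ge a+b$ (reroute $w$ via $\alpha_1$ into the middle part), $q_2\ge c+d+f$, and $\alpha_2+q_3\ge f+g+h+i$. Their sum is $\alpha_1+\alpha_2+q_2+q_3\ge a+b+c+d+2f+g+h+i$, where the $2f$ arises precisely because $f$ is deleted in two of the three substitutions; comparing with the tightness sum forces $\xi_{\ell,b}=b$, $\xi_{\ell,c}=c$, $\xi_{u,h}=h$, $\xi_{\ell,i}=i$ (and $a=0$, $\xi_{\ell,f}=f$) in one stroke, so part~ii) needs no separate argument. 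Two smaller remarks: your $F^{**}=F\cup q_1\setminus\{d,f,g,h\}$ for part~ii) is not a Steiner forest (deleting $g$ and $h$ strands the component containing $i$, $n_4$ and $x$ from $w$), and for NBC9b the paper simply invokes symmetry with NBC7 rather than rerunning the argument.
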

	%

\begin{proof}
For NBC9a we get
\[
\alpha_1+\alpha_2+q_2+q_3=\xi_{\ell,b}+\xi_{\ell,c}+d+f+\xi_{\ell,f}+g+\xi_{u,h}+\xi_{\ell,i}
\]
since $q_1,q_2$ and $q_3$ are tight. On the other hand the optimality of $L_u \cup L_{\ell}$ implies $\alpha_2+q_3 \geq f+g+h+i$, $q_2 \geq c+d+f$ and $\alpha_1 \geq a+b$ and therefore 
\[
\alpha_1+\alpha_2+q_2+q_3 \geq a+b+c+d+2f+g+h+i.
\]
Altogether we get that $\xi_{\ell,b}=b, \xi_{\ell,c}=c, \xi_{u,h}=h$ and $\xi_{\ell,i}=i$ holds. 

\vspace{1em}
For NBC9b, the desired properties follow from the properties of NBC7 (by symmetry). 
\end{proof}


\begin{figure}[H] \centering \psset{unit=1.1cm}
	\begin{pspicture}(-0.5,-1.3)(6.5,1.2) 
\Knoten{0.2}{0.8}{v1}\nput{180}{v1}{$u$}
\Knoten{-0.2}{-1.2}{v2}\nput{180}{v2}{$w$}
\Knoten{0.05}{-0.95}{n6}
\Knoten{0.5}{-0.5}{n2}
\Knoten{1}{0}{v3}
\Knoten{2.125}{0}{n5}
\Knoten{3.25}{0}{v4}
\Knoten{4.375}{0}{v6}
\Knoten{5.5}{0}{v7}
\Knoten{6}{0.5}{n3}
\Knoten{6}{-0.5}{n4}
\Knoten{6.3}{0.8}{v10}\nput{0}{v10}{$v$}
\Knoten{6.3}{-0.8}{v11}\nput{0}{v11}{$x$}
\Knoten{3.25}{0.9}{n7}

\ncline{-}{v1}{v3}
\ncline{-}{n6}{v2}
\Kante{n6}{n2}{a}
\Kante{v3}{n2}{b}
\Kante{v3}{n5}{c}
\Kante{v4}{n5}{d}
\Kante{v4}{v6}{f}
\Kante{v6}{v7}{g}
\Kante{v7}{n3}{h}
\ncline{-}{n3}{v10}
\Kante{v7}{n4}{i}
\ncline{-}{n4}{v11}

\pscurve[linestyle=dashed](0.05,-0.95)(-1,1.3)(3.25,0.9)\rput(-1,1.3){\colorbox{almostwhite}{$\beta_1$}}
\pscurve[linestyle=dashed](0.5,-0.5)(-0.5,1)(3.25,0.9)\rput(0,1.3){\colorbox{almostwhite}{$\alpha_2$}}
\pscurve[linestyle=dashed](0.5,-0.5)(6.9,-0.8)(6,0.5)\rput(1.7,-1){\colorbox{almostwhite}{$\alpha_3$}}
\Bogendashed{v4}{n4}{q_3}{-35}
\Kantedashed{n5}{n7}{\alpha_1}
\Kantedashed{n7}{v6}{\beta_2}
\end{pspicture} 
\caption{NBC10}\label{propNBC10}
\end{figure}
\begin{lemma}For NBC10 (cf. Figure~\ref{propNBC10}) we get the following properties:
\begin{enumerate}[itemsep=-0em,leftmargin=*,label=\roman*)]
	\item All edges which are substituted by $q_1, q_2$ or by $q_3$ are completely paid. 
	\item For $(u,v,w,x)=(t_1,s_1,t_2,s_2)$, Player 2 pays all edges of the commonly used path which are not substituted by $q_1$ $(c)$. 
\end{enumerate} 
\end{lemma}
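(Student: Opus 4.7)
The strategy mirrors the arguments for NBC3, NBC9 and the other tight-alternative lemmas: sum the three tightness identities, invoke optimality of $F = P_u \cup P_\ell$ to produce a matching lower bound, and then squeeze the cost shares against their LP upper bounds $\xi_{\cdot,e}\le c(e)$.

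First I would add the three tightness equations for $q_1,q_2,q_3$ and apply budget balance on the commonly used edges $c,d,f,g$ (which are completely paid by assumption), yielding
\[
q_1 + q_2 + q_3 \;=\; \xi_{\ell,a} + \xi_{\ell,b} + \xi_{\ell,c} + d + f + \xi_{\ell,f} + g + \xi_{u,h} + \xi_{\ell,i}.
\]
Combining the LP upper bounds termwise gives $q_1 + q_2 + q_3 \le a + b + c + d + 2f + g + h + i$, so the work is to establish the corresponding lower bound, after which all six LP inequalities must be tight.

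Next, using the decompositions $q_1 = \alpha_1+\alpha_2+\alpha_3$ and $q_2 = \beta_1+\beta_2$, I would split
\[
q_1 + q_2 + q_3 \;=\; (\alpha_1+\beta_1) + (\alpha_2+\beta_2) + (\alpha_3 + q_3)
\]
and bound each group separately via an auxiliary Steiner forest, each obtained by removing a set $D$ from $F$ and adding the relevant arcs: (a) $\alpha_1 + \beta_1 \ge a+b$, proved by $D=\{a,b\}$, where the lower player reroutes $w\to n_6\to\beta_1\to n_7\to\alpha_1\to n_5\to\cdots\to x$ while the upper player is unchanged; (b) $\alpha_2 + \beta_2 \ge c+d+f$, proved by $D=\{c,d,f\}$, where both players reroute through $n_2\to\alpha_2\to n_7\to\beta_2\to v_6$, the upper player using the still-present edge $b\in L_\ell$ to reach $n_2$; (c) $\alpha_3 + q_3 \ge f+g+h+i$, proved by $D=\{f,g,h,i\}$, where the upper player takes $u\to v_3\to b\to n_2\to\alpha_3\to n_3\to v$ and the lower player uses the commonly used edges $c,d$ that remain in $F'$ to reach $v_4$ and then $v_4\to q_3\to n_4\to x$.

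Summing (a)--(c) yields $q_1+q_2+q_3 \ge a+b+c+d+2f+g+h+i$, which together with the upper bound forces $\xi_{\ell,a}=a$, $\xi_{\ell,b}=b$, $\xi_{\ell,c}=c$, $\xi_{\ell,f}=f$, $\xi_{u,h}=h$, $\xi_{\ell,i}=i$. Consequently every edge substituted by $q_1, q_2$ or $q_3$ is completely paid, proving (i). For (ii), under the identification $(u,v,w,x)=(t_1,s_1,t_2,s_2)$ the lower player is Player~$2$, and the equality $\xi_{\ell,c}=c$ is precisely the statement that Player~$2$ pays the unique commonly used edge ($c$) not substituted by~$q_1$.

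The main obstacle is packaging (c): naively deleting $f,g,h,i$ from $F$ would disconnect both players. The trick is that once upper's usage of $c,d,f,g,h$ is rerouted via the $b$--$\alpha_3$ detour (so $h$ is no longer needed by upper), and once lower substitutes $f,g,i$ by $q_3$ while still being able to reach $v_4$ through the commonly used $c,d$ that remain in $F'$, the resulting $F'$ is a legitimate Steiner forest and the optimality bound $c(F')\ge c(F)$ produces exactly the needed inequality $\alpha_3 + q_3 \ge f+g+h+i$.
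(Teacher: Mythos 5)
Your proposal follows essentially the same route as the paper's proof: sum the three tightness identities, use budget balance on the fully paid commonly used edges, and derive the matching lower bound from the three optimality inequalities $\alpha_1+\beta_1\ge a+b$, $\alpha_2+\beta_2\ge c+d+f$ and $\alpha_3+q_3\ge f+g+h+i$, which forces all the remaining cost-share inequalities to be tight. Your explicit constructions of the auxiliary Steiner forests justifying these three inequalities (which the paper leaves implicit) are correct, and the deduction of part (ii) from $\xi_{\ell,c}=c$ is exactly the paper's.
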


\begin{proof}
Using that $q_1, q_2$ and $q_3$ are tight we get
\[
\alpha_1+\alpha_2+\alpha_3+\beta_1+\beta_2+q_3=\xi_{\ell,a}+\xi_{\ell,b}+\xi_{\ell,c}+d+f+\xi_{\ell,f}+g+\xi_{u,h}+\xi_{\ell,i}.
\]
The optimality of $L_u \cup L_{\ell}$ yields $\alpha_1+\beta_1 \geq a+b$, $\alpha_2+\beta_2 \geq c+d+f$ and $\alpha_3+q_3 \geq f+g+h+i$ and therefore
\[
\alpha_1+\alpha_2+\alpha_3+\beta_1+\beta_2+q_3 \geq a+b+c+d+2f+g+h+i.
\]
This results in $\xi_{\ell,a}=a, \xi_{\ell,b}=b, \xi_{\ell,c}=c, \xi_{u,h}=h$ and $\xi_{\ell,i}=i$.
\end{proof}


\begin{figure}[H] \centering \psset{unit=1.1cm}\captionsetup[subfigure]{labelformat=empty}
\subfloat[NBC11a]{
		\begin{pspicture}(0.5,-1.4)(7,1.2) 
\Knoten{1.2}{0.8}{v1}\nput{180}{v1}{$u$}
\Knoten{1.5}{0.5}{n1}
\Knoten{0.8}{-1.2}{v2}\nput{180}{v2}{$w$}
\Knoten{1.05}{-0.95}{n6}
\Knoten{1.5}{-0.5}{n2}
\Knoten{2}{0}{v3}
\Knoten{3.25}{0}{v4}
\Knoten{4.375}{0}{v6}
\Knoten{5.5}{0}{v7}
\Knoten{6}{0.5}{n3}
\Knoten{6}{-0.5}{n4}
\Knoten{6.3}{0.8}{v10}\nput{0}{v10}{$v$}
\Knoten{6.3}{-0.8}{v11}\nput{0}{v11}{$x$}
\Knoten{0.4}{0.2}{n7}

\ncline{-}{v1}{n1}
\Kante{n1}{v3}{a}
\Kante{n6}{n2}{b}
\Kante{n2}{v3}{c}
\ncline{-}{v2}{n6}
\Kante{v3}{v4}{d}
\Kante{v4}{v6}{f}
\Kante{v6}{v7}{g}
\Kante{v7}{n3}{h}
\ncline{-}{n3}{v10}
\Kante{v7}{n4}{i}
\ncline{-}{n4}{v11}

\pscurve[linestyle=dashed](0.4,0.1)(0.8,1)(4.375,0)\rput(2.5,0.9){\colorbox{almostwhite}{$\beta_2$}}
\pscurve[linestyle=dashed](1.5,-0.5)(6.9,-0.8)(6,0.5)\rput(2.5,-0.9){\colorbox{almostwhite}{$\alpha_3$}}

\Bogendashed{v4}{n4}{q_3}{-35}
\Kantedashed{n1}{n7}{\alpha_1}
\Kantedashed{n7}{n2}{\alpha_2}
\Bogendashed{n6}{n7}{\beta_1}{50}
\end{pspicture} 
}
\hspace{0.5cm}
\subfloat[NBC11b]{
	\begin{pspicture}(0.5,-1.4)(7,1.2) 
\Knoten{1.2}{0.8}{v1}\nput{180}{v1}{$u$}
\Knoten{1.5}{0.5}{n1}
\Knoten{1.2}{-0.8}{v2}\nput{180}{v2}{$w$}
\Knoten{1.5}{-0.5}{n2}
\Knoten{2}{0}{v3}
\Knoten{3.25}{0}{v4}
\Knoten{4.375}{0}{v6}
\Knoten{5.5}{0}{v7}
\Knoten{6}{0.5}{n3}
\Knoten{6}{-0.5}{n4}
\Knoten{6.45}{-0.95}{n6}
\Knoten{6.3}{0.8}{v10}\nput{0}{v10}{$v$}
\Knoten{6.7}{-1.2}{v11}\nput{0}{v11}{$x$}
\Knoten{7}{0.2}{n7}

\ncline{-}{v1}{n1}
\Kante{n1}{v3}{a}
\Kante{n2}{v3}{b}
\ncline{-}{v2}{n2}
\Kante{v3}{v4}{c}
\Kante{v4}{v6}{d}
\Kante{v6}{v7}{f}
\Kante{v7}{n3}{g}
\ncline{-}{n3}{v10}
\Kante{v7}{n4}{h}
\Kante{n4}{n6}{i}
\ncline{-}{n6}{v11}

\pscurve[linestyle=dashed](1.5,0.5)(0.5,-0.8)(6,-0.5)\rput(4.8,-1){\colorbox{almostwhite}{$\alpha_1$}}
\pscurve[linestyle=dashed](3.25,0)(7,1.3)(7,0.2)\rput(4.8,1){\colorbox{almostwhite}{$\gamma_2$}}

\Bogendashed{n2}{v6}{q_2}{-35}
\Kantedashed{n3}{n7}{\alpha_3}
\Kantedashed{n4}{n7}{\alpha_2}
\Bogendashed{n6}{n7}{\gamma_1}{-50}
\end{pspicture} 
}
\caption{NBC11}\label{propNBC11}
\end{figure}
\begin{lemma}For NBC11 (cf. Figure~\ref{propNBC11}) we get that all edges which are substituted by $q_1,q_2$ or by $q_3$ are completely paid. \end{lemma}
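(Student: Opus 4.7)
The plan is to mirror the technique used for NBC9 and NBC10, with an adjustment reflecting that in NBC11 the alternative $q_1$ is big. Fix the labels of NBC11a as in Figure~\ref{propNBC11}, so $a\in L_u$, $b,c\in L_\ell$, $d,f,g$ are commonly used, $h\in R_u$, $i\in R_\ell$. Summing the three tight equations for $q_1,q_2,q_3$ and using the hypothesis that every commonly used edge is completely paid (i.e.\ $\xi_{u,e}+\xi_{\ell,e}=c(e)$ for $e\in\{d,f,g\}$) rewrites the total cost of the three alternatives as
\[
c(q_1)+c(q_2)+c(q_3)=\xi_{u,a}+\xi_{\ell,b}+\xi_{\ell,c}+\xi_{u,h}+\xi_{\ell,i}+\xi_{\ell,f}+d+f+g.
\]
What I need next is a matching optimality-type lower bound of the form $c(q_1)+c(q_2)+c(q_3)\ge T+f$, where $T=c(P_u\cup P_\ell)$. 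Once available, this forces $\xi_{u,a}+\xi_{\ell,b}+\xi_{\ell,c}+\xi_{u,h}+\xi_{\ell,i}+\xi_{\ell,f}\ge a+b+c+f+h+i$, and since each of the six $\xi$'s is bounded above by the corresponding cost, every inequality collapses to an equality, which is exactly the claim.

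I would produce the bound $c(q_1)+c(q_2)+c(q_3)\ge T+f$ from the optimality of $F=P_u\cup P_\ell$ by exhibiting two alternative Steiner forests. For the first, the upper player reroutes through $\alpha_1,\beta_2$ and then through the middle and the right: $u\to n_1\to n_7\to v_6\to v_7\to n_3\to v$; the lower player uses $q_2$ fully and continues through $g$ and $i$: $w\to n_6\to n_7\to v_6\to v_7\to n_4\to x$. Their union is supported on $\{\alpha_1,\beta_1,\beta_2,g,h,i\}$ (the edges $\beta_2$ and $g$ are shared), and optimality gives
\[
\alpha_1+\beta_1+\beta_2\ge a+b+c+d+f.
\]
For the second Steiner forest the upper player takes the short bypass through $\alpha_3$: $u\to n_1\to v_3\to n_2\to n_3\to v$ using $a,c,\alpha_3$, while the lower player routes through $q_3$: $w\to n_6\to n_2\to v_3\to v_4\to n_4\to x$ using $b,c,d,\gamma_1,\gamma_2$. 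After sharing $c$ their union has cost $a+b+c+d+\alpha_3+\gamma_1+\gamma_2\ge T$, i.e.\
\[
\alpha_3+\gamma_1+\gamma_2\ge f+g+h+i.
\]
Adding these two inequalities together with the trivial $\alpha_2\ge 0$ yields exactly $c(q_1)+c(q_2)+c(q_3)\ge T+f$ and closes NBC11a.

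For NBC11b the geometry is mirror-symmetric (the detour of $\vec q_1$ now passes through a node of $R_\ell$ and a shared node of $q_3$ rather than through $L_\ell$ and $q_2$), and I would carry through the corresponding symmetric construction. The analogous pair of Steiner forests delivers $\alpha_1+\beta_1+\beta_2\ge a+b+c+d$ and $\alpha_3+\gamma_1+\gamma_2\ge d+f+g+h+i$. Their sum, together with $\alpha_2\ge 0$, gives $c(q_1)+c(q_2)+c(q_3)\ge T+d$, and the same tight-equation squeeze now pins down $\xi_{u,a}=a$, $\xi_{\ell,b}=b$, $\xi_{u,g}=g$, $\xi_{\ell,h}=h$, $\xi_{\ell,i}=i$.

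The main obstacle is verifying that the alternative graphs proposed above are genuinely Steiner forests and that their union cost is exactly what I claim. Because $q_1$ is big, the direct NBC10-style swap (replace $\{a,b\}$ by $\alpha_1+\beta_1$) fails to reconnect $u$ to the middle part, as deleting $a$ leaves $u$ and $n_1$ isolated from $v_3$. The fix -- which is precisely what forces the stronger inequality $\alpha_1+\beta_1+\beta_2\ge a+b+c+d+f$ in place of the NBC10 cousin $\alpha_1+\beta_1\ge a+b$ -- is to route the upper player through $\beta_2$ via the shared node $n_7$ of $q_1$ and $q_2$. Checking that this shared-node bookkeeping does not create cycles, that no terminal is disconnected, and that the symmetric trick goes through for NBC11b, is the part of the argument that I expect to need the most care.
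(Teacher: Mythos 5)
Your proof is correct and essentially identical to the paper's: it sums the three tight constraints for $q_1,q_2,q_3$ against the same two optimality inequalities the paper uses ($\alpha_1+\beta_1+\beta_2\ge a+b+c+d+f$ and $\alpha_3+\gamma_1+\gamma_2\ge f+g+h+i$ for NBC11a, and their mirror images for NBC11b) and squeezes the cost shares; you merely make explicit the witnessing alternative Steiner forests that the paper leaves implicit. The only blemish is notational: the lower bound you write as $c(q_1)+c(q_2)+c(q_3)\ge T+f$ should read $\ge a+b+c+d+2f+g+h+i$, since $T=c(P_u\cup P_{\ell})$ also contains the unsubstituted end-segments of $P_u$ and $P_{\ell}$, which never appear on the left-hand side; this does not affect the argument, as the two inequalities you actually derive are exactly what is needed.
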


\begin{proof}
For NBC11a, we get 
\[
\alpha_1+\alpha_2+\alpha_3+\beta_1+\beta_2+q_3=\xi_{u,a}+\xi_{\ell,b}+\xi_{\ell,c}+d+f+\xi_{\ell,f}+g+\xi_{u,h}+\xi_{\ell,i}
\]
by using that $q_1, q_2$ and $q_3$ are tight. On the other hand the optimality of $L_u \cup L_{\ell}$ implies $\alpha_3+q_3 \geq f+g+h+i$ and $\alpha_1+\beta_1+\beta_2 \geq a+b+c+d+f $ and therefore 
\[
\alpha_1+\alpha_2+\alpha_3+\beta_1+\beta_2+q_3 \geq a+b+c+d+2f+g+h+i.
\]
This yields $\xi_{u,a}=a, \xi_{\ell,b}=b, \xi_{\ell,c}=c, \xi_{u,h}=h$ and $\xi_{\ell,i}=i$. 

\vspace{1em}
The properties for NBC11b follow analogously (by symmetry).
\end{proof}

\begin{figure}[H] \centering \psset{unit=1.3cm}\captionsetup[subfigure]{labelformat=empty}
\subfloat[NBC12a]{
 \begin{pspicture}(-0.5,-1)(5,1.2) 
\Knoten{0.2}{0.8}{s1}\nput{180}{s1}{$u$}
\Knoten{0.5}{0.5}{n1}
\Knoten{0.2}{-0.8}{s2}\nput{180}{s2}{$w$}
\Knoten{0.5}{-0.5}{v1}
\Knoten{1}{0}{v2}
\Knoten{2}{0}{v4}
\Knoten{3}{0}{v5}
\Knoten{4}{0}{v6}
\Knoten{2}{0.5}{n3}
\Knoten{3}{0.5}{v7}
\Knoten{4.5}{0.5}{v9}
\Knoten{4.8}{0.8}{t1}\nput{0}{t1}{$v$}
\Knoten{4.5}{-0.5}{v10}
\Knoten{4.8}{-0.8}{t2}\nput{0}{t2}{$x$}
\ncline{s1}{v2}
\ncline{s2}{v1}
\Kante{n1}{v2}{a}
\Kante{v1}{v2}{b}
\Kante{v2}{v4}{c}
\Kante{v4}{v5}{d}
\Kante{v5}{v6}{f}
\Kante{v6}{v9}{g}
\Kante{v6}{v10}{h}
\ncline{v9}{t1}
\ncline{v10}{t2}

\pscurve[linestyle=dashed](3,0.5)(5.3,1)(4.5,-0.5)\rput(5,0){\colorbox{almostwhite}{$\gamma_1$}}
\pscurve[linestyle=dashed](0.5,-0.5)(-0.3,1)(2,0.5)\rput(-0.2,0.4){\colorbox{almostwhite}{$\beta_1$}}
\pscurve[linestyle=dashed](2,0.5)(-0.5,1.2)(-0.5,-1)(3,0)\rput(2,-0.65){\colorbox{almostwhite}{$\beta_2$}}
\Kantedashed{v7}{v9}{\alpha_3}
\Kantedashed{n1}{n3}{\alpha_1}
\Kantedashed{n3}{v7}{\alpha_2}
\Kantedashed{v4}{v7}{\gamma_2}
\end{pspicture}
}
\hspace{0.5cm}
\subfloat[NBC12b]{
\begin{pspicture}(-0.5,-1)(5,1.2) 
\Knoten{0.2}{0.8}{s1}\nput{180}{s1}{$u$}
\Knoten{0.5}{0.5}{n1}
\Knoten{0.2}{-0.8}{s2}\nput{180}{s2}{$w$}
\Knoten{0.5}{-0.5}{v1}
\Knoten{1}{0}{v2}
\Knoten{2}{0}{v4}
\Knoten{3}{0}{v5}
\Knoten{4}{0}{v6}
\Knoten{2}{0.5}{v7}
\Knoten{3}{0.5}{n2}
\Knoten{4.5}{0.5}{v9}
\Knoten{4.8}{0.8}{t1}\nput{0}{t1}{$v$}
\Knoten{4.5}{-0.5}{v10}
\Knoten{4.8}{-0.8}{t2}\nput{0}{t2}{$x$}
\ncline{s1}{v2}
\ncline{s2}{v1}
\Kante{n1}{v2}{a}
\Kante{v1}{v2}{b}
\Kante{v2}{v4}{c}
\Kante{v4}{v5}{d}
\Kante{v5}{v6}{f}
\Kante{v6}{v9}{g}
\Kante{v6}{v10}{h}
\ncline{v9}{t1}
\ncline{v10}{t2}
\pscurve[linestyle=dashed](2,0.5)(5.2,1)(4.5,-0.5)\rput(4.9,0){\colorbox{almostwhite}{$\gamma_1$}}
\pscurve[linestyle=dashed](0.5,-0.5)(-0.4,1)(3,0.5)\rput(1.5,1.05){\colorbox{almostwhite}{$\beta_1$}}
\Kantedashed{n2}{v5}{\beta_2}
\Kantedashed{v7}{n2}{\alpha_2}
\Kantedashed{n2}{v9}{\alpha_3}
\Kantedashed{n1}{v7}{\alpha_1}
\Kantedashed{v4}{v7}{\gamma_2}
\end{pspicture}
}
\caption{NBC12}
\label{propNBC12}
\end{figure}
\begin{lemma} \label{lemmanbcende} For NBC12 (cf. Figure~\ref{propNBC12}) we get the following properties:
\begin{enumerate}[itemsep=-0em,leftmargin=*,label=\roman*)]
	\item For $(u,v,w,x) \in \{(s_1,t_1,s_2,t_2), (t_1,s_1,t_2,s_2)\}$ and $q_1, q_2$ or $q_3$ substitutes an edge which is not completely paid, Player 2 has a positive cost share for an edge of the commonly used path which is substituted by $q_2$ \textbf{and} by $q_3$ $(d)$.
	\item For $(u,v,w,x) \in \{(s_2,t_2,s_1,t_1), (t_2,s_2,t_1,s_1)\}$ and $q_1, q_2$ or $q_3$ substitutes an edge which is not completely paid, Player 1 has a positive cost share for an edge of the commonly used path which is substituted by $q_2$ \textbf{and} by $q_3$ $(d)$. 
\end{enumerate}
\end{lemma}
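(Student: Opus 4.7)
The plan is to establish the contrapositive of each statement: if the designated player has zero cost share on $d$, then every edge substituted by $q_1, q_2, q_3$ is completely paid. Since the two cases are symmetric (swap the roles of the players), I focus on the first, so the upper player is Player $1$ and the lower is Player $2$, and the goal becomes showing that $\xi_{\ell,d}=0$ forces full payment of all substituted edges $a,b,c,d,f,g,h$.

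First I would construct the candidate Steiner forest
\[
F^* \; := \; \bigl(F \setminus \{a,b,c,d,f,g,h\}\bigr) \cup q_1 \cup q_2 \cup q_3.
\]
The key step is to verify that $F^*$ contains an $(s_1,t_1)$-path and an $(s_2,t_2)$-path using the shared nodes guaranteed by NBC12. In NBC12a, the node $n_3 \in q_1 \cap q_2$ and the node $v_7 \in q_1 \cap q_3$ allow the route $s_1 \!\to\! n_1 \!\to\! \alpha_1 \!\to\! n_3 \!\to\! \alpha_2 \!\to\! v_7 \!\to\! \alpha_3 \!\to\! v_9 \!\to\! t_1$ for Player~1 and $s_2 \!\to\! v_1 \!\to\! \beta_1 \!\to\! n_3 \!\to\! \alpha_2 \!\to\! v_7 \!\to\! \gamma_1 \!\to\! v_{10} \!\to\! t_2$ for Player~2; NBC12b is handled analogously with $v_7$ and $n_2$ as the two crossing nodes.

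Next, by optimality of $F$ we have $c(F^*) \geq c(F)$, hence
\[
q_1 + q_2 + q_3 \; \geq \; c(q_1 \cup q_2 \cup q_3) \; \geq \; a+b+c+d+f+g+h.
\]
Substituting the tightness identities
\[
q_1 = \xi_{u,a} + \xi_{u,c} + \xi_{u,d} + \xi_{u,f} + \xi_{u,g}, \quad q_2 = \xi_{\ell,b} + \xi_{\ell,c} + \xi_{\ell,d}, \quad q_3 = \xi_{\ell,d} + \xi_{\ell,f} + \xi_{\ell,h}
\]
and using that the commonly used edges $c,d,f$ are fully paid (i.e.\ $\xi_{u,c}+\xi_{\ell,c}=c$, $\xi_{u,d}+\xi_{\ell,d}=d$, $\xi_{u,f}+\xi_{\ell,f}=f$), the left-hand side collapses to $\xi_{u,a} + c + d + f + \xi_{u,g} + \xi_{\ell,b} + \xi_{\ell,d} + \xi_{\ell,h}$, which after cancellation yields
\[
\xi_{u,a} + \xi_{u,g} + \xi_{\ell,b} + \xi_{\ell,h} + \xi_{\ell,d} \; \geq \; a + b + g + h.
\]

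Finally, assuming $\xi_{\ell,d}=0$ and combining with the trivial cost-share bounds $\xi_{u,a} \leq a$, $\xi_{u,g} \leq g$, $\xi_{\ell,b} \leq b$, $\xi_{\ell,h} \leq h$, every one of these four inequalities must be an equality. Together with the full payment of $c, d, f$, this shows that all edges substituted by $q_1, q_2, q_3$ are completely paid, proving the contrapositive. The only delicate point I foresee is the verification that $F^*$ is indeed a Steiner forest (containing the required $(s_i,t_i)$-paths); this is the place where the defining property of NBC12 that $q_1$ is neither node-disjoint from $q_2$ nor from $q_3$ is used, ensuring both players can reroute through the common nodes.
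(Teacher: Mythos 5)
Your proof is correct and follows essentially the same route as the paper: sum the tightness identities of $q_1,q_2,q_3$, compare against an exchange inequality coming from the optimality of $F$, cancel the fully paid commonly used edges, and read off $\xi_{\ell,d}\geq a+b+g+h-(\xi_{u,a}+\xi_{\ell,b}+\xi_{u,g}+\xi_{\ell,h})$. The only (immaterial) difference is that you use the single coarse exchange $q_1+q_2+q_3\geq a+b+c+d+f+g+h$ uniformly for both subtypes and argue by contrapositive, whereas the paper uses sharper subpath-level exchanges (e.g.\ $\alpha_1+\beta_1+\beta_2\geq a+b+c+d$ and $\alpha_3+\gamma_1+\gamma_2\geq d+f+g+h$ for NBC12a), which additionally shows that for NBC12a the hypothesis can never occur; your weaker inequality still suffices for the lemma as stated.
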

\begin{proof}
We first show that NBC12a is not possible if $\xi_{u,a}+\xi_{\ell,b}+\xi_{u,g}+\xi_{\ell,h}<a+b+g+h$ holds:
Since $q_1,q_2$ and $q_3$ are tight, we get
\[
\alpha_1+\alpha_2+\alpha_3+\beta_1+\beta_2+\gamma_1+\gamma_2=\xi_{u,a}+\xi_{\ell,b}+c+d+\xi_{\ell,d}+f+\xi_{u,g}+\xi_{\ell,h}.
\]
Since $P_u \cup P_{\ell}$ is an optimal Steiner forest, we get $\alpha_1+\beta_1+\beta_2\geq a+b+c+d$ and $\alpha_3+\gamma_1+\gamma_2\geq d+f+g+h$  and together
\[
\alpha_1+\alpha_2+\alpha_3+\beta_1+\beta_2+\gamma_1+\gamma_2 \geq a+b+c+2d+f+g+h.
\]
This shows that $\xi_{u,a}=a, \xi_{\ell,b}=b, \xi_{u,g}=g$ and $\xi_{\ell,h}=h$ has to hold, a contradiction to our assumption. 

\vspace{1em}
Now consider NBC12b. Using that the alternatives are tight yields as above
\[
\alpha_1+\alpha_2+\alpha_3+\beta_1+\beta_2+\gamma_1+\gamma_2=\xi_{u,a}+\xi_{\ell,b}+c+d+\xi_{\ell,d}+f+\xi_{u,g}+\xi_{\ell,h} 
\] 
and the optimality of $P_u \cup P_{\ell}$ implies
\[
\alpha_1+\alpha_2+\alpha_3+\beta_1+\gamma_1 \geq a+b+c+d+f+g+h.
\]
Therefore $\xi_{\ell,d} \geq a+b+g+h-(\xi_{u,a}+\xi_{\ell,b}+\xi_{u,g}+\xi_{\ell,h}) >0$ has to hold. 
\end{proof}
Next we have to show that our statements according to PBCs in the proof of \autoref{lemma3} are correct. We will discuss the most involved occurrences of PBCs in detail. We can always assume that we are analyzing NBCs, since we excluded the existence of BCs in the proof of \autoref{lemma3}.

\begin{lemma}[\hyperlink{pbc1}{PBC1}] \label{lemmapbc1}
Assume that $(P_1,P_2,q_1,q_2'',q_2)$ is a NBC for $(u,v,w,x)=(s_1,t_1, s_2,t_2)$ with the properties of \hyperlink{pbc1}{PBC1}, i.e. 
\begin{itemize}[itemsep=-0em,leftmargin=*]
	\item $q_1$ (defined by $\beta$, $\gamma$) smallest right alternative of Player 1 for $e$;
	\item $q_1$ big;
		\item $e_{\alpha}$ largest edge in $\{\ell_1+\ell_2+1, \ldots, \ell_1+\ell_2+m\}$ which Player 2 does not  pay completely;
		\item $q_2$ (defined by $\mu$, $\nu$) smallest right alternative of Player 2 for $e_{\alpha}$; 
		\item $e_{\sigma}$ largest edge in $\{\ell_1+\ell_2+1, \ldots, \mu-1\}$ which Player 2 does not pay completely;
		\item  $q_2'$ (defined by $\mu'$, $\nu'$) largest left alternative of Player 2 for $e_{\sigma}$; 
		\item $\mu \leq \nu'\leq \alpha-1$;
		\item $q_2''$, defined by $\mu''$ and $\nu''$, smallest left alternative for Player 2 which substitutes $e_{\sigma}$;
		\item $\mu \leq \nu''\leq \alpha-1$;
\end{itemize}
see Figure~\ref{S3} for illustration. This leads to contradictions for all possible types of NBCs. 
\end{lemma}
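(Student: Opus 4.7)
The proof is by exhaustive case analysis over the NBC types. Because $q_1$ is \emph{big} by hypothesis, all types that require $q_1$ small (NBC1, NBC3a--c, NBC4a--c, NBC5a--b, NBC6 and NBC10) are excluded a priori, leaving NBC2, NBC3d--g, NBC4d--g, NBC5c--d, NBC7, NBC8a--d, NBC9b, NBC11a--b and NBC12a--b. For each surviving type I would apply the corresponding structural lemma among Lemmas~\ref{lemmanbcanfang}--\ref{lemmanbcende} under the role assignment $(u,v,w,x)=(s_1,t_1,s_2,t_2)$, so that Player~1 is the upper and Player~2 the lower player, and the definition's triple $(q_1,q_2,q_3)$ corresponds to our $(q_1,q_2'',q_2)$. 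Throughout the analysis the key PBC1 facts I would use are: $e$ is the first edge (in the ordering) not completely paid; $e$ lies in $P_1$ and is substituted by $q_1$; $e_\sigma$ is a middle-part edge not completely paid by Player~2; by the choice of $e_\alpha$, Player~2 completely pays every commonly-used edge of order in $\{\alpha+1,\dots,\ell_1+\ell_2+m\}$; and the cost shares are maximized for Player~2, so Property~\ref{2m} holds.

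The NBC lemmas yield one of three kinds of output. The first and most common is that \emph{every edge substituted by the definition's $q_1$} (equivalently, our $q_1$) is fully paid. Since $q_1$ substitutes $e$ and $\xi_{1,e}+\xi_{2,e}<c(e)$, this is an immediate contradiction. This handles, in particular, NBC7, NBC8a/b, NBC9b, NBC11a/b, as well as the big variants of NBC3 and NBC4 in whose lemmas the upper-player alternative is forced to be fully paid. A second kind of output is that a specific commonly-used edge (typically $e_\sigma$ or an edge of order at most $\alpha$) is forced to be fully paid by Player~2; combined with $\xi_{2,e_\sigma}<c(e_\sigma)$ or with the definition of $\alpha$, this yields the contradiction directly.

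The third kind of output, which arises in NBC2 and NBC8c/d, produces a tight alternative $q$ of Player~2 substituting all edges of $P_2$ that lie in $C(q_2'')\cup C(q_2)$, in particular all commonly-used edges. Here I would form a Steiner forest $F^*$ by adding $q$ together with $q_1$ and deleting the substituted edges. A cost comparison exactly in the style of Subcase~$M.2$ of \autoref{lemma2}, using tightness of $q_1$ for Player~1 and of $q$ for Player~2 together with $\xi_{1,e}+\xi_{2,e}<c(e)$, gives $c(F^*)<c(F)$, contradicting optimality of $F$.

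The main obstacle will be NBC12a/b, where \autoref{lemmanbcende} only concludes that Player~1 carries a strictly positive cost share on a commonly-used edge $d$ that is substituted by both $q_2''$ and $q_2$. To convert this into a contradiction I would first use the order constraints $\sigma\le\mu-1$ and $\mu\le\nu''\le\alpha-1$ to show that $d$ lies strictly between $e_\sigma$ and $e_\alpha$ in the ordering. Then the simultaneous redistribution that moves a small amount of cost share from $\xi_{1,d}$ to $\xi_{2,d}$ and compensates by moving the same amount from $\xi_{2,e_\sigma}$ (or $\xi_{2,e_\alpha}$) to $\xi_{1,e_\sigma}$ (resp.~$\xi_{1,e_\alpha}$) is feasible for \lp{}, because $q_2''$ is tight at $e_\sigma$ and $q_2$ at $e_\alpha$ absorb the adjustments on Player~2's side, while on Player~1's side no alternative spanning $d$ but not $e_\sigma$ or $e_\alpha$ exists (any such alternative could be combined with $q_1$ to beat $F$). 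This redistribution strictly increases $\sum_f \xi_{2,f}$, contradicting Property~\ref{2m}.
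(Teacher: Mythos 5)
Your dispatch over the NBC types is essentially the paper's: the types requiring $q_1$ small are excluded outright, most of the remaining types are killed because the structural lemma forces every edge substituted by $q_1$ (hence $e$) to be completely paid, and the rest are killed by building a cheaper Steiner forest from $q_1$ together with a tight Player-2 alternative that substitutes all commonly used edges. (Two bookkeeping slips: NBC3f/g and NBC4d/e belong to your third mechanism, not the first -- their lemmas give a tight Player-2 alternative, not full payment of $q_1$'s edges -- while NBC8c/d belong to the first, since their lemma already gives full payment of all edges substituted by $q_1$. Both are recoverable.)

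The genuine gap is NBC12, which is the heart of this lemma. First, the move you propose is a pure swap: Player 2 gains $\epsilon$ on $d$ and loses $\epsilon$ on $e_\sigma$ (or $e_\alpha$), so $\sum_f\xi_{2,f}$ is unchanged and no contradiction with Property~\ref{2m} can follow. Second, the direction is wrong: Player 2 cannot increase on $d=e_\tau$ at all, because $q_2$ is tight and substitutes $e_\tau$ but not $e_\sigma$, while $q_2''$ is tight and substitutes $e_\tau$ but not $e_\alpha$; compensating on only one of $e_\sigma,e_\alpha$ necessarily violates the other tight constraint. The paper's move goes the other way and uses four edges: Player 2 \emph{decreases} on $e_\tau$ and increases on both $e_\sigma$ and $e_\alpha$ (so that both $q_2''$ and $q_2$ stay satisfied), while Player 1 does the reverse and additionally increases on $e$; the contradiction is then with LP-optimality (the objective strictly increases), not with \ref{2m}. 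Third, and most importantly, you do not address the actual obstruction: Player 2's increase on $e_\sigma$ can be blocked by a tight \emph{right} alternative of Player 2 that substitutes $e_\sigma$, $e_\tau$ and $e_\alpha$ simultaneously (the left-alternative obstructions are excluded by the choice of $q_2''$ and by optimality of $F$, but the right-alternative one is not). Ruling this out is what forces the paper's iterative procedure -- introducing the largest right alternative $\bar{q_2}$, the edge $e_\rho$, the smallest left alternative $\widehat{q_2}$, then an edge $e_\omega$ unpaid by Player 1 obtained by reapplying the NBC12 lemma, and repeatedly shifting cost shares until no tight right alternative for $e_\sigma$ remains. Without this loop (or a substitute for it) the NBC12 case, and hence the lemma, is not proved.
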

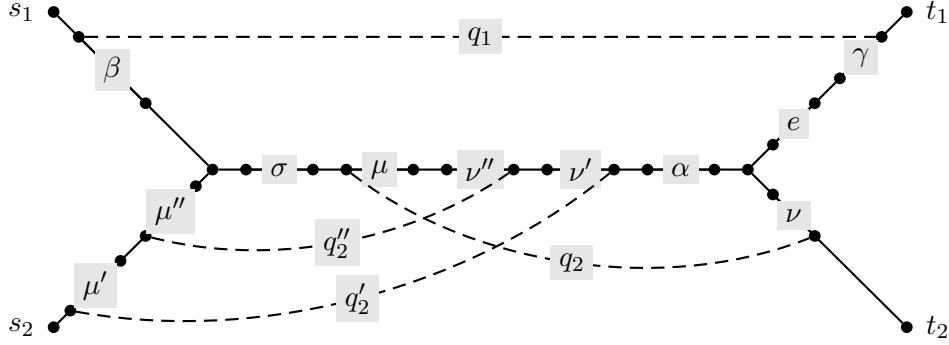
\begin{figure}[H] \centering \psset{unit=1.1cm}
\begin{pspicture}(3.2,-2)(13.8,2) 
\Knoten{3.6}{1.9}{v1}\nput{180}{v1}{$s_1$}
\Knoten{3.9}{1.6}{v2}
\Knoten{4.7}{0.8}{v3}
\Knoten{3.6}{-1.9}{v4}\nput{180}{v4}{$s_2$}
\Knoten{3.8}{-1.7}{v5}
\Knoten{4.4}{-1.1}{v6}
\Knoten{4.7}{-0.8}{v61}
\Knoten{5.3}{-0.2}{v62}
\Knoten{5.5}{0}{v7}
\Knoten{5.9}{0}{v14}
\Knoten{6.7}{0}{v15}
\Knoten{7.1}{0}{v16}
\Knoten{7.9}{0}{v17}
\Knoten{8.3}{0}{v18}
\Knoten{9.1}{0}{v19}
\Knoten{9.5}{0}{v20}
\Knoten{10.3}{0}{v21}
\Knoten{10.7}{0}{v22}
\Knoten{11.5}{0}{v23}
\Knoten{11.9}{0}{v24}
\Knoten{12.2}{0.3}{v25}
\Knoten{12.7}{0.8}{v26}
\Knoten{13.0}{1.1}{v27}
\Knoten{13.5}{1.6}{v28}
\Knoten{13.8}{1.9}{v29}\nput{0}{v29}{$t_1$}
\Knoten{12.2}{-0.3}{v30}
\Knoten{12.7}{-0.8}{v31}
\Knoten{13.8}{-1.9}{v34}\nput{0}{v34}{$t_2$}
\ncline{v1}{v7}
\ncline{v4}{v7}
\ncline{v7}{v24}
\ncline{v24}{v29}
\ncline{v24}{v34}

\Kante{v2}{v3}{\beta}
\Kante{v5}{v6}{\mu'}
\Kante{v61}{v62}{\mu''}
\Kante{v14}{v15}{\sigma}
\Kante{v16}{v17}{\mu}
\Kante{v18}{v19}{\nu''}
\Kante{v20}{v21}{\nu'}
\Kante{v22}{v23}{\alpha}
\Kante{v25}{v26}{e}
\Kante{v27}{v28}{\gamma}
\Kante{v30}{v31}{\nu}

\Bogendashed{v5}{v21}{q_2'}{-25}
\Bogendashed{v61}{v19}{q_2''}{-25}
\Bogendashed{v16}{v31}{q_2}{-30}
\Bogendashed{v2}{v28}{q_1}{0}
\end{pspicture}
\caption{PBC1 ($\mu'' \leq \mu'$ also possible)} \label{S3}
\end{figure}

\begin{proof} We get the following contradictions:
\vspace{-0.5em}
\begin{table}[H] 
\begin{center}
\begin{tabular}{l c@{\hspace{0.5cm}} l}
\toprule
Type of NBC & & contradiction \\ \midrule
1, 3a-c, 4a-c, 6, 10 & & not possible since $q_1$ is big \\ \addlinespace[0.2em]
2, 3f, 3g, 4d, 4e & & cheaper Steiner forest ($q_1$ and tight alternative for Player 2)  \\ \addlinespace[0.2em]
3d, 3e, 4f, 4g, 5, 7, 8, 9, 11 & & $e$ is completely paid \\ 
\bottomrule
\end{tabular}
\end{center} \end{table}
\vspace{-1.5em}

For NBC12, we do not get a contradiction directly. We only get that there has to be an edge $e_{\tau}$ in $\{e_{\mu},\ldots,e_{\nu''}\}$ which is not completely paid by Player 1.

Since the given cost shares are an an optimal solution for \lp{}, the following changes of the cost shares (by a suitably small amount) can not yield a feasible solution for \lp{} (the sum of all collected cost shares would be higher than before):
\[
\left.
\begin{array}{l}
	 \text{Player 2 increases on $e_{\sigma}$ and $e_{\alpha}$ and decreases on $e_{\tau},$} \\
	 \text{Player 1 increases on $e_{\tau}$ and $e$ and decreases on $e_{\sigma}$ and $e_{\alpha}.$}
\end{array}
\right\} \ (*)
\]
Therefore we now analyze which of these changes preserve the feasibility for \lp{} (we simply say that the changes are feasible), and which not.

The changes for Player 1 are obviously feasible since $q_1$ is the smallest right alternative for $e$. For Player 2, increasing on $e_{\alpha}$ while decreasing on $e_{\tau}$ is also feasible ($q_2$ smallest right alternative for $e_{\alpha}$), therefore additionally increasing on $e_{\sigma}$ can not be feasible. Either there is a left alternative which substitutes $e_{\sigma}$ but not $e_{\tau}$ (which is a contradiction to our choice of $q_2''$), there is a left alternative which substitutes $e_{\sigma},e_{\tau}$ and $e_{\alpha}$ (which is a contradiction to the optimality of the Steiner forest) or there is a right alternative which substitutes $e_{\sigma},e_{\tau}$ and $e_{\alpha}$. 
In the following, we change the cost shares iteratively to achieve that there is no such alternative, i.e. no right alternative for $e_{\sigma}$, meanwhile no alternatives get tight during the process, $q_1$ ($q_2$) remains the smallest right alternative for $e$ ($e_{\alpha}$) and the cost shares of all edges with order larger or equal than $\sigma$ stay unchanged. Furthermore, all left alternatives of the second player remain tight. Altogether we achieve that changing the cost shares as described in $(*)$ is then feasible and this is a contradiction to the optimality of the cost shares.

Now we describe how to change the cost shares iteratively. Let $\bar{q_2}$ be the largest right alternative for Player 2. If $\bar{q_2}$ substitutes all commonly used edges or Player 2 pays the edges $e_{\ell_1+\ell_2+1},\ldots,e_{\bar{\mu}-1}$ we get a cheaper solution. Therefore let $e_{\rho}$ be the largest such edge which Player 2 does not pay completely. Since the cost shares are maximized for Player 2 there is an alternative for this edge and this has to be a left one because of our choice of $\bar{q_2}$. We consider the smallest such alternative $\widehat{q_2}$, defined by $\widehat{\mu}$ and $\widehat{\nu}$. Note that $\widehat{\nu}\leq \nu''$ holds since $q_2''$ is a left alternative for $e_{\rho}$. 
If $\widehat{\nu}\leq \bar{\mu}-1$ holds, we can construct a cheaper Steiner forest. For $\sigma \leq \widehat{\nu}$ (i.e. $\widehat{\nu}=\nu''$ because of the choice of $q_2''$), the changes described in $(*)$ would be feasible for $\rho$ instead of $\sigma$.
Therefore $\bar{\mu}\leq \widehat{\nu} \leq \sigma-1$ has to hold, see Figure~\ref{caser1mitlinks}. 
\begin{figure}[H] \centering \psset{unit=1cm}
\begin{pspicture}(-1.2,-2)(13.8,2) 
\Knoten{-1.2}{1.9}{v1}\nput{180}{v1}{$s_1$}
\Knoten{-0.9}{1.6}{v2}
\Knoten{-0.2}{0.9}{v3}
\Knoten{-1.2}{-1.9}{v4}\nput{180}{v4}{$s_2$}
\Knoten{-1.0}{-1.7}{v5}
\Knoten{-0.4}{-1.1}{v6}
\Knoten{-0.1}{-0.8}{v61}
\Knoten{0.5}{-0.2}{v62}
\Knoten{0.7}{0}{v7}
\Knoten{1.1}{0}{v71}
\Knoten{1.9}{0}{v72}
\Knoten{2.3}{0}{v8}
\Knoten{3.1}{0}{v9}
\Knoten{3.5}{0}{v10}
\Knoten{4.3}{0}{v11}
\Knoten{4.7}{0}{v12}
\Knoten{5.5}{0}{v13}
\Knoten{5.9}{0}{v14}
\Knoten{6.7}{0}{v15}
\Knoten{7.1}{0}{v16}
\Knoten{7.9}{0}{v17}
\Knoten{8.3}{0}{v18}
\Knoten{9.1}{0}{v19}
\Knoten{9.5}{0}{v20}
\Knoten{10.3}{0}{v21}
\Knoten{10.7}{0}{v22}
\Knoten{11.5}{0}{v23}
\Knoten{11.9}{0}{v24}
\Knoten{12.2}{0.3}{v25}
\Knoten{12.7}{0.8}{v26}
\Knoten{13.0}{1.1}{v27}
\Knoten{13.5}{1.6}{v28}
\Knoten{13.8}{1.9}{v29}\nput{0}{v29}{$t_1$}
\Knoten{12.2}{-0.3}{v30}
\Knoten{12.7}{-0.8}{v31}
\Knoten{13.0}{-1.1}{v32}
\Knoten{13.5}{-1.6}{v33}
\Knoten{13.8}{-1.9}{v34}\nput{0}{v34}{$t_2$}

\ncline{v1}{v7}
\ncline{v4}{v7}
\ncline{v7}{v24}
\ncline{v24}{v29}
\ncline{v24}{v34}

\Kante{v2}{v3}{\beta}
\Kante{v5}{v6}{\mu''}
\Kante{v61}{v62}{\widehat{\mu}}
\Kante{v71}{v72}{\rho}
\Kante{v8}{v9}{\bar{\mu}}
\Kante{v10}{v11}{\omega}
\Kante{v12}{v13}{\widehat{\nu}}
\Kante{v14}{v15}{\sigma}
\Kante{v16}{v17}{\mu}
\Kante{v18}{v19}{\tau}
\Kante{v20}{v21}{\nu''}
\Kante{v22}{v23}{\alpha}
\Kante{v25}{v26}{e}
\Kante{v27}{v28}{\gamma}
\Kante{v30}{v31}{\nu}
\Kante{v32}{v33}{\bar{\nu}}

\Bogendashed{v5}{v21}{q_2''}{-20}
\Bogendashed{v16}{v31}{q_2}{-30}
\Bogendashed{v8}{v33}{\bar{q_2}}{-20}
\Bogendashed{v61}{v13}{\widehat{q_2}}{-30}
\Bogendashed{v2}{v28}{q_1}{0}
\end{pspicture}
\caption{NBC12 in \autoref{lemmapbc1}} \label{caser1mitlinks}
\end{figure}

But now, $(P_1,P_2,q_1,\widehat{q_2},\bar{q_2})$ is again a NBC for $(u,v,w,x)=(s_1,t_1, s_2,t_2)$ and (analogous as before) there has to be an edge $e_{\omega}$ in $\{\bar{\mu}, \ldots, \widehat{\nu}\}$ which Player 1 does not pay completely. 
Now it is feasible that Player 1 decreases her cost share on $e_{\rho}$ and increases on $e_{\omega}$ (because of our choice of $q_1$) and that Player 2 decreases on $e_{\omega}$ and increases on $e_{\rho}$ (by the choice of $\widehat{q_2}$). Note that we can ensure that no alternatives get tight by changing the cost shares by a suitably small amount. Furthermore, $\bar{q_2}$ is not tight anymore (but $q_1,q_2$ and all tight left alternatives for the second player obviously stay tight). If there is no right alternative for $e_{\sigma}$ which is tight according to the changed cost shares, we can stop the procedure. If not, redefine $\bar{q_2}$ as a largest such alternative and repeat the procedure above.
This shows that NBC12 also leads to a contradiction.
\end{proof}

\begin{lemma}[\hyperlink{pbc6}{PBC6}, in the case $q_1$ big] \label{lemmapbc7}
Assume that $(P_1,P_2,q_1,\bar{q_2},q_2')$ is a NBC for $(u,v,w,x)=(s_1,t_1,s_2,t_2)$ with the properties of \hyperlink{pbc6}{PBC6} and $q_1$ big, i.e.  
\begin{itemize}[itemsep=-0em,leftmargin=*]
	\item $q_1$ (defined by $\beta$, $\gamma$) smallest right alternative of Player 1 for $e$;
	\item $q_1$ big;
		\item $e_{\alpha}$ largest edge in $\{\ell_1+\ell_2+1, \ldots, \ell_1+\ell_2+m\}$ which Player 2 does not pay completely;
		\item $q_2$ (defined by $\mu$, $\nu$) smallest right alternative of Player 2 for $e_{\alpha}$; 
		\item $e_{\sigma}$ largest edge in $\{\ell_1+\ell_2+1, \ldots, \mu-1\}$ which Player 2 does not pay completely;
		\item only right alternatives of Player 2 for $e_{\sigma}$; $q_2'$ (defined by $\mu'$, $\nu'$) largest right alternative of Player 2 for $e_{\sigma}$;
		\item $e_{\rho}$ largest edge in $\{\ell_1+\ell_2+1, \ldots, \mu'-1\}$ which Player 2 does not pay completely;
		\item $\bar{q_2}$, defined by $\bar{\mu}$ and $\bar{\nu}$, smallest left alternative of Player 2 for $e_{\rho}$;
		\item $\mu'\leq\bar{\nu}\leq\sigma-1$;
\end{itemize}
see Figure~\ref{S7} for illustration. This leads to contradictions for all possible types of NBCs. 
\end{lemma}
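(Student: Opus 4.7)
The plan is to mirror the case analysis used in \autoref{lemmapbc1}: we enumerate each possible NBC type that $(P_1,P_2,q_1,\bar{q_2},q_2')$ could be, and in each case derive a contradiction either from the structural assumption that $q_1$ is big, from the existence of a cheaper Steiner forest, or from the NBC-property that forces $e$ to be completely paid. Concretely, the NBC types requiring $q_1$ small (NBC1, NBC3a--c, NBC4a--c, NBC6, NBC10) are immediately excluded by the hypothesis. For NBC2, NBC3f/g and NBC4d/e, the induced tight alternative for Player~2 (a right alternative covering all commonly used edges in the relevant direction) combined with $q_1$ and $\bar{q_2}$ can be spliced into a Steiner forest strictly cheaper than $F$, contradicting optimality. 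For the remaining ``well-paid'' types (NBC3d/e, NBC4f/g, NBC5, NBC7, NBC8, NBC9, NBC11), the corresponding lemma (\autoref{lemmanbcanfang}--\autoref{lemmanbcende}) forces all edges substituted by $q_1$, $\bar{q_2}$ and $q_2'$ to be completely paid; since $e$ is substituted by $q_1$, this contradicts the choice of $e$ as an unpaid edge.

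As in \autoref{lemmapbc1}, the main obstacle will be NBC12. There, the NBC lemma only yields that some commonly used edge $e_\tau$ substituted by both $\bar{q_2}$ and $q_2'$ (i.e.\ $e_\tau$ lies in the middle segment $\{\bar\mu,\ldots,\bar\nu\}\cap\{\mu',\ldots,\nu'\}$) is \emph{not} completely paid by Player~1. I would then attempt the simultaneous perturbation
\begin{center}
Player~$2$ increases on $e_\sigma$ and $e_\rho$ and decreases on $e_\tau$;\\
Player~$1$ increases on $e_\tau$ and $e$ and decreases on $e_\sigma$ and $e_\rho$,
\end{center}
which, if feasible, contradicts optimality of the LP solution. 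Feasibility for Player~1 follows from $q_1$ being the smallest right alternative for $e$; feasibility for Player~2 on the $(e_\rho,e_\tau)$ pair follows from $\bar{q_2}$ being the smallest left alternative for $e_\rho$, and on the $(e_\sigma,e_\tau)$ pair from $q_2'$ being the largest right alternative for $e_\sigma$ (so no even larger right alternative can block the increase). The remaining obstruction to feasibility would be a tight right alternative $\widehat{q_2}$ substituting both $e_\sigma$ and $e_\tau$ but not $e_\rho$.

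To handle this residual obstruction, I would follow the iterative push argument from \autoref{lemmapbc1}: if such $\widehat{q_2}$ exists, pick it to be largest, let $e_\omega$ be the largest edge in $\{\ell_1+\ell_2+1,\dots,\widehat\mu-1\}$ not completely paid by Player~2 (if all such edges are paid, the triple $(q_1,\widehat{q_2},\bar{q_2})$ already yields a cheaper Steiner forest), and argue by choice of $q_1$ that Player~1 has an unpaid edge in $\{\widehat\mu,\dots,\bar\nu\}$ enabling a small transfer which breaks tightness of $\widehat{q_2}$ without affecting $q_1,\bar{q_2},q_2'$ or any tight left alternative of Player~2. Iterating this push (each step strictly reducing the number of tight right alternatives for $e_\sigma$) terminates, after which the perturbation above is feasible and contradicts optimality. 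The bookkeeping for this push---ensuring that $q_1$ remains the smallest right alternative for $e$, that $\bar{q_2}$ and $q_2'$ stay tight, and that no new right alternative for $e_\sigma$ is created---is where the technical difficulty concentrates, but it is directly analogous to the NBC12 step in \autoref{lemmapbc1}.
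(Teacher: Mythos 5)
Your case table for the non-NBC12 types matches the paper's exactly, and your one-shot perturbation for NBC12 (Player~2: increase on $e_\sigma$ and $e_\rho$, decrease on $e_\tau$; Player~1: increase on $e_\tau$ and $e$, decrease on $e_\sigma$ and $e_\rho$) is precisely the net effect of what the paper ultimately carries out, so the overall strategy is sound. The gap is in your feasibility analysis for Player~2. You claim the residual obstruction is a tight right alternative $\widehat{q_2}$ substituting both $e_\sigma$ and $e_\tau$ but not $e_\rho$ --- but such an alternative contributes $+\epsilon-\epsilon=0$ to its constraint and is harmless. The alternative that actually blocks the increase on $e_\sigma$ is a tight right alternative substituting $e_\sigma$ but \emph{not} $e_\tau$, i.e.\ one whose smallest substituted index lies in $\{\tau+1,\ldots,\sigma\}$. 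Your justification that ``$q_2'$ is the largest right alternative for $e_\sigma$, so no even larger right alternative can block'' is inverted: largeness of $q_2'$ only says that every other right alternative for $e_\sigma$ is \emph{smaller} (has smallest index $\geq\mu'$), and it is exactly these smaller ones, starting strictly to the right of $e_\tau$, that survive and block. Because the obstruction is misidentified, the iterative cleanup you sketch (picking $e_\omega$ to the left of $\widehat{\mu}$, in analogy with PBC1) is aimed at the wrong object, and it is not clear it terminates in a contradiction.

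The paper instead sequences the perturbation: first swap cost shares on $e_\rho$ and $e_\tau$ between the two players (feasible by the minimality of $\bar{q_2}$ and the choice of $q_1$); this destroys tightness of $q_2'$ and of every right alternative for $e_\sigma$ whose smallest index is at most $\tau$, so any surviving tight right alternative for $e_\sigma$ must begin at index at least $\tau+1$. If none survives, Player~2 increases on $e_\sigma$, Player~1 decreases there and increases on $e$ --- contradiction. Otherwise one redefines $q_2'$ as the largest surviving right alternative, shows $\mu'\geq\bar\nu+1$ (else the NBC12 lemma applied to the new triple produces an unpaid edge of Player~1 where your choice of $e_\tau$ says there is none), produces a new $e_\rho$, $\bar{q_2}$, $e_\tau$, and repeats. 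If you correct the obstruction to ``right alternative for $e_\sigma$ not substituting $e_\tau$'' and organize your iteration around pushing its left endpoint rightward in this way, you recover the paper's argument.
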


\begin{figure}[H] \centering \psset{unit=1.1cm}
\begin{pspicture}(2.2,-2)(14,2) 
\Knoten{2.4}{1.9}{v1}\nput{180}{v1}{$s_1$}
\Knoten{2.4}{-1.9}{v4}\nput{180}{v4}{$s_2$}
\Knoten{2.7}{-1.6}{v5}
\Knoten{3.4}{-0.9}{v6}
\Knoten{4.3}{0}{v7}
\Knoten{2.7}{1.6}{v8}
\Knoten{3.4}{0.9}{v9}
\Knoten{4.7}{0}{v10}
\Knoten{5.5}{0}{v11}
\Knoten{5.9}{0}{v12}
\Knoten{6.7}{0}{v13}
\Knoten{7.1}{0}{v16}
\Knoten{7.9}{0}{v17}
\Knoten{8.3}{0}{v18}
\Knoten{9.1}{0}{v19}
\Knoten{9.5}{0}{v20}
\Knoten{10.3}{0}{v21}
\Knoten{10.7}{0}{v22}
\Knoten{11.5}{0}{v23}
\Knoten{11.9}{0}{v24}
\Knoten{12.2}{0.3}{v25}
\Knoten{12.7}{0.8}{v26}
\Knoten{13.0}{1.1}{v27}
\Knoten{13.5}{1.6}{v28}
\Knoten{13.8}{1.9}{v29}\nput{0}{v29}{$t_1$}
\Knoten{12.2}{-0.3}{v30}
\Knoten{12.7}{-0.8}{v31}
\Knoten{13.0}{-1.1}{v32}
\Knoten{13.5}{-1.6}{v33}
\Knoten{13.8}{-1.9}{v34}\nput{0}{v34}{$t_2$}

\ncline{v1}{v7}
\ncline{v4}{v7}
\ncline{v7}{v24}
\ncline{v24}{v29}
\ncline{v24}{v34}

\Kante{v5}{v6}{\bar{\mu}}
\Kante{v8}{v9}{\beta}
\Kante{v10}{v11}{\rho}
\Kante{v12}{v13}{\mu'}
\Kante{v16}{v17}{\bar{\nu}}
\Kante{v18}{v19}{\sigma}
\Kante{v20}{v21}{\mu}
\Kante{v22}{v23}{\alpha}
\Kante{v25}{v26}{e}
\Kante{v27}{v28}{\gamma}
\Kante{v30}{v31}{\nu}
\Kante{v32}{v33}{\nu'}

\Bogendashed{v20}{v31}{q_2}{-35}
\Bogendashed{v5}{v17}{\bar{q_2}}{-25}
\Bogendashed{v12}{v33}{q_2'}{-25}
\Bogendashed{v8}{v28}{q_1}{0}
\end{pspicture}
\caption{PBC6, $q_1$ big} \label{S7}
\end{figure}
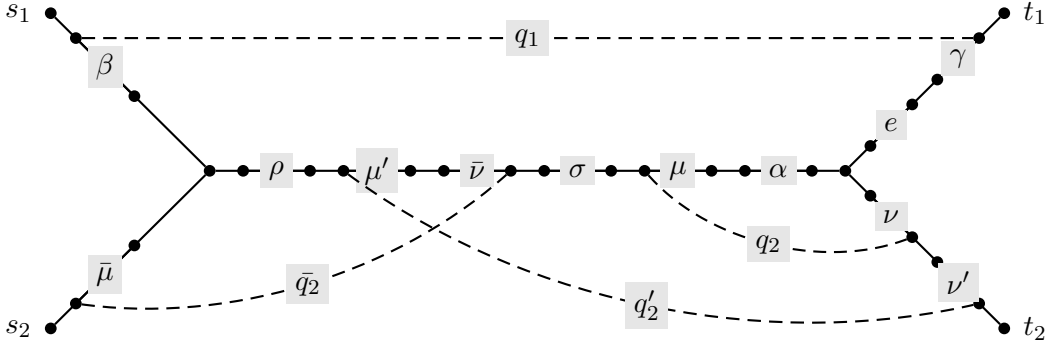

\begin{proof} We get the following contradictions:
\vspace{-0.5em}
\begin{table}[H] 
\begin{center}
\begin{tabular}{l c@{\hspace{0.5cm}} l}
\toprule
Type of NBC & & contradiction \\ \midrule
1, 3a-c, 4a-c, 6, 10 & & not possible since $q_1$ is big \\ \addlinespace[0.2em]
2, 3f, 3g, 4d, 4e & & cheaper Steiner forest ($q_1$ and tight alternative for Player 2)  \\ \addlinespace[0.2em]
3d, 3e, 4f, 4g, 5, 7, 8, 9, 11 & & $e$ is completely paid \\ 
\bottomrule
\end{tabular}
\end{center} \end{table}
\vspace{-1.5em}

For NBC12, we do not get a contradiction directly. We only get that there is an edge in $\{\mu', \ldots, \bar{\nu}\}$ that is not completely paid by Player 1. Let $e_{\tau}$ be the largest such edge. The situation is illustrated in Figure~\ref{fig:bc4mitlinks}.

\begin{figure}[H] \centering \psset{unit=1.1cm}
\begin{pspicture}(1,-2)(14,2) 
\Knoten{1.2}{1.9}{v1}\nput{180}{v1}{$s_1$}
\Knoten{1.2}{-1.9}{v4}\nput{180}{v4}{$s_2$}
\Knoten{1.5}{-1.6}{v5}
\Knoten{2.2}{-0.9}{v6}
\Knoten{3.1}{0}{v7}
\Knoten{1.5}{1.6}{v8}
\Knoten{2.2}{0.9}{v9}
\Knoten{3.5}{0}{v10}
\Knoten{4.3}{0}{v11}
\Knoten{4.7}{0}{v12}
\Knoten{5.5}{0}{v13}
\Knoten{5.9}{0}{v14}
\Knoten{6.7}{0}{v15}
\Knoten{7.1}{0}{v16}
\Knoten{7.9}{0}{v17}
\Knoten{8.3}{0}{v18}
\Knoten{9.1}{0}{v19}
\Knoten{9.5}{0}{v20}
\Knoten{10.3}{0}{v21}
\Knoten{10.7}{0}{v22}
\Knoten{11.5}{0}{v23}
\Knoten{11.9}{0}{v24}
\Knoten{12.2}{0.3}{v25}
\Knoten{12.7}{0.8}{v26}
\Knoten{13.0}{1.1}{v27}
\Knoten{13.5}{1.6}{v28}
\Knoten{13.8}{1.9}{v29}\nput{0}{v29}{$t_1$}
\Knoten{12.2}{-0.3}{v30}
\Knoten{12.7}{-0.8}{v31}
\Knoten{13.0}{-1.1}{v32}
\Knoten{13.5}{-1.6}{v33}
\Knoten{13.8}{-1.9}{v34}\nput{0}{v34}{$t_2$}

\ncline{v1}{v7}
\ncline{v4}{v7}
\ncline{v7}{v24}
\ncline{v24}{v29}
\ncline{v24}{v34}

\Kante{v5}{v6}{\bar{\mu}}
\Kante{v8}{v9}{\beta}
\Kante{v10}{v11}{\rho}
\Kante{v12}{v13}{\mu'}
\Kante{v14}{v15}{\tau}
\Kante{v16}{v17}{\bar{\nu}}
\Kante{v18}{v19}{\sigma}
\Kante{v20}{v21}{\mu}
\Kante{v22}{v23}{\alpha}
\Kante{v25}{v26}{e}
\Kante{v27}{v28}{\gamma}
\Kante{v30}{v31}{\nu}
\Kante{v32}{v33}{\nu'}

\Bogendashed{v20}{v31}{q_2}{-35}
\Bogendashed{v5}{v17}{\bar{q_2}}{-25}
\Bogendashed{v12}{v33}{q_2'}{-25}
\Bogendashed{v8}{v28}{q_1}{0}
\end{pspicture}
\caption{NBC12 in \autoref{lemmapbc7}} \label{fig:bc4mitlinks}
\end{figure}

We now change the cost shares for this case iteratively. First, Player 1/2 in/decreases her cost share on $e_{\tau}$ and de/increases on $e_{\rho}$ (by a suitably small amount to ensure that no alternatives get tight). This is obviously feasible for both Players because of the choice of $q_1$ and $\bar{q_2}$. But after that, $q_2'$ is not tight anymore (according to the changed cost shares). If there is no tight right alternative left which substitutes $e_{\sigma}$, Player 2 can increase her cost share on this edge, while Player 1 decreases it and then Player 1 can increase on $e$, which is a contradiction. Otherwise, redefine $q_2'$ as the largest right alternative according to the changed cost shares. It is clear that $\mu'\geq \tau+1$ has to hold, otherwise $q_2'$ could not be tight now. If $\mu'\leq\bar{\nu}$ holds, we get a contradiction: Our choice of $\tau$ implies that Player 1 pays the edges $e_{\mu'}, \ldots, e_{\bar{\nu}}$ completely. But we can obviously repeat our arguments above and get (by NBC12) that there has to be an edge in $e_{\mu'}, \ldots, e_{\bar{\nu}}$ that is not completely paid by Player 1. Therefore $\mu'\geq\bar{\nu}+1$ has to hold. If $\mu'=\bar{\nu}+1$ or Player 2 pays the edges $e_{\bar{\nu}+1}, \ldots, e_{\mu'-1}$, we can construct a cheaper Steiner forest by using $q_1,\bar{q_2}$ and $q_2'$. Thus redefine $e_{\rho}$ as the largest edge in $\{\bar{\nu}+1, \ldots, \mu'-1\}$ that is not completely paid by Player~2. If Player 2 has no tight alternative for this edge, she can increase her cost share on this edge, while Player 1 decreases it and additionally increases on $e$, which is a contradiction. Therefore there has to be a left alternative for $e_{\rho}$ (note that there can not be a right one) and we redefine $\bar{q_2}$ as the smallest such alternative. Analogously as above, there is an edge in $\{\mu', \ldots, \bar{\nu}\}$ that is not completely paid by Player 1 and we redefine $e_{\tau}$ as the largest such edge. Now we have reached the situation of Figure~\ref{fig:bc4mitlinks} again and can repeat the argumentation until we finally get a contradiction (this obviously terminates).
\end{proof}
\begin{lemma}[\hyperlink{pbc8}{PBC8}] \label{lemmapbc9}
Assume that $(P_2,P_1,q_2',q_1',q_1)$ is a NBC for $(u,v,w,x)=(s_2,t_2,s_1,t_1)$ with the properties of \hyperlink{pbc8}{PBC8}, i.e. 
\begin{itemize}[itemsep=-0em,leftmargin=*]
	\item $q_1$ (defined by $\beta$, $\gamma$) smallest right alternative of Player 1 for $e$;
		\item $e_{\alpha}$ largest edge in $\{\beta, \ldots, \ell_1+\ell_2+m\}$ which Player 2 does not pay completely;
		\item $q_2$ (defined by $\mu$, $\nu$) smallest right alternative of Player 2 for $e_{\alpha}$; 
		\item $e_{\sigma}$ largest edge in $\{\beta, \ldots, \mu-1\}$ which Player 2 does not pay completely;
		\item only right alternatives of Player 2 for $e_{\sigma}$; $q_2'$ (defined by $\mu'$, $\nu'$) largest right alternative of Player 2 for $e_{\sigma}$; 
		\item $q_2'$ small;
		\item $e_{\rho}$ largest edge in $\{\mu', \ldots, \beta-1\}$ which Player 1 does not pay completely;
		\item $q_1'$ (defined by $\beta'$, $\gamma'$) smallest left alternative of Player 1 which substitutes $e_{\rho}$, but not $e_{\sigma}$;
		\item $\beta \leq \gamma' \leq \sigma -1$;
\end{itemize}
see Figure~\ref{S9} for illustration. This is only possible if the edges $e_{\beta}, \ldots, e_{\gamma'}$ all have cost 0.
\end{lemma}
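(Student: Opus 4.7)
By \autoref{lemma:pbc}, since $(G,(s_1,t_1),(s_2,t_2))$ contains no BC, the subgraph $(P_2,P_1,q_2',q_1',q_1)$ must be an NBC of one of the twelve types listed in \autoref{def:NBC}. Under the role assignment $(u,v,w,x)=(s_2,t_2,s_1,t_1)$, the alternative $q_2'$ plays the role of the PBC's $q_1$, $q_1'$ plays the role of $q_2$, and the paper's $q_1$ plays the role of $q_3$. Crucially, $q_2'$ is a small right alternative (its left endpoint $e_{\mu'}$ lies in the commonly used portion), so every NBC variant that requires the PBC's $q_1$ to be big is ruled out immediately. This already discards a large fraction of the NBC types.

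For each of the remaining NBC types I would invoke the corresponding cost-share property from \autoref{lemmanbcanfang}--\autoref{lemmanbcende}, reinterpreted with lower player $=$ Player~$1$ and upper player $=$ Player~$2$. The surviving types split into two groups. If the property asserts that all edges substituted by at least one of the three alternatives are completely paid, then in particular Player~$1$ is forced to pay $e_\rho$ completely via the tight alternative $q_1'$, which contradicts the choice of $e_\rho$ as the largest edge in $\{\mu',\ldots,\beta-1\}$ that Player~$1$ does not pay completely. If instead the property produces an enlarged tight alternative for Player~$1$ covering the entire commonly used portion, then combining this alternative with the paper's $q_1$ yields a Steiner forest strictly cheaper than $F$, contradicting optimality of $F$.

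The one case in which this argument does not lead to a contradiction is the analog of NBC$4$b, whose extracted property states that every commonly used edge substituted by the PBC's $q_3$ has cost~$0$. Since the PBC's $q_3$ is the paper's $q_1$, which substitutes the commonly used edges $e_\beta,\ldots,e_{\ell_1+\ell_2+m}$, and since $\beta\le\gamma'\le\sigma-1\le\ell_1+\ell_2+m$, the edges $e_\beta,\ldots,e_{\gamma'}$ are among them. This is exactly the desired conclusion.

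The hard part will be the meticulous bookkeeping in the case analysis: for each of the twelve NBC variants I must match the orientation conventions of $\vec{q_1},\vec{q_2},\vec{q_3}$ with the paper's edge ordering, verify compatibility with the index chain $\mu'\le\rho<\beta\le\gamma'<\sigma$, and then extract the correct cost-share identity before building the cheaper Steiner forest. Once the correct tight alternative has been identified in each subcase, the cost comparison proceeds along the same lines as the arguments in \autoref{lemma1}--\autoref{lemma3} and involves only routine manipulations.
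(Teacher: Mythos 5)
Your high-level strategy matches the paper's: invoke \autoref{lemma:pbc} to force the subgraph to be one of the twelve NBC types, discard the types requiring the PBC's $q_1$ to be big (since $q_2'$ is small), and dispatch most of the rest via the cost-share properties of \autoref{lemmanbcanfang}--\autoref{lemmanbcende}. However, you have misidentified the critical surviving case, and this hides the main content of the proof. NBC4b is \emph{not} the case that yields the cost-zero conclusion: its property also asserts that all edges substituted by the PBC's $q_3$ (here the paper's $q_1$, a right alternative for $e$) are completely paid, so NBC4b is killed outright because $e$ is by definition not completely paid. The case that genuinely survives is NBC2, whose property only gives that Player~1 pays the commonly used edges substituted by both $q_1'$ and $q_1$ --- i.e.\ $e_{\beta},\ldots,e_{\gamma'}$ --- completely. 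That is strictly weaker than ``these edges have cost $0$,'' and the gap between the two is exactly where the work lies: the paper takes the smallest such edge $e_{\tau}$ with $c(e_{\tau})>0$, analyzes why \texttt{CHANGE($\rho,\tau$)} is infeasible (it is feasible for Player~1 by minimality of $q_1'$, hence must fail for Player~2, producing a right alternative of Player~2 substituting $e_{\tau}$ but not $e_{\rho}$), and then combines that alternative with $q_1$ to build a cheaper Steiner forest. Your proposal contains no trace of this argument, so as written it does not establish the lemma's conclusion.

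Two smaller inaccuracies: your proposed contradiction for the ``all substituted edges are completely paid'' group --- that Player~1 would be forced to pay $e_{\rho}$ completely --- conflates total payment $\xi_{1,f}+\xi_{2,f}=c(f)$ with Player~1's individual share; the correct (and paper's) contradiction is that $e$ itself would be completely paid. And the third group of types (NBC3b, 3c, 5) is eliminated because Player~2 would pay $e_{\alpha}$, contradicting its choice, not because an enlarged tight alternative yields a cheaper forest; for NBC2 the enlarged alternative does not by itself produce a cheaper forest here.
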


\begin{figure}[H] \centering \psset{unit=1cm}
\begin{pspicture}(2.4,-2)(15,2) 
\Knoten{2.4}{1.9}{v1}\nput{180}{v1}{$s_1$}
\Knoten{2.7}{1.6}{v2}
\Knoten{3.4}{0.9}{v3}
\Knoten{2.4}{-1.9}{v4}\nput{180}{v4}{$s_2$}
\Knoten{4.3}{0}{v7}
\Knoten{4.7}{0}{v10}
\Knoten{5.5}{0}{v11}
\Knoten{5.9}{0}{v12}
\Knoten{6.7}{0}{v13}
\Knoten{7.1}{0}{v16}
\Knoten{7.9}{0}{v17}
\Knoten{8.3}{0}{v18}
\Knoten{9.1}{0}{v19}
\Knoten{9.5}{0}{v20}
\Knoten{10.3}{0}{v21}
\Knoten{10.7}{0}{v22}
\Knoten{11.5}{0}{v23}
\Knoten{11.9}{0}{v24}
\Knoten{12.7}{0}{n1}
\Knoten{13.1}{0}{n2}
\Knoten{13.4}{0.3}{v25}
\Knoten{13.9}{0.8}{v26}
\Knoten{14.2}{1.1}{v27}
\Knoten{14.7}{1.6}{v28}
\Knoten{15}{1.9}{v29}\nput{0}{v29}{$t_1$}
\Knoten{13.4}{-0.3}{v30}
\Knoten{13.9}{-0.8}{v31}
\Knoten{14.15}{-1.05}{v32}
\Knoten{14.75}{-1.65}{v33}
\Knoten{15}{-1.9}{v34}\nput{0}{v34}{$t_2$}
\ncline{v1}{v7}
\ncline{v4}{v7}
\ncline{v7}{n2}
\ncline{n2}{v29}
\ncline{n2}{v34}
\Kante{v2}{v3}{\beta'}
\Kante{v10}{v11}{\mu'}
\Kante{v12}{v13}{\rho}
\Kante{v16}{v17}{\beta}
\Kante{v18}{v19}{\gamma'}
\Kante{v20}{v21}{\sigma}
\Kante{v22}{v23}{\mu}
\Kante{v25}{v26}{e}
\Kante{v27}{v28}{\gamma}
\Kante{v30}{v31}{\nu}
\Kante{v32}{v33}{\nu'}
\Kante{n1}{v24}{\alpha}
\Bogendashed{v22}{v31}{q_2}{-35}
\Bogendashed{v10}{v33}{q_2'}{-20}
\Bogendashed{v16}{v28}{q_1}{25}
\Bogendashed{v2}{v19}{q_1'}{25}
\end{pspicture}
\caption{PBC8}\label{S9}
\end{figure}

\begin{proof} The following types of NBCs yield contradictions:
\vspace{-0.5em}
\begin{table}[H] 
\begin{center}
\begin{tabular}{l c@{\hspace{0.5cm}} l}
\toprule
Type of NBC & & contradiction \\ \midrule
3d-g, 4d-g, 7, 8, 11, 12 & & not possible since $q_2'$ is small \\ \addlinespace[0.2em]
1, 3a, 4a-c, 6, 9, 10 & & $e$ is completely paid  \\ \addlinespace[0.2em]
3b, 3c, 5 & & Player 2 pays $e_{\alpha}$ \\
\bottomrule
\end{tabular}
\end{center} \end{table}
\vspace{-1.5em}
%
For NBC2, we do not get a contradiction directly. But we get that Player 1 pays the edges $e_{\beta}, \ldots, e_{\gamma'}$. If there is such an edge with cost larger than zero, let $e_{\tau}$ be the smallest such one. That implies that $\xi_{1,e_{\tau}}=c(e_{\tau})>0$. Now let us consider \texttt{CHANGE($\rho,\tau$)} and analyze why this is not feasible: The changes for Player 1 are feasible since $q_1'$ is the smallest alternative for Player~1 which substitutes $e_{\rho}$, but not $e_{\sigma}$. Therefore the changes can not be feasible for Player 2 and there has to be a right alternative for Player 2 which substitutes $e_{\tau}$, but not $e_{\rho}$. But then we can construct a cheaper Steiner forest by using this alternative together with $q_1$, since Player 1 pays the edges $e_{\rho+1}, \ldots, e_{\beta-1}$ and the edges $e_{\beta}, \ldots, e_{\tau-1}$ have costs zero. Therefore the costs of the edges $e_{\beta}, \ldots, e_{\gamma'}$ all have to be 0.
\end{proof}
\begin{lemma}[\hyperlink{pbc12}{PBC12}, in the case $q_2$ small] \label{lemmapbc14}
Assume that $(P_2,P_1,q_2,q_1',q_1)$ is a NBC for $(u,v,w,x)=(s_2,t_2,s_1,t_1)$ with the properties of \hyperlink{pbc12}{PBC12} and $q_2$ small, i.e. 
\begin{itemize}[itemsep=-0em,leftmargin=*]
	\item $q_1$ (defined by $\beta$, $\gamma$) smallest right alternative of Player 1 for $e$;
		\item $e_{\alpha}$ largest edge in $\{\beta, \ldots, \ell_1+\ell_2+m\}$ which Player 2 does not pay completely;
		\item $q_2$ (defined by $\mu$, $\nu$) smallest right alternative of Player 2 for $e_{\alpha}$; 
		\item $e_{\sigma}$ largest edge in $\{\mu, \ldots, \beta-1\}$ which Player 1 does not pay completely;
		\item $q_2$ small;
		\item $q_1'$ (defined by $\beta'$, $\gamma'$) smallest left alternative of Player 1 which substitutes $e_{\sigma}$, but not $e_{\alpha}$;
		\item $\beta \leq \gamma' \leq \alpha-1$;
\end{itemize}
see Figure~\ref{S14} for illustration. This leads to contradictions for all possible types of NBCs. 
\end{lemma}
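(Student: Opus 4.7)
The plan is to mirror the structure of Lemmas~\ref{lemmapbc1}, \ref{lemmapbc7} and~\ref{lemmapbc9}: classify the twelve NBC types according to the source of the contradiction and present the outcome in a table analogous to those earlier proofs. Under the mapping of the generic NBC template $(P_u,P_\ell,q_1^{\text{NBC}},q_2^{\text{NBC}},q_3^{\text{NBC}}) = (P_2,P_1,q_2,q_1',q_1)$ with $(u,v,w,x)=(s_2,t_2,s_1,t_1)$, the hypothesis ``$q_2$ small'' translates directly into ``$q_1^{\text{NBC}}$ small''. This single observation rules out every NBC type whose definition forces $q_1^{\text{NBC}}$ to be big, namely NBC3d--g, NBC4d--g, NBC5c--d, NBC7, NBC8, NBC9b, NBC11 and, crucially, NBC12. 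The latter point is what makes this lemma substantially easier than Lemma~\ref{lemmapbc1}: the iterative \texttt{CHANGE} argument used there is not needed here.

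For the remaining ``small'' types (NBC1, NBC2, NBC3a--c, NBC4a--c, NBC5a--b, NBC6, NBC9a, NBC10) I would invoke the payment properties proved in Lemmas~\ref{lemmanbcanfang}--\ref{lemmanbcende}, specialized to the assignment $(u,v,w,x)=(s_2,t_2,s_1,t_1)$. Two families of contradiction will appear, exactly as in the preceding PBC lemmas. The first is ``\emph{$e$ completely paid}'': the NBC forces Player~1 (the lower player in the template) to completely cover every commonly used edge substituted by $q_1$; combined with the tightness of $q_1$ this gives $\xi_{1,e}+\xi_{2,e}=c(e)$, contradicting the choice of $e$. By analogy with the tables of Lemmas~\ref{lemmapbc1} and~\ref{lemmapbc9} this disposes of NBC1, NBC3a, NBC4a--c, NBC6, NBC9a and NBC10. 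The second family is ``\emph{cheaper Steiner forest}'': the NBC produces an additional tight left or right alternative for Player~2 whose support, together with $q_1$, substitutes a sufficiently large portion of the commonly used path; replacing the corresponding edges of $F$ by these two detours yields a Steiner forest of strictly smaller cost than $F$, contradicting its optimality. This is the mechanism for NBC2, NBC3b--c and NBC5a--b.

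The main obstacle will be the case-by-case bookkeeping for the ``cheaper Steiner forest'' family: for each subtype one has to verify, using the payment identities from Lemmas~\ref{lemmanbcanfang}--\ref{lemmanbcende}, that the cost of the constructed alternative plus the cost of the edges retained from $F$ is strictly less than $c(F)$. The arithmetic is identical in spirit to the computations in Cases~$L$, $M$ and~$R$ of the main proof of Theorem~\ref{theoremrobust}, relying on the tightness equations, the strict inequality $\xi_{1,e}<c(e)$, and the non-negativity of edge costs. Since the proof reduces entirely to the table-style enumeration already worked out in Lemmas~\ref{lemmapbc1}, \ref{lemmapbc7} and~\ref{lemmapbc9}, and NBC12 is excluded from the outset, I expect no new conceptual ingredient to be required beyond specializing the $(u,v,w,x)$-labelling and reading off the corresponding properties of each NBC subtype.
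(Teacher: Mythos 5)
Your classification of the ``big'' types is consistent with the paper's table (the paper lists NBC3d--g, 4d--g, 7, 8, 11 and 12 as impossible because $q_2$ is small, and you additionally fold NBC5c--d and NBC9b into that bucket, which is harmless), and your two families of contradiction for most of the remaining small types match the paper's in spirit, even if the exact grouping differs (the paper files NBC3b, 3c and 5 under ``Player~2 pays $e_{\alpha}$'' rather than under a cheaper-forest argument). However, there is a genuine gap: you dispose of NBC2 with a one-line ``cheaper Steiner forest'' argument, and that is precisely the case that does \emph{not} yield a direct contradiction. The NBC2 payment lemma only gives you a tight \emph{left} alternative $\bar{q_1}$ for Player~1 substituting all commonly used edges, together with the fact that Player~1 pays the edges substituted by both $q_1'$ and $q_1$. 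To turn this into a cheaper Steiner forest you would also need Player~2 to cover the commonly used edges not reached by $\bar{q_1}$'s companion detour, but $e_{\alpha}$ is by definition a commonly used edge Player~2 does \emph{not} pay completely, so no immediate cost comparison closes the case.

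The paper's proof of this lemma spends essentially all of its length on NBC2: it introduces the largest right alternative $q_2'$ of Player~2, locates the largest unpaid edge $e_{\tau}$ in $\{\ell_1+\ell_2+1,\ldots,\mu'-1\}$, takes a smallest left alternative $\bar{q_2}$ for $e_{\tau}$, and then splits on $\bar{\nu}\leq\sigma-1$ versus $\bar{\nu}\geq\sigma$. The first subcase requires recognizing $(P_1,P_2,q_1',q_2',\bar{q_2})$ as a \emph{second} NBC for $(u,v,w,x)=(t_1,s_1,t_2,s_2)$ and running a fresh twelve-type table on it; the second subcase is killed by exhibiting a feasible redistribution of cost shares (Player~2 up on $e_{\alpha}$ and $e_{\tau}$, down on $e_{\sigma}$; Player~1 the reverse) that increases Player~2's total, contradicting property~\ref{2m}. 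None of this is ``reading off'' payment identities, and your sketch as written would not produce it. You are right that the absence of NBC12 makes this lemma easier than \autoref{lemmapbc1}, but the NBC2 branch still requires the maximized-for-Player-2 machinery and a nested NBC analysis.
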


\begin{figure}[H] \centering \psset{unit=1cm}
\begin{pspicture}(-1.5,-2)(11.3,2) 
\Knoten{-1.2}{1.7}{s1}\nput{180}{s1}{$s_1$}
\Knoten{-0.7}{1.2}{v1}
\Knoten{0}{0.5}{v2}
\Knoten{-1.2}{-1.7}{s2}\nput{180}{s2}{$s_2$}
%
\Knoten{0.5}{0}{v5}
\Knoten{1.3}{0}{v6}
\Knoten{2.1}{0}{v7}
\Knoten{2.9}{0}{v8}
\Knoten{3.7}{0}{v9}
\Knoten{4.5}{0}{v10}
\Knoten{5.3}{0}{v11}
\Knoten{6.1}{0}{v12}
\Knoten{6.9}{0}{v13}
\Knoten{7.7}{0}{v14}
\Knoten{8.5}{0}{v15}
\Knoten{9.3}{0}{v16}
\Knoten{9.53}{0.23}{n1}
\Knoten{10.03}{0.73}{n2}
\Knoten{10.26}{0.96}{v17}
\Knoten{10.76}{1.46}{v18}
\Knoten{11}{1.7}{t1}\nput{0}{t1}{$t_1$}
\Knoten{9.8}{-0.5}{v19}
\Knoten{10.5}{-1.2}{v20}
\Knoten{11}{-1.7}{t2}\nput{0}{t2}{$t_2$}

\ncline{s1}{v1}
\Kante{v1}{v2}{\beta '}
\ncline{v2}{v5}
\ncline{s2}{v5}
\ncline{v5}{v6}
\Kante{v6}{v7}{\mu}
\ncline{v7}{v8}
\Kante{v8}{v9}{\sigma}
\ncline{v9}{v10}
\Kante{v10}{v11}{\beta}
\ncline{v11}{v12}
\Kante{v12}{v13}{\gamma '}
\ncline{v13}{v14}
\Kante{v14}{v15}{\alpha}
\ncline{v15}{v16}
\ncline{v16}{n1}
\ncline{n2}{v17}
\Kante{v17}{v18}{\gamma}
\ncline{v18}{t1}
\ncline{v16}{v19}
\Kante{v19}{v20}{\nu}
\ncline{v20}{t2}
\Kante{n1}{n2}{e}

\Bogendashed{v1}{v13}{q_1'}{35}
\Bogendashed{v10}{v18}{q_1}{35}
\Bogendashed{v6}{v20}{q_2}{-35}
\end{pspicture}
\caption{PBC12, $q_2$ small} \label{S14}
\end{figure}
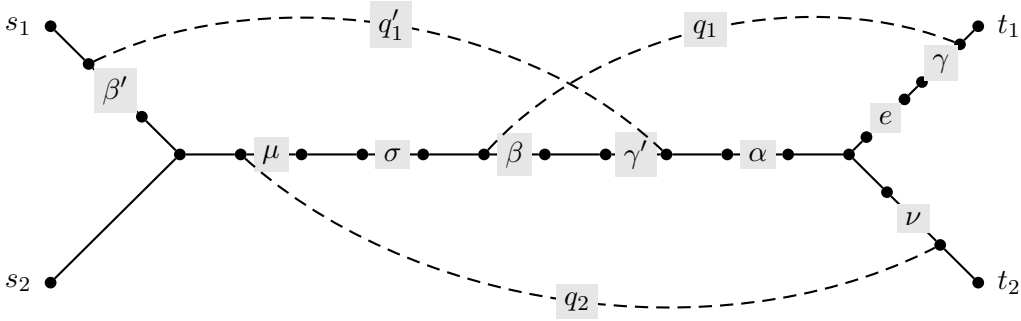

\begin{proof} The following types of NBCs yield contradictions:
\vspace{-0.5em}
\begin{table}[H] 
\begin{center}
\begin{tabular}{l c@{\hspace{0.5cm}} l}
\toprule
Type of NBC & & contradiction \\ \midrule
3d-g, 4d-g, 7, 8, 11, 12 & & not possible since $q_2$ is small \\ \addlinespace[0.2em]
1, 3a, 4a-c, 6, 9, 10 & & $e$ is completely paid  \\ \addlinespace[0.2em]
3b, 3c, 5 & & Player 2 pays $e_{\alpha}$ \\
\bottomrule
\end{tabular}
\end{center} \end{table}
\vspace{-1.5em}
For NBC2, we do not get a contradiction directly. But we get that Player 1 has a tight alternative $\bar{q_1}$ which substitutes $e$ and all commonly used edges and is also a left alternative.
Let $q_2'$ (defined by $\mu'$ and $\nu'$) be the largest right alternative for Player 2. 
Now if $q_2'$ substitutes all commonly used edges or Player 2 pays the edges $e_{\ell_1+\ell_2+1}, \ldots, e_{\mu'-1}$, we get a cheaper Steiner forest by using $\bar{q_1}$ and $q_2'$. Thus let $e_{\tau}$ be the largest such edge which Player 2 does not pay completely. 
Since the cost shares are maximized for Player 2, she has a tight alternative which substitutes $e_{\tau}$ and this has to be a left alternative because of the choice of $q_2'$.
We consider a smallest left alternative $\bar{q_2}$ according to $\tau$, defined by $\bar{\mu}$ and $\bar{\nu}$. If $\bar{\nu}\leq\mu'-1$ holds, we can construct a cheaper solution by using $\bar{q_1},q_2'$ and $\bar{q_2}$ since Player 2 pays the edges $e_{\tau+1}, \ldots, e_{\mu'-1}$ by the choice of $\tau$.
For $\bar{\nu}\geq\alpha$ we also get a cheaper solution since Player 2 pays the edges $e_{\alpha+1}, \ldots, e_{\ell_1+\ell_2+m}$. Therefore $\bar{\nu}\in\{\mu', \ldots, \alpha-1\}$ has to hold. We distinguish between $\bar{\nu}\leq \sigma-1$ and $\bar{\nu}\geq\sigma$. 

For the first case, $(P_1,P_2,q_1',q_2',\bar{q_2})$ is a NBC for $(u,v,w,x)=(t_1,s_1,t_2,s_2)$, see Figure~\ref{S15}.

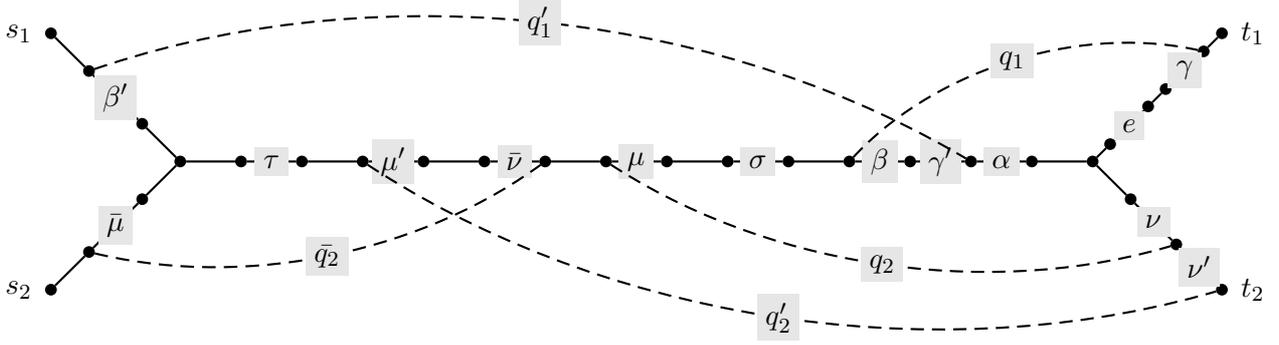
\begin{figure}[H] \centering \psset{unit=1cm}
\begin{pspicture}(-5,-2.3)(10,2) 
\Knoten{-5.2}{1.7}{s1}\nput{180}{s1}{$s_1$}
\Knoten{-4.7}{1.2}{v1}
\Knoten{-4}{0.5}{v2}
\Knoten{-5.2}{-1.7}{s2}\nput{180}{s2}{$s_2$}
\Knoten{-4.7}{-1.2}{v3}
\Knoten{-4}{-0.5}{v4}
\Knoten{-3.5}{0}{v5}
\Knoten{-2.7}{0}{n1}
\Knoten{-1.9}{0}{n2}
\Knoten{-1.1}{0}{v6}
\Knoten{-0.3}{0}{v7}
\Knoten{0.5}{0}{v8}
\Knoten{1.3}{0}{v9}
\Knoten{2.1}{0}{v10}
\Knoten{2.9}{0}{v11}
\Knoten{3.7}{0}{n3}
\Knoten{4.5}{0}{n4}
\Knoten{5.3}{0}{v12}
\Knoten{6.1}{0}{v13}
\Knoten{6.9}{0}{v14}
\Knoten{7.7}{0}{v15}
\Knoten{8.5}{0}{v16}
\Knoten{8.73}{0.23}{n5}
\Knoten{9.23}{0.73}{n6}
\Knoten{9.46}{0.96}{v17}
\Knoten{9.96}{1.46}{v18}
\Knoten{10.2}{1.7}{t1}\nput{0}{t1}{$t_1$}
\Knoten{9}{-0.5}{v19}
\Knoten{9.6}{-1.1}{v20}
\Knoten{10.2}{-1.7}{t2}\nput{0}{t2}{$t_2$}
\ncline{s1}{v1}
\Kante{v1}{v2}{\beta '}
\ncline{v2}{v5}
\ncline{s2}{v3}
\ncline{v4}{v5}
\ncline{v5}{n1}
\Kante{n1}{n2}{\tau}
\ncline{n2}{v6}
\Kante{v6}{v7}{\mu'}
\Kante{v3}{v4}{\bar{\mu}}
\ncline{v7}{v8}
\Kante{v8}{v9}{\bar{\nu}}
\ncline{v9}{v10}
\Kante{v10}{v11}{\mu}
\ncline{v11}{n3}
\Kante{n3}{n4}{\sigma}
\ncline{n4}{v12}
\Kante{v12}{v13}{\beta}
\Kante{v13}{v14}{\gamma'}
\Kante{v14}{v15}{\alpha}
\ncline{v15}{v16}
\ncline{v16}{n5}
\Kante{n5}{n6}{e}
\ncline{n6}{v17}
\Kante{v17}{v18}{\gamma}
\ncline{v18}{t1}
\ncline{v16}{v19}
\Kante{v19}{v20}{\nu}
\Kante{v20}{t2}{\nu'}

\Bogendashed{v1}{v14}{q_1'}{25}
\Bogendashed{v12}{v18}{q_1}{30}
\Bogendashed{v10}{v20}{q_2}{-25}
\Bogendashed{v6}{t2}{q_2'}{-25}
\Bogendashed{v3}{v9}{\bar{q_2}}{-25}
\end{pspicture}
\caption{First case of NBC2 in \autoref{lemmapbc14}  ($\mu \leq \bar{\nu}\leq \sigma-1$ also possible)}\label{S15}
\end{figure}
Analyzing the different types yields the following contradictions:
\vspace{-0.5em}
\begin{table}[H] 
\begin{center}
\begin{tabular}{l c@{\hspace{0.5cm}} l}
\toprule
Type of NBC & & contradiction \\ \midrule
3d-g, 4d-g, 7, 8, 11, 12 & & not possible since $q_1'$ is small \\ \addlinespace[0.2em]
1, 5, 6 & & Player $1$ pays $e_{\sigma}$  \\ \addlinespace[0.2em]
2, 3b, 3c, 4a & & cheaper Steiner forest ($\bar{q}_1$ and a tight alternative for Player 2) \\ \addlinespace[0.2em]
3a, 9, 10 & & Player $2$ pays $e_{\alpha}$ \\ \addlinespace[0.2em]
4b & & cost of $e_{\tau}$ is zero \\ \addlinespace[0.2em]
4c & & Player $2$ pays $e_{\tau}$ \\
\bottomrule
\end{tabular}
\end{center} \end{table}
\vspace{-1.5em}
%

Therefore $\bar{\nu}\geq\sigma$ has to hold. 
Since the cost shares are maximized for Player 2, the following changes of cost shares (by a suitably small amount) cannot yield a feasible assignment, since the sum of cost shares of Player 2 would be higher than before:
\[
\begin{array}{l}
	 \text{Player 2 increases on $e_{\alpha}$ and $e_{\tau}$ and decreases on $e_{\sigma},$} \\
	 \text{Player 1 increases on $e_{\sigma}$ and decreases on $e_{\alpha}$ and $e_{\tau}.$}
\end{array}
\] 
It is clear that Player 1 can increase on $e_{\sigma}$ and decrease on $e_{\alpha}$ and $e_{\tau}$ since every alternative that substitutes $e_{\sigma}$ also substitutes $e_{\alpha}$ or $e_{\tau}$.
But the changes are also feasible for Player 2, thus we get a contradiction:
It is clear that Player 2 can decrease on $e_{\sigma}$ and increase on $e_{\tau}$ since $\bar{q_2}$ is the smallest left alternative for $e_{\tau}$. Additionally increasing on $e_{\alpha}$ is also possible: There can not be a right alternative which substitutes $e_{\alpha}$, but not $e_{\sigma}$ (note that $q_2$ also substitutes $e_{\sigma}$); there can not be a right alternative which substitutes $e_{\tau}$ because $q_2'$ is the largest right alternative and a left alternative which substitutes $e_{\alpha}$ would lead to a cheaper Steiner forest (together with $\bar{q_1}$).
Overall this shows that a NBC2 in the original situation of this lemma also yields a contradiction.
\end{proof}

\begin{lemma} \label{lemmarest}
In the Subcases \hyperlink{pbc2}{PBC2}, \hyperlink{pbc3}{PBC3}, \hyperlink{pbc4}{PBC4}, \hyperlink{pbc5}{PBC5}, \hyperlink{pbc6}{PBC6} (in the case that $q_1$ is small), \hyperlink{pbc7}{PBC7}, \hyperlink{pbc9}{PBC9}, \hyperlink{pbc10}{PBC10}, \hyperlink{pbc11}{PBC11}, \hyperlink{pbc12}{PBC12} (in the case that $q_2$ is big), \hyperlink{pbc13}{PBC13} and \hyperlink{pbc14}{PBC14} we get contradictions for all possible types of NBCs.
\end{lemma}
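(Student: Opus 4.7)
The plan is to prove \autoref{lemmarest} by the same case-analytic template already executed for PBC1, PBC6 (big), PBC8 and PBC12 (small) in \autoref{lemmapbc1}--\autoref{lemmapbc14}. For each of the twelve remaining PBCs I would produce a table analogous to those in the proven lemmas that, for every NBC type (NBC1--NBC12), records a specific contradiction mechanism. The four mechanisms used throughout the paper will suffice in the overwhelming majority of the $12\times 12$ entries: (i) a \emph{structural mismatch} in which the PBC fixes $q_1$ (respectively the relevant alternative of Player~2) to be small/big while the NBC requires the opposite, so the combination is vacuous; (ii) a \emph{saturation of $e$}, where the payment equalities of \autoref{lemmanbcanfang}--\autoref{lemmanbcende} force $\xi_{1,e}+\xi_{2,e}=c(e)$, contradicting the defining property of~$e$; (iii) a \emph{saturation of $e_\alpha$} (or $e_\sigma$, $e_\rho$, $e_\omega$, $e_\tau$ depending on the PBC), contradicting the maximal choice used to select that edge; (iv) the explicit \emph{construction of a cheaper Steiner forest} by splicing the PBC's tight alternatives, using one of the NBC properties that identifies further completely-paid edges.

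Concretely, I would first fix a PBC from the list and read off from \autoref{def:PBC} and the corresponding figure in the proof of \autoref{lemma3} the ordering of $\beta,\gamma,\mu,\nu,\beta',\gamma',\mu',\nu',\bar\mu,\bar\nu,\sigma,\rho,\omega,\tau,\alpha$ and the tight alternatives involved; next, superimpose each NBC type on this ordering by identifying the paths $q_1,q_2,q_3$ of the NBC with the appropriate alternatives of the PBC; then invoke the corresponding lemma from \autoref{lemmanbcanfang}--\autoref{lemmanbcende} to obtain payment identities; and finally derive whichever of (i)--(iv) applies. A useful observation that collapses many rows at once is: whenever the PBC's ``small/big'' labelling of the upper alternative disagrees with the NBC's, all of NBC1, NBC3a--c, NBC4a--c, NBC6, NBC10 (or the mirrored block NBC3d--g, NBC4d--g, NBC7, NBC8, NBC11, NBC12) are immediately excluded; this is what produced the uniform first rows of the tables in \autoref{lemmapbc1} and \autoref{lemmapbc7}.

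The residual hard rows are NBC12 (in every PBC) and NBC2 (in a handful of PBCs where a tight left/right alternative at the right end of $P_\ell$ still has to be ruled out). For these, the direct payment argument fails and one must repeat the iterative \texttt{CHANGE} argument used at the end of \autoref{lemmapbc1} and \autoref{lemmapbc7}: identify an edge $e_\tau$ in the ``middle'' block that Player~1 does not fully pay, perform a small cost-share shift between Player~1 and Player~2 on the pair $(e_\rho,e_\tau)$, and observe that this either contradicts the choice of the smallest left alternative of Player~2 (so that an additional right alternative must exist, producing a cheaper forest with $q_1$), or eventually yields a configuration in which Player~2 can strictly increase her total cost share, contradicting property~\ref{2m}. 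For the NBC2 rows appearing in PBC12 (with $q_2$ big), PBC13, and PBC14, an extra step is needed: one must additionally produce a tight \emph{left} alternative $\bar q_1$ of Player~1 covering the commonly used part (guaranteed by \autoref{lemmanbcanfang}'s NBC2 clause) and then argue as in the second half of the proof of \autoref{lemmapbc14}, splitting on whether the newly-constructed $\bar\nu$ lies below or above $\sigma$.

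The principal obstacle, as already visible in \autoref{lemmapbc14}, is bookkeeping: in the longer PBCs (PBC9, PBC10, PBC11, PBC13) one has \emph{four} alternatives to track ($q_1,q_1',q_2,q_2'$ or their barred cousins), and the NBC12 case for these PBCs requires a two-level iteration — first on $(\rho,\tau)$, then on the newly obtained $(\rho',\tau')$ — each step either terminating in a cheaper forest, a saturation of $e$, or a strict increase of Player~2's cost share. Showing that this iteration terminates (as it does in \autoref{lemmapbc7}) is the most delicate part and would be presented in a single termination lemma that is then invoked inside every PBC's NBC12 row, after which the remaining entries of the $12\times 12$ tables can be filled in mechanically from \autoref{lemmanbcanfang}--\autoref{lemmanbcende}.
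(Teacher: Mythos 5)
Your proposal follows essentially the same approach as the paper: the paper in fact leaves the proof of this lemma to the reader, stating only that one combines the properties derived in the proof of \autoref{lemma3} with the NBC properties of \autoref{lemmanbcanfang}--\autoref{lemmanbcende}, in the same manner as the worked cases \autoref{lemmapbc1}--\autoref{lemmapbc14}. Your outline — the per-PBC tables over the twelve NBC types, the four contradiction mechanisms, the small/big mismatch collapsing whole blocks of rows, and the residual NBC2/NBC12 rows requiring the iterative \texttt{CHANGE} argument — is a faithful (indeed more explicit) rendering of exactly that template.
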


The proofs of these cases are left to the reader, where one has to use the properties discussed in the proof of \autoref{lemma3} to get contradictions to the properties for all possible types of NBCs discussed in \autoref{lemmanbcanfang} to \autoref{lemmanbcende}. These proofs are similar to the proofs of \autoref{lemmapbc1} to \autoref{lemmapbc14}.

\subsubsection{Procedure for computing \ptl-cost shares}\label{subsubsec:push}
\begin{algorithm}[H]
\caption{\textsc{PushLeft}} \label{push}
\KwData{Steiner forest $F$ and an optimal solution $(\xi_{i,e})_{i \in \{1,2\}, e \in P_i}$ for (LP($F$)).}
\KwResult{Transformed optimal solution $(\xi_{i,e})_{i \in \{1,2\}, e \in P_i}$ for (LP($F$)).}
Choose one player, w.l.o.g. Player 1. 

Let $P_1= e_{1}, \ldots, e_{\ell_1}, e_{\ell_1+\ell_2+1}, \ldots, e_{\ell_1+\ell_2+m+r_1}$ the (directed) path of Player 1 in $F$ from her source $s_1$ to her sink $t_1$ and $\mathcal A_1$ the ordered set of indices of edges in $P_1$. 

		\If{$P_1 \cap P_2 \neq \emptyset$}{

			
	%
			
			\If{\upshape considering $P_2$ as directed from $s_2$ to $t_2$ induces the opposite direction on $P_1 \cap P_2$ (compared to the direction of $P_1$)}{
		
				swap $s_2$ and $t_2$
				}
			Let $P_2=e_{\ell_1+1}, \ldots, e_{\ell_1+\ell_2+m}, e_{\ell_1+\ell_2+m+r_1+1} \ldots, e_{\ell_1+\ell_2+m+r_1+r_2}$ the (directed) path of Player $2$ in $F$ from her source $s_2$ to her sink $t_2$ and $\mathcal A_2$ the ordered set of indices of edges in $P_2$.
			}

change $\leftarrow$ \texttt{true}

\While{\upshape change $=$ \texttt{true}}{ 

	change $\leftarrow$ \texttt{false} 

	\For{$i=1,2$}{

		\For{$\alpha \in \mathcal A_i$}{ 

			$E_{\alpha}\leftarrow \{e_{\delta} : \delta \in \mathcal A_i, \delta \leq \alpha\}$

			\textbf{solve}  
		 \vspace{-1em}
		\begin{align*}
		\max \ \ & \sigma_{i,e_{\alpha}} \\
		s.t. \ \ & \sum_{e \in E_{\alpha}}{\sigma_{i,e}}=\sum_{e \in E_{\alpha}}{\xi_{i,e}} \\
		& \sum_{j \in N\setminus i}{\xi_{j,e}}+\sigma_{i,e} \leq c(e) & \forall  e \in E_{\alpha} \\
		& \sum_{e \in P_i \setminus P_i' \setminus E_{\alpha}}{\xi_{i,e}} + \sum_{e \in (P_i \setminus P_i') \cap E_{\alpha}}{\sigma_{i,e}} \leq \sum_{e \in P_i' \setminus P_i }{c(e)} & \forall P_i' \in \mathcal P_i \\
		& \sigma_{i,e} \geq 0 & \forall e \in E_{\alpha}
		\end{align*}
		
		\If{$\sigma_{i, e_{\alpha}} \neq \xi_{i, e_{\alpha}}$}{
		
			change $\leftarrow$ \texttt{true}
		}
		
		\For{$e \in E_{\alpha}$}{
		
			$\xi_{i,e} \leftarrow \sigma_{i,e}$
		}
		}
	}
}
\end{algorithm} 

 \newpage
\bibliography{master-bib}   
\bibliographystyle{abbrv}
\end{document}